\definecolor{green2}{RGB}{34,139,34}
\newcommand{\E}{\mathcal{E}}
\newcommand{\A}{\mathbf{A}}
\newcommand{\B}{\mathbf{B}}
\newcommand{\qed}{\mbox{}\hspace*{\fill}\nolinebreak\mbox{$\rule{0.6em}{0.6em}$}}
\newcommand{\expect}{{\bf \mbox{\bf E}}}
\newcommand{\prob}{{\bf \mbox{\bf Pr}}}
\definecolor{gray}{rgb}{0.5,0.5,0.5}
\newcommand{\e}{{\epsilon}}
\newcommand{\YES}{{\bf YES~}}
\newcommand{\NO}{{\bf NO~}}
\newcommand{\ctwo}{512}
\newtheorem{theorem}{Theorem}[section]
\newtheorem{lemma}[theorem]{Lemma}
\newtheorem{claim}[theorem]{Claim}
\newtheorem{corollary}[theorem]{Corollary}
\newtheorem{definition}[theorem]{Definition}
\newtheorem{remark}[theorem]{Remark}
\newtheorem{observation}[theorem]{Observation}
\newtheorem{proposition}[theorem]{Proposition}
\newtheorem{fact}[theorem]{Fact}
\newtheorem{lemma}{Lemma}[section]
\newtheorem{theorem}[lemma]{Theorem}
\newtheorem{claim}[lemma]{Claim}
\newtheorem{definition}[lemma]{Definition}
\newtheorem{remark}[lemma]{Remark}
\newtheorem{rem}{Remark}[lemma]
\newcommand{\Int}{\text{Int}_M(v)}
\newcommand{\Ext}{\text{Ext}_M(v)}
\newcommand{\Bound}{\partial_M(v)}
\newcommand{\eps}{\varepsilon}
\newcommand{\bound}[1][C]{\operatorname{\mathbf{U}_{#1, s^*}}}
\newenvironment{proof}{{\bf Proof:}}{$\qed$\par}
\newenvironment{proofof}[1]{\noindent{\bf Proof of #1:}}{$\qed$\par}
 \gdef\xxxmark{%
   \expandafter\ifx\csname @mpargs\endcsname\relax % in minipage?
     \expandafter\ifx\csname @captype\endcsname\relax % in figure/caption?
       \marginpar{xxx}% not in a caption or minipage, can use marginpar
     \else
       xxx % notice trailing space
     \fi
   \else
     xxx % notice trailing space-
   \fi}
 \gdef\xxx{\@ifnextchar[\xxx@lab\xxx@nolab}
 \long\gdef\xxx@lab[#1]#2{{\bf [\xxxmark #2 ---{\sc #1}]}}
 \long\gdef\xxx@nolab#1{{\bf [\xxxmark #1]}}
 \long\gdef\xxx@lab[#1]#2{}\long\gdef\xxx@nolab#1{}%
\newcommand\footnoteref[1]{\protected@xdef\@thefnmark{\ref{#1}}\@footnotemark}
\newcommand{\wt}{\widetilde}
\newcommand{\wh}[1]{\widehat{#1}}
\newcommand{\IHP}{{\bf IHP}}
\newcommand{\DIHP}{{\bf DIHP}}
\begin{document}
\title{An Optimal Space Lower Bound for Approximating MAX-CUT}
\author{Michael Kapralov\\EPFL\\michael.kapralov@epfl.ch \and Dmitry Krachun\\University of Geneva\\dima@krachun.ru}

\maketitle

\begin{abstract}
We consider the problem of estimating the value of MAX-CUT in a graph in the streaming model of computation.  At one extreme, there is a trivial $2$-approximation for this problem that uses only $O(\log n)$ space, namely, count the number of edges and output half of this value as the estimate for the size of the MAX-CUT. On the other extreme, for any fixed $\eps > 0$, if one allows $\tilde{O}(n)$ space, a $(1+\eps)$-approximate solution to the MAX-CUT value can be obtained by storing an $\tilde{O}(n)$-size sparsifier that essentially preserves  MAX-CUT value. 

Our main result is that any (randomized) single pass streaming algorithm that breaks the $2$-approximation barrier requires $\Omega(n)$-space, thus resolving the space complexity of any non-trivial approximations of the MAX-CUT value to within polylogarithmic factors in the single pass streaming model. We achieve the result by presenting a tight analysis of the Implicit Hidden Partition Problem introduced by Kapralov et al.[SODA'17] for an arbitrarily large number of players. In this problem a number of players receive random matchings of $\Omega(n)$ size together with random bits on the edges, and their task is to determine whether the bits correspond to parities of some hidden bipartition, or are just uniformly random. 

Unlike all previous Fourier analytic communication lower bounds, our analysis does not directly use bounds on the $\ell_2$ norm of Fourier coefficients of a typical message at any given weight level that follow from hypercontractivity.  Instead, we use the fact that graphs received by players are sparse (matchings) to  obtain strong upper bounds on the $\ell_1$ norm of the Fourier coefficients of the messages of individual players using their special structure, and then argue, using the convolution theorem, that similar strong bounds on the  $\ell_1$ norm are essentially preserved (up to an exponential loss in the number of players) once messages of different players are combined. We feel that our main technique is likely of independent interest.
 \end{abstract}

\newcommand{\bool}{\{0,1\}}

\renewcommand{\P}{{\cal P}}
\newcommand{\D}{{\cal D}}

\newcommand{\I}{\mathcal{I}}

\setcounter{page}{0}

\thispagestyle{empty}
\newpage

\tableofcontents

\newpage

%!TEX root = ./main.tex
\section{Introduction}

In the MAX-CUT problem an undirected graph is given as input, and the goal is to find a bipartition of the vertices of this graph (or, equivalently, a {\em cut}) that maximizes the number of edges that cross the bipartition. The size of a MAX-CUT is the number of edges that cross the optimal bipartition.
In this paper, we study the space complexity of approximating the MAX-CUT size in an
undirected graph in the streaming model of computation. Our main result is a strong lower bound
(optimal to within polylogarithmic factors) on the space required for a
non-trivial approximation to MAX-CUT size. 

Specifically we consider a space bounded algorithm that is presented with a
stream of edges of a graph on known vertex set $[n] \triangleq \{1,\ldots,n\}$ and 
is required to output an $\alpha$-approximation to the size of the maximum cut
in the graph. 
An algorithm that simply counts $m$, the number of edges in the graph, and reports $m/2$
requires space $O(\log n)$ and produces a $2$-approximation to the size of a
maximum cut since the MAX-CUT size is at most $m$ and at least $m/2$ in any undirected graph. 
On the other extreme, for any $\epsilon > 0$, it is 
possible to maintain a cut-preserving sparsifier of the graph using $\tilde{O}(n)$ space that allows one to recover 
a $(1+\epsilon)$-approximation to the maximum cut value -- in fact, one can recover the actual vertex partition as well in this case.
Till recently it was open as to whether such good approximations could be
obtained with polylogarithmic space. This question was resolved in the negative
by \cite{KKS15, kogan-krauthgamer14}. In particular,~\cite{KKS15} showed that any
$(2-\epsilon)$-approximation algorithm requires $\tilde{\Omega}(\sqrt{n})$ space, and~\cite{KapralovKSV17} ruled out the possibility of an approximation scheme in $o(n)$ space. This however left open the possibility
that a non-trivial approximation (i.e. better than the trivial $2$-approximation described above) can be achieved in $o(n)$ space. In this work, we settle this problem by showing
that no sublinear space algorithm can achieve a strictly better than $2$-approximation to the size of the maximum cut:

\begin{theorem}\label{thm:main}
For every $\e\in (0, 99/100)$, every randomized single-pass streaming algorithm that yields a $(2-\e)$-approximation to MAX-CUT size must use 
$n/(1/\e)^{O(1/\e^2)}=\Omega(n)$ space, where $n$ denotes the number of vertices in the graph.
\end{theorem}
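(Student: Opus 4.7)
The plan is to reduce the streaming problem to a multi-party one-way communication problem -- the Distributional Implicit Hidden Partition Problem (DIHP) with $T = \Theta(1/\varepsilon^2)$ players -- and then to prove a tight communication lower bound on DIHP via a new Fourier-analytic argument. In DIHP, each of the $T$ players receives an independent random matching $M_i$ of size $\Omega(n)$ on vertex set $[n]$, together with a bit $b_e \in \{0,1\}$ on each edge $e \in M_i$. In the YES case there is a hidden $x \in \{0,1\}^n$ with $b_e = x_u \oplus x_v$ for $e = (u,v)$, while in the NO case the bits are uniformly random. The streaming reduction works by appending the edges with $b_e = 1$ to a single stream: in the YES case the resulting graph is bipartite along $x$ and its MAX-CUT equals the number of edges kept (about $m/2$ in expectation); in the NO case we obtain a random subgraph of a union of $T$ random matchings, whose MAX-CUT exceeds $m/2$ by only $O(\sqrt{m/T})$ per player, so that the overall MAX-CUT is at most a $(1 + O(\varepsilon/2))$ factor larger than half of the edge count. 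Thus any $(2-\varepsilon)$-approximation to MAX-CUT distinguishes the two cases, and a streaming algorithm using space $s$ yields a $T$-party one-way protocol with $s$ bits per message.

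The main technical step is the DIHP lower bound: any $T$-party one-way protocol with per-message length $s$ distinguishing YES from NO with constant advantage must satisfy $Ts \ge n/(1/\varepsilon)^{O(1/\varepsilon^2)}$. The approach is to analyze the joint distribution of transcripts by Fourier expansion on $\{0,1\}^{\binom{[n]}{2}}$, viewed as the space of possible (matching, label) inputs. Fixing a player's message $\pi$, one studies the ``posterior indicator'' $f_\pi(x) = \Pr[\pi \mid x]$ as a function on $\{0,1\}^n$, and bounds the advantage over uniform by the Fourier weight of $f_\pi$ on nontrivial characters. Previous Fourier-analytic arguments in this context control $\|\widehat{f_\pi}_{=k}\|_2$ via hypercontractivity, which gives a $2^{-\Omega(k)}$ factor per weight level, but this bound degrades too quickly when $T$ players' messages are composed.

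The new idea is to bound the $\ell_1$ norm of the Fourier spectrum of each player's message at each weight level, crucially using the fact that the input graph is a matching. The matching property means that the indicator of $\{M_i = M, b_i = b\}$ has a very sparse, structured Fourier expansion, yielding $\ell_1$ bounds that are far stronger than what hypercontractivity provides in $\ell_2$. Then, because the joint distribution factors as a product and $\widehat{fg} = \hat f * \hat g$, the convolution theorem gives $\|\widehat{fg}\|_1 \le \|\hat f\|_1 \cdot \|\hat g\|_1$; iterating this over the $T$ players introduces only a multiplicative $(1/\varepsilon)^{O(T)} = (1/\varepsilon)^{O(1/\varepsilon^2)}$ loss, which is exactly the loss appearing in the theorem.

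I expect the principal obstacle to be the single-player $\ell_1$ bound at each Fourier level: it requires exploiting not only the sparsity of a matching but also the fact that each vertex appears in at most one edge, so that the Fourier coefficients at weight $2k$ concentrate on ``matching-compatible'' sets $S$ of $2k$ vertices, and that each such coefficient is tiny because the message length is much smaller than the entropy of the matching. A secondary difficulty is to ensure that the exponential-in-$T$ loss from iterated convolution does not blow up the base: this forces careful tracking of the Fourier weight levels across players, and an inductive argument over $T$ in which at each step we discard a small-probability ``bad'' event to keep the weight bounded. Once these two ingredients are in place, the tight separation from the uniform distribution, and hence the space lower bound of Theorem~\ref{thm:main}, follows.
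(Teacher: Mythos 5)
Your high-level strategy matches the paper's: reduce to the multi-party Implicit Hidden Partition problem, pass to the Fourier side, control the posterior distribution of the hidden partition $X^*$ via the convolution theorem $\widehat{h_t} = \widehat{f_1}*\cdots*\widehat{f_t}$, and replace $\ell_2$/hypercontractivity bounds with $\ell_1$ bounds exploiting matching sparsity. The reduction step and the role of the hybrid/TVD argument are also essentially right. But there is a real gap at the central convolution step.

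The step you state as the mechanism -- ``$\|\widehat{fg}\|_1 \le \|\hat f\|_1\cdot\|\hat g\|_1$, iterated over $T$ players, gives only a $(1/\e)^{O(T)}$ loss'' -- does not go through, because the quantity $\|\hat f_i\|_1$ for a single player is not $(1/\e)^{O(1)}$: it is exponentially large in $n$ even after exploiting the matching structure. Concretely, by Lemma~\ref{lm:matching-fourier} the nonzero coefficients of $\wh{f}_i$ are indexed by subsets of the edge set of $M_i$, and by Parseval plus Cauchy--Schwarz the $\ell_1$ mass at weight level $2\ell$ with $\ell\approx \alpha n/2$ is about $\sqrt{2^{s}\binom{\alpha n}{\ell}}=2^{\Theta(\alpha n)}$. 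Multiplying $T$ such quantities is catastrophically lossy. What actually makes the argument work -- and what your proposal only gestures at in its last paragraph -- is (i) tracking the $\ell_1$ norm of $\wh{h}_t$ \emph{separately at each Fourier weight level} $2\ell$, via the notion of a $(C,s^*)$-bounded set (Definition~\ref{def:bounded}), and (ii) showing that the convolution transfers only a controlled amount of mass from weight level $2k$ of $\wh{h}_{t-1}$ to weight level $2\ell$ of $\wh{h}_t$, and crucially that this is true \emph{on average over the random matching $M_t$}, because the probability that a uniformly random matching of size $\alpha n$ has $i$ internal edges and $b$ boundary edges with respect to a fixed set of $2k$ vertices decays like $\approx n^{-i}$ (Lemmas~\ref{lm:qkin}, \ref{lm:qkibninequality}). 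This is what lets the per-level bound degrade by only a constant factor per player (Lemma~\ref{lm:induction}, via Lemmas~\ref{lm:mass-transfer} and~\ref{lm:mass-transfer-high}); the random-matching average is indispensable and does not appear in your sketch.

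A secondary inaccuracy: the single-player $\ell_1$ bound you describe is not the headline difficulty. That bound is essentially the KKL-style hypercontractive inequality converted from $\ell_2$ to $\ell_1$ via Cauchy--Schwarz (Lemma~\ref{lm:L1xorKKL}), plus the observation that matching structure maps weight-$\ell$ coefficients of the reduced message to weight-$2\ell$ coefficients of $f$. The hard part is the inductive preservation of this level-by-level bound under intersection with a new player's message; your description attributes the difficulty to the wrong place.

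Also, your quantitative claim that the NO-case MAX-CUT exceeds $m/2$ by ``$O(\sqrt{m/T})$ per player'' is not how the paper argues and is not obviously correct as stated; the paper proves the gap (Lemma~\ref{lm:gap}) by a Chernoff bound plus a union bound over all $2^n$ cuts, which requires the per-cut deviation probability to be $\exp(-\Omega(n))$. This is a minor issue compared to the convolution step, but worth flagging since the per-player deviation framing does not by itself give the union-bound strength needed.
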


\if 0 The proof of Theorem~\ref{thm:main} turns out to be quite technical; we
give here a very brief overview of the steps and our main contributions. As in
previous works 
\cite{KKS15, kogan-krauthgamer14} we prove the lower bound by reduction from
a communication problem and then giving an appropriate lower bound on the
communication complexity of the problem. The previous works used as their starting point the
``hidden partition problem'', where one player is given a cut in 
a complete graph on $n$ vertices and the other player is given a labelling of a linear number of edges of the 
graph where the labelling supposedly indicates whether those edge cross the cut, and 
the two players have to decide if the labelling is consistent with the given
cut. Gavinsky et al~\cite{GKKRW07} gave a tight $\Omega(\sqrt{n})$ lower bound on the
one-way communication complexity of this problem, and this was used 
by \cite{KKS15}
to give a $\tilde{\Omega}(\sqrt{n})$ lower bound on the space needed to find
a $(2-\epsilon)$-approximation of MAX-CUT. \fi

Our main technical contribution is a nearly  optimal lower bound on the communication complexity of the {\bf Implicit Hidden Partition} problem introduced in~\cite{KapralovKSV17}. The implicit hidden partition problem is a multiple-player communication game
 where many players are
given labellings of sparse subsets of edges and must determine if they are consistent with a
bipartition of vertices.  This setting is in contrast to the vanilla ``hidden partition problem'' used in several previous works on MAX-CUT lower bounds, where one player is given a cut in 
a complete graph on $n$ vertices and the other player is given a labelling of a linear number of edges of the 
graph where the labelling supposedly indicates whether those edge cross the cut, and 
the two players have to decide if the labelling is consistent with the given
cut. Gavinsky et al~\cite{GKKRW07} gave a tight $\Omega(\sqrt{n})$ lower bound on the
one-way communication complexity of this problem, and this was used 
by \cite{KKS15} for instance to give a $\tilde{\Omega}(\sqrt{n})$ lower bound on the space needed to find
a $(2-\epsilon)$-approximation of MAX-CUT. Since in the {\em implicit} hidden partition problem no player has an explicit knowledge of the bipartition, this problem
plausibly has a linear ($\Omega(n)$) lower bound on the communication
complexity and this is what we prove. 

The main technical tool underlying our analysis is a novel and general way of using the {\bf convolution theorem} in Fourier analysis to analyze information conveyed by the combined messages of multiple players (which corresponds to the intersection of their individual messages). While this idea has recently been used for proving lower bounds on the streaming complexity of MAX-CUT~\cite{KapralovKSV17} and the sketching complexity of subgraph counting~\cite{KallaugherKP18}, in both of these works the convolution theorem is applied in a rather restricted setting. Specifically, the structure of the Fourier transforms of the messages of the players is such that convolution simply amounts to multiplication in Fourier domain, i.e. only a single nonzero contributes to the corresponding sum of Fourier coefficients. In our setting convolving the Fourier transforms of the players' messages leads to contributions across different levels of the weight spectrum, and analyzing such processes requires a new technique. Our main insight is the idea of controlling the $\ell_1$ norm of the Fourier transform of the intersection of the players' messages as opposed to the $\ell_2$ norm, bounds on which follow more naturally as a consequence of hypercontractivity (note that $\ell_2$ bounds on various levels of the Fourier spectrum that follow from the hypercontractive inequality underlie the analysis of the Boolean Hidden Matching problem of~\cite{GKKRW07}, as well as recent works on streaming and sketching lower bounds through Fourier analysis~\cite{KKS15, kogan-krauthgamer14, KapralovKSV17, KallaugherKP18}). Conceptually, the idea of controlling the $\ell_1$ norm stems from the fact that since individual players receive parity information of some hidden vector $X$ across edges of a sparse graph (a matching), strong upper bounds on the $\ell_1$ norm of the Fourier transform of the corresponding player's message follow (due to sparsity of the graph), and these $\ell_1$ bounds remain approximately preserved when the players' functions are multiplied (i.e. when the players' messages are combined).

\if 0 analysis of the communication problem ends up being technically complex. On the one hand, we need to
strengthen the Fourier analytic techniques developed in previous works
\cite{GKKRW07,VY11, KKS15} to derive linear lower bounds on the communication complexity of our game. On the
other hand, ruling out even simple natural strategies for the communication game leads to moderately 
intricate statements about the evolution of random-like graphs. For example, one natural strategy for the streaming problem is for the algorithm to maintain the connectivity structure of $O(s)$ randomly chosen vertices 
(to get space complexity of 
$\tilde{O}(s)$). Ruling out even this basic algorithm seems to require 
understanding of the evolution of random graphs in the subcritical regime
(with $\epsilon n$ edges), and the fact that in this regime most
components are tiny and acyclic and can not rule out the existence of a
consistent partition. Tackling all possible algorithms forces us to
express such graph-theoretic insights into the machinery of Fourier
analysis. 
Indeed as we explore the question further, it turns out that more complex graph theoretic
invariances need to be understood and incorporated into the Fourier analytic
techniques, leading us eventually to a strong lower bound on the
communication complexity of the implicit hidden partition problem, and thus
a proof of Theorem~\ref{thm:main}. 
\fi 

%The Goemans-Williamson algorithm~\cite{GW95} approximates MAX-CUT to a factor of $1.138$\footnote{The approximation ratio achieved by the Goemans-Williamson algorithm is usually stated as $0.878\cdots$ in the literature, but in this paper, we use the convention that approximation ratios are larger than $1$.} using semidefinite programming (this is best possible assuming the Unique Games Conjecture~\cite{KhotKMO07}).  In~\cite{Trevisan09} Trevisan presented a $1.884$ using spectral techniques  (see also~\cite{KS11} for a combinatorial version of this algorithm). It is known that {\em dense graphs} are an easy case for this problem: polynomial time approximation schemes exist in graphs with $\Omega(n^2)$ edges~\cite{FK96, Vega96, AVKK03, AFNS09, MS08}.  

\medskip
\noindent
{\bf Related work:} 
The streaming model of computation, formally introduced in the seminal work of~\cite{AlonMS96} and motivated by applications in processing massive datasets, is an extremely well-studied model for designing sublinear space algorithms.  
The past decade has seen an extensive body of work on understanding the space complexity of fundamental graph problems in the streaming model; see, for instance, the survey by McGregor~\cite{McGregorSurvey14}.
It is now known that many fundamental problems admit streaming algorithms that only require $\tilde O(n)$ space (i.e. they do not need space to load the edge set of the graph into memory) --  e.g. sparsifiers~\cite{anh-guha, kl11, agm_pods, KapralovLMMS14}, spanning trees~\cite{agm}, matchings~\cite{guha-ahn-1, guha-ahn-3, gkk:streaming-soda12, kap13, GO12, HRVZ13, Konrad15, AssadiKLY15}, spanners~\cite{agm_pods, kw14}.  Very recently it has been shown that it is sometimes possible to approximate the {\em cost} of the solution without even having enough space to load the {\em vertex set} of the graph into memory (e.g.~\cite{KKS14, onak15, McGregor15}).  Our work contributes to the study of streaming algorithms, by providing a tight impossibility result for non-trivially approximating MAX-CUT value in $o(n)$ space.

\medskip
\noindent

\paragraph{Organization} In section~\ref{sec:comm-prob} we introduce our communication problem (\DIHP), give a reduction from \DIHP~ to MAX-CUT, state our main technical theorem (Theorem~\ref{thm:main-comm}) on the communication complexity of \DIHP\ in this section and prove Theorem~\ref{thm:main} assuming this result.
The rest of the paper is devoted to proving Theorem~\ref{thm:main-comm}. Section~\ref{sec:prelims} presents preliminaries on Fourier analysis and basic combinatorics of random matchings. Section~\ref{sec:defs} presents our basic setup for proving Theorem~\ref{thm:main-comm}, introduces a crucial definition of $(C, s^*)$-bounded sets, and proves some of their basic properties. Section~\ref{sec:outline-app} presents a technical overview of the analysis: we start with a simple component growing protocol for \DIHP~that serves as a good model for our lower bound proof, give a direct analysis for this protocol, state the main technical components of our general lower bound proof and use the simple protocol to illustrate some of the steps. The rest of the paper is devoted to formally proving Theorem~\ref{thm:main-comm}. Section~\ref{sec:main-result} proves Theorem~\ref{thm:main-comm} assuming a key technical lemma, whose proof is given in Section~\ref{sec:inductionstep}.

\if 0

In section~\ref{sec:outline} we give an outline of our analysis and state the main technical lemmas. Sections~\ref{sec:aux-lemmas} and~\ref{sec:induction} are devoted to proving these lemmas. We put these lemmas together to obtain a proof of Theorem~\ref{thm:main-comm} in section~\ref{sec:main-result}. Some auxiliary technical results that are needed for the argument are presented in the appendices.

\fi

%!TEX root = ./main.tex
\section{Communication problem and hard distribution}\label{sec:comm-prob}

In this section we introduce a $T$-player ``sequential'' communication problem and state our lower bound for
this problem. We first describe the general model in which this problem is presented.

We consider a communication problem where $T$ players receive, sequentially, public inputs $M_t$ and private
inputs $w_t$, for $t \in [T]$. Their goal is to compute some joint function
$F(M_1,\ldots,M_T;w_1,\ldots,w_T)$. At stage $t$, the $t$th player announces its message $S_t =
r_t(M_1,\ldots,M_t; S_1,\ldots,S_{t-1}; w_t)$ for some function $r_t$. The message $S_T$ is defined to be the output of the protocol.
The complexity of the protocol is the maximum length of the message $\{S_t\}_{t \in [T]}$. We consider the
setting where the inputs are drawn from some distribution $\mu$ and the error of the protocol is the
probability that its output does not equal $F(M_1,\ldots, M_T, w_1, \ldots, w_T)$. We now describe the specific communication
problem we consider in this work. For a protocol $\Pi$ we let $|\Pi|$ denote the maximum size of messages posted by players.

\paragraph{Implicit Hidden Partition (\IHP) Problem of~\cite{KapralovKSV17}.} 
We define a parametrized class of problems \IHP$(n,\alpha,T)$ for
positive integers $n$ and $T$ and real $\alpha \in (0,1/2)$ as follows:
\IHP$(n,\alpha,T)$ is a $T$-player problem with public inputs $M_t \in
\{0,1\}^{\alpha n \times n}$ being incidence matrices of matchings (so their
rows sum to $2$ and columns sum to at most $1$), and the private inputs are
$w_t \in \{0,1\}^{\alpha n}$. Their goal is to decide the Boolean function
$F(M_1,\ldots,M_T;w_1,\ldots,w_T)$ which is \YES if and only if there exists
$x \in \{0,1\}^n$ such that $w_t = M_t x (\text{mod}~2)$ for all $t \in [T]$, and \NO
otherwise.

By associating $[n]$ with the vertices of a graph $G$, we may think of $x$
as a partition (cut) of the graph $G$ whereas the edges are the set
$\cup_{t \in [T]} M_t$. The condition $w_t = M_t x$ enforces that an edge
crosses the cut if and only if it is labelled $1$ in $w_t$. Thus the
communication problem corresponds to asking if there is a cut consistent
with the labelling of edges which are partitioned into matchings and
presented as such.

\paragraph{Distributional Implicit Hidden Partition (\DIHP) Problem.} We will work with the following distributional version of \IHP.
We define a distribution $\D^Y$ supported on \YES instances which is obtained by
sampling $X^* \in \{0,1\}^n$ uniformly, sampling $M_t$'s independently and
uniformly from the set of $\alpha n$ sized matchings on $[n]$ and setting
$w_t = M_t X^*$. The distribution $\D^N$ supported mostly on \NO instances is
obtained by sampling $M_t$'s as above, and $w_t$'s independently and
uniformly from $\{0,1\}^{\alpha n}$. The final distribution $\D$ is
simply $\D = \frac12(\D^Y +\D^N)$. We use \DIHP\ to denote the
distributional \IHP\ problem where instances are drawn from $\D$.

\newcommand{\thmMainComm}{There exists a constant $C_0>0$ such that for every $\e\in (0, 1)$ and every sufficiently large $n$ every protocol $\Pi$ for \DIHP$(n, \alpha, T)$ with $\alpha = 10^{-11}$, $T=\ctwo/(\alpha \e^2)$ that succeeds with probability at least $2/3$ satisfies $|\Pi|\geq n/(C_0/\e)^{C_0/\e^2}$. 
}
The following theorem is the main technical contribution of the paper:
\begin{theorem}\label{thm:main-comm}
\thmMainComm
\end{theorem}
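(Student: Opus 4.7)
The plan is to start with the standard reduction from distinguishing $\D^Y$ from $\D^N$ to bounding a Fourier sum. Fixing the public matchings $M_1,\ldots,M_T$, a transcript $\tau=(S_1,\ldots,S_T)$ defines for each round $t$ a set $A_t\subseteq\{0,1\}^{\alpha n}$ of labels $w_t$ that produce message $S_t$ (given the earlier fixed $S_1,\ldots,S_{t-1}$); write $\phi_t$ for the indicator of $A_t$. Then
\[
\Pr_{\D^N}[\tau]=\prod_{t=1}^T\widehat{\phi_t}(0),\qquad
\Pr_{\D^Y}[\tau]=\E_{X^*}\Bigl[\prod_{t=1}^T\phi_t(M_tX^*)\Bigr],
\]
and expanding each $\phi_t$ in its Fourier series and taking expectation over $X^*$ shows that the gap between these probabilities is exactly the ``bias''
\[
B(\tau)\;=\;\sum_{\substack{(u_1,\ldots,u_T)\neq 0\\ \sum_t M_t^\top u_t=0}}\prod_{t=1}^T\widehat{\phi_t}(u_t).
\]
A success probability of $2/3$ forces $\E_\tau|B(\tau)|=\Omega(1)$, so the goal reduces to showing that if $|\Pi|<n/(C_0/\eps)^{C_0/\eps^2}$, then with high probability over the matchings this bias is at most $\eps/100$.

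\textbf{Pulling back to the hidden cube and the $\ell_1$ viewpoint.} Define $\psi_t(X)\myeq\phi_t(M_tX)$ on $\{0,1\}^n$. A direct computation shows $\widehat{\psi_t}(v)=\widehat{\phi_t}(u)$ when $v=M_t^\top u$ (so $v$ is supported on $V(M_t)$ and constant on each edge of $M_t$), and $\widehat{\psi_t}(v)=0$ otherwise. By the convolution theorem the bias $B(\tau)$ is exactly the non-trivial part of $(\widehat{\psi_1}*\cdots*\widehat{\psi_T})(0)$. The novel idea, as advertised in the abstract, is to control the $\ell_1$ norm of this iterated convolution rather than its $\ell_2$ norm: because each $\widehat{\psi_t}$ is supported on a sparse, structured set dictated by the edges of $M_t$, the $\ell_1$ mass of a single player's transform at each weight level can be bounded in terms of the message length $|S_t|$ and the sparsity of the matching; this is the content of the ``$(C,s^*)$-bounded set'' formalism introduced in Section~\ref{sec:defs}.

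\textbf{An inductive convolution argument.} The heart of the proof is a key technical lemma (deferred to Section~\ref{sec:inductionstep}) that tracks how a $(C,s^*)$-bounded $\ell_1$-controlled Fourier transform evolves under convolution with one more player's transform $\widehat{\psi_t}$. The plan is to maintain an invariant on the partial product $\Phi_{t_0}=\prod_{t\le t_0}\psi_t$ asserting that the Fourier transform of $\Phi_{t_0}$, decomposed by Hamming weight, has its $\ell_1$ mass concentrated on a $(C,s^*)$-bounded set and bounded above by a quantity that degrades only mildly with $t_0$. The proof of the inductive step will use (i) the random choice of $M_{t_0+1}$ to show that few pairs $(v,M_{t_0+1}^\top u)$ sum to any given low-weight target (so convolution does not inflate low-level mass too badly), (ii) hypercontractive-type level-wise bounds on $\widehat{\phi_{t_0+1}}$ to control the per-level contribution of the new player, and (iii) a careful accounting of how the support structure evolves when two structured transforms convolve.

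\textbf{The main obstacle.} The naive $\ell_1$-submultiplicativity $\|\widehat{\psi_1}*\cdots*\widehat{\psi_T}\|_1\le\prod_t\|\widehat{\psi_t}\|_1$ is useless here, since each individual $\|\widehat{\psi_t}\|_1$ can be polynomial in $n$ and $T=\Theta(1/\eps^2)$ iterations would blow the bound up astronomically. The hard part of the argument is therefore to show that the \emph{average} growth per convolution step, restricted to the structured ``bounded'' part of the spectrum that actually contributes to $B(\tau)$, is only a factor of $1+O(\eps^2)$, so that after $T=\ctwo/(\alpha\eps^2)$ steps the total growth is at most a constant. Pushing this inductive statement through will require a delicate two-parameter invariant that couples the $\ell_1$ norm with the combinatorial ``boundedness'' of the support, and I expect the main combinatorial effort to lie in proving that a random matching almost surely interacts with the evolving support in the way the invariant needs. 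Once the lemma is in place, assembling the final bound $|\Pi|\ge n/(C_0/\eps)^{C_0/\eps^2}$ is a matter of plugging the iterated bound into the inequality $\E_\tau|B(\tau)|=\Omega(1)$ and solving for the message length.
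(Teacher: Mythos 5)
Your proposal correctly identifies the paper's central idea: controlling the $\ell_1$ norm of the Fourier transform of the partial product $\Phi_{t_0}$ (what the paper calls $h_{t_0}$) via a $(C,s^*)$-boundedness invariant, and exploiting the sparse, matching-induced support of each player's transform (Lemma~\ref{lm:matching-fourier}) together with a hypercontractivity-derived $\ell_1$ bound (Lemma~\ref{lm:L1xorKKL}) to control a single convolution step. The paper's setup differs from yours in one architectural choice worth noting: rather than reducing to a scalar bias $B(\tau)$ and bounding $\mathbb{E}_\tau|B(\tau)|$, the paper runs a round-by-round hybrid argument (Lemma~\ref{lm:main-tvd}) that bounds the total variation distance between $(M_{1:T},S^Y_{1:T})$ and $(M_{1:T},S^N_{1:T})$. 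The payoff is that each hybrid step requires only showing that, conditioned on a good event $\E_t$ depending on the history, the single-round distribution $M_tX^*$ is \emph{pointwise} close to uniform on $\{0,1\}^{M_t}$ (Lemma~\ref{lm:pdfs-closeness}); this localizes the conditioning on low-probability failures of the boundedness invariant and never manipulates the full $T$-fold convolution sum $B(\tau)$. Your $B(\tau)$ route could likely be carried through, but you would in any case need the same conditioning machinery to handle those failures, at which point the two routes converge.

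There is, however, a genuine quantitative gap that would derail your plan. You expect the per-step growth of the $\ell_1$ invariant to be a factor of $1+O(\eps^2)$, so that $T=\Theta(1/\eps^2)$ rounds accumulate only a constant loss. This cannot be right, and it is not what the inductive step delivers: Lemma~\ref{lm:induction} degrades the boundedness constant from $C$ to $(10^9/\delta)C$ with $\delta=1/(1000T)$, i.e.\ by a factor $\poly(T)$ per round, so the cumulative loss after $T$ rounds is $T^{O(T)}=(1/\eps)^{O(1/\eps^2)}$ --- precisely the exponential appearing in the final bound $|\Pi|\ge n/(C_0/\eps)^{C_0/\eps^2}$. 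A $(1+O(\eps^2))$ per-round rate would give $|\Pi|=\Omega(n)$ with no exponential dependence on $T$, which contradicts the component-growing protocol discussed in the remark after Theorem~\ref{thm:main-comm}: that explicit protocol solves \DIHP\ with $|\Pi|\approx n/c^T$ for a constant $c>1$. So any attempt to push a $(1+O(\eps^2))$ rate through the inductive step is doomed; the invariant must be allowed to deteriorate by $\poly(T)$ per round, and the theorem statement must (and does) absorb the resulting $T^{O(T)}$ loss.
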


\begin{remark}
We note that there exists a protocol $\Pi$ with $|\Pi|=n/c^T$ which solves \DIHP$(n,\alpha,T)$ for constant $\alpha$, where $c=c(\alpha)>1$, which makes our lower bound close to tight (up to the dependence on $1/\e$ in the base of the exponent, which we think can likely be removed by a more careful, but somewhat more complex, analysis). The protocol works as follows. The first player posts bits on the first $s/(1+\alpha)^T$ edges of the matching $M_1$. Then each player posts bits on all edges incident to at least one of the edges which have been revealed previously. In other words, the players keep growing connected components in the graph whose edges they reveal the bits on. It can be shown that with high probability each player will post at most $s$ bits and a cycle will be found thus solving the problem. See Section \ref{sec:simple-protocol} for more details.
\end{remark}
\paragraph{Reduction from \DIHP~to MAX-CUT.} We now give a reduction from \DIHP~to MAX-CUT (see Theorem~\ref{thm:reduction} below). Our main result, Theorem~\ref{thm:main}, then follows by putting together Theorem~\ref{thm:reduction} and the yet to be proved Theorem~\ref{thm:main-comm}. The rest of the paper is then devoted to proving Theorem~\ref{thm:main-comm}.

We start with the reduction. We show that for every $\e > 0$, if there exists a single-pass streaming algorithm {\bf ALG} that uses space $s = s(n)$ and approximates the MAX-CUT value to within a factor of $(2 - \e)$ with probability at least $9/10$, then instances of  \DIHP$(n, \alpha, T)$ with $\alpha=10^{-11}$ and $T=\ctwo/(\alpha \e^2)$ can be distinguished with probability at least $2/3$ by a protocol that uses messages of size at most $s$. This reduction combined with Theorem~\ref{thm:main-comm}, establishes our main result, namely, every single-pass streaming algorithm needs $\Omega(n)$ space to approximate MAX-CUT value to within a factor of $(2-\e)$ for every fixed $\e > 0$.

The main idea of the reduction is to map instances of \DIHP~to instances of MAX-CUT by only considering those edges where the corresponding entry of the $w_t$ vector is $1$. This in turn induces a distribution over \YES and \NO instances for MAX-CUT as below: 

\begin{description}
\item[YES]  Let $G'_t=(V, E'_t)$ be the (bipartite) graph obtained by including those edges in $E_t$ that cross the chosen (hidden) bipartition. Let $E':=E'_1\cup E'_2\cup \ldots \cup E'_{T}$, and let $G'(V,E')$ denote the final graph.

\item[NO]   Let $G'_t=(V, E'_t)$ be the graph obtained by including each edge in $E_t$ independently with probability $1/2$. Let $E':=E'_1\cup E'_2\cup \ldots \cup E'_T$, and let $G'(V,E')$ denote the final graph.

\end{description}

We denote the input distribution defined above by $\D^{Y}_{graph}$ (\YES case) and $\D^{N}_{graph}$ (\NO case) respectively. Let $\D_{graph}=\frac1{2}\D^{Y}_{graph}+\frac1{2}\D^{N}_{graph}$. We note that the graphs generated by our distribution $\D_{graph}$ are in general multigraphs. We note that the expected number of repeated edges is only $O(1/\e^2)$, and edge multiplicities are bounded by $2$ with high probability.
  
Using this distribution we get
\begin{theorem}[Reduction from \DIHP~to MAX-CUT]\label{thm:reduction}
For every $\e, \alpha \in (n^{-1/10}, 1)$ and $T=\ctwo/(\alpha\e^2)$, the following conditions hold for sufficiently large $n$.  Let {\bf ALG} denote a (possibly randomized) single-pass streaming algorithm for approximating MAX-CUT value in (multi)graphs to within a factor of $(2 - \e)$ using space $s=s(n)$ on graphs on $n$ nodes with probability at least $9/10$. Then {\bf ALG} can be used to obtain a deterministic protocol $\Pi$ for the \DIHP$(n, \alpha, T)$ with $|\Pi|\leq s$ that succeeds with probability at least $2/3$ over the randomness of the input. This holds even if {\bf ALG} is only required to work on multigraphs that contain at most $O(1/\e^2)$ repeated edges, and edge multiplicity is bounded by $2$.
\end{theorem}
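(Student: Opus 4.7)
The plan is a standard streaming-to-communication simulation combined with a short random-graph argument showing a factor-$(2-\e)$ gap between $\text{MAX-CUT}(G')$ on $\D^Y_{graph}$ and on $\D^N_{graph}$. Given the streaming algorithm \textbf{ALG}, player $t$ forms $E'_t=\{e\in M_t:(w_t)_e=1\}$ from the public matching $M_t$ and the private label vector $w_t$, feeds these edges into \textbf{ALG} starting from the state $S_{t-1}$ forwarded by player $t-1$ (with $S_0$ the initial state), and sends the resulting state $S_t$. The last player runs the post-processing of \textbf{ALG} and outputs a single bit by comparing \textbf{ALG}'s estimate to a threshold $\tau$. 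The message length equals the working space of \textbf{ALG}. A standard averaging argument fixes the random tape of \textbf{ALG} so as to preserve success probability at least $9/10$ on $\D_{graph}$, which in turn (since $\D_{graph}=\tfrac12(\D^Y_{graph}+\D^N_{graph})$ and each conditional success probability is bounded above by $1$) gives success probability at least $4/5$ on each of $\D^Y_{graph}$ and $\D^N_{graph}$.

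In the YES case, $w_t=M_tX^*$, so $E'_t$ is exactly the set of edges of $M_t$ that cross the bipartition induced by $X^*$; hence $G'$ is bipartite and $\text{MAX-CUT}(G')=|E'|$. A Chernoff bound yields $|E'|=(1\pm\e^2/100)T\alpha n/2$ with probability $1-o(1)$, so \textbf{ALG}'s output lies in $[|E'|/(2-\e),|E'|]$ with probability at least $4/5-o(1)$.

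In the NO case, each edge of $\bigcup_t M_t$ is independently retained in $E'$ with probability $1/2$. For any fixed bipartition $A\subseteq[n]$, conditioning on the (public) matchings and then invoking independence of the $w_t$ bits, the number of crossings of $A$ surviving in $E'$ is $\mathrm{Bin}(C_A,1/2)$ where $C_A=|\bigcup_t M_t\cap E(A,\bar A)|\le T\alpha n/2\cdot (1+o(1))$ with high probability. Hoeffding combined with concentration of $|E'|$ gives
\[
\Pr\bigl[\text{cross}_{E'}(A)\ge (1/2+\e/4)\,|E'|\bigr]\le e^{-\Omega(\e^2 T\alpha n)}=e^{-\Omega(\ctwo n)},
\]
so a union bound over the $2^n$ partitions (for $\ctwo$ a sufficiently large constant) yields $\text{MAX-CUT}(G')\le (1/2+\e/4)|E'|$ with probability $1-o(1)$. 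Since $1/(2-\e)-(1/2+\e/4)=\e^2/(4(2-\e))>0$, the two regimes are separated by a gap of order $\e^2\cdot|E'|$, and choosing $\tau$ to be any fixed value in the nonempty open interval $\bigl((1/2+\e/4)(1+\e^2/100)\cdot T\alpha n/2,\ (1-\e^2/100)/(2-\e)\cdot T\alpha n/2\bigr)$ makes the protocol correct with probability at least $2/3$ on both $\D^Y_{graph}$ and $\D^N_{graph}$. The multigraph stipulation is handled by a standard second-moment calculation on $\bigcup_t M_t$ bounding the expected number of duplicated edges and the probability of any edge having multiplicity exceeding $2$.

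The only genuinely nontrivial step is the NO-case MAX-CUT upper bound. The computation itself is a routine Chernoff-plus-union-bound, but its validity rests on conditioning on the public matchings before exploiting independence of the $w_t$ bits: within a single matching the label bits are i.i.d.\ $\mathrm{Ber}(1/2)$ by definition of $\D^N$, and across $t$ they are fully independent, which is exactly what allows the union bound over $2^n$ partitions to absorb the $e^{-\Omega(\ctwo n)}$ per-partition tail. Everything else -- the simulation, derandomization, YES-case identity, threshold choice, and multigraph handling -- is bookkeeping.
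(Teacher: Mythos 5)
Your proposal follows the same route as the paper: a player-by-player simulation of the streaming algorithm, derandomization by fixing a good random tape against the input distribution $\D_{graph}$, and a separation lemma giving MAX-CUT $\geq m_0$ on $\D^Y_{graph}$ and MAX-CUT $\leq m_0/(2-\e)$ on $\D^N_{graph}$ with high probability, from which a thresholded decision succeeds with probability $\geq 2/3$. The paper packages the separation argument as Lemma~\ref{lm:gap}, proved via Lemma~\ref{lm:cut} (Chebyshev for the YES direction), Lemma~\ref{lm:chernoff} (an adaptive Chernoff for the NO direction), and a union bound over $2^n$ cuts, which is essentially the outline you give.

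Two points where you gloss over steps that the paper handles with more care. First, in the YES case you invoke ``a Chernoff bound'' for $|E'|$, but $|E'|$ is the number of edges of the fixed multigraph $\bigcup_t M_t$ that cross a \emph{shared} random bipartition $X^*$, and these crossing indicators are \emph{not} independent (any two edges that share an endpoint are correlated). Standard Chernoff does not apply; the paper uses Chebyshev via Lemma~\ref{lm:cut}, which explicitly accounts for edge multiplicity in the variance bound and gives the (weaker but sufficient) polynomial tail $1/\sqrt n$. Your conclusion is achievable, but the tool is not the one you name. Second, in the NO case you condition on the matchings and get $\text{cross}_{E'}(A)\sim\mathrm{Bin}(C_A,1/2)$, then assert $C_A\leq (1+o(1))T\alpha n/2$ ``with high probability.'' This is the crux: if $C_A$ were close to $T\alpha n$ for some $A$, the mean $C_A/2$ would already exceed your target $(1/2+\e/4)|E'|$ and Hoeffding would give nothing. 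To make your conditioning work you must separately show that, over the matching randomness, $C_A \leq (1+\Theta(\e))T\alpha n/2$ \emph{simultaneously for all $2^n$ cuts}, which requires its own exponential tail and union bound. The paper sidesteps this by instead running the martingale Chernoff (Lemma~\ref{lm:chernoff}) directly on the crossing indicators with the uniform conditional bound $p\leq\frac{1+3/n}{2}$, which folds the matching randomness and the label randomness into one estimate and never needs a worst-case handle on $C_A$. Your decomposition is fixable with the parameters at hand ($\e^2 T\alpha n = \ctwo n \gg n$), but as written the proposal treats this as ``bookkeeping'' when it is actually the one place where the argument could fail.

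Everything else -- the simulation, the averaging/derandomization, the bipartiteness of $G'$ in the YES case, the threshold $\tau$, and the remark on multigraph multiplicities -- matches the paper's proof, and the numerical check that your threshold interval is nonempty is correct.
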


The proof relies on the following Lemma, which establishes that there is almost a factor of $2$ gap between MAX-CUT value in $\D_{graph}^Y$ and $\D_{graph}^N$:
\newcommand{\lemmagap}{
For every $\e, \alpha\in (n^{-1/10},1), \alpha<1/4$ and $T=512/(\alpha\e^2)$  if $G'_T=(V, E'_T), |V|=n, |E'|=m$ be generated according to the process above, then for sufficiently large $n$ there exists $m_0=m_0(n,\alpha,T)$ such that in the \YES case the MAX-CUT value is at least $m_0$, and in the \NO case the MAX-CUT value is at most $m_0/(2-\e)$, both with probability at least $1-1/\sqrt{n}$.
}

\begin{lemma}\label{lm:gap}
\lemmagap
\end{lemma}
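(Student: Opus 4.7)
The plan is to set $m_0 := T\alpha n/2 - \Delta$ for a slack $\Delta = O(\sqrt{T\alpha n\log n})$ and to show that in the \YES case $\text{MAX-CUT}(G') \geq m_0$ while in the \NO case $\text{MAX-CUT}(G') \leq T\alpha n/4 + O(n\sqrt{T\alpha})$, so that the ratio exceeds $2-\e$ whenever $T\alpha = \Omega(1/\e^2)$. This is precisely the reason for choosing $T = 512/(\alpha\e^2)$, with the absolute constant $512$ taken large enough to dominate the hidden constants arising below.

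For the \YES bound, I would observe that every edge of $E' = \bigcup_t E'_t$ (counted with multiplicity across the $E'_t$'s) is cut by $X^*$ by construction, and hence $\text{MAX-CUT}(G') = |E'| = \sum_t |E'_t|$. Conditioning on the matchings $\{M_t\}$, the $\alpha n$ edges of each $M_t$ are vertex-disjoint, so the cut indicators $X^*_u \oplus X^*_v$ under the uniform $X^*$ are i.i.d.\ $\mathrm{Bernoulli}(1/2)$, and remain independent across $t$. Thus $|E'| \sim \mathrm{Bin}(T\alpha n, 1/2)$, and a Chernoff bound gives $|E'| \geq T\alpha n/2 - O(\sqrt{T\alpha n\log n})$ except with probability $n^{-\omega(1)}$.

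For the \NO bound, I would fix a bipartition $S \subseteq V$ and let $C(S) = \sum_t |\delta_{M_t}(S)|$ denote the size of the cut induced by $S$ in the underlying multigraph $H = \bigcup_t M_t$. Conditional on $\{M_t\}$, each edge of $H$ is retained in $G'$ independently with probability $1/2$, so $\text{Cut}_{G'}(S) \sim \mathrm{Bin}(C(S), 1/2)$; a Chernoff bound gives $\text{Cut}_{G'}(S) \leq C(S)/2 + O(\sqrt{C(S)\cdot n})$ except with probability $2^{-2n}$, and a union bound over the $2^n$ bipartitions yields $\text{MAX-CUT}(G') \leq \max_S C(S)/2 + O(n\sqrt{T\alpha})$ (using $C(S) \leq T\alpha n$). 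It then remains to bound $\max_S C(S)$: $\mathbb{E}[C(S)] \leq T\alpha n/2$ for every $S$ (the maximum $\alpha n/2$ per matching occurring at $|S|=n/2$), and Azuma--Hoeffding applied to the $T\alpha n$ independent vertex-pair choices producing the $M_t$'s gives per-$S$ concentration $O(n\sqrt{T\alpha})$, so a second union bound gives $\max_S C(S) \leq T\alpha n/2 + O(n\sqrt{T\alpha})$ with probability $1 - 2^{-\Omega(n)}$.

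Putting the bounds together, the gap ratio is at least
\[
\frac{T\alpha n/2 - O(\sqrt{T\alpha n\log n})}{T\alpha n/4 + O(n\sqrt{T\alpha})} \;=\; 2\cdot\frac{1 - O(1/\sqrt{T\alpha})}{1 + O(1/\sqrt{T\alpha})} \;\geq\; 2-\e,
\]
once $T\alpha \geq c_0/\e^2$ for a sufficiently large absolute constant $c_0$, which $T\alpha = 512/\e^2$ ensures. All the failure probabilities sum to at most $n^{-1/2}$ for large $n$. The main obstacle will be pure bookkeeping: chasing the constants to verify that $512$ is indeed large enough, and handling the few multi-edges in $H$ (at most $O(1/\e^4)$ edges, each of multiplicity at most $2$), which perturb $\text{MAX-CUT}$ by an additive $O(1/\e^4)$ easily absorbable into $\Delta$.
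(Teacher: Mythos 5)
Your plan is structurally sound, and your \NO-case argument is a valid alternative to the paper's, but there is a genuine gap in the \YES case.

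\textbf{The \YES-case independence claim is false.} You assert that, conditioned on $\{M_t\}$, the cut indicators $X^*_u\oplus X^*_v$ over all edges of $\bigcup_t M_t$ are jointly i.i.d.\ $\mathrm{Bernoulli}(1/2)$, so that $|E'|\sim\mathrm{Bin}(T\alpha n,1/2)$. This is only true when the edge set $\bigcup_t M_t$ is a forest with no repeated edges. The indicators $\{X^*_u\oplus X^*_v\}_{(u,v)\in E}$ are a linear image of $X^*$ over $\mathrm{GF}(2)$, and any cycle in $E$ gives a linear dependence (the indicators around a cycle XOR to $0$), while a repeated edge gives two \emph{identical} indicators. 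Since $T\alpha = 512/\e^2$ is a large constant, $\bigcup_t M_t$ has $\Theta(n)$ edges on $n$ vertices and is well into the supercritical regime, so it has $\Theta(n)$ cycles and the joint distribution is far from binomial; the paper also points out that $O(1/\e^2)$ edges are repeated. Consequently the Chernoff bound does not apply as written. The indicators \emph{are} pairwise independent for distinct vertex-disjoint or vertex-sharing edge pairs, which is exactly why the paper's proof uses a Chebyshev-type bound (Lemma~\ref{lm:cut}), giving failure probability $\Theta(1/(\delta^2\alpha n))\leq 1/\sqrt n$ --- this is where the $1/\sqrt n$ in the lemma statement and the hypothesis $\e,\alpha>n^{-1/10}$ come from, and it explains why the statement does not promise the $n^{-\omega(1)}$ you claim. (If you insist on an exponential tail, you would have to replace the i.i.d.\ claim with a bounded-difference / McDiarmid argument over the coordinates of $X^*$, with Lipschitz constants $\deg(u)\leq T$.)

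\textbf{The \NO case is correct but takes a different route.} You split the randomness into two stages: first bound $\max_S C(S)$ (cut size in the underlying union of matchings) via an edge-exposure martingale and a union bound, then conditionally bound $\mathrm{Cut}_{G'}(S)\sim\mathrm{Bin}(C(S),1/2)$ and union-bound again. The paper instead does a single application of a conditional Chernoff bound (Lemma~\ref{lm:chernoff}): reveal the edges of $\bigcup_t M_t$ one at a time, note each conditionally crosses a fixed $S$ and is retained with probability at most $(1+3/n)/4$, and union-bound once over all $2^n$ cuts. Your version is fine (the bounded-difference constant in the Azuma step is a constant $\leq 2$ once you account for the matching constraint replacing ``independent pair choices'') but requires two union bounds and slightly more bookkeeping. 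The final ratio computation and the reason for choosing $T\alpha=\Theta(1/\e^2)$ are correct, modulo the \YES-case fix above.
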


The proof of the lemma uses the following version of Chernoff bounds.

\newcommand{\lemmaChernoff}{
Let $X=\sum_{i=1}^n X_i$, where $X_i$ are Bernoulli 0/1 random variables satisfying, for every $k\in[n]$, $\mathbb{E}[X_k|X_1,\ldots,X_{k-1}]\leq p$ for some $p\in(0,1)$. Let $\mu=np$. Then for all $\Delta>0$ \[ \prob[X\geq \mu+\Delta]\leq \exp\left(-\frac{\Delta^2}{2\mu+2\Delta}\right). \]
}
\begin{lemma}\label{lm:chernoff}
\lemmaChernoff
\end{lemma}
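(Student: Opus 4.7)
The plan is to follow the standard Chernoff/Bernstein derivation via the moment generating function, but carry out the MGF bound one coordinate at a time so that the dependence structure is handled through the tower property of conditional expectation rather than through independence.

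First I would bound the conditional MGF of each $X_k$. Since $X_k\in\{0,1\}$ we have $e^{tX_k}=1+(e^t-1)X_k$, so
\[
\mathbb{E}\bigl[e^{tX_k}\,\big|\,X_1,\ldots,X_{k-1}\bigr]=1+(e^t-1)\,\mathbb{E}[X_k\,|\,X_1,\ldots,X_{k-1}]\le 1+(e^t-1)p\le \exp\bigl((e^t-1)p\bigr)
\]
for every $t>0$, where the last step uses $1+x\le e^x$. Iterating this via the tower property of conditional expectation (factor out the last conditional MGF, then take expectation over the remaining variables, and repeat) gives the unconditional bound $\mathbb{E}[e^{tX}]\le \exp\bigl((e^t-1)\mu\bigr)$, exactly as if the $X_i$ were independent Bernoulli($p$).

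Next I would apply Markov's inequality in the standard way: for any $t>0$,
\[
\Pr[X\ge \mu+\Delta]\le e^{-t(\mu+\Delta)}\,\mathbb{E}[e^{tX}]\le \exp\bigl((e^t-1)\mu-t(\mu+\Delta)\bigr),
\]
and optimize by choosing $t=\ln(1+\Delta/\mu)$. This yields the classical form
\[
\Pr[X\ge \mu+\Delta]\le \exp\bigl(-\mu\,h(\Delta/\mu)\bigr),\qquad h(x)=(1+x)\ln(1+x)-x.
\]

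Finally I would convert this to the weaker but cleaner bound in the statement. The only analytic step is the elementary inequality
\[
(1+x)\ln(1+x)-x \;\ge\; \frac{x^2}{2+2x}\qquad\text{for all }x\ge 0,
\]
which one can verify by comparing derivatives at $x=0$ (both sides vanish) and checking that the derivative of the left side dominates the derivative of the right side on $[0,\infty)$. Plugging $x=\Delta/\mu$ into this inequality and multiplying by $\mu$ gives $\mu\,h(\Delta/\mu)\ge \Delta^2/(2\mu+2\Delta)$, which yields the claimed bound. There is no real obstacle here; the only point requiring care is making sure the MGF iteration is valid under the mere conditional-expectation hypothesis (no independence is used), which is exactly why the argument is stated through the tower property rather than by a product of MGFs.
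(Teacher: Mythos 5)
Your proof is correct and takes essentially the same approach as the paper: bound the moment generating function one coordinate at a time via the tower property (so the conditional-mean hypothesis, not independence, is what is used), apply Markov's inequality, and convert to the Bernstein form by elementary inequalities. The only deviation is cosmetic — the paper directly manipulates the ratio $(1+pv)^{\mu/p}/(1+v)^{\mu+\Delta}$ after the substitution $v=e^u-1$ using $e^{x-x^2/2}\le 1+x\le e^x$, whereas you weaken to the Poisson MGF, obtain the relative-entropy exponent $\mu\,h(\Delta/\mu)$, and then invoke the standard inequality $(1+x)\ln(1+x)-x\ge \frac{x^2}{2+2x}$ — both routes yield the identical bound.
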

The (somewhat standard) proof is given in Appendix~\ref{tools}.  We also need 
\begin{lemma}\label{lm:cut}
Let $G$ be a miltigraph with $n$ vertices and $m$ edges (counted with multiplicities) in which each edge has multiplicity at most $k$. Let $S\subset [n]$ be a uniformly random subset of vertices and $X$ be the number of edges crossing $(S, \bar{S})$. Then for any $\delta>0$ we have 
\[
\prob[X<m/2\cdot (1-\delta)]\leq \frac{k}{\delta^2 m}.
\] 
\end{lemma}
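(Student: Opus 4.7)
The plan is to apply Chebyshev's inequality to $X$, which requires computing both the expectation and variance of $X$ and exploiting the sparsity-in-multiplicity condition to control the latter.

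First I would write $X = \sum_{e \in E} X_e$ where $X_e \in \{0,1\}$ is the indicator that edge $e$ has its two endpoints on opposite sides of the cut $(S, \bar{S})$. Since each vertex is placed in $S$ independently with probability $1/2$, for any single edge $e = (u,v)$ with $u \neq v$ we have $\Pr[X_e = 1] = 1/2$, so $\mathbb{E}[X] = m/2$.

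The main computation is the variance. I would write
\[
\operatorname{Var}(X) = \sum_{e, f \in E} \operatorname{Cov}(X_e, X_f),
\]
and observe that $\operatorname{Cov}(X_e, X_f) = 0$ unless $e$ and $f$ share both endpoints. Indeed, if $e$ and $f$ are vertex-disjoint then $X_e$ and $X_f$ depend on disjoint sets of independent coin flips. If $e$ and $f$ share exactly one endpoint, say $e = (u,v)$ and $f = (u,w)$ with $v \neq w$, then conditioned on the side of $u$, the events $\{X_e = 1\}$ and $\{X_f = 1\}$ each depend on a single independent fair coin ($v$ and $w$ respectively), so they are still independent. Only when $e$ and $f$ are parallel (share both endpoints) do we have $X_e = X_f$, contributing $\operatorname{Var}(X_e) = 1/4$. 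Using the multiplicity hypothesis, for each edge $e$ there are at most $k$ edges $f$ (including $e$ itself) parallel to it, so
\[
\operatorname{Var}(X) \leq m \cdot k \cdot \tfrac{1}{4} = \tfrac{mk}{4}.
\]

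Finally I would invoke Chebyshev's inequality to get
\[
\Pr\!\left[X < \tfrac{m}{2}(1-\delta)\right] \leq \Pr\!\left[\,|X - \mathbb{E}[X]| \geq \tfrac{\delta m}{2}\,\right] \leq \frac{\operatorname{Var}(X)}{(\delta m/2)^2} \leq \frac{mk/4}{\delta^2 m^2/4} = \frac{k}{\delta^2 m},
\]
which is the claimed bound. There is no serious obstacle here; the only subtlety is correctly identifying when $\operatorname{Cov}(X_e, X_f) = 0$, specifically noticing that edges sharing a single endpoint still give independent indicators because of the fresh randomness at the non-shared vertices.
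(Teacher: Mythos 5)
Your proof is correct and follows essentially the same route as the paper's: apply Chebyshev's inequality after observing that distinct (non-parallel) edges have uncorrelated cut indicators even when they share one endpoint, so $\operatorname{Var}(X)\leq \tfrac14 km$. The paper phrases the variance computation in terms of distinct edge classes and multiplicities $m_i$ (using $\sum m_i^2\leq k\sum m_i=km$), while you sum over all $m$ edge slots and count parallel pairs, but the underlying estimate is identical.
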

The proof is a simple application of Chebyshev's inequality, and is presented in Appendix~\ref{tools}. We now give

\noindent{ {\bf Proof outline of Lemma~\ref{lm:gap}}
We let $m_0=\frac{\alpha n T}{2}\cdot (1-\delta)$ with $\delta=\e/100$. In the \YES case the graph is bipartite so the value of MAX-CUT is equal to the number of edges in the graph. Since in the \YES case we only keep those edges of the matchings which cross a fixed random bipartition, and in the union of matchings every edge has multiplicity at most $T$, Lemma~\ref{lm:cut} ensures that the probability that the number of edges if smaller than $m_0$ is at most 
\[
\frac{T}{\delta^2 \alpha n T}=\frac{1}{\delta^2\alpha n}\leq 1/\sqrt{n},
\]
since $\e, \alpha>n^{-1/10}$ by assumption of the lemma.

We now consider the \NO case. Since every edge of the matchings is kept with probability $1/2$ independently of the others, by Lemma~\ref{lm:chernoff} with probability at least $1-\exp{\left(-\frac{\delta^2\alpha n T }{4(1+\delta)}\right)}$ the number $m$ of edges in the graph will be at most $\frac{\alpha n T}{2}\cdot (1+\delta)$. One then shows using Lemma~\ref{lm:chernoff} that for every cut $(S, \bar{S})$, where $S \subseteq V$, the probability that significantly more than half of the edges of the graph cross the cut is smaller than $2^{-2n}$. Taking a union bound over all $2^n$ possible cuts completes the proof. The detailed proof is provided in Appendix~\ref{tools}.
$\qed$}

Equipped with Lemma~\ref{lm:gap}, we can now give a proof of the reduction:

\begin{proofof}{Theorem~\ref{thm:reduction}}
By Lemma~\ref{lm:gap} in the \YES case the MAX-CUT value is at least $m_0$, and in the \NO case the MAX-CUT value is at most $m_0/(2-\e)$, both with probability at least $1-1/\sqrt{n}$.

Thus with high probability, the MAX-CUT value in a \YES instance of \DIHP\ is at least $(2 - \e)$ times the MAX-CUT value in a \NO instance of \DIHP. To complete the reduction, it now suffices to show that the algorithm {\bf ALG} can be simulated by a \DIHP\ protocol with message complexity at most $s$. This simulation can be done as follows: player $t$ upon receiving its input $(M_t, w_t)$, runs {\bf ALG} using the state posted by player $(t-1)$ on the set of edges in $M_t$ where the $w_t$ value is $1$. The state of the algorithm {\bf ALG} at the end is then posted by player $t$ on the board. The last player then outputs $\YES$ if {\bf ALG} outputs \YES and \NO otherwise.  Note that since we are evaluating the resulting protocol with respect to an input distribution, it is possible to make the protocol deterministic by fixing the randomness of {\bf ALG} appropriately.
\end{proofof}

Given  Theorem~\ref{thm:reduction}~and assuming Theorem~\ref{thm:main-comm}, our main theorem follows easily.

\begin{proofof}{Theorem~\ref{thm:main}}
The proof follows by putting together Theorem~\ref{thm:reduction} and Theorem~\ref{thm:main-comm}. 
\end{proofof}

%!TEX root = ./main.tex

\section{Preliminaries}\label{sec:prelims}

In this section we first review Fourier analysis on the boolean hypercube and give a version of the KKL bound that will be important in our analysis(Section~\ref{sec:prelims-fourier}), give basic facts about the total variation distance between distributions (Section~\ref{sec:prelims-tvd})  and state basic bounds on uniformly random matchings that we will use (Section~\ref{sec:prelims-random-matchings}). 

\subsection{Fourier analysis on the boolean hypercube}\label{sec:prelims-fourier}

Let $p: \bool^n\to \mathbb{R}$ be a real valued function defined on the boolean hypercube. We use the following normalization in the Fourier transform:
$$
\wh{p}(v)=\frac1{2^n}\sum_{x\in \bool^n} p(x)\cdot (-1)^{x\cdot v}.
$$
With this normalization the inverse transform is given by
$$
p(x)=\sum_{v\in \bool^n} \wh{p}(v)\cdot (-1)^{x\cdot v}.
$$

For a pair of functions $p, q: \bool^n\to \mathbb{R}$ the convolution of $p$ and $q$, denoted by $p*q$ is defined as 
$$
(p*q)(v)=\sum_{x\in \bool^n} p(x) q(x+v).
$$

We will use the relation between multiplication of functions in time domain and convolution in frequency domain to analyze the Fourier spectrum of $h_t=f_1\cdot f_2\cdot \ldots \cdot f_t$ (recall that $h_t$ is the indicator of $\B_t$ as per Definition~\ref{def:at}). With our normalization of the Fourier transform the convolution identity is
\begin{equation}\label{eq:convolution}
\begin{split}
\wh{(p\cdot q)}(v)&=\sum_{x\in \bool^n} \wh{p}(x)\wh{q}(x+v).\\
\end{split}
\end{equation}
Thus, for each $t=1,\ldots, T$ we have that 
$$
\wh{h}_t=\wh{f}_1*\ldots*\wh{f}_t.
$$
This identity will form the basis of our proof.

We will also need Parseval's equality, which with our normalization takes form
\begin{equation}\label{eq:parseval}
\begin{split}
||\wh{p}||^2&=\sum_{v\in \bool^n} \wh{p}(v)^2=\left(\frac1{2^n}\sum_{x\in \bool^n} p(x)\cdot (-1)^{x\cdot v}\right)^2=\frac1{2^n} \sum_{x\in \bool^n} p(x)^2=\frac1{2^n}||p||^2\\
\end{split}
\end{equation}

\begin{remark}\label{rm:indicator-l2}
If $f(x): \bool^n \to \bool$ is an indicator of a set $\A\subseteq \bool^n$, we have $||f||^2=|\A|$, so that $||\wh{f}||^2=\frac{|\A|}{2^n}$.
\end{remark}

\begin{definition}\label{def:tilde}
For a function $h:\bool^n\to \bool$ that is the indicator of a set $\A\subseteq \bool^n$ we write $\wt{h}$ to denote the Fourier transform of $h$ scaled by $2^n/|\A|$. Specifically, for every $v\in \bool^n$ we have
\begin{equation*}
\wt{h}(v)=\frac{2^n}{|\A|}\wh{h}(v)=\frac1{|\A|}\sum_{x\in \bool^n} h(x)\cdot (-1)^{x\cdot v}=\mathbb{E}_{x\in \A}[(-1)^{x\cdot v}].
\end{equation*}
\end{definition}
The following two important Lemmas will be proved in Appendix \ref{app-F}.
\begin{lemma}\label{lm:L1xorKKL}
Let $A\subset \{0,1\}^m$ be a set of cardinality at least $2^{m-d}$ with indicator function $f$. Then for every $y\in \{0,1\}^m$ and every $q\leq d$ one has 
\[
\sum_{\substack{x\in \{0,1\}^m \\ |x\oplus y|=q }}\left|\widetilde{f}(x)\right| \leq \sqrt{\binom{m}{q} \left(\frac{4d}{q}\right)^{q}},
\]
where for $x\in \{0,1\}^m$ by $|x|$ we denote the number of ones in $x$.
\end{lemma}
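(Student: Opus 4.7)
The plan is to prove the bound via Cauchy--Schwarz plus a level-$q$ inequality derived from Bonami--Beckner hypercontractivity, after a Fourier-side shift that reduces to the case $y=0$.

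The first step is to introduce the twisted function $f_y:\{0,1\}^m\to\{-1,0,1\}$ defined by $f_y(z) := f(z)(-1)^{z\cdot y}$. A direct computation in the paper's Fourier normalization gives $\widehat{f_y}(x) = \widehat{f}(x\oplus y)$, so with $\mu := |A|/2^m \geq 2^{-d}$ and $\widetilde{f} = \widehat{f}/\mu$, the substitution $x'\mapsto x'\oplus y$ rewrites the quantity to be bounded as $\mu^{-1}\sum_{|x|=q}|\widehat{f_y}(x)|$. Applying Cauchy--Schwarz on the weight-$q$ slice of size $\binom{m}{q}$ reduces the lemma to proving the level-$q$ bound
\[
W^q[f_y] := \sum_{|x|=q}\widehat{f_y}(x)^2 \;\leq\; \left(\frac{4d}{q}\right)^{\!q}\mu^2.
\]

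For this I would apply Bonami--Beckner, $\|T_\rho f_y\|_2 \leq \|f_y\|_{1+\rho^2}$ for $\rho\in(0,1]$. The observation that makes the argument work despite $f_y$ being signed rather than Boolean is that $|f_y|\equiv f$ pointwise, so $\|f_y\|_p = \mu^{1/p}$ for every $p\geq 1$, exactly as for the indicator of a set of density $\mu$. This yields $\rho^{2q}W^q[f_y] \leq \mu^{2/(1+\rho^2)}$. Choosing $\rho^2 := q/(2\ln(1/\mu))$ (which lies in $(0,1]$ whenever $q\leq 2\ln(1/\mu)$) together with the elementary bound $\mu^{-2\rho^2/(1+\rho^2)}\leq e^{2\rho^2\ln(1/\mu)}=e^q$ gives the level-$q$ inequality $W^q[f_y] \leq (2e\ln(1/\mu)/q)^q\mu^2$. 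Since the hypothesis $|A|\geq 2^{m-d}$ forces $\ln(1/\mu)\leq d\ln 2$ and since $2e\ln 2 < 4$, this upgrades to the desired $(4d/q)^q\mu^2$. The residual regime $q>2\ln(1/\mu)$ (equivalently $\mu>e^{-q/2}$) is handled by the trivial Parseval bound $W^q[f_y]\leq \|f_y\|_2^2=\mu$, which already suffices because then $\mu^2>e^{-q}>4^{-q}\geq (q/(4d))^q$.

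The main conceptual obstacle, and the only step beyond a textbook invocation, is recognizing that the level-$q$ inequality survives the sign twist by $(-1)^{z\cdot y}$. That rests entirely on the identity $|f_y|=f$: hypercontractivity enters only through the $L^p$ norms of $f_y$, and these are insensitive to the signs. Everything else---the Cauchy--Schwarz step, the optimization of $\rho$, and the numerical conversion of $2e\ln 2$ to $4$---is routine.
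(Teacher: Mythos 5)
Your proof is correct and uses essentially the same argument as the paper: the Fourier shift via the twist $f_y(z)=f(z)(-1)^{z\cdot y}$, a hypercontractive bound on the level-$q$ Fourier mass, and Cauchy--Schwarz to pass from $\ell_2$ to $\ell_1$ on the weight-$q$ slice. The only divergence is your calibration $\rho^2=q/(2\ln(1/\mu))$, which forces a separate Parseval case when $q>2\ln(1/\mu)$, whereas the paper takes $\delta=q/(2d\ln 2)$, which stays in $(0,1]$ for all $q\le d$ and avoids the case split.
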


\begin{lemma}\label{lm:matching-fourier}
Let $M_t\in \bool^{\alpha n\times n}$ be the incidence matrix of a matching $M$,  where the rows correspond to edges $e$ of $M$ ($M_{eu}=1$ if $e$ is incident on $u$ and $0$ otherwise). Let $g:\bool^{\alpha n} \to \bool^s$ for some $s>0$. Let $a\in \bool^s$ and let 
$q:\bool^{\alpha n}\to \bool$ be the indicator of the set $\A_{reduced}:=\{z\in \bool^{\alpha n}: g(z)=a\}$.
Further, let $f:\bool^n\to \bool$ denote the indicator of the set 
$$
\A_{full}:=\{x\in \bool^n: g(Mx)=a\}.
$$
Then for any $v\in \bool^n$
\begin{equation}
\hat f(v)=\left\lbrace
\begin{array}{lc}
0,&\text{~if~$v$~cannot be perfectly matched via edges of~}M\\
\hat q(w),&w\text{~the perfect matching of $v$ using edges of~}M\text{~o.w.~}\\
\end{array}
\right.
\end{equation}
Furthermore, the perfect matching of $v$, when it exists, is unique. The second condition above is equivalent to the existence of $w\in \{0, 1\}^{\alpha n}=\bool^{M}$ such that $v=M^Tw$. Thus, Fourier coefficients of $f$ are indexed by sets of edges of $M$. Note that nonzero weight $k$ coefficients of $\hat q$ are in one to one correspondence with nonzero weight $2k$ coefficients of $\hat f$, i.e. the only nonzero Fourier coefficients of $\wh{f}$ are of the form $\wh{f}(M^Tw)=\wh{q}(w)$ for some $w\in \bool^M$.
\end{lemma}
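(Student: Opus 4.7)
\textbf{Proof proposal for Lemma~\ref{lm:matching-fourier}.}
The plan is a direct computation of $\hat{f}(v)$ by changing variables from $x\in\bool^n$ to $z=Mx\in\bool^{\alpha n}$ and analyzing the resulting character sum over the fibres $\{x:Mx=z\}$, which are cosets of $\ker(M)$.

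First, observe that by definition $f(x)=q(Mx)$. Using the Fourier normalization of the paper,
\[
\hat f(v)=\frac1{2^n}\sum_{x\in\bool^n} q(Mx)(-1)^{x\cdot v}=\frac1{2^n}\sum_{z\in\bool^{\alpha n}} q(z)\cdot \Bigl(\sum_{x:\,Mx=z}(-1)^{x\cdot v}\Bigr).
\]
Since the rows of $M$ have pairwise disjoint supports of size $2$, they are linearly independent over $\mathbb{F}_2$, so each fibre $\{x:Mx=z\}$ is a coset of $\ker(M)$ of size $2^{n-\alpha n}$. The inner sum is a character of $\mathbb{F}_2^n$ summed over an affine subspace, and it vanishes unless the character is trivial on $\ker(M)$, i.e. unless $v\in\ker(M)^\perp=\mathrm{image}(M^T)$.

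In the case $v\notin\mathrm{image}(M^T)$, all inner sums vanish and $\hat f(v)=0$. To interpret this combinatorially, note that for any $w\in\bool^{\alpha n}$ the vector $M^Tw$ has value $w_e$ on both endpoints of each edge $e\in M$ and value $0$ on every vertex not covered by $M$, since each column of $M$ contains at most one $1$. Hence $v\in\mathrm{image}(M^T)$ iff the support of $v$ is a union of edges of $M$, which is precisely the statement that $v$ can be perfectly matched by edges of $M$. Moreover, the same property of the columns of $M$ shows that $M^T$ is injective: $M^Tw$ determines $w_e$ from the value of $v$ at either endpoint of $e$, so the matching $w$, when it exists, is unique.

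In the remaining case $v=M^Tw$ for a (unique) $w$, for any $x$ with $Mx=z$ we have $x\cdot v=x\cdot M^Tw=(Mx)\cdot w=z\cdot w$, so
\[
\sum_{x:\,Mx=z}(-1)^{x\cdot v}=2^{n-\alpha n}(-1)^{z\cdot w}.
\]
Plugging back in,
\[
\hat f(v)=\frac1{2^n}\cdot 2^{n-\alpha n}\sum_{z\in\bool^{\alpha n}} q(z)(-1)^{z\cdot w}=\frac1{2^{\alpha n}}\sum_{z}q(z)(-1)^{z\cdot w}=\hat q(w),
\]
establishing the claimed formula. The correspondence between nonzero weight-$k$ coefficients of $\hat q$ and nonzero weight-$2k$ coefficients of $\hat f$ is then immediate: each nonzero coordinate of $w$ contributes exactly two nonzero coordinates to $M^Tw$ (the two endpoints of the corresponding edge of $M$). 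No significant obstacle is anticipated; the only subtlety worth stating carefully is the combinatorial identification of $\mathrm{image}(M^T)$ with unions of matching edges and the injectivity of $M^T$, both of which follow from the column structure of $M$.
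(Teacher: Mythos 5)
Your proof is correct and follows essentially the same route as the paper's: both compute $\hat f(v)$ by grouping $x$ into fibres of $M$ (cosets of $\ker M$), apply character orthogonality over a coset to kill the contribution unless $v\in\ker(M)^\perp=\mathrm{image}(M^T)$, and then use the adjoint identity $x\cdot M^T w=(Mx)\cdot w$ to collapse the sum to $\hat q(w)$. The only difference is cosmetic — you argue coordinate-free via $\ker(M)^\perp=\mathrm{image}(M^T)$, while the paper writes down a particular solution $x(z)$ and an explicit basis $N$ of the kernel — so the mathematical content is the same.
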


\subsection{Total variation distance}\label{sec:prelims-tvd}

We define the notion of total variation distance between two distributions and state some of its useful properties. For a random variable $X$ taking values on a finite sample space $\Omega$ we let $p_X(\omega), \omega\in \Omega$ denote the pdf of $X$. For a subset $A\subseteq \Omega$ we use the notation $p_X(A):=\sum_{\omega\in A} p_X(\omega)$.
We will use the total variation distance $||\cdot ||_{tvd}$ between two distributions:

\begin{definition}[Total variation distance]
Let $\mu, \nu$ be two probability measures on a finite space $\Omega$. The total variation distance between $\mu$ and $\nu$ is given by
$V(\mu, \nu)=\max_{\Omega'\subseteq \Omega} (\mu(\Omega')-\nu(\Omega'))=\frac1{2}\sum_{\omega\in \Omega} |\mu(\omega)-\nu(\omega)|$.
\end{definition}
\begin{definition}
Let $X,Y$ be two random variables taking values on a finite domain $\Omega$. We denote the pdfs of $X$ and $Y$ by $p_X$ and $p_Y$ respectively. The total variation distance between $X$ and $Y$ is defined to be the total variation distance between $p_X$ and $p_Y$. We will write $||X-Y||_{tvd}$ to denote the total variation distance between $X$ and $Y$.
\end{definition}

We will need the following claim.
\begin{claim}[Claim~6.5 of~\cite{KKS15}]\label{cl:1}
Let $X, Y$ be two random variables. Let $W$ be independent of $(X, Y)$. Then for any function $f$ one has 
$||f(X, W)-f(Y, W)||_{tvd}\leq ||X-Y||_{tvd}$.
\end{claim}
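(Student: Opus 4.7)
The plan is to decompose the argument into two standard facts about total variation distance: (i) appending an independent coordinate preserves total variation distance, and (ii) applying a deterministic function cannot increase total variation distance (the data processing inequality). Combining the two with $U=(X,W)$, $V=(Y,W)$ and $g=f$ then gives the claim immediately.

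For step (i), I would compute directly from the definition. Since $W$ is independent of $(X,Y)$, in particular it is independent of $X$ and independent of $Y$, so the joint pdfs factor as $p_{(X,W)}(x,w)=p_X(x)p_W(w)$ and $p_{(Y,W)}(x,w)=p_Y(x)p_W(w)$. Therefore
\[
\|(X,W)-(Y,W)\|_{tvd} = \tfrac12 \sum_{x,w}\bigl|p_X(x)-p_Y(x)\bigr|\,p_W(w) = \tfrac12\sum_x\bigl|p_X(x)-p_Y(x)\bigr| = \|X-Y\|_{tvd},
\]
where the middle equality uses $\sum_w p_W(w)=1$.

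For step (ii), given any function $g:\Omega\to\Omega'$ and any subset $A\subseteq\Omega'$, I would observe that $\Pr[g(U)\in A]-\Pr[g(V)\in A] = \Pr[U\in g^{-1}(A)]-\Pr[V\in g^{-1}(A)] \leq \|U-V\|_{tvd}$ by the maximizing-set formulation of total variation distance. Taking the maximum over $A\subseteq\Omega'$ yields $\|g(U)-g(V)\|_{tvd}\leq \|U-V\|_{tvd}$. Applying this with $g=f$, $U=(X,W)$, $V=(Y,W)$ and chaining with the identity from step (i) completes the proof.

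There is no real obstacle here — this is a textbook fact. The only point that deserves a careful remark is that the hypothesis is independence of $W$ from the \emph{pair} $(X,Y)$ (not merely marginal independence from each), which is exactly what is needed to guarantee that both $(X,W)$ and $(Y,W)$ have the same $W$-marginal $p_W$ with the product form used in step (i); this makes the inequality a genuine equality in that step.
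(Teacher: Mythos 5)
Your proof is correct. Note that the paper itself does not provide a proof of this claim: it is cited directly from Claim~6.5 of~\cite{KKS15} (and restated a second time as Lemma~\ref{lm:tv-function} in Appendix~\ref{app:B}, again with only a citation). Your two-step decomposition --- appending an independent coordinate is an equality for total variation distance, followed by the data-processing inequality for a deterministic map --- is the standard argument, and both steps are carried out correctly from the definitions. The concluding remark is also on point: what is actually used in step (i) is that both $(X,W)$ and $(Y,W)$ have the product form with the \emph{same} marginal $p_W$; independence of $W$ from the pair $(X,Y)$ is a clean hypothesis that guarantees this.
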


\subsection{Basic combinatorics of random matchings}\label{sec:prelims-random-matchings}

Here we define several quantities related to random matchings and state Lemmas concerning random matchings. Proofs are given in Appendix \ref{app:combi}.

\begin{definition}
For every $\alpha\in (0, 1)$, every integer $\ell$ between $0$ and $n/2$ we let $p_\alpha(\ell,n)$ denote the probability, over the choice of a uniformly random matching $M$ of size $\alpha n$, that a fixed set of $2\ell$ points is perfectly matched by $M$ (i.e., $M$ restricted to this set is a perfect matching). 
\end{definition}

\begin{definition}\label{def:qkin}
Let $A$ be a set of cardinality $2k$. Then $q(k,i,n)$ is the probability that a random matching of size $\alpha n$ matches exactly $2i$ out of $2k$ points of $A$ and there are no edges of the matching connecting points from $A$ to points from $A^c$.
\end{definition}
\begin{definition}\label{def:qkibn}
Let $A$ be a set of cardinality $2k$. Then $q(k,i, b, n)$ is the probability that a random matching of size $\alpha n$ matches exactly $2i$ out of $2k$ points of $A$ and there are exactly $b$ edges of the matching connecting points from $A$ to points from $A^c$.
\end{definition}
\begin{definition}
For a set $A$ and a matching $M$ we say that an edge $e$ of $M$ is \emph{inner} if it connects points from $A$, \emph{boundary} if it connects a point from $A$ and a point from $A^c$, \emph{external} if it connects points from $A^c$.
\end{definition}

\newcommand{\lmpln}{For every integer $n$ and every $0\leq \ell\leq n/2$ 
\[
p(\ell,n)=\binom{\alpha n}{\ell}\binom{n}{2\ell}^{-1}. 
\] 
}
\begin{lemma}\label{lm:pln}
\lmpln
\end{lemma}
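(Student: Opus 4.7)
The plan is a direct counting argument: I will express both the total number of size-$\alpha n$ matchings on $[n]$ and the number of such matchings that perfectly match a fixed set $A$ of cardinality $2\ell$, and then take the ratio.

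For the total count, the number of matchings of size $k$ on a vertex set of size $m$ is $N(k,m) = \frac{m!}{(m-2k)!\, 2^k\, k!}$: choose the $2k$ matched vertices in an ordered sequence, contributing $m!/(m-2k)!$, then quotient by the $2^k$ orderings within each edge and the $k!$ orderings of the $k$ edges. For the favorable count I would first select a perfect matching of $A$, of which there are $(2\ell-1)!! = (2\ell)!/(2^\ell \ell!)$, and then independently extend by a matching of size $\alpha n - \ell$ on $[n]\setminus A$, of which there are $N(\alpha n - \ell,\, n - 2\ell)$. Therefore
\[
p(\ell,n) \;=\; \frac{(2\ell)!}{2^\ell \ell!} \cdot \frac{N(\alpha n - \ell,\, n - 2\ell)}{N(\alpha n,\, n)}.
\]

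Substituting the formula for $N$, the factors $(n-2\alpha n)!$ cancel between numerator and denominator, the factor $2^\ell$ cancels against the ratio $2^{\alpha n}/2^{\alpha n - \ell}$, and regrouping yields
\[
p(\ell,n) \;=\; \frac{(\alpha n)!}{\ell!\,(\alpha n - \ell)!} \cdot \frac{(2\ell)!\,(n-2\ell)!}{n!} \;=\; \binom{\alpha n}{\ell}\binom{n}{2\ell}^{-1},
\]
as claimed. There is no real obstacle in this proof; the only care needed is tracking the cancellations of the powers of $2$ and the factorials of the unmatched vertices. As a sanity check one can also view the identity through the following double-counting lens: the number of pairs $(M, T)$, where $M$ is a size-$\alpha n$ matching and $T\subseteq M$ is an $\ell$-subset with vertex set $A$, is on the one hand exactly the number of matchings that perfectly match $A$ (since at most one such $T$ exists per $M$), and on the other hand equals $(2\ell-1)!!$ times $N(\alpha n - \ell,\, n-2\ell)$ by choosing $T$ first and extending. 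This is the same identity and provides a useful cross-check on the bookkeeping.
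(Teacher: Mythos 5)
Your proof is correct but takes a genuinely different route. You compute $p(\ell,n)$ by directly counting matchings: you write down the closed form $N(k,m)=\frac{m!}{(m-2k)!\,2^k k!}$ for the number of size-$k$ matchings on $m$ vertices, count the favorable matchings as (perfect matchings of $A$) $\times$ (matchings of size $\alpha n - \ell$ on $[n]\setminus A$), and simplify the ratio. The paper instead uses a one-line symmetry/double-counting argument in the opposite direction: for any fixed matching $M$ of size $\alpha n$, the number of $2\ell$-sets that $M$ perfectly matches is exactly $\binom{\alpha n}{\ell}$ (choose $\ell$ of its edges); by symmetry every one of the $\binom{n}{2\ell}$ sets of size $2\ell$ has the same probability $p(\ell,n)$ of being matched, so $\binom{n}{2\ell}\,p(\ell,n)=\binom{\alpha n}{\ell}$. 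The paper's argument avoids all factorial bookkeeping and does not even need the closed form $N(k,m)$, while yours is completely explicit and makes every cancellation visible — useful if one mistrusts symmetry arguments, at the cost of a longer computation. Your "sanity check" at the end is close in spirit to the paper's argument, though it still holds $A$ fixed and sums over $(M,T)$ rather than holding $M$ fixed and summing over $A$ as the paper does; both phrasings give the same identity.
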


\newcommand{\lmqkin}{For every integer $n$ and every $0\leq i\leq k\leq n/2$
\[
q(k,i,n)=\binom{\alpha n}{i}\binom{n-2\alpha n}{2(k-i)}\binom{n}{2k}^{-1}.
\] 
}
\begin{lemma}\label{lm:qkin}
\lmqkin
\end{lemma}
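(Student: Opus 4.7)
The plan is to compute $q(k,i,n)$ by a ``role-swap'' trick analogous to the standard proof of Lemma~\ref{lm:pln}. Since the uniform distribution on size-$\alpha n$ matchings of $[n]$ is $S_n$-invariant, and $S_n$ acts transitively on $2k$-subsets of $[n]$, the probability we wish to compute is unchanged if we instead \emph{fix} one canonical matching $M_0$ (say the one pairing $\{1,2\},\{3,4\},\ldots,\{2\alpha n-1,2\alpha n\}$ and leaving the remaining $n-2\alpha n$ vertices unmatched) and let $A$ be a uniformly random $2k$-subset of $[n]$.

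Under this reformulation the event in Definition~\ref{def:qkin} becomes a clean combinatorial condition on $A$ relative to the fixed partition of $[n]$ induced by $M_0$ into $\alpha n$ ``pair-classes'' and $n-2\alpha n$ ``singletons'': no pair-class may be split between $A$ and $A^c$ (this rules out boundary edges of $M_0$ relative to $A$), exactly $i$ of the $\alpha n$ pair-classes must lie entirely inside $A$ (contributing the $2i$ matched points), and the remaining $2(k-i)$ points of $A$ must be drawn from the $n-2\alpha n$ singletons.

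Counting directly, the number of such subsets $A$ is $\binom{\alpha n}{i}\binom{n-2\alpha n}{2(k-i)}$ (choose the $i$ pairs, then the $2(k-i)$ singletons), and the total number of $2k$-subsets of $[n]$ is $\binom{n}{2k}$. Taking the ratio delivers the formula. The only step requiring any care is articulating the symmetry swap cleanly; the rest is elementary bookkeeping. As a sanity check, setting $i=k$ collapses $\binom{n-2\alpha n}{0}=1$ and recovers $p(k,n)$ from Lemma~\ref{lm:pln}, consistent with the fact that $i=k$ corresponds exactly to $A$ being perfectly matched.
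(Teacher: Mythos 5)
Your proof is correct and takes essentially the same approach as the paper's: both fix a matching, count the number of valid $2k$-subsets $A$ as $\binom{\alpha n}{i}\binom{n-2\alpha n}{2(k-i)}$, and divide by $\binom{n}{2k}$, relying on the exchangeability between a random matching with fixed $A$ and a random $A$ with fixed matching. You simply make the $S_n$-symmetry swap explicit, which the paper leaves implicit behind the words ``the statement of the lemma easily follows.''
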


\newcommand{\lmqkibn}{For all non-negative integers $n,i,k,b$ satisfying $k\leq n/2$ and $2i+b\leq 2k$
\[
q(k,i,b, n)=\binom{\alpha n}{i}\binom{\alpha n - i}{b}2^{b}\binom{n-2\alpha n}{2(k-i)-b}\binom{n}{2k}^{-1}.
\]
}
\begin{lemma}\label{lm:qkibn}
\lmqkibn
\end{lemma}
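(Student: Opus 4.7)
The plan is to exchange the roles of the random matching and the fixed set. By symmetry of the uniform distribution on size-$\alpha n$ matchings under relabeling vertices, the probability $q(k,i,b,n)$ is unchanged if instead I fix an arbitrary matching $M$ of size $\alpha n$ on $[n]$ and take $A \subseteq [n]$ uniformly at random among all $2k$-subsets. I will carry out the counting in this reversed formulation, which is easier because $M$ is now a concrete combinatorial object rather than a random one.

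Under this reversed viewpoint, I aim to count the $2k$-subsets $A$ for which $M$ contributes exactly $i$ inner and $b$ boundary edges. Such an $A$ decomposes uniquely into three pieces depending on how it meets $M$: the $2i$ vertices forming $i$ complete edges of $M$ (both endpoints in $A$); the $b$ vertices that are each the unique $A$-side endpoint of a boundary edge; and $2k - 2i - b = 2(k-i) - b$ remaining vertices, which must all be unsaturated by $M$ (otherwise they would either create an extra inner edge or turn a supposed external edge into a boundary edge). The three pieces are automatically disjoint and their sizes sum to $2k$, so counting the choices independently gives
\[
\binom{\alpha n}{i}\binom{\alpha n-i}{b}2^{b}\binom{n-2\alpha n}{2(k-i)-b}
\]
valid $A$'s: $\binom{\alpha n}{i}$ selects the $i$ edges of $M$ to lie fully inside $A$, $\binom{\alpha n-i}{b}$ selects the $b$ boundary edges from the remaining edges of $M$, the factor $2^b$ chooses which endpoint of each boundary edge is the one in $A$, and $\binom{n-2\alpha n}{2(k-i)-b}$ picks the remaining vertices of $A$ from those untouched by $M$. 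Dividing by the total $\binom{n}{2k}$ yields the claimed formula.

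I do not anticipate any real obstacle. The proof is a direct counting argument once the symmetry swap is made, and the only new ingredient relative to Lemma~\ref{lm:qkin} (the $b=0$ case) is the $2^b$ factor, which arises from the binary choice of endpoint for each boundary edge; forgetting it would be the likeliest source of error, so I will be careful to justify it explicitly and to verify that the conditions $i\leq \alpha n$, $b\leq \alpha n - i$, and $2(k-i)-b\leq n-2\alpha n$ are automatically enforced by the binomial coefficients (outside this range they vanish, consistent with the probability being zero).
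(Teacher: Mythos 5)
Your proof is correct and takes essentially the same approach as the paper: both fix a concrete matching $M$ and count, by choosing inner edges, boundary edges with a $2^b$ factor for the endpoint choice, and the remaining unmatched vertices, the number of $2k$-sets $A$ with the required structure, then divide by $\binom{n}{2k}$. The paper leaves the symmetry swap between random matching and random set implicit, while you spell it out; that is the only (cosmetic) difference.
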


\newcommand{\lmqkibninequality}{Suppose we have $k<n/10$, $\alpha < 1/100$, and $2i+b\leq 2k$ then 
\[
q(k,i,b,n)\leq q(k,i,n)20^{-b}4^{k-i}.
\]
}

\begin{lemma}\label{lm:qkibninequality}
\lmqkibninequality
\end{lemma}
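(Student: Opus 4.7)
The plan is to carry out a direct computation using the explicit formulas for $q(k,i,n)$ and $q(k,i,b,n)$ from Lemmas~\ref{lm:qkin} and~\ref{lm:qkibn}. Taking the ratio, the factors $\binom{\alpha n}{i}$ and $\binom{n}{2k}^{-1}$ cancel and one is left with
\[
\frac{q(k,i,b,n)}{q(k,i,n)} = \binom{\alpha n - i}{b}\, 2^b \cdot \frac{\binom{n-2\alpha n}{2(k-i)-b}}{\binom{n-2\alpha n}{2(k-i)}}.
\]
The first task is to bound each of the two surviving nontrivial factors, and the second is to show that the combined estimate has the claimed form $20^{-b}\cdot 4^{k-i}$.

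For the binomial in front I would use the standard estimate $\binom{\alpha n - i}{b}\leq (\alpha n)^b/b!$. For the ratio of binomials $\binom{M}{R-b}/\binom{M}{R}$ with $M=n-2\alpha n$ and $R=2(k-i)$, I would write it as a product of $b$ fractions whose numerators are each at most $R$ and whose denominators are each at least $M-R$, giving
\[
\frac{\binom{M}{R-b}}{\binom{M}{R}}\leq \left(\frac{2(k-i)}{n-2\alpha n - 2(k-i)}\right)^b.
\]
The hypotheses $k<n/10$ and $\alpha<1/100$ yield $n-2\alpha n-2(k-i)\geq n(1-1/50-1/5)=39n/50$. Combining the estimates, cancelling factors of $n$, and using $200\alpha<2$ (from $\alpha<1/100$), the ratio is bounded by
\[
\frac{1}{b!}\left(\frac{200\alpha(k-i)}{39}\right)^b \leq \frac{1}{b!}\left(\frac{2(k-i)}{39}\right)^b = 20^{-b}\cdot \frac{(40(k-i)/39)^b}{b!}.
\]

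To finish, I would bound the single term $(40(k-i)/39)^b/b!$ of the exponential series by its full sum, i.e.\ $x^b/b!\leq e^x$ for $x\geq 0$, giving
\[
\frac{(40(k-i)/39)^b}{b!}\leq e^{40(k-i)/39}=(e^{40/39})^{k-i}\leq 4^{k-i},
\]
where the last step uses $40/39<\ln 4$ (numerically $e^{40/39}\approx 2.79<4$). The only conceptual obstacle is really this last step: the factorial $b!$ has to absorb the polynomial growth $(k-i)^b$ so that the final bound scales in $k-i$ rather than in $b$, and comparing with the exponential series is the natural way to do this. The rest is routine binomial algebra.
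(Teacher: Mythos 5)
Your proof is correct; the setup and the first bound $\binom{\alpha n-i}{b}\le (\alpha n)^b/b!$ match the paper, but the handling of the remaining ratio of binomials differs. The paper rewrites $\binom{n-2\alpha n}{2(k-i)-b}/\binom{n-2\alpha n}{2(k-i)}$ as $\binom{2(k-i)}{b}/\binom{n-2\alpha n-2(k-i)+b}{b}$ and bounds the numerator by $2^{2(k-i)}$ and the denominator from below by $(n/2)^b/b!$; the two $b!$ factors then cancel against the one coming from $\binom{\alpha n-i}{b}$, and the estimate assembles immediately into $(4\alpha)^b\,4^{k-i}\le 20^{-b}4^{k-i}$. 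You instead bound the same ratio term-by-term to get $(2(k-i)/(n-2\alpha n-2(k-i)))^b$, which leaves a lone $1/b!$, and you absorb the residual polynomial growth $(k-i)^b/b!$ by comparison with the exponential series, using $e^{40/39}<4$. Both routes are elementary and of comparable length; the paper's bookkeeping is a bit slicker because $4^{k-i}$ emerges directly from $\binom{2(k-i)}{b}\le 2^{2(k-i)}$ with no transcendental estimate, while yours makes the role of the $b!$ more visible and would give a sharper base (roughly $e^{40/39}\approx 2.8$) in place of $4$ if one cared to optimize.
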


%!TEX root = ./main.tex
\section{The basic setup, $(C, s^*)$-bounded sets and their properties}\label{sec:defs}

In this section we introduce our basic setup for analyzing the \DIHP~problem, define the notion of $(C, s^*)$-bounded sets and introduce some of their basic properties. 

\subsection{The basic setup}
For a random variable $J_t$ we write $J_{1:t}$ to denote the tuple $(J_1,\ldots, J_t)$.
In this notation, the inputs to \IHP($n, \alpha, T)$ are denoted by $M_{1:T}$ 
and $w_{1:T}$.
Recall also that $S_t = r_t(M_{1:t}; S_{1:t-1}; w_t)$ denotes the message
posted by the $t$th player. We use $s$ to denote an upper bound on the size
of the messages.
The goal of our analysis is to show that if $s\ll n$ then the total variation distance between the distribution of messages $S_t$ and matchings $M_t$ posted on the blackboard at time $T$ in the \YES case and in the \NO case is small.

More specifically, let $S^Y_{1:T}$ denote the random variables corresponding to the messages posted by the players when the input $(M_{1:T},w_{1:T})$ is drawn from $\D^Y$, and let
$S^N_{1:T}$ denote the corresponding sequence when the input is drawn from
$\D^N$.
Our goal is to show that the total variation distance between
$(M_{1:T}, S^Y_{1:T})$ and $(M_{1:T}, S^N_{1:T})$, is vanishingly small. As we show in section~\ref{sec:main-result} 
it suffices to consider the \YES case only.
In Lemma~\ref{lm:main-tvd} we show that it suffices to
show that with high probability for each $t=1,\ldots, T$ the distribution of $M_t X^*$ 
is close to uniform in $\bool^{M_t}$. 
We now outline the techniques that we develop to prove this claim.

Our analysis relies on Fourier analytic techniques for reasoning about the distribution of $M_t X^*$.  
Conditioning on messages posted up to time $t$ makes $X^*$ 
uniformly random over a certain subset of the binary cube. We will analyze this subset of the hypercube, or, rather, the Fourier transform of its indicator function, and show that if communication is small, the distribution of $X^*$ conditional on typical history is such that $M_t X^*$ is close to uniformly random in total variation distance. 

We first define notation that lets us reason about the distribution of $X^*$. The knowledge that the players acquire about $X^*$ is represented by some set $\B_t\subseteq \bool^n$ with $X^*$
being distributed uniformly on $\B_t$. 
These sets $\B_t$ are constructed iteratively, by accumulating the information about $X^*$ that each player's message conveys - specifically there exist sets $\A_t$ (defined below) such that $\B_t=\A_1\cap \A_2\cap \ldots \cap \A_t$. Recall that we only consider the \YES case, so the state variables are superscripted accordingly.

\begin{definition}[Sets $\A_t, \B_t$ and their indicator functions $f_t, h_t$]\label{def:at}
Fix $\alpha\in (0, 1)$ and integers $n, T\geq 1$ and $t \in [T]$. 
Consider a \YES instance $(M_{1:T},w_{1:T})$ of \DIHP($n, \alpha, T$)
with $X^*$ being the (random) hidden partition (so $w_t = M_t X^*$).
Recall $S^Y_t = r_t(M_{1:t};S_{1:t-1},M_tX^*)$.
We define $\A_{reduced, t}\subseteq \bool^{M_t}$ be the set of possible values of $w_t=M_t X^*$ that lead to message $S^Y_t$ and $\A_t$ to denote the values of $X^* \in \bool^n$ that correspond to $\A_{reduced, t}$. Formally, letting $g_t(\cdot) :=r_t(M_{1:t}, S^Y_{1:t-1}, \cdot): \bool^{M_t} \to \bool^s$, we have
$$
\A_{reduced, t}=g_t^{-1}(S^Y_t)\subseteq \bool^{M_t}, 
$$
\begin{equation}\label{eq:at-def}
\mbox{and ~~} \A_t=\{x\in \bool^n: M_tx\in \A_{reduced, t}\}.
\end{equation}
Let $f_t:\bool^n\to \bool$ denote indicator functions of the sets $\A_t, t=1,\ldots, T$. For each $t$ let $h_t=f_1\cdot\ldots\cdot f_t$, so that $h_t$ is the indicator of $\B_t:=\A_1\cap \A_2\cap \ldots \cap \A_t$. We let $\B_0:=\bool^n$ for convenience. 
\end{definition}

\begin{claim}
For $t=1,\ldots, T$ let $p_t:\bool^n\to [0, 1]$ denote the following distribution:
$$
p_t(x)=\prob[M_tx\in \A_{t, reduced}].
$$
Further, let $p(x)=\prod_{t=1}^T p_t(x)$ for all $x\in \bool^n$.

The conditional distribution of $X^*$ given messages $\A_1,\ldots, \A_t$ is exactly given by $p(x)/||p||_1$ as above.
\end{claim}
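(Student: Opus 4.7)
The claim is essentially Bayes' rule applied to the posterior of $X^*$ given the protocol transcript, and the main work is just carefully unpacking the definitions. First I would observe that once the matchings $M_{1:t}$ and the messages $S^Y_{1:t}$ are all fixed, every map $g_r(\cdot) := r_r(M_{1:r}, S^Y_{1:r-1}, \cdot): \bool^{M_r}\to \bool^s$ becomes a determined deterministic function, and hence $\A_{reduced, r} = g_r^{-1}(S^Y_r)$ is a fixed subset of $\bool^{M_r}$ for every $r \le t$. Consequently $p_r(x) = \mathbf{1}[M_r x \in \A_{reduced, r}]$ coincides with $f_r(x)$, the indicator of $\A_r$, so $p(x) = \prod_{r=1}^t p_r(x) = h_t(x)$ is the indicator of $\B_t$ with $\|p\|_1 = |\B_t|$ (and the same identification holds at $t = T$, matching the product as written in the claim).

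Next I would apply Bayes' rule. Under $\D^Y$ the hidden bipartition $X^*$ is drawn uniformly from $\bool^n$ and is independent of the matchings $M_{1:T}$, so the prior on $X^*$ given $M_{1:t}$ is still uniform on $\bool^n$, and therefore
\[
\Pr[X^* = x \mid M_{1:t},\, S^Y_{1:t}] \;\propto\; \Pr[S^Y_{1:t} \mid X^* = x,\, M_{1:t}].
\]
Since each message $S^Y_r$ is a deterministic function of $(M_{1:r}, S^Y_{1:r-1}, M_r X^*)$, a straightforward induction on $r$ factors the likelihood as
\[
\Pr[S^Y_{1:t} \mid X^* = x,\, M_{1:t}] \;=\; \prod_{r=1}^t \mathbf{1}[g_r(M_r x) = S^Y_r] \;=\; \prod_{r=1}^t f_r(x) \;=\; h_t(x).
\]
Normalising then gives $\Pr[X^* = x \mid M_{1:t},\, S^Y_{1:t}] = h_t(x)/|\B_t| = p(x)/\|p\|_1$, as claimed.

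The statement is essentially tautological once the definitions have been unpacked correctly, so there is no real technical obstacle; the computation should fit in a few lines. The one minor point worth flagging is that the proof implicitly treats the protocol as deterministic, which is standard and matches the step already used when turning the randomised streaming algorithm {\bf ALG} into a deterministic protocol in the proof of Theorem~\ref{thm:reduction}. With determinism in hand, each conditional probability $\Pr[S^Y_r \mid M_{1:r}, S^Y_{1:r-1}, X^* = x]$ is a $\{0,1\}$-valued indicator, which is precisely what makes the factorisation of the likelihood into $\prod_{r=1}^t f_r(x) = h_t(x)$ exact and yields the uniform posterior on $\B_t$.
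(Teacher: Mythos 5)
Your proposal is correct and follows essentially the same route as the paper: apply Bayes' rule with the uniform prior on $X^*$ and observe that, once $M_{1:t}$ and $S^Y_{1:t}$ are fixed, each $p_r(x)$ is just the deterministic indicator $f_r(x)$ of $\A_r$, so the likelihood factors into $h_t(x)$. Your write-up is a bit more explicit than the paper's about the chain $S^Y_r = g_r(M_r X^*)$ and about why $\Pr[S^Y_{1:t}\mid X^*=x, M_{1:t}]$ factorises, but the underlying argument is identical.
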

\begin{proof}
Since $X^*$ is uniformly random in $\bool^n$ initially, we have 
\begin{equation*}
\begin{split}
&\prob[X^*=x| M_tX^*\in \A_{t, reduced}, t=1,\ldots, T]\\
&=\frac{\prob[X^*=x\wedge  M_tX^*\in \A_{t, reduced}, t=1,\ldots, T]}{\prob[M_tX^*\in \A_{t, reduced}, t=1,\ldots, T]}\\
&=\frac{\prob[M_tX^*\in \A_{t, reduced}, t=1,\ldots, T|X^*=x]\prob[X^*=x]}{\prob[M_tX^*\in \A_{t, reduced}, t=1,\ldots, T]}\\
&=\frac{\prob[X^*=x]}{\prob[M_tX^*\in \A_{t, reduced}, t=1,\ldots, T]}\cdot \prod_{t=1}^T \prob[M_tx\in \A_{t, reduced}, t=1,\ldots, T] \\
&=\frac{2^{-n}}{\prob[M_tX^*\in \A_{t, reduced}, t=1,\ldots, T]}\cdot \prod_{t=1}^T \prob[M_tx\in \A_{t, reduced}, t=1,\ldots, T] \\
&=p(x)/||p||_1\\
\end{split}
\end{equation*}

\end{proof}

\subsection{$(C, s^*)$-bounded sets and their properties}
The following definition is crucial for our analysis:

\begin{definition}[$(C, s^*)$-bounded set]\label{def:bounded}
Let $\B\subset \{0,1\}^n$ with indicator function $h$. We say that $\B$ (or $h$) is {\em $(C, s^*)$-bounded} if 
\begin{itemize}
\item For all $\ell\leq s^*$ we have
\[
\sum_{\substack{v\in\{0,1\}^n \\ |v|=2\ell}}\left|\wt{\mathbbm{1}_\B}(v)\right|\leq \left(\frac{C\sqrt{s^*n}}{\ell}\right)^\ell;
\]
\item For all $s^* < \ell < \frac{n}{C^2}$ we have
\[
\sum_{\substack{v\in\{0,1\}^n \\ |v|=2\ell}}\left|\wt{\mathbbm{1}_\B}(v)\right|\leq \left(\frac{C^2 n}{\ell}\right)^{\ell/2}.
\]
\end{itemize}
\end{definition}
Defining the function $\bound(\ell)$ by
\begin{equation}\label{eq:bound-def}
\bound(\ell)=
\begin{cases}
1 \qquad \qquad \qquad \ell=0; \\
\left(\frac{C\sqrt{s^* n}}{\ell}\right)^\ell \qquad \ell\in [1: s^*]; \\
\left(\frac{C^2 n}{\ell}\right)^{\ell/2} \qquad \ell>s^*,
\end{cases}
\end{equation}
we are able to simplify notation, ensuring that an indicator function $h$ is $(C, s^*)$ bounded if and only if for all $\ell<n/C^2$ we have 
\[
\sum_{\substack{v\in \{0,1\}^n \\ |v|=2\ell}} \left|\wt{h}(v)\right|\leq \bound(\ell).
\]

\begin{remark}
For intuition it is useful to compare our bounds throughout the paper to the bounds on the weight of Fourier coefficients of the simple adaptive component growing algorithm described in Section~\ref{sec:simple-protocol} when the latter is given a somewhat larger space budget, specifically $s':=\sqrt{s^* n}$ amount of space.
\end{remark}

The following lemma provides the base case of our analysis:
\begin{lemma}\label{lm:one-message-bounded}
For every $s^*$, every matching $M$ on $[n]$, every  $\A\subseteq  \{0, 1\}^M$ such that $|\A|/2^{|M|}\geq 2^{-s^*}$ the set $\B:=\{x\in \{0, 1\}^n: Mx\in \A\}$ is $(3, s^*)$-bounded. 
\end{lemma}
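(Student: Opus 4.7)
The plan is to pull everything back to the ``reduced'' Fourier domain on $\{0,1\}^{|M|}$ via Lemma~\ref{lm:matching-fourier}, and then split into a small-weight regime handled by Lemma~\ref{lm:L1xorKKL} and a large-weight regime handled by Cauchy--Schwarz together with Parseval's identity.

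First I would translate the problem. Let $q := \mathbbm{1}_\A$ and $h := \mathbbm{1}_\B$. Because the fiber of the map $x \mapsto Mx$ has size $2^{n-|M|}$ (the $n-2|M|$ unmatched coordinates are free, and the matched coordinates contribute $2^{|M|}$ preimages for each value of $Mx$), we have $|\B|/2^n = |\A|/2^{|M|} \geq 2^{-s^*}$, so the two normalizations $2^n/|\B|$ and $2^{|M|}/|\A|$ coincide. By Lemma~\ref{lm:matching-fourier}, the only nonzero Fourier coefficients of $h$ are $\hat h(M^T w) = \hat q(w)$, and the correspondence $w \mapsto M^Tw$ is injective with $|M^Tw| = 2|w|$. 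Consequently $\wt h(M^Tw) = \wt q(w)$, and
\[
\sum_{|v|=2\ell} |\wt h(v)| = \sum_{|w|=\ell} |\wt q(w)|,
\]
where the right-hand sum is over $w \in \{0,1\}^{|M|}$ (and is vacuous for $\ell > |M|$).

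For the low-weight regime $\ell \leq s^*$, I would invoke Lemma~\ref{lm:L1xorKKL} on $\A \subseteq \{0,1\}^{|M|}$ with $d = s^*$ and $y = 0$, yielding
\[
\sum_{|w|=\ell} |\wt q(w)| \leq \sqrt{\binom{|M|}{\ell}\left(\frac{4 s^*}{\ell}\right)^{\ell}} \leq \left(\frac{\sqrt{2 e s^* n}}{\ell}\right)^{\ell},
\]
where I used $|M| \leq n/2$ and $\binom{n/2}{\ell} \leq (en/(2\ell))^{\ell}$. Since $\sqrt{2e} < 3$, this is at most $(3\sqrt{s^*n}/\ell)^{\ell}$, matching the first clause of Definition~\ref{def:bounded}.

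For the high-weight regime $s^* < \ell < n/9$ I would combine Cauchy--Schwarz with Parseval's identity \eqref{eq:parseval}. Parseval gives $\sum_w \wt q(w)^2 = (2^{|M|}/|\A|)^2 \cdot (|\A|/2^{|M|}) = 2^{|M|}/|\A| \leq 2^{s^*}$, so
\[
\sum_{|w|=\ell} |\wt q(w)| \leq \sqrt{\binom{|M|}{\ell}} \cdot \sqrt{\sum_w \wt q(w)^2} \leq \left(\frac{en}{2\ell}\right)^{\ell/2} \cdot 2^{s^*/2}.
\]
Because $\ell > s^*$ we have $2^{s^*/2} \leq 2^{\ell/2}$, so the right-hand side is at most $(en/\ell)^{\ell/2} \leq (9n/\ell)^{\ell/2}$, which is exactly the second clause of Definition~\ref{def:bounded} with $C=3$.

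The main obstacle, though really a matter of bookkeeping rather than of difficulty, is the constant. In the low-weight regime one has to use $|M| \leq n/2$ (not the crude $|M| \leq n$) so as to replace $\sqrt{4e}$ by $\sqrt{2e}$: this is the precise margin required to fit under $C=3$. In the high-weight regime the ``budget'' $2^{s^*/2}$ coming from Parseval must be paid for by the fact that $\ell$ already exceeds $s^*$, which is exactly why the definition of $(C,s^*)$-bounded switches between its two clauses at the threshold $\ell = s^*$.
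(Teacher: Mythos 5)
Your proof is correct and takes essentially the same route as the paper's: pull back to the reduced domain $\{0,1\}^M$ via Lemma~\ref{lm:matching-fourier}, apply Lemma~\ref{lm:L1xorKKL} for $\ell\leq s^*$, and use Cauchy--Schwarz with Parseval for $\ell>s^*$. The only cosmetic difference is that you bound $|M|\leq n/2$ explicitly, whereas the paper writes $\binom{\alpha n}{\ell}$ with the ambient $|M|=\alpha n$ in mind; your version is actually more faithful to the lemma's statement, which makes no restriction on $|M|$, and in both regimes the constants land comfortably under $C=3$.
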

\begin{proof}
By Lemma \ref{lm:matching-fourier} Fourier coefficients of $\mathbbm{1}_{\B}$ can be written in terms of those of $\mathbbm{1}_{\A}$, namely, we have 
\[
\widehat{\mathbbm{1}_{\B}}(M^Tw)=\widehat{\mathbbm{1}_{\A}}(w),
\]
for every $w\in\bool^M$ and $\widehat{\mathbbm{1}_{\B}}(v)=0$ if $v$ is not of the form $M^Tw$. Also note that $|M^Tw|=2\cdot|w|$ and $|\B|=2^{n-\alpha n}\cdot |\A|$. The latter implies that we also have for every $w\in\bool^M$
\[
\wt{\mathbbm{1}_{\B}}(M^Tw)=\frac{2^n}{|\B|}\cdot \widehat{\mathbbm{1}_{\B}}(M^Tw)=
\frac{2^{\alpha n}}{|\A|}\cdot\widehat{\mathbbm{1}_{\A}}(w)=\wt{\mathbbm{1}_{\A}}(w).
\]
We then apply Lemma \ref{lm:L1xorKKL} to infer
\begin{equation}\label{eq:bound-from-KKL}
\sum_{\substack{v\in\{0,1\}^n \\ |v|=2\ell}}\left|\wt{\mathbbm{1}_\B}(v)\right| = \sum_{\substack{w\in\{0,1\}^M \\ |w|=\ell}}\left|\wt{\mathbbm{1}_\A}(w)\right|  \leq \sqrt{\binom{\alpha n}{\ell}\left(\frac{4 s^*}{\ell}\right)^\ell}\leq \left(\frac{3\sqrt{s^*n}}{\ell}\right)^\ell.
\end{equation}
For high weights ($\ell>s^*$) we use a trivial bound saying that sum of squares of normalized Fourier coefficients at any level is at most $2^{s^*}$. We also use the fact that by Lemma \ref{lm:matching-fourier} the function $\mathbbm{1}_\B$ has at most $\binom{\alpha n}{\ell}$ non-zero Fourier coefficients at level $2\ell$.
\[
\sum_{\substack{v\in\{0,1\}^n \\ |v|=2\ell}}\left|\wt{\mathbbm{1}_\B}(v)\right|\leq \sqrt{\binom{\alpha n}{\ell}2^{s^*}}\leq \left(\frac{2en}{\ell}\right)^{\ell/2}.
\]
\end{proof}
\begin{remark}[Intuition for the choice of $\bound(\ell)$]
Lemma \ref{lm:one-message-bounded} basically states that the set of possible $x\in\bool^n$ consistent with a message of one player is $(3,s^*)$ bounded. This Lemma is never explicitly used in the proof of Theorem \ref{thm:main-comm}, however, it provides a good intuition for why the definition of $(C,s)$ boundedness should be as it is. Indeed, the fact that in the lemma above Fourier coefficients of $\mathbbm{1}_{\B}$ at level $2\ell$ correspond to Fourier coefficients of $\mathbbm{1}_{\A}$ at level $\ell$ allows to write  
\begin{equation}\label{eq:L2-bound}
\sum_{\substack{v\in\{0,1\}^n \\ |v|=2\ell}}\wt{\mathbbm{1}_\B}^2(v)\leq \left(\frac{4 s^*}{\ell}\right)^\ell.
\end{equation}
In order to convert this into a bound on the $L_1$ norm we need to know how many non-zero summands we have on the left hand size. For one matching the correspondence with Fourier coefficients of $\mathbbm{1}_{A}$ readily implies that we have at most $\binom{\alpha n}{\ell}$ non-zero summands. When several messages of players are combined this is no longer true and the number of non-zero summands can get as large as $\binom{n}{2\ell}$, however, it turns out that $L_1$ norm behaves as if there were order $\binom{n}{\ell}$ non-zero summands.  
\end{remark}

\begin{lemma}\label{lm:pdfs-closeness}
For every $C>100$, $\delta\in (n^{-1}, 1/2)$ and $\alpha\in (0, 1/100)$ the following condition holds if $n$ is sufficiently large. Let $\B\subset \{0,1\}^n$ be $(C, s^*)$-bounded for $s^*\in [10\ln{(n+1)},\, \delta^4 n/C^2]$ as per Definition~\ref{def:bounded}, $|\B|/2^n\geq 2^{-s^*}$, and let $h:\{0,1\}^n\rightarrow\{0,1\}$ be the indicator of $\B$. Let $M$ be a uniformly random matching on $[n]$ of size $\alpha n$. Then with probability at least $1 - \delta$ over the choice of $M$ for any non-negative function $q$ on $\{0,1\}^M$ one has 
\[
1-\delta\leq \frac{\mathbb{E}_{x\sim \operatorname{Uniform}(\B)}\left[q(Mx)\right]}{\mathbbm{E}_{z\sim \operatorname{Uniform}\left(\{0,1\}^M\right)}\left[q(z)\right]}\leq 1+\delta.
\]
\end{lemma}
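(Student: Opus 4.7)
The plan is to Fourier-expand the distribution of $Mx$ when $x$ is uniform on $\B$ and show it is pointwise close to the uniform distribution on $\bool^M$. Writing $p_M(z) := \prob_{x\sim\operatorname{Uniform}(\B)}[Mx=z]$, specializing $q=\mathbbm{1}_{\{z\}}$ in the hypothesis shows that the ratio condition is equivalent to $|2^{\alpha n}p_M(z)-1|\le \delta$ for every $z\in\bool^M$, and conversely this pointwise bound implies the ratio condition for every non-negative $q$ by linearity. Expanding the indicator as $\mathbbm{1}[Mx=z]=2^{-\alpha n}\sum_{w\in\bool^M}(-1)^{w\cdot(Mx+z)}$ and averaging over $x\sim\operatorname{Uniform}(\B)$ gives $p_M(z)=2^{-\alpha n}\sum_w(-1)^{w\cdot z}\wt{h}(M^Tw)$. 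The $w=0$ term contributes exactly $2^{-\alpha n}$ (since $\wt{h}(0)=1$), so by the triangle inequality $|2^{\alpha n}p_M(z)-1|\le \Sigma(M):=\sum_{w\ne 0}|\wt{h}(M^Tw)|$ uniformly in $z$. It therefore suffices to show $\prob_M[\Sigma(M)>\delta]\le \delta$, which I will derive from a bound $\expect_M[\Sigma(M)]\le \delta^2/2$ via Markov's inequality.

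For the expectation, observe that for a fixed $v\in\bool^n$ of weight $2\ell$, the event ``$v=M^Tw$ for some $w$'' is precisely the event that the support of $v$ is perfectly matched by $M$, whose probability is $p(\ell,n)=\binom{\alpha n}{\ell}\binom{n}{2\ell}^{-1}$ by Lemma~\ref{lm:pln}. Consequently $\expect_M[\Sigma(M)]=\sum_{\ell\ge 1}p(\ell,n)\sum_{|v|=2\ell}|\wt{h}(v)|$. Using the elementary estimate $p(\ell,n)\le(4e\alpha\ell/n)^\ell$ (from $\binom{\alpha n}{\ell}\le(e\alpha n/\ell)^\ell$ and $\binom{n}{2\ell}\ge(n/(2\ell))^{2\ell}$), the $(C,s^*)$-boundedness of $\B$ controls each inner sum level-by-level.

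I then split the sum over $\ell$ into three ranges. For $1\le \ell\le s^*$, combining the bound on $p(\ell,n)$ with $\bound(\ell)=(C\sqrt{s^*n}/\ell)^\ell$ yields $p(\ell,n)\bound(\ell)\le(4eC\alpha\sqrt{s^*/n})^\ell\le(4e\alpha\delta^2)^\ell$ using $s^*\le \delta^4 n/C^2$; since $\alpha<1/100$ this is a geometric series with ratio below $1/8$, summing to $O(\alpha\delta^2)$. For $s^*<\ell<n/C^2$, $\bound(\ell)=(C^2n/\ell)^{\ell/2}$ and an analogous calculation gives $p(\ell,n)\bound(\ell)\le(4e\alpha)^\ell$; since $\ell>s^*\ge 10\ln(n+1)$ in this range, each term is $n^{-\Omega(1)}$ and the total tail is negligible. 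The main obstacle is the high-weight regime $\ell\ge n/C^2$, where $(C,s^*)$-boundedness supplies no direct bound; here I will use Parseval's identity $\sum_v\wt{h}(v)^2\le 2^n/|\B|\le 2^{s^*}$ together with Cauchy--Schwarz to obtain $\sum_{|v|=2\ell}|\wt{h}(v)|\le \sqrt{\binom{n}{2\ell}\,2^{s^*}}$, and pair this with the estimate $p(\ell,n)\binom{n}{2\ell}^{1/2}\le(2e\alpha)^\ell$ obtained by the same Stirling-type bounds. Since $2e\alpha<1/18$ and $s^*\le \delta^4 n/C^2$, summing from $\ell=\lceil n/C^2\rceil$ up to $\alpha n$ yields a contribution of order $2^{-\Omega(n/C^2)}$, which is vanishing for $n$ large. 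Adding the three ranges gives $\expect_M[\Sigma(M)]\le \delta^2/2$, and Markov's inequality yields $\prob_M[\Sigma(M)>\delta]\le \delta/2<\delta$, completing the proof.
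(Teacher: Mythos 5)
Your proof is correct and follows essentially the same strategy as the paper's: reduce the ratio condition to the pointwise bound $|2^{\alpha n}p_M(z)-1|\le\delta$, Fourier-expand the indicator $\mathbbm{1}[Mx=z]$ to obtain the upper bound $\Sigma(M)=\sum_{w\ne 0}|\wt h(M^Tw)|$ (which is independent of $z$), express $\expect_M[\Sigma(M)]$ as $\sum_{\ell\ge 1}p(\ell,n)\sum_{|v|=2\ell}|\wt h(v)|$, feed in the $(C,s^*)$-boundedness hypothesis and the elementary bound $p(\ell,n)\le(4e\alpha\ell/n)^\ell$, and finish with Markov. All the estimates check out: the low-weight geometric series with ratio $4e\alpha\delta^2<1/8$ contributes $O(\alpha\delta^2)<\delta^2/4$, and your two remaining tails are superpolynomially small because $s^*\ge 10\ln(n+1)$ forces the starting exponents to be large.

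The one place you deviate from the paper's argument is the treatment of weights above $s^*$. You split this into an intermediate range $s^*<\ell<n/C^2$, where you invoke the second clause of $(C,s^*)$-boundedness, $\sum_{|v|=2\ell}|\wt h(v)|\le(C^2n/\ell)^{\ell/2}$, and a high range $\ell\ge n/C^2$, where you fall back to Parseval plus Cauchy--Schwarz. The paper handles the \emph{entire} range $\ell>s^*$ via Parseval alone, i.e.\ $\sum_{|v|=2\ell}|\wt h(v)|\le\sqrt{2^{s^*}\binom{n}{2\ell}}$, and indeed adds a remark that the intermediate-weight clause of the boundedness definition is never used in this lemma. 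So your version uses a strictly stronger hypothesis than necessary, at the (small) cost of one extra case split; the paper's version is slightly more economical and makes explicit that $(C,s^*)$-boundedness at levels above $s^*$ plays no role here. Functionally both give the same final bound and the Markov step is identical.
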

The proof is given in Appendix~\ref{app:A}.
\begin{rem}
We note that the proof of Lemma~\ref{lm:pdfs-closeness} does not use the bound on the spectrum of $h$ at levels above $s^*$ that follow from $(C, s^*)$-bounded property.
\end{rem}
\begin{rem}
Using this lemma we can now work with the uniform measure on $\{0,1\}^{M_t}$ instead of the one given by $M_th_{t-1}$. For instance, we may assume that each player receives such bits on $M_t$ that the corresponding function $f_t$ and the part of the cube $A_t$ satisfy $|A_t|>2^{n-s^*}$ (as the opposite happens with probability at most $2^{-s}$).
\end{rem}

%!TEX root = ./main.tex
\section{Technical overview of our analysis}\label{sec:outline-app}

In this section we define a simple protocol for \DIHP($n, \alpha, T$) and analyze its Fourier spectrum. We show that our protocol does not solve \DIHP($n, \alpha, T$) for constant $\alpha\in (0, 1)$ and $T\geq C_1/\alpha$ with a sufficiently large constant $C_1>1$, unless the communication budget is at least $n/A^T$ for some constant $A>1$. We also non-rigorously explain why this protocol solves the problem if the communication budget is at least $n/c^T$ for some constant $c>1$. Our main theorem (Theorem~\ref{thm:main-comm}) in particular implies that this protocol is close to optimal. More importantly, however, the proof of Theorem~\ref{thm:main-comm} can be viewed as analyzing an $\ell_1$-relaxation of the simple protocol. 

\subsection{A component growing protocol for \DIHP}\label{sec:simple-protocol}

In this section we consider very simple communication protocols, where players choose a subset of the edges of the matching that they receive, post the bits on that subset on the board, and ignore the bits on the other edges (note that the subset of edges should be no larger than the communication bound $s$).  We note that in order to achieve a constant advantage over random guessing for \DIHP~it suffices to ensure that the set of edges whose labels are posted on the board {\bf form a cycle}. Indeed, in the \YES case the sum of labels over the edges of any cycle is zero, while in the \NO case it is a uniformly random number in $\bool$. Thus, one can distinguish between the two cases with constant probability.

\paragraph{A very simple non-adaptive protocol.} The question is of course which subset of edges the players choose. A very basic approach would be to post bits on edges both of whose endpoints have indices between $1$ and $\sqrt{s/n}\cdot n$. Note that the expected number of such edges in a given matching is  $O(s)$, which is consistent with the communication budget per player. This protocol is non-adaptive in the sense that the edges that the $t$-th player posts the bits for are independent of the matchings of the other players. It is easy to see that this protocol will not find a cycle as long as $s\ll n/T^2$. Indeed, the graph induced on the first $\sqrt{s/n}\cdot n$ vertices is quite similar to an Erd\H{o}s-R\'{e}nyi graph with expected degree $\sqrt{s/n}\cdot T\ll 1$, and hence the graph will contain no cycle with high (constant) probability. Note that this behaviour seems to suggest that Theorem~\ref{thm:main} could be strengthened significantly, ensuring that the dependence of space on $T$, the number of matchings, is polynomial as opposed to exponential. However, it turns out that the simple {\em non-adaptive} protocol above is not a good model for the problem. We now introduce a more interesting, but still quite simple, protocol, which serves as a good model for our general analysis.

\paragraph{The component growing protocol.} Let $s$ be the per player communication budget. In order to show that even  {\bf  adaptive} protocols cannot solve \DIHP~ unless  $s\gg n/C^T$ for a constant $C>1$, for every such protocol $\Pi$ we define a strictly stronger protocol $\Pi'$ that has a larger communication budget than $\Pi$ and is more convenient to analyze. This new protocol $\Pi'$ can be thought of as having budget larger than $s$,  is still unable to solve the problem if $s\ll n/A^T$ for some large $A>1$. The protocol $\Pi'$ is defined as follows. The first player simply posts the bits according to the protocol $\Pi$. We let $F_1$ denote the forest created in this way, with edges labeled by numbers in $\bool$. For every $t=2,\ldots, T$, the $t$-th player posts the bits on edges of $M_t$ that have at least one endpoint in a connected component in $F_{t-1}$ (these bits should be thought of as free), as well as the bits on a set of $s$ edges of the matching $M_t$ that do not intersect any component in $F_{t-1}$ so that any edge revealed by $\Pi$ is also revealed by $\Pi'$. Let $F_t$ denote the forest obtained in this way.  If at least one of the edges of $M_t$ closes a cycle (i.e. edges of $F_{t-1}\cap M_t$ form a cycle), the $t$-th player computes the sum of bits on the cycle, and outputs \YES if that sum is zero, and \NO otherwise. Note that if the game is in the \YES case, the players always say \YES. If the game is in the \NO case, then if a cycle is found, the players say \NO with probability $1/2$. Thus, the players  obtain advantage $1/4$ over random guessing for \DIHP. We note that every player posts $s$ bits, and some number of bits (those that intersect existing components in $F_{t-1}$) are revealed for free, so the communication cost of this protocol is at least $s$ bits per player. We show below that this simple protocol does not find a cycle, and hence does not solve \DIHP($n, \alpha, T$) with high constant probability unless $s$ is close to $n$.

At the same time, we note that adaptive protocols are quite powerful: they can solve \DIHP($n, \alpha, T$) as long as their communication budget $s$ exceeds $n/c^T$ for some $c>1$. An example such protocol works as follows. The first player posts $s/2$ bits on arbitrary edges of the matching $M_1$. The subsequent players then maintain a collection of connected components $C_1,\ldots C_k$ formed by the edges whose bits are posted (we call such edges \emph{revealed}). At each step a new player reveals edges incident to at least one of the connected components. We then remove the smallest components from the list so that the total size of all components remains at most $s/2$. Note that this ensures that every player posts at most $s$ bits. After a new player reveals edges, most of the components will increase its size by a factor of $(1+\alpha)$ with high probability. Thus, at each step the average component size will be multiplied by $(1+\alpha)$ whereas the total size of all components will stay close to $s$. Note that deleting the smallest components cannot decrease the avarage component size. After $T-1$ steps we will get approximately $s/c^T$ connected components each of size $c^T$. One can see that if $s\cdot c^T \gg n$ then with high probability one of the edges of $M_t$ will have both endpoints in the same connected component and thus we will solve the problem.
 
  In the rest of this section we first show (Section~\ref{sec:combinatorial-analysis}) directly that even the powerful protocol $\Pi'$ (with free edges) defined above still cannot solve \DIHP($n, \alpha, T$) with constant $\alpha \in (0, 1)$ unless $s\gg n/C^{T}$ for some absolute constant $C>1$. This analysis is quite simple, but does not generalize to arbitrary protocols. We then illustrate our analysis of general protocols by instantiating the relevant parts of the analysis for the simple protocol above and proving that it does not solve \DIHP($n, \alpha, T$) unless $s\gg n/T^T$ (note the slightly weaker bound) by analyzing its Fourier spectrum in Section~\ref{sec:analysis-overview}.

\subsection{Combinatorial analysis of the component growing protocol}\label{sec:combinatorial-analysis}
Our analysis uses a potential function defined on the forest $F$ maintained by the protocol. The potential function is simply the sum of squares of sizes of (nontrivial) connected components of $F$. We refer to this quantity as the weight of $F$, or $||F||$:
\begin{definition}
For a forest $F$  with non-trivial(size more than $1$) connected components $C_1,\ldots,C_k$ we define its weight by
$\|F\|:=\sum_{i=1}^k|C_i|^2$.
\end{definition}

For $t=1,\ldots, T$ let $F_t$ denote the forest computed by players $1,2,\ldots, t$. Recall that in order to distinguish between the \YES and \NO cases using our simple protocol the players must ensure that at least one of the matchings $M_t$ contains an edge with both endpoints inside one of the connected components of $F_{t-1}$. Let $C_1,\ldots, C_k$ denote the collection of connected components in $F_{t-1}$, and note that for a uniformly random edge $e=(u, v)$ one has 
\begin{equation}\label{eq:one-edge-cycle}
\prob[\exists j=1,\ldots, k\text{~s.t.~}u, v\in C_j]\leq \sum_{j=1}^k |C_j|^2/n^2, 
\end{equation}
and hence, taking a union bound over all edges of $M_t$, we get
$$
\prob[\exists j=1,\ldots, k, e=(u, v)\in M_t\text{~s.t.~}u, v\in C_j]\leq |M_t|\cdot \sum_{j=1}^k |C_j|^2/n^2\leq \sum_{j=1}^k |C_j|^2/n=\|F_{t-1}\|/n.
$$
The latter expression does not take into account the fact that as edges are added one after another, the set of components may increase, but if one adds edges of the matching $M_t$, a similar bound follows. In particular, one can show that the players succeed with probability at most 
$$
\sum_{t=1}^T O(\|F_t\|/n)=O(T)\cdot (\|F_T\|/n),
$$
since $\|F_t\|$ is a non-decreasing function of $t$. It thus suffices to prove that $\| F_T\|\ll n/T$. We now outline a simple analysis that shows that if $F_t$ is the forest computed by the players in the simple protocol above, then as long as $s\ll n/A^T$ for a sufficiently large $A>1$, then after $t$ steps one has 
\begin{equation}\label{eq:193uwgweg}
\|F_t\|\leq s\cdot B^t
\end{equation}
for some constant $B>1$. Thus, if the players start with $s\ll n/A^T$ space with $A\geq 100B$, say, then one has $\|F_T\|/n\ll 100^{-T}$, and thus the players do not succeed distinguishing between the \YES and \NO cases with any significant advantage over random guessing.

\paragraph{Analyzing the growth of $\|F_t\|$.} Fix $t\in [T]$, and  let $C_1,C_2, \dots,C_k$ be the nontrivial(size more than $1$) connected components of $F_{t-1}$. Let $e_1,e_2,\ldots,e_{\alpha n}$ be edges of $M_t$. We now analyze the expected increase of the weight $\|F_t\|$ of $F_t$ relative to the weight $\|F_{t-1}\|$ of $F_{t-1}$.

In order to compare $\|F_t\|$ to $\|F_{t-1}\|$, we define $|M_t|$ intermediate forests, where the $j$-th such forest is the forest that results from adding edges from the set $e_1,\ldots, e_j$ to $F_{t-1}$. Specifically, let $F_t^{0}:=F_{t-1}$, and for every $j=1,\ldots, |M_t|$ let 
$$
F_t^j:=\left\lbrace
\begin{array}{ll}
F_t^{j-1}\cup \{e_j\}&\text{if~}e_j\text{~intersects with a nontrivial component in~}F_t^{j-1}\\
F_t^{j-1}&\text{o.w.}
\end{array}
\right.
$$
Recall that besides keeping edges of $M_t$ that intersect connected components of $F_{t-1}$, the $t$-th player also posts the bits on an arbitrarily chosen subset of edges of $M_t$ on the board. Let $E^*$ denote this set of additional edges (not incident on any nontrivial component in $F_{t-1}$ whose bits the $t$-th player posts). This means that $F_t\subseteq F_t^{|M_t|}\cup E^*$. Since edges in $E^*$ are not incident to any components in $F_t^{|M_t|}$, their addition simply increases the weight of the forest by $4s$ (since every component is of size $2$). We now upper bound $\|F_t^{|M_t|}\|$.

For every $i=1,\ldots, |M_t|$ we upper bound 
$$
\mathbb{E}_{e_i}\left[\|F_t^{i}\|\middle| F_t^{i-1}, e_{[1:i-1]}\right].
$$
Conditioned on edges $e_1,e_2,\ldots,e_{i-1}$, the edge $e_i$ is a uniformly random edge not sharing end-points with $e_1,e_2,\ldots,e_{i-1}$. Thus, for every subset of vertices $C\subseteq [n]$ one has

\begin{equation}\label{eq:probability-boundary}
\prob_{e_i}[|C\cap e_i|=1|e_{[1:i-1]}]\leq 4|C|/n,
\end{equation}
since $|M_t|=\alpha n$ and $\alpha<1/4$ by assumption. Note that if for a component $C$ in $F_t^{i-1}$ one has $|C\cap e_i|=1$ and the other endpoint of $e_i$ does not belong to any nontrivial component in $F_t^{i-1}$, then the increase in the weight of the forest is $(|C|+1)^2-|C|^2=2|C|+1$. We call such edges $e_i$ {\bf boundary edges}. 

Similarly, for every pair of  subset of vertices $C, C'\subseteq [n], C\cap C'=\emptyset$ one has 
\begin{equation}\label{eq:probability-internal}
\prob_{e_i}[|C\cap e_i|=1\text{~and~}|C'\cap e_i|=1|e_{[1:i-1]}]\leq 8|C||C'|/n^2,
\end{equation}

and we note that if the event above happens, the increase in the weight of the forest due to addition of $e_i$ is $(|C|+|C'|)^2-|C|^2-|C'|^2=2|C| |C'|$.  We call such edges $e_i$ {\bf internal edges}. These notions of internal edges and boundary edges will later be crucial in the proof of our main theorem through Fourier analysis (see Section~\ref{sec:analysis-overview} below for an outline and then proofs of Lemmas~\ref{lm:mass-transfer} and~\ref{lm:mass-transfer-high} for the actual application).

Putting the two observations above (increase of potential due to boundary edges and internal edges) together, we get
\begin{equation}\label{eq:weight-increase}
\begin{split}
\mathbb{E}_{e_i}\left[\|F_t^{i}\| \middle| e_{[1:i-1]}\right]&\leq \|F_t^{i-1}\|+\sum_{\substack{C\text{~nontrivial component}\\ \text{in~} F_t^{i-1}}} (2|C|+1)\cdot \prob_{e_i}[|C\cap e_i|=1|e_{[1:i-1]}]\\
&+\sum_{\substack{C, C'\text{~nontrivial components}\\ \text{in~} F_t^{i-1}}} (2|C|\cdot |C'|)\cdot \prob_{e_i}[|C\cap e_i|=1\text{~and~}|C'\cap e_i|=1|e_{[1:i-1]}]\\
\end{split}
\end{equation}

Substituting ~\eqref{eq:probability-boundary} and~\eqref{eq:probability-internal} into~\eqref{eq:weight-increase}, we obtain
\begin{equation}\label{eq:combi-one-step-evolution}
\begin{split}
\mathbb{E}_{e_i}\left[\|F_t^{i}\| \middle|e_{[1:i-1]}\right]\leq \|F_t^{i-1}\|&+\sum_{\substack{C\text{~nontrivial component}\\ \text{in~} F_t^{i-1}}} (2|C|+1)\cdot \frac{4|C|}{n}\\
&+\sum_{\substack{C, C'\text{~nontrivial components}\\ \text{in~} F_t^{i-1}}} (2|C|\cdot |C'|)\cdot \frac{8|C|\cdot |C'|}{n^2}\\
&\leq \|F_t^{i-1}\|+\frac{12\|F_t^{i-1}\|}{n}+\frac{8\|F_t^{i-1}\|^2}{n^2}\\
&=\|F_t^{i-1}\| \cdot \left(1+\frac{12}{n}+\frac{8\|F_t^{i-1}\|}{n^2}\right).
\end{split}
\end{equation}
Applying this $|M_t|=\alpha n$ times formally requires a careful application of concentration inequalities (the details are provided in Appendix~\ref{sec:app-C}), but ultimately results in 
\begin{equation}\label{eq:3409hg943g}
\begin{split}
\mathbb{E}\left[\|F_t^{|M_t|}\|\right]&\approx \|F_{t-1}\|\cdot \left(1+\frac{12}{n}+\frac{8\|F_{t-1}\|}{n^2}\right)^{|M_t|} \approx \|F_{t-1}\|\cdot (B/2)
\end{split}
\end{equation}
for some constant $B$, as long as all intermediate forests satisfy $\|F_t^{i-1}\|\ll n$ (which they do with the appropriate setting of parameters -- see Appendix~\ref{sec:app-C} for details).

Now recall that besides keeping edges of $M_t$ that intersect connected components of $F_{t-1}$, the $t$-th player also posts the bits on an arbitrarily chosen subset of edges of $M_t$ that do not share an endpoint with $F_{t-1}$ on the board (we call such edges {\bf external edges}; a similar notion plays a crucial role in our analysis of general protocols see~\eqref{eq:bound-external} and related discussion). Recall that $F_t\subseteq F_t^{|M_t|}\cup E^*$, where $E^*$ denotes the set of additional edges (not incident on any nontrivial component in $F_t$) that the $t$-th player reveals. We thus get $\|F_t\|\leq \|F_t^{|M_t|}\|+4s$, which, when put together with~\eqref{eq:3409hg943g}, gives
\begin{equation*}
\mathbb{E}\left[\|F_t\|\right]\approx \|F_{t-1}\|\cdot B+4s.
\end{equation*}
Applying the above iteratively for $t=1,\ldots, T$ results in 
$$
\mathbb{E}[\|F_t\|]\lesssim \|F_0\|\cdot B^t\lesssim s \cdot B^t.
$$ 
This (informally) establishes~\eqref{eq:193uwgweg}, and shows that the players need $s\gg n/B^t$ in order to solve \DIHP($n, \alpha, T$).

The analysis outlined above is quite simple, but does not generalize to arbitrary communication protocols. In the next section we introduce our Fourier analytic approach, and illustrate some of the main claims by instantiating them on our component growing protocol above.

\subsection{Overview of general analysis (proof of Theorem~\ref{thm:main-comm})}\label{sec:analysis-overview}
In this section we first make some remarks about the Fourier spectrum of our component growing protocol, and then use it to illustrate our analysis, which is formally presented in Section~\ref{sec:main-result}.

\subsubsection{Second level Fourier spectrum of the component growing protocol vs combinatorial analysis} 
Suppose that the players use the simple protocol as described above. In that case the set of possible values of the hidden partition $X^*$ consistent with the players' knowledge at time $t$ can be defined quite easily:
\begin{equation}\label{eq:bt-comp-growing}
\B_t=\{x\in \bool^n: \forall e=(a, b)\in F_t, x_a+x_b=w_e\}.
\end{equation}

Thus, $\B_t$ is simply a linear subspace of $\bool^n$ with constraints given by the edges in $F_t$.  We will derive expressions for the Fourier transform of the indicators $h_t:=\mathbf{1}_{\B_t}$ of $\B_t$ for this simple protocol. Similarly, set of possible values of the hidden partition $X^*$ consistent with the $t$-th player's message can be defined quite easily as well:
\begin{equation}\label{eq:at-comp-growing}
\A_t=\{x\in \bool^n: \forall e=(a, b)\in F_t\cap M_t, x_a+x_b=w_e\}.
\end{equation}

Again, $\A_t$ is simply a linear subspace of $\bool^n$ with constraints given by the edges revealed by the $t$-th player. The normalized Fourier transform of the indicator $f_t:=\mathbf{1}_{\A_t}$ of $\A_t$ for this simple protocol is quite simple: 
\begin{equation}\label{eq:wtfw}
\wt{f}(z)=\left\lbrace
\begin{array}{ll}
(-1)^{\sum w_i}&\text{~if~$z$ is matched by edges $\{e_i\}$ of $M_t$}\\
0&\text{o.w.}
\end{array}
\right.
\end{equation}

These expressions will not be directly useful for our proof, but will provide very good intuition. 

It can be verified (a calculation is included in appendix~\ref{sec:htft-app} for convenience of the reader) that $\wt{h}_t(v)$ can only be nonzero if the set $v$ has an even intersection with every cluster in $F_t$. We refer to such $v\in \bool^n$ as {\em admissible} for brevity. Furthermore, it can be verified that $\wt{h}_t(v)$ has the following simple form. For each admissible $v$ let $Q(v)$ denote a pairing of vertices of $v$ via edge-disjoint paths in $F_t$ (we associate $Q_v$ with the set of edges on these paths). This is illustrated in Fig.~\ref{fig:clusters-ft}(a), where the vertices of $v\in \bool^n$ are marked red, and the edges of $Q(v)$ are the green dashed edges. Then we have 
$$
\wt{h}_t(v)=\left\lbrace
\begin{array}{ll}
(-1)^{\sum_{e\in Q(v)} w_e}&\text{if~}v\text{~is admissible}\\
0&\text{o.w.}
\end{array}
\right.
$$
We refer to appendix~\ref{sec:htft-app} for the proof. For example, the coefficient $\wt{h}_t(\{a_1, a_2, b_1, b_2, c_1, c_2\})$ is nonzero, equals $1$ in absolute value, and its sign is determined by the parities of labels on the green paths connecting $a_1$ to $a_2$, $b_1$ to $b_2$, and $c_1$ to $c_2$ (see Fig.~\ref{fig:clusters-ft}(a)).

\begin{figure}[h]
      \begin{minipage}{0.5\textwidth}
        \subfigure[Illustration of admissibility property: coefficient $\{a_1, a_2, b_1, b_2, c_1, c_2\}$ (marked in red) is admissible, coefficient $\{a_1, c_2\}$ is not.]{
\tikzstyle{vertex}=[circle, fill=black!100, minimum size=5,inner sep=1pt]
\tikzstyle{svertex}=[circle, fill=red!100, minimum size=7,inner sep=1pt]
\tikzstyle{evertex}=[circle,draw=none, minimum size=25pt,inner sep=1pt]
\tikzstyle{edge} = [draw,-, color=red!100, very  thick]
\tikzstyle{bedge} = [draw,-, color=green!100, very  thick]
\begin{tikzpicture}[scale=0.55, auto,swap]

    \node[svertex](a0) at (-3, 0.5) [label=below:$a_1$]{};    
    \node[vertex](a1) at (-2, 1) {};
    \node[vertex](a2) at (-3.5, 2) {};    
    \node[vertex](a3) at (-2.5, 2) {};        
    \node[svertex](a4) at (-0.5, 1) [label=above:$a_2$]{};
    \node[vertex](a5) at (-1, 0.5) {};

    \node[svertex](b1) at (2, -0.5) [label=above:$b_1$]{};
    \node[vertex](b2) at (4, -1.0) {};    
    \node[svertex](b3) at (3.9, -1.7) [label=below:$b_2$]{};        
    \node[vertex](b4) at (2, -2.0) {};

    \node[vertex](c1) at (5+1, 0) {};
    \node[vertex](c2) at (7+1, 0) {};    
    \node[vertex](c3) at (6+1, -2) {};        
    \node[svertex](c4) at (4.5+1, -2) [label=below:$c_1$]{};
    \node[vertex](c5) at (5.5+1, +1) {};
    \node[vertex](c6) at (5.5+2+1.5, +1) {};    
    \node[svertex](c7) at (6+1+1, -2.5) [label=below:$c_2$]{};

    \node[vertex](c5) at (5.5+1, +1) {};%[label=below:$c_3$]{};
    \node[vertex](c6) at (5.5+2+1.5, +1) {};% [label=below:$c_4$] {};
    
    \node[vertex](d1) at (-2, -1) {};
    \node[vertex](d2) at (0, -2) {};

%    \draw (0, 0) ellipse (4 and 3);
    
%    \draw (4, 0) ellipse (4 and 3);    

%    \path[draw, line width=1pt, -] (a0) -- (a2);    
    \path[draw, line width=3pt, -, dashed, green2!100] (a0) -- (a2);    
    \path[draw, line width=1pt, -] (a1) -- (a2);
    \path[draw, line width=3pt, -, dashed, green2!100]  (a2) -- (a3);    
    \path[draw, line width=3pt, -, dashed, green2!100] (a3) -- (a4);
    \path[draw, line width=1pt, -]  (a4) -- (a5);

%%%%%%%%%%%%%

    \path[draw, line width=3pt, -, dashed, green2!100] (b1) -- (b2);
%    \path[draw, line width=1pt, -]  (b1) -- (b2);
%    \path[draw, line width=1pt, -] (b2) -- (b3);
    \path[draw, line width=3pt, -, dashed, green2!100] (b2) -- (b3);
    \path[draw, line width=1pt, -]  (b3) -- (b4);

%%%%%%%%%%%%%

   \path[draw, line width=1pt, -] (c1) -- (c2);
   \path[draw, line width=1pt, -]  (c4) -- (c1);    

%   \path[draw, line width=3pt, -, dashed, green2!100] (c2) -- (c5);
%   \path[draw, line width=3pt, -, dashed, green2!100] (c2) -- (c6);

   \path[draw, line width=1pt, -] (c2) -- (c5);
   \path[draw, line width=1pt, -] (c2) -- (c6);

   \path[draw, line width=3pt, -, dashed, green2!100] (c3) -- (c7);
   \path[draw, line width=3pt, -, dashed, green2!100] (c4) -- (c3);

   \path[draw, line width=1pt, -] (d1) -- (d2);
    
\end{tikzpicture}
}        
\end{minipage}\hspace{0.1in}
      \begin{minipage}{0.5\textwidth}
        \subfigure[Growth of connected components of $F_{t-1}$ into $F_t$ after addition of a matching $M_t$ (its edges are shown as zig-zagged).]{
\tikzstyle{vertex}=[circle, fill=black!100, minimum size=5,inner sep=1pt]
\tikzstyle{svertex}=[circle, fill=red!100, minimum size=7,inner sep=1pt]
\tikzstyle{evertex}=[circle,draw=none, minimum size=25pt,inner sep=1pt]
\tikzstyle{edge} = [draw,-, color=red!100, very  thick]
\tikzstyle{bedge} = [draw,-, color=green!100, very  thick]
\begin{tikzpicture}[scale=0.55, auto,swap]

    \node[svertex](a0) at (-3, 0.5) [label=below:$a_1$]{};    
    \node[vertex](a1) at (-2, 1) {};
    \node[vertex](a2) at (-3.5, 2) {};    
    \node[vertex](a3) at (-2.5, 2) {};        
    \node[svertex](a4) at (-0.5, 1) [label=above:$a_2$]{};
    \node[vertex](a5) at (-1, 0.5) {};

    \node[svertex](b1) at (2, -0.5) [label=above:$b_1$]{};
    \node[vertex](b2) at (4, -1.0) {};    
    \node[svertex](b3) at (3.9, -1.7) [label=below:$b_2$]{};        
    \node[vertex](b4) at (2, -2.0) {};

    \node[vertex](c1) at (5+1, 0) {};
    \node[vertex](c2) at (7+1, 0) {};    
    \node[vertex](c3) at (6+1, -2) {};        
    \node[svertex](c4) at (4.5+1, -2) [label=below:$c_1$]{};
    \node[vertex](c5) at (5.5+1, +1) {};
    \node[vertex](c6) at (5.5+2+1.5, +1) {};    
    \node[svertex](c7) at (6+1+1, -2.5) [label=below:$c_2$]{};

    \node[vertex](c5) at (5.5+1, +1) {};% [label=below:$c_3$]{};
    \node[vertex](c6) at (5.5+2+1.5, +1) {};% [label=below:$c_4$] {};
    
    \node[vertex](d1) at (-2, -1) {};
    \node[vertex](d2) at (0, -2) {};    
    
   \draw [-,
line join=round,
decorate, decoration={
    zigzag,
    segment length=5,
    amplitude=2.0,post=lineto,
    post length=1pt
}]   (a4) -- (b1);

   \draw [-,
line join=round,
decorate, decoration={
    zigzag,
    segment length=5,
    amplitude=2.0,post=lineto,
    post length=1pt
}]   (b3) -- (c4);    
    
%    \draw (0, 0) ellipse (4 and 3);
    
%    \draw (4, 0) ellipse (4 and 3);    

%    \path[draw, line width=1pt, -] (a0) -- (a2);    
    \path[draw, line width=3pt, -, dashed, green2!100] (a0) -- (a2);    
    \path[draw, line width=1pt, -] (a1) -- (a2);
    \path[draw, line width=3pt, -, dashed, green2!100]  (a2) -- (a3);    
    \path[draw, line width=3pt, -, dashed, green2!100] (a3) -- (a4);
    \path[draw, line width=1pt, -]  (a4) -- (a5);

%%%%%%%%%%%%%

%    \path[draw, line width=1pt, -]  (b1) -- (b2);
    \path[draw, line width=3pt, -, dashed, green2!100] (b1) -- (b2);
%    \path[draw, line width=1pt, -] (b2) -- (b3);
    \path[draw, line width=3pt, -, dashed, green2!100] (b2) -- (b3);
    \path[draw, line width=1pt, -]  (b3) -- (b4);

%%%%%%%%%%%%%

   \path[draw, line width=1pt, -] (c1) -- (c2);
   \path[draw, line width=1pt, -]  (c4) -- (c1);    

%   \path[draw, line width=3pt, -, dashed, green2!100] (c2) -- (c5);
%   \path[draw, line width=3pt, -, dashed, green2!100] (c2) -- (c6);

   \path[draw, line width=1pt, -] (c2) -- (c5);
   \path[draw, line width=1pt, -] (c2) -- (c6);

   \path[draw, line width=3pt, -, dashed, green2!100] (c3) -- (c7);
   \path[draw, line width=3pt, -, dashed, green2!100] (c4) -- (c3);

   \path[draw, line width=1pt, -] (d1) -- (d2);
    
\end{tikzpicture}
        }
      \end{minipage}
      \caption{Fourier transform of $h_t$ vs growth of connected components in $F_t$.}
    \label{fig:clusters-ft}

\end{figure}
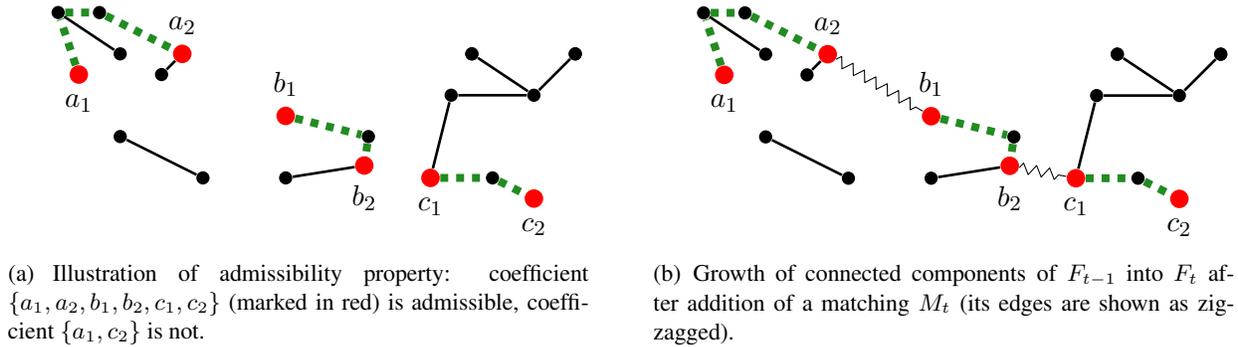

Recall that our direct analysis of the adaptive protocol in Section~\ref{sec:simple-protocol} used the {\em weight} of the forest $F_t$, defined by 
$$
\| F_t\|=\sum_{i=1}^k |C_i|^2,
$$
where $C_1, \ldots, C_k$ are the nontrivial (size strictly larger than $1$) connected components of $F_t$. As we noted in that section, this particular way of analyzing the component growing protocol does not generalize, but the following reformulation does. Consider weight two Fourier coefficients of $h_t$. As noted above, one has $\wt{h}_t(\{a, b\})\neq 0$  if and only if $a$ and $b$ belong to the same connected component in $F_t$. We therefore have 
\begin{equation*}
\begin{split}
\sum_{v\in \bool^n: |v|=2} \left|\wt{h}_t(v)\right|&=\frac1{2}\sum_{\substack{a, b \in [n],\\ a\neq b}} \left|\wt{h}_t(\{a, b\})\right|\\
&=\frac1{2}\sum_{i=1}^k \sum_{\substack{a, b \in C_i,\\ a\neq b}} \left|\wt{h}_t(\{a, b\})\right|\\
&=\sum_{i=1}^k {C_i \choose 2}\\
&=\Theta(\| F_t\|).
\end{split}
\end{equation*}

This suggests an analysis that is based on proving that 
\begin{equation}\label{eq:level-two-sum}
\sum_{v\in \bool^n: |v|=2} \left|\wt{h}_t(v)\right|\ll n
\end{equation}
for all $t=1,\ldots, T$, i.e, the sum of absolute values of second level Fourier coefficients stays small throughout the game, hoping that this is a general enough approach for handling arbitrary protocols. Our proof indeed proceeds along similar lines but there are two major difficulties that one needs to overcome in order to make this work. {\bf First}, while for the simple protocol higher order Fourier coefficients are (essentially) determined by weight two Fourier coefficients, i.e., by the collection of connected components in $F_t$, this is not the case for general protocols. We thus need to generalize~\eqref{eq:level-two-sum} to higher weights. {\bf Second}, we need to design techniques for analyzing the equivalent of `component growth' in the combinatorial version of our analysis, in terms of Fourier coefficients. 

\subsubsection{Evolution of Fourier coefficients}\label{sec:evolution}

We now present our general analysis, and illustrate it by applying to the component growing protocol from Section~\ref{sec:simple-protocol}. Recall that after $t$ players have spoken the random bipartition $X^*$ is uniform in the set $\B_t=\A_1\cap \A_2\cap \ldots \cap \A_t$.  The following lemma (see Section~\ref{sec:main-result} for the proof) is crucial for our proof: 

\newcommand{\lemmaMainComm}{
For every constant $\alpha\in (0, 10^{-10})$, integer $n\geq 1$, every $T\in [10, \ln n]$, $s\geq \sqrt{n}$, the following conditions hold if $n$ is sufficiently large. Suppose there exists a protocol $\Pi$ for \DIHP$(n,\alpha,T)$ such that $|\Pi|=s<n/(10T)^{10^9 T}$. Let $\delta=1/(1000T)$. Then there exist events  $\E_1\supseteq \E_2\supseteq \ldots\supseteq \E_T$ with $\prob[\E_1]=1$ and $\prob[\bar \E_{t+1}|\E_t]\leq 1/(100T)$ for $t=1\dots T-1$, such that $\E_t$ depends on $(S^Y_{1:t-1}, M_{1:t})$ only, and conditioned on $\E_t$ one has
\begin{enumerate}
\item  $\B_{t-1}$ is $((10^{12} T)^t, 10Ts)$-bounded (as per Definition \ref{def:bounded});
\item  $|\B_{t-1}|/2^n\geq 2^{-s(t-1)-10(t-1)\log{T}}$;
\item for any non-negative function $q$ on $\{0,1\}^{M_t}$
\begin{equation}\label{eq:pointwise}
1-\delta\leq \frac{\mathbb{E}_{x\sim \operatorname{Uniform}(\B_{t-1})}\left[q(M_t x)\right]}{\mathbbm{E}_{z\sim \operatorname{Uniform}\left(\{0,1\}^{M_t}\right)}\left[q(z)\right]}\leq 1+\delta.
\end{equation}
\end{enumerate}
}

\noindent{{\bf Lemma~\ref{lm:main-comm}} \em 
For every constant $\alpha\in (0, 10^{-10})$, integer $n\geq 1$, every $T\in [10, \ln n]$, $s\geq \sqrt{n}$, the following conditions hold if $n$ is sufficiently large. Suppose there exists a protocol $\Pi$ for \DIHP$(n,\alpha,T)$ such that $|\Pi|=s<n/(10T)^{10^9 T}$. Let $\delta=1/(1000T)$. Then there exist events  $\E_1\supseteq \E_2\supseteq \ldots\supseteq \E_T$ with $\prob[\E_1]=1$ and $\prob[\bar \E_{t+1}|\E_t]\leq 1/(100T)$ for $t=1\dots T-1$, such that $\E_t$ depends on $(S^Y_{1:t-1}, M_{1:t})$ only, and conditioned on $\E_t$ one has
\begin{enumerate}
\item  $\B_{t-1}$ is $((10^{12} T)^t, 10Ts)$-bounded (as per Definition \ref{def:bounded});
\item  $|\B_{t-1}|/2^n\geq 2^{-s(t-1)-10(t-1)\log{T}}$;
\item for any non-negative function $q$ on $\{0,1\}^{M_t}$
\begin{equation*}
1-\delta\leq \frac{\mathbb{E}_{x\sim \operatorname{Uniform}(\B_{t-1})}\left[q(M_t x)\right]}{\mathbbm{E}_{z\sim \operatorname{Uniform}\left(\{0,1\}^{M_t}\right)}\left[q(z)\right]}\leq 1+\delta.
\end{equation*}
\end{enumerate}
}

The third part of the lemma helps us conclude that the messages posted on the board do not reveal enough information to distinguish between the \YES and the \NO cases: indeed it implies that the posterior distribution of the labels that the $t$-th player observes on $M_t$ in the \YES case is pointwise close to uniform. Since we show that this is true for all players $t=1,\ldots, T$, the result follows by simple properties of the total variation distance (see Lemma~\ref{lm:main-tvd} and then proof of Theorem~\ref{thm:main-comm} in Section~\ref{sec:main-result}). The first two parts of Lemma~\ref{lm:main-comm} are the main invariants on the evolution of the Fourier spectrum of $h_t$ that drive our analysis.

Recall that by Definition~\ref{def:bounded}  a set $\B\subseteq \bool^n$ is $(C, s^*)$-bounded if for all $\ell\leq s^*$ we have
\begin{equation}\label{eq:low-weights-bound-recalled}
\sum_{\substack{v\in\{0,1\}^n \\ |v|=2\ell}}\left|\widetilde{\mathbbm{1}_\B}(v)\right|\leq \left(\frac{C\sqrt{s^*n}}{\ell}\right)^\ell;
\end{equation}
and for all $s^* < \ell < \frac{n}{C^2}$ we have
\begin{equation}\label{eq:middle-weights-bound-recalled}
\sum_{\substack{v\in\{0,1\}^n \\ |v|=2\ell}}\left|\widetilde{\mathbbm{1}_\B}(v)\right|\leq \left(\frac{C^2 n}{\ell}\right)^{\ell/2}.
\end{equation}

As per~\eqref{eq:bound-def}, defining the function $\bound(\ell)$ by
\begin{equation*}
\bound(\ell)=
\begin{cases}
1 \qquad \qquad \qquad \ell=0; \\
\left(\frac{C\sqrt{s^* n}}{\ell}\right)^\ell \qquad \ell\in [1: s^*]; \\
\left(\frac{C^2 n}{\ell}\right)^{\ell/2} \qquad \ell>s^*,
\end{cases}
\end{equation*}
we are able to simplify notation, ensuring that an indicator function $h$ is $(C, s^*)$ bounded if and only if for all $\ell<n/C^2$ we have 
\[
\sum_{\substack{v\in \{0,1\}^n \\ |v|=2\ell}} \left|\wt{h}(v)\right|\leq \bound(\ell).
\]

\paragraph{Intuition behind the choice of the bound $\bound(\ell)$.} We note that the bound above is essentially obtained as follows: one first thinks of the bounds on the $\ell_1$ norm of the Fourier transform of the indicator $f$ of the set $\A$ in the cube that is consistent with the message of a single player that follow by applying Cauchy-Schwarz to the $\ell_2$ norm bounds provided by hypercontractivity (see Lemma~\ref{lm:hypercontractivity-fourier} in Appendix~\ref{app-F}; this is the bound that was used in~\cite{GKKRW07} and follow up works). We then prove by induction that even the Fourier transform of the product of such indicator functions maintains the small $\ell_1$ norm property. The fact that $\bound(\ell)$ provides different bounds for small and large $\ell$ is a consequence of the fact that hypercontractivity only implies strong bounds on the $\ell_2^2$ mass of the Fourier spectrum of an (indicator of a) set $\A$ of density $2^{-s^*}$ when $\ell\leq s^*$, for larger $\ell$ one simply uses Parsevals' equality. See the proof of Lemma~\ref{lm:L1xorKKL} for the details of this calculation.

We also note that the bound that we prove on the $\ell_1$ norm of the Fourier transform of $h$ is indeed surprisingly strong: it is much stronger than what follows by Cauchy-Schwarz for an arbitrary function of the same $\ell_2^2$ norm.

Overall, in our analysis we distinguish between `low weight' Fourier coefficients, namely those with weights between $1$ and $s^*$, the `intermediate weight' Fourier coefficients, namely those between $s^*$ and $\frac{n}{C^2}$, and the `high weight' Fourier coefficients, namely those with weights between $\frac{n}{C^2}$ and $n$. Here, $s^*=10Ts$ is an upper bound for the total amount of information revealed by all players.  

\paragraph{The low weight bound~\eqref{eq:low-weights-bound-recalled} for the component growing protocol ($\ell\in [1, s^*]$).} Note that instantiating the first guarantee of Lemma~\ref{lm:main-comm} above for $\ell=1$ leads to a bound of 
\begin{equation*}
\sum_{\substack{v\in\{0,1\}^n \\ |v|=2}}\left|\widetilde{\mathbbm{1}_\B}(v)\right|\leq (10^{12} T)^t\cdot \sqrt{10Ts \cdot n},
\end{equation*}
which is similar to our upper bound of $s\cdot B^t$ on $\| F_t \|$ from Section~\ref{sec:combinatorial-analysis}. Also note that the bound that we get for general communication protocols with a budget of $s$ bits is similar to what one would get for the simple protocol with $\approx \sqrt{s\cdot n}$ bits (see rhs above). This is a consequence of the fact that our analysis starts with $\ell_2^2$ bounds on Fourier coefficients and converts those into $\ell_1$ bounds, with an appropriate loss from Cauchy-Schwarz.

\paragraph{The intermediate weight bound~\eqref{eq:middle-weights-bound-recalled} for the component growing protocol ($\ell\in (s^*, n/C^2]$).} Note that since the middle weight coefficients correspond to weights at least $s^*=10Ts$, they all vanish for any simple protocol of size at most $s$. Indeed, since each player reveals at most $s$ edges, the total size of non-trivial connected components in the resulting forest $F_T$ does not exceed $2Ts$. 

\paragraph{Hight weights for the component growing protocol ($\ell>n/C^2$).} Similarly, the high weight part of the spectrum is zero for the component growing protocol, since the maximum weight of a nonzero Fourier coefficient is upper bounded by twice the number of edges in the forest $F_t$, and that number never becomes close to $n$ with the appropriate setting of parameters.

We now outline the proof of Lemma~\ref{lm:main-comm}. Note that the main challenge that we had to overcome in the analysis of the component growing protocol in Section~\ref{sec:combinatorial-analysis} is bounding the rate at which connected components of different sizes are merged when edges of the next matching $M_t$ connect two nontrivial components (we refer to these edges as {\bf internal edges}, see Section~\ref{sec:combinatorial-analysis}) or connect a nontrivial component to an isolated vertex (we refer to these edges as {\bf boundary edges}, see Section~\ref{sec:combinatorial-analysis}). Our Fourier analytic approach analyzes the component merging process using the convolution theorem: we note that the arrival of internal or boundary edges results in the Fourier transform of $h_t$ being convolved with the Fourier transform of the message $f_t$ that the $t$-th player sends, and we analyze this process directly.  

The proof of Lemma~\ref{lm:main-comm} is by induction on $t$, with the inductive step being the main technical lemma of our paper. It is given by

\newcommand{\lemmaInduction}{For every $n, C, s^*, \alpha, \delta$ that satisfy conditions 
\begin{align*}
{\bf (P1)} \, \alpha < 10^{-10}\quad 
{\bf (P2)} \, C>10^6 \quad 	
{\bf (P3)} \, s^*<\frac{n}{10^9 C^3} \quad
{\bf (P4)} \, n>10^9C^4 \quad
{\bf (P5)} \, \delta \in (n^{-1}, 1/2),
\end{align*}
every $\B\subseteq \{0, 1\}^n$, if $\B$ is $(C, s^*)$-bounded (as per Definition \ref{def:bounded}) and $M$ is a uniformly random matching of size $\alpha n$,  the following conditions hold with probability at least $1-5\delta$ over the choice of $M$.

For every $\A_{reduced}\subseteq \{0, 1\}^M$ such that $|\A_{reduced}|/2^{\alpha n}\geq 2^{-s^*}$, if $\A=\{x\in \{0, 1\}^n: Mx\in \A_{reduced}\}$, then 
$\B':=\B\cap \A$ is $((10^9/\delta)C, s^*)$-bounded.

}
\noindent{{\bf Lemma ~\ref{lm:induction}} \em 
\lemmaInduction
}

We now illustrate the proof using the component growing protocol. Let $\B_t$ be as in Definition~\ref{def:at}. In particular, for the component growing protocol $\B_t$ is explicitly given by~\eqref{eq:bt-comp-growing}. For simplicity we write $M$ for $M_t$, $\B'$ for $\B_t$, $\B$ for $\B_{t-1}$, and $\A$ for $\A_t$. We further let $h:=\mathbf{1}_{\B}, f:=\mathbf{1}_\A, h':=h\cdot f=\mathbf{1}_{\B'}$. Proving that $h'$ is $((10^9/\delta)C, s^*)$-bounded involves upper bounding the $\ell_1$ norm of Fourier coefficients at various levels $2\ell, \ell\in [1, n/(2C^2)]$. Convolution theorem~\eqref{eq:convolution} together with triangle inequality give
\begin{equation}\label{eq:conv-hats-sketch}
\begin{split}
\sum_{\substack{v\in\{0,1\}^n\\ |v|=2\ell}} \left|\wt{h}'(v)\right| &=\frac{2^n}{|\B'|}\sum_{\substack{v\in\{0,1\}^n\\ |v|=2\ell}} \left|\wh{h}'(v)\right|\\
&=\frac{2^n}{|\B|}\frac{2^n}{|\A|}\sum_{\substack{v\in\{0,1\}^n\\ |v|=2\ell}} \left|\wh{h}'(v)\right|\\
&\leq \sum_{\substack{v\in\{0,1\}^n\\ |v|=2\ell}} \sum_{z\in\{0,1\}^n} \left|\frac{2^n}{|\B|}\wh{h}(v\oplus z) \frac{2^n}{|\A|}\wh{f}(z)\right|\\
&=\sum_{v\in\{0,1\}^n} \sum_{\substack{z\in\{0,1\}^n\\ |v\oplus w|=2\ell}} \left|\wt{h}(v)\wt{f}(z)\right| \\
&= \sum_{v\in\{0,1\}^n} \left|\wt{h}(v)\right| \sum_{\substack{z\in\{0,1\}^n\\ |v\oplus z|=2\ell}} \left|\wt{f}(z)\right|.
\end{split}
\end{equation}
Here in going from line~2 to line~3 we used the fact that $|\B'|/2^n=(|\B|/2^n)\cdot (|\A|/2^n)$ for our simple protocol, as long as no cycle has been revealed. This in particular proves part 2 of Lemma~\ref{lm:main-comm} for the component growing protocol without the additive loss in the exponent (we get $|\B_{t-1}|/2^n=2^{-s(t-1)}$ as opposed to just $|\B_{t-1}|/2^n\geq 2^{-s(t-1)-10(t-1)\log{T}}$; see proof of Lemma~\ref{lm:main-comm} in Section~\ref{sec:main-result} for general argument). It is useful to recall at this point (see~\eqref{eq:at-comp-growing} and~\eqref{eq:wtfw}) that $\wt{f}(w)\neq 0$ only if $w$ is perfectly matched by $M$.

Now note that for the component growing protocol the sum  $\sum_{\substack{v\in\{0,1\}^n\\ |v|=2\ell}} \left|\wt{h}'(v)\right|$ is the number of sets of $2\ell$ vertices that intersect every connected component in $F_{t-1}$ an even number of times. The sum on the rhs of the last line in~\eqref{eq:conv-hats-sketch} is over all $v$ that intersect every component in $F_{t-1}$ an even number of times, and subsets $z$ of edges of $M_t$ such that $v\oplus z$ has weight $2\ell$. {\bf For small $\ell$} most of the sum is contributed by sets of $2\ell$ vertices having two vertices in each connected component of $F_{t-1}$. In particular, every such $v$ together with $z\in\{0,1\}^n$ that satisfies $|v\oplus z|=2\ell$ and $\wt{f}(z)\neq 0$  corresponds to a collection of components in $F_{t-1}$ that are merged into a collection of $\ell$ components in $F_t$ by the edges in $z$.

The core of our proof (see Lemma \ref{lm:mass-transfer} and Lemma \ref{lm:mass-transfer-high}) shows that if $h$ is $(C, s^*)$-bounded, then
\[
\sum_{v\in\{0,1\}^n} \left|\widetilde{h}(v)\right| \mathbb{E}_M\left[\sum_{\substack{w\in\{0,1\}^n\\ |v\oplus w|=2\ell}} \left|\widetilde{f}(w)\right|\right]\leq \bound[10^9C](\ell),
\]
which in turn implies that $h'=h\cdot f$ is $(10^9 C, s^*)$-bounded on average. Applying Markov's inequality to the above then yields a proof of Lemma~\ref{lm:induction}, and applying Lemma~\ref{lm:induction} iteratively leads to a proof of Lemma~\ref{lm:main-comm}.

We now illustrate the main ideas of the proof of the implication above when $\ell$ is small, namely when $\ell\leq s^*$. For clarity of exposition we sketch the proof of the following bound:

\begin{equation}\label{eq:conv-sum-simple}
\sum_{v\in\{0,1\}^n} \left|\wt{h}(v)\right| \mathbb{E}_M\left[\sum_{\substack{z\in\{0,1\}^n\\ |v\oplus z|=2\ell}} \left|\widetilde{f}(z)\right|\right]\leq \binom{20Cs}{\ell},
\end{equation}
starting from the assumption that for all $\ell=1,\ldots, n/C^2$
\begin{equation}\label{eq:conv-sum-simple-org}
\sum_{\substack{v\in\{0,1\}^n \\ |v|=2\ell}} \left|\wt{h}(v)\right| \leq \binom{Cs}{\ell}.
\end{equation}
We note that the condition above implies $(C, s)$-boundedness.

We first note that $\wt{f}(z)\neq 0$ only if $z=Mw$ for $w\in \bool^M$ (see~\eqref{eq:at-comp-growing} and~\eqref{eq:wtfw} for the component growing protocol, and Lemma~\ref{lm:matching-fourier} for general protocols), and thus for every $v\in \bool^n$ one has $\mathbb{E}_M\left[\sum_{\substack{z\in\{0,1\}^n\\ |v\oplus z|=2\ell}} \left|\widetilde{f}(z)\right|\right]=\mathbb{E}_M\left[\sum_{\substack{w\in\bool^M\\ |v\oplus Mw|=2\ell}} \left|\widetilde{f}(Mw)\right|\right]$. We also note that since our matchings are small ($|M|\leq \alpha n$ for small constant $\alpha\in (0, 1)$, see, e.g. Lemma~\ref{lm:main-comm}), the condition $|v\oplus Mw|=2\ell$ implies that $|v|<n/100$. Thus, the only terms with a nonzero contribution to the sum that we need to bound are $v\in \bool^n$ with $|v|=2k\leq n/100$. Thus, it suffices to bound, for a parameter $k\in [0:n/200]$ and $v\in \bool^n$ with $|v|=2k$, the quantity
\begin{equation}\label{eq:sum-fixed-v-first}
\mathbb{E}_M\left[\sum_{\substack{w\in\bool^M\\ |v\oplus Mw|=2\ell}} \left|\widetilde{f}(Mw)\right|\right].
\end{equation}
Note that this quantity depends on $v$ only through $k=|v|$. We will later combine our bounds over all $k\in [0:n/200]$ to obtain the result of the lemma.

\paragraph{Bounding~\eqref{eq:sum-fixed-v-first} for fixed $v$: internal and boundary edges.} A vector $v\in\bool^n$ naturally corresponds to a subset of $[n]$ of its non-zero coordinates. Let $\Int=\{e_1^{\operatorname{int}}, e_2^{\operatorname{int}}, \dots\}$ be the set of edges $e=(a, b)\in M$ that match points of $v$, i.e. with $a, b\in v$. Let $\Bound=\{e_1^{\operatorname{bound}}, e_2^{\operatorname{bound}}, \dots \}$ be the set of boundary edges, i.e. edges $e=(a, b)\in M$ with $a \in v, b \not\in v$ or vice versa. Let $\Ext =\{e_1^{\operatorname{ext}}, e_2^{\operatorname{ext}}, \dots\}$ be the set of external edges, i.e. edges $e=(a, b)\in M$ with $a, b\in [n]\setminus v$. See Fig.~\ref{fig:int-ext-partial} for an illustration.

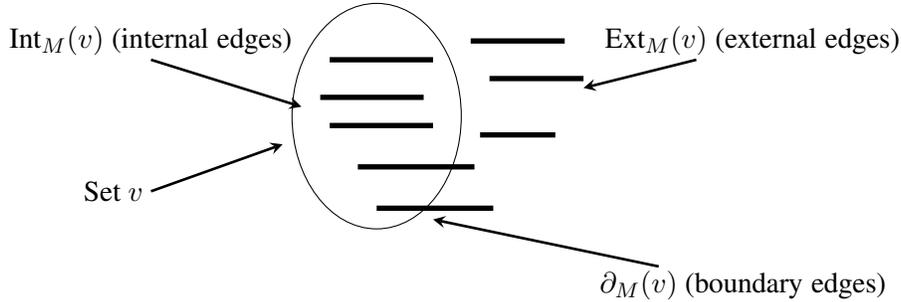
\begin{figure*}[h!]
\begin{center}
\tikzstyle{vertex}=[circle, fill=black!100, minimum size=5,inner sep=1pt]
\tikzstyle{svertex}=[circle, fill=red!100, minimum size=7,inner sep=1pt]
\tikzstyle{evertex}=[circle,draw=none, minimum size=25pt,inner sep=1pt]
\tikzstyle{edge} = [draw,-, color=red!100, very  thick]
\tikzstyle{bedge} = [draw,-, color=green!100, very  thick]
\begin{tikzpicture}[scale=0.25, auto,swap]

   \draw[rotate=90] (0,0) ellipse (6 and 4.5);

%%M_1
   \path[draw, line width=2pt, -] (-2.5, +3) -- (3.0, +3);
   \path[draw, line width=2pt, -] (-3, +1) -- (2.5, +1);
   \path[draw, line width=2pt, -] (-2.5, -0.5) -- (3.0, -0.5);   

%%M_2
   \path[draw, line width=2pt, -] (-1, -2.7) -- (5.2, -2.7);
   \path[draw, line width=2pt, -] (0, -4.9) -- (6.2, -4.9);

%%M_3
   \path[draw, line width=2pt, -] (-3+8, 4)     -- (2+8, 4);
   \path[draw, line width=2pt, -] (-3+9, 2)     -- (2+9, 2);   
   \path[draw, line width=2pt, -] (-2.5+8, -1) -- (1.5+8, -1);   
    
    \draw (-14, -4) node {Set $v$};    
    \path[draw, line width=1pt, ->, >=stealth] (-14+2, -4) -- (-5.0, -1.5);
    
    \draw (-14+2, 4) node {$\text{Int}_M(v)$ (internal edges)};        
    \path[draw, line width=1pt, ->, >=stealth] (-14+2, 4-1) -- (-4, +0.5);    

    \draw (-7+25+2, 4) node {$\text{Ext}_M(v)$ (external edges)};        
    \path[draw, line width=1pt, ->, >=stealth] (-7+24, 4-1) -- (-1+12, +1.5);    

    \draw (17.5+2, -9) node {$\partial_M(v)$ (boundary edges)};        
    \path[draw, line width=1pt, ->, >=stealth] (15, -8) -- (3, -5.5);

%   \path[draw, line width=2pt, -] (c3) -- (c4);

\end{tikzpicture}
\end{center}
\caption{A coefficient $v$ together edges of the matching $M$ classified into internal edges $\text{Int}_M(v)$, external edges $\text{Ext}_M(v)$ and boundary edges $\partial_M(v)$.}\label{fig:int-ext-partial}
\end{figure*}

 We decompose the sum~\eqref{eq:sum-fixed-v-first} according to the set of boundary edges in $w$:
\begin{equation*}
\mathbb{E}_M\left[\sum_{\substack{w\in\{0,1\}^M\\ |v\oplus Mw|=2\ell}} \left|\widetilde{f}(Mw)\right|\right]=\mathbb{E}_M\left[
\sum_{S \subseteq \Bound} \sum_{w\in\{0,1\}^M}
\mathbbm{1}_{\{w\cap\Bound=S \}} \mathbbm{1}_{\{|v\oplus Mw|=2\ell\}} \left|\widetilde{f}(Mw)\right|\right].
\end{equation*}
It turns out that if $w\cap\Bound=S$ then the latter indicator function can be rewritten as the indicator of $|w\oplus w_S|=\ell-(k-|\Int|)$, where $w_S\in\{0,1\}^M$ is  the set of all internal edges $\Int$ and all edges in $S$ (see the proof of Lemma \ref{lm:mass-transfer} for details). This gives the following upper bound:

\begin{equation*}
\begin{split}
\mathbb{E}_M\left[\sum_{\substack{w\in\{0,1\}^M\\ |v\oplus Mw|=2\ell}} \left|\widetilde{f}(Mw)\right|\right]&=
\mathbb{E}_M\left[\sum_{S \subseteq \Bound} \sum_{\substack{w\in\{0,1\}^M\\ |w\oplus w_S|=\ell-(k-|\Int|) }}
\mathbbm{1}_{\{w\cap\Bound=S\}} \cdot \left|\widetilde{f}(Mw)\right|\right]\\
&\leq 
\mathbb{E}_M\left[\sum_{S \subseteq \Bound} \sum_{\substack{w\in\{0,1\}^M\\ |w\oplus w_S|=\ell-(k-|\Int|) }}
 \left|\widetilde{f}(Mw)\right|\right],
\end{split}
\end{equation*}
where we dropped the indicator function in going from the first line to the second line above. Note that $|\widetilde{f}(Mw)|=1$ if all the edges of $w$ are revealed by the $t$-th player and zero otherwise. Since all the edges of $w_S$ (i.e. internal edges and some subset of the boundary edges) are revealed we know that $w\oplus w_S$ is also a subset of the revealed edges which means that we have 
\begin{equation}\label{eq:92ugjggdfd}
\sum_{\substack{w\in\{0,1\}^{M}\\ |w\oplus w_S|=\ell-(k-|\Int|) }} \left|\widetilde{f}(Mw)\right| \leq \binom{s}{\ell-k+|\Int|}. 
\end{equation}

Note that here we are crucially using the fact that $w\oplus w_S$ is a set of external edges only. This allows us to bound the sum of Fourier coefficients by a function of $s$, the communication budget per player. The equivalent statement for general protocols is provided by Lemma~\ref{lm:L1xorKKL} (see~\eqref{eq:xor-KKL-bound}), which bounds the sum of absolute values of Fourier coefficients of a dense subset of the boolean cube by a similar expression to the above. 

Since the bound in~\eqref{eq:92ugjggdfd} is independent of $S$, summing over all possible subsets $S\subseteq \Bound$, we infer
\begin{equation}\label{eq:bound-external}
\begin{split}
\mathbb{E}_M\left[\sum_{\substack{w\in\{0,1\}^M\\ |v\oplus Mw|=2\ell}} \left|\widetilde{f}(Mw)\right|\right]
&\leq
\mathbb{E}_M\left[2^{|\Bound|} \binom{s}{\ell-k+|\Int|} \right].
\end{split}
\end{equation}

We then bound the sum on the last line above:
\begin{align*}
\mathbb{E}_M\left[2^{|\Bound|}\binom{s}{\ell-k+|\Int|} \right]
&=\mathbb{E}_M\left[\sum_{i=0}^k \sum_{b=0}^{2k} 2^{b} \mathbbm{1}_{\{|\Bound|=b\text{~and}~|\Int|=i\}} \binom{s}{\ell-k+i} \right]\\
&=\sum_{i=|k-\ell|_+}^k \sum_{b=0}^{2k} 2^{b} q(k,i,b,n)  \binom{s}{\ell-k+i}.
\end{align*}   
In going from line~1 to line~2 above we used the fact that by Definition~\ref{def:qkibn} for every $v\in \bool^n$ with $|v|=2k$ one has $\expect_M[\mathbbm{1}_{\{|\Bound|=b\text{~and}~|\Int|=i\}}]=q(k, i, b, n)$, i.e. $q(k,i,b,n)$ is the probability that a uniformly random matching $M$ of size $\alpha n$ is  such that $i$ edges of $M$ match points of $v$ (i.e. $\left|\Int\right|=i$) and $b$ edges of $M$ are boundary edges (i.e. $\left|\Bound\right|=b$).  Recall that $|k-\ell|_+=\max\{0, k-\ell\}$. We note that $q(k, i, b, n)$ depends on the size  $\alpha n$ of the matching $M$, but we prefer to keep this dependence implicit to simplify notation.

We thus have that for a fixed $v\in \{0,1\}^n$ with $|v|=2k$
\begin{equation}\label{eq:9275tjffads}
\mathbb{E}_M\left[\sum_{\substack{w\in\{0,1\}^M\\ |v\oplus Mw|=2\ell}} \left|\widetilde{f}(Mw)\right|\right] 
\leq
\sum_{i=|k-\ell|_+}^{k}\sum_{b=0}^{2k} q(k,i,b,n) 2^{b}\binom{s}{\ell-k+i}.
\end{equation}
We now note that intuitively, $q(k, i, b, n)\approx n^{-i}$, since (at least for small $i$ and $k$) the probability of having an internal edge in a given set of size $2k$ is approximately $1/n$. In fact, for small $k,i$ we have $q(k,i,b,n)\lesssim 10^k n^{-i}$ (the formal bounds are somewhat different for larger $k$ and $i$, and are given in Lemma~\ref{lm:qkibninequality} and Lemma~\ref{lm:qkin}). 

We now consider two cases. 

\paragraph{Case 1: $k\geq \ell$.} This case essentially corresponds to analyzing the rate at which collection of $k$ components get merged into collections of $\ell$ components.
Using  Lemma~\ref{lm:qkibninequality} and Lemma~\ref{lm:qkin} we get that the summand in~\eqref{eq:9275tjffads} decays exponentially with $b$ and $i$, and since in this case the sum starts at $i=k-\ell$, we get
\begin{equation*}
\begin{split}
\mathbb{E}_M\left[\sum_{\substack{w\in\{0,1\}^M\\ |v\oplus Mw|=2\ell}} \left|\widetilde{f}(Mw)\right|\right] 
&\leq \sum_{i=k-\ell}^{k}\sum_{b=0}^{2k} q(k,i,b,n) 2^{b}\binom{s}{\ell-k+i}\\
&\lesssim 10^k\cdot \sum_{i=k-\ell}^{k}n^{-i}\binom{s}{\ell-k+i}\\
&\approx 10^k\cdot n^{-(k-\ell)}.
\end{split}
\end{equation*}
This is consistent with the intuition that (at least for constant $k$ and $\ell$) the probability that a given collection of $k$ constant size components becomes merged into only $\ell<k$ components is about $n^{-(k-\ell)}$: this is simply because such an event requires at least $k-\ell$ edges of the matching $M_t$ to have both endpoints inside the $k$ components.

\paragraph{Case 2: $k>\ell$.} In this case, as in case 1, the sum in~\eqref{eq:9275tjffads} above is close to the value of the maximum summand, which gives, since the sum starts with $i=|k-\ell|_+=0$, 
\begin{equation*}
\begin{split}
\mathbb{E}_M\left[\sum_{\substack{w\in\{0,1\}^M\\ |v\oplus Mw|=2\ell}} \left|\widetilde{f}(Mw)\right|\right] 
&\leq \sum_{i=0}^{k}\sum_{b=0}^{2k} q(k,i,b,n) 2^{b}\binom{s}{\ell-k+i}\\
&\lesssim 10^k\cdot\binom{s}{\ell-k}.
\end{split}
\end{equation*}
This is consistent with the intuition that a given collection of $k$ components in $F_{t-1}$ can contribute to about $\binom{s}{\ell-k}$ size $\ell$ components in $F_t$: simply consider adding any subset of $\ell-k$ edges of the matching $M_t$ that do not intersect with nontrivial components in $F_{t-1}$. Note that this is the part where we analyze the contribution of the bits that the players are actually charged for in the component growing protocol, i.e. external edges.\footnote{At the same time, one should note that while the simple intuitive overview of case 1 does not involve external edges, they do have an effect on the actual proof -- see Lemmas~\ref{lm:mass-transfer} and~\ref{lm:mass-transfer-high} in Section~\ref{sec:main-result}. The fact that their effect in this case is second order lets us present the simple intuition for case 1 above.}

Using the bounds from {\bf Case 1} and {\bf Case 2} above in equation~\eqref{eq:conv-sum-simple}, we obtain 
\begin{align*}
\sum_{v\in\{0,1\}^n} \left|\wt{h}(v)\right| \mathbb{E}_M\left[\sum_{\substack{z\in\{0,1\}^n\\ |v\oplus z|=2\ell}} \left|\widetilde{f}(z)\right|\right]
&\lesssim
\sum_{k\leq \ell}10^k\cdot\binom{Cs}{k}\cdot \binom{s}{\ell-k}+\sum_{k>\ell}10^k\cdot\binom{Cs}{k}\cdot n^{\ell-k}\\
&\lesssim \binom{20Cs}{\ell}.
\end{align*}
The sketch above is informal, but Lemma~\ref{lm:mass-transfer} (whose proof is given in Section~\ref{sec:main-result}) provides the formal version that matches the qualitative conclusion. We state this lemma here for convenience of the reader:

\noindent {\em {\bf Lemma~\ref{lm:mass-transfer}}
For every $n, s^*$, $C>1$, $\alpha\in (0, 1)$ that satisfy conditions
\begin{align*}
{\bf (P1)} \, \alpha < 10^{-10}\quad 
{\bf (P2)} \, C>10^6 \quad 	
{\bf (P3)} \, s^*<\frac{n}{10^9 C^3} \quad
{\bf (P4)} \, n>10^9C^4,
\end{align*}
 if $\B\subseteq \bool^n$ is $(C, s^*)$-bounded and $M$ is a uniformly random matching  on $[n]$ of size $\alpha n$, the following conditions hold. For every $\A_{reduced}\subseteq \bool^M$, if $\A=\{x\in \bool^n: Mx\in \A_{reduced}\}$ and $f$ is the indicator of $\A$, if $|\A|/2^n\geq 2^{-s^*}$, then for every $\ell\in [1, s^*]$
\[
\sum_{v\in\{0,1\}^n} \left|\wt{h}(v)\right| \mathbb{E}_M\left[\sum_{\substack{z\in\{0,1\}^n\\ |v\oplus z|=2\ell}} \left|\widetilde{f}(z)\right|\right]\leq \left(\frac{10^9 C \sqrt{s^* n}}{\ell}\right)^\ell.
\]
}

Similar ideas lead to the proof of Lemma \ref{lm:s1}. In the general case we also need to bound the mass transfer from intermediate and high weights, see Lemmas \ref{lm:s2} and \ref{lm:s3}. 

%!TEX root = ./main.tex
\section{Proof of main theorem (Theorem~\ref{thm:main-comm})}\label{sec:main-result}

In this section we give the proof of Theorem~\ref{thm:main-comm}, restated here for convenience of the reader:

\noindent{{\bf Theorem~\ref{thm:main-comm}} \em 
\thmMainComm
}

\paragraph{Section outline.} The proof is structured as follows. We first state our main technical lemma (Lemma~\ref{lm:induction}, proved in Section~\ref{sec:inductionstep}), which is the core tool behind the proof. Lemma~\ref{lm:induction} analyzes the relation between the properties of the Fourier transform of $\B_{t-1}$ and the Fourier transform of $\B_t$: we show that if $\B_{t-1}$ is $(C, s^*)$-bounded (as per Definition~\ref{def:bounded}) for some parameter $s^*\approx T\cdot s$ that is essentially the total communication budget of the players and $M_t$ is a uniformly random matching of size $\alpha n$,  then  $\B_t$ is $((10^9/\delta)C, s^*)$-bounded with high probability (note that the lemma is stated without any reference to the index $t$, but is actually applied iteratively as above in subsequent analysis). We then prove Lemma~\ref{lm:main-comm} below, which iteratively applies Lemma~\ref{lm:induction} and analyzes the evolution the Fourier coefficients of $\B_t$. The proof of Theorem~\ref{thm:main-comm} then follows by combining basic properties of the total variation distance (Lemma~\ref{lm:main-tvd} below) with Lemma~\ref{lm:main-comm}.

The following technical lemma, which is the main result of Section~\ref{sec:inductionstep}, is the main tool in our proof:

\begin{lemma}\label{lm:induction}
\lemmaInduction%the statement is written in Section 5 in the subsection Evolution of Fourier 
\end{lemma}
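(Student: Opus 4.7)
The plan is to bound the $\ell_1$ norm of the Fourier spectrum of $h' := \mathbf{1}_{\B'} = h \cdot f$ at each weight level via the convolution theorem, reduce the resulting bound to an expectation over $M$ through the mass transfer lemmas, and then convert the expectation bound into a high-probability statement via Markov plus a union bound over weight levels. Write $C' := (10^9/\delta)C$; it suffices to show that with probability at least $1-5\delta$ over $M$, for every $\ell \in [1, n/(C')^2]$ one has $\sum_{|v|=2\ell}|\wt{h'}(v)| \leq \bound[C'](\ell)$.

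First I would control the density prefactor that appears when passing from unnormalized to normalized Fourier transforms. By the convolution identity~\eqref{eq:convolution} one has
\begin{equation*}
\wt{h'}(v) \;=\; \frac{|\B|\cdot |\A|}{|\B'|\cdot 2^n}\sum_{u\in\{0,1\}^n}\wt{h}(u)\,\wt{f}(u\oplus v).
\end{equation*}
Applying Lemma~\ref{lm:pdfs-closeness} to $\B$ with the non-negative test function $q=\mathbf{1}_{\A_{reduced}}$ yields $|\B'|/|\B| \geq (1-\delta)\,|\A|/2^n$ with probability at least $1-\delta$ over $M$, so the prefactor above is at most $(1-\delta)^{-1}\leq 1+2\delta$. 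Condition on this event from now on. Taking absolute values, using the triangle inequality, summing over $v$ with $|v|=2\ell$, and swapping the finite sums gives
\begin{equation*}
\sum_{|v|=2\ell}\bigl|\wt{h'}(v)\bigr| \;\leq\; (1+2\delta)\sum_{u\in\{0,1\}^n}\bigl|\wt{h}(u)\bigr|\sum_{\substack{z\in\{0,1\}^n\\|u\oplus z|=2\ell}}\bigl|\wt{f}(z)\bigr|.
\end{equation*}

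Next I would take expectation over $M$ and invoke the mass transfer lemmas: Lemma~\ref{lm:mass-transfer} covers output levels $\ell\in[1,s^*]$, while its high-weight analogue Lemma~\ref{lm:mass-transfer-high} handles $\ell>s^*$. Since $\wt{h}$ does not depend on $M$, and since $\B$ is assumed $(C,s^*)$-bounded while $f$ is the indicator of the preimage under $M$ of a dense set, these bounds give
\begin{equation*}
\mathbb{E}_M\!\left[\sum_{|v|=2\ell}\bigl|\wt{h'}(v)\bigr|\right] \;\leq\; 2\cdot \bound[10^9 C](\ell)
\end{equation*}
for every relevant $\ell$, where the factor $2$ absorbs the $(1+2\delta)$ slack.

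Finally, for each $\ell$ Markov's inequality gives that the probability of $\sum_{|v|=2\ell}|\wt{h'}(v)|$ exceeding $\bound[C'](\ell)$ is at most the ratio $2\,\bound[10^9 C](\ell)/\bound[C'](\ell)$. A direct calculation from the definition~\eqref{eq:bound-def} shows this ratio equals $2\delta^\ell$ in both the low regime $\ell\leq s^*$ (where the bound scales like $C^\ell$) and the intermediate regime $\ell>s^*$ (where it scales like $C^\ell$ via the $(C^2 n/\ell)^{\ell/2}$ formula). A union bound over $\ell\geq 1$ therefore contributes at most $\sum_{\ell\geq 1}2\delta^\ell \leq 4\delta$, which together with the $\delta$ from Lemma~\ref{lm:pdfs-closeness} yields the required $5\delta$ overall failure probability. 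The real work is in establishing the mass transfer bounds themselves (Lemmas~\ref{lm:mass-transfer} and~\ref{lm:mass-transfer-high}, which in turn rely on the level-by-level transfer estimates Lemmas~\ref{lm:s1}--\ref{lm:s3} and on the sharp matching combinatorics from Section~\ref{sec:prelims-random-matchings}); once those are granted, the convolution-plus-Markov packaging above is only bookkeeping of constants, and the main quantitative point to verify is that the blowup $C\mapsto (10^9/\delta)C$ per step is tight enough that, when iterated $T$ times to prove Lemma~\ref{lm:main-comm}, the final constant $(10^{12}T)^T$ stays comfortably below the threshold $n/s^*$ required by hypothesis~{\bf(P3)}.
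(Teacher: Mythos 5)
Your proposal follows essentially the same route as the paper: the normalized convolution identity for $\wt{h}'$, Lemma~\ref{lm:pdfs-closeness} (with $q=\mathbf{1}_{\A_{reduced}}$) to control the density prefactor $|\B|\,|\A|/(|\B'|\,2^n)$, the mass-transfer Lemmas~\ref{lm:mass-transfer} and~\ref{lm:mass-transfer-high} for the $M$-expectation level by level, and Markov plus a geometric union bound over $\ell$ to pass to a high-probability statement with total failure probability $5\delta$. The one bookkeeping wrinkle is that the $(1+2\delta)$ bound on the prefactor holds only on the Lemma~\ref{lm:pdfs-closeness} event, so the line $\mathbb{E}_M\bigl[\sum_{|v|=2\ell}|\wt{h}'(v)|\bigr]\leq 2\,\bound[10^9C](\ell)$ is not justified as an unconditional expectation; the paper's cleaner version applies Markov to the prefactor-free sum $\sum_v|\wt{h}(v)|\sum_{|v\oplus w|=2\ell}|\wt{f}(w)|$ (whose $M$-expectation the mass-transfer lemmas bound uniformly over admissible $\A_{reduced}$) and only afterward multiplies by the prefactor on the already-conditioned good event --- a small rearrangement you should adopt, but not a conceptual difference.
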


\subsection{Evolution of Fourier coefficients}

The next lemma bounds the evolution of Fourier coefficients of the sets $\B_t$ for $t=1,\ldots, T$, and forms the main technical part of the proof of Theorem~\ref{thm:main-comm}. The proof of Lemma~\ref{lm:main-comm} essentially amounts to iteratively applying Lemma~\ref{lm:induction} and keeping track of the density $|\B_t|/2^n$ throught a natural inductive hypothesis, using Lemma~\ref{lm:pdfs-closeness} at every step of the induction to argue that $|\B_t|/2^n\approx (|\B_{t-1}|/2^n)\cdot (|\A_t|/2^n)$ (i.e., the densities almost multiply):

\begin{comment}
\noindent{\em {\bf Lemma~\ref{lm:main-comm}}
For every constant $\alpha\in (0, 10^{-10})$, integer $n\geq 1$, every $T\in [10, \ln n]$, $s\geq \sqrt{n}$, the following conditions hold if $n$ is sufficiently large. Suppose there exists a protocol $\Pi$ for \DIHP$(n,\alpha,T)$ such that $|\Pi|=s<n/(10T)^{10^9 T}$. Then there exist events  $\E_1\supseteq \E_2\supseteq \ldots\supseteq \E_T$ with $\prob[\E_1]=1$ and $\prob[\E_t]\geq 1-(t-1)/(100T)$ for $t=1\dots T$, such that $\E_t$ depends on $(S^Y_{1:t-1}, M_{1:t})$ only, and conditioned on $\E_t$ one has
\begin{enumerate}
\item  $\B_{t-1}$ is $((10^{12} T)^t, 10Ts)$-bounded;
\item  $|\B_{t-1}|/2^n\geq 2^{-s(t-1)-10(t-1)\log{T}}$;
\item for any non-negative function $q$ on $\{0,1\}^{M_t}$
\begin{equation}\label{eq:pointwise}
1-\delta\leq \frac{\mathbb{E}_{x\sim \operatorname{Uniform}(\B_{t-1})}\left[q(M_t x)\right]}{\mathbbm{E}_{z\sim \operatorname{Uniform}\left(\{0,1\}^{M_t}\right)}\left[q(z)\right]}\leq 1+\delta.
\end{equation}
\end{enumerate}
}
\end{comment}

\begin{lemma}\label{lm:main-comm}
\lemmaMainComm %the statement is in the file simple-algo, subsection Evolution of Fourier mass
\end{lemma}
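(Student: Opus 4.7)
The strategy is a straightforward induction on $t$, using Lemma~\ref{lm:induction} and Lemma~\ref{lm:pdfs-closeness} as black boxes, together with a simple counting argument to control the density of the random message received. The base case $t=1$ is immediate: take $\E_1$ to be the whole sample space. Then $\B_0=\{0,1\}^n$ is $(1,10Ts)$-bounded (only the zero Fourier coefficient is nonzero), $|\B_0|/2^n=1$, and for part (3) the rows of the matching $M_1$ are linearly independent over $\mathbb{F}_2$, so $M_1x$ is uniform in $\{0,1\}^{M_1}$ when $x$ is uniform in $\{0,1\}^n$, giving the ratio exactly $1$.

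\textbf{Inductive step.} Assume $\E_t$ has been defined and satisfies (1)--(3). I will define $\E_{t+1}=\E_t\cap G_1\cap G_2\cap G_3\cap G_4$, where:
\begin{itemize}
\item $G_1$ is the good event over $M_t$ produced by Lemma~\ref{lm:induction} applied to $\B_{t-1}$ with $C=(10^{12}T)^t$, $s^*=10Ts$, $\delta=1/(1000T)$; this event has probability $\ge 1-5\delta$ and, for every $\A_{reduced,t}\subseteq\{0,1\}^{M_t}$ with density $\geq 2^{-s^*}$, yields $(((10^9/\delta)C,s^*)=((10^{12}T)^{t+1},10Ts))$-boundedness of $\B_t=\B_{t-1}\cap\A_t$;
\item $G_2$ is the good event over $M_t$ from Lemma~\ref{lm:pdfs-closeness} applied to $\B_{t-1}$ (probability $\ge 1-\delta$);
\item $G_3$ is the event that the received message $S^Y_t$ satisfies $|\A_{reduced,t}|/2^{|M_t|}\geq 2^{-s-9\log T}$. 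There are at most $2^s$ possible messages, so under the uniform distribution on $\{0,1\}^{M_t}$ the total probability of the bad messages is at most $2^s\cdot 2^{-s-9\log T}=T^{-9}$; by part (3) of $\E_t$ (conditioned on $G_2$) the distribution of $M_tX^*$ is within a factor $1\pm\delta$ of uniform, so $\Pr[\bar G_3]\le 2T^{-9}$;
\item $G_4$ is the good event over $M_{t+1}$ from Lemma~\ref{lm:pdfs-closeness} applied to $\B_t$ (probability $\ge 1-\delta$ once $\B_t$ is shown to be bounded and dense).
\end{itemize}
A union bound gives $\Pr[\bar\E_{t+1}\mid\E_t]\le 7\delta+2T^{-9}\le 1/(100T)$. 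The hypothesis $s<n/(10T)^{10^9T}$ together with $T\le \ln n$, $s\ge\sqrt{n}$, $\alpha<10^{-10}$ comfortably verifies all the side conditions $(\mathbf{P1})$--$(\mathbf{P5})$ of Lemma~\ref{lm:induction} and the interval condition $s^*\in[10\ln(n+1),\delta^4n/C^2]$ of Lemma~\ref{lm:pdfs-closeness} at every step.

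\textbf{Verifying the three properties at time $t+1$.} Property (1) is immediate from $G_1\cap G_3$: the density bound $\geq 2^{-s-9\log T}\geq 2^{-s^*}$ ensured by $G_3$ allows the application of Lemma~\ref{lm:induction}, which outputs $((10^{12}T)^{t+1},10Ts)$-boundedness of $\B_t$. For property (2), apply Lemma~\ref{lm:pdfs-closeness} on $G_2$ with $q=\mathbbm{1}_{\A_{reduced,t}}$:
\[
\frac{|\B_{t-1}\cap\A_t|}{|\B_{t-1}|}\ge (1-\delta)\cdot\frac{|\A_{reduced,t}|}{2^{|M_t|}}\ge (1-\delta)\cdot 2^{-s-9\log T}.
\]
Combining with the inductive density bound,
\[
\frac{|\B_t|}{2^n}\ge 2^{-s(t-1)-10(t-1)\log T}\cdot 2^{-s-9\log T}\cdot(1-\delta)\ge 2^{-st-10t\log T},
\]
since the extra $\log T$ slack in the exponent absorbs the factor $(1-\delta)$. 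Property (3) at time $t+1$ is exactly the conclusion of Lemma~\ref{lm:pdfs-closeness} applied to the (now verified to be) bounded and dense set $\B_t$ with the uniformly random matching $M_{t+1}$, which holds on $G_4$.

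\textbf{Main obstacle.} The essential subtlety is not the induction itself but the interaction between the three properties: Lemma~\ref{lm:induction} requires a \emph{per-instance} lower bound on $|\A_{reduced,t}|/2^{|M_t|}$, whereas the protocol may in principle produce very informative (i.e.~very low-density) messages. The fix is the event $G_3$, and the argument converting a uniform-distribution bound on the fraction of low-density messages into a bound under the actual distribution of $M_tX^*$ relies on property (3) of the previous step. This interlocking of (1)--(3) is what forces us to include $G_4$ (a bound for $M_{t+1}$ against $\B_t$) already inside $\E_{t+1}$; everything else is bookkeeping, checking that $s<n/(10T)^{10^9T}$ is strong enough to satisfy all the side conditions at all $t\le T\le\ln n$ simultaneously.
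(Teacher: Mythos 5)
Your proof is correct and follows essentially the same inductive scheme as the paper's: $G_1$, $G_3$, and $G_4$ correspond exactly to the paper's $\E''_{t+1}$, $\E'_{t+1}$, and $\E'''_{t+1}$, and your direct counting argument for $G_3$ (bounding the uniform mass of low-density messages by $2^s\cdot2^{-s-9\log T}$ and transferring via property (3)) is equivalent to the paper's Markov-on-$\mathbb{E}[1/|\A_t|]$ route. The only inefficiency is that $G_2$ is redundant: $\E_t$ already asserts property (3) for the pair $(\B_{t-1},M_t)$, so conditioning on $\E_t$ subsumes $G_2$ and you could drop it from the union bound without loss.
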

\begin{proof}
We let $s^*:=10Ts$ to simplify notation.  Our proof is by induction on $t=1,\ldots, T$.

\paragraph{Base ($t=1$)} The event $\E_1$ happens with probability $1$, and the set $\B_0$ is defined as $\B_0=\bool^n$, so the inductive claim is satisfied since for any matching $M_1$ a random variable $M_1 x$ with $x\sim\operatorname{Uniform}(\bool^n)$ is uniform in $\bool^{M_1}$.

\paragraph{Inductive step: $t\to t+1$} We first define the events $\E'_{t+1}$ and $\E''_{t+1}$ and prove that both of them occur with high probability conditioned on $\E_t$. We then show that the inductive hypothesis for $t+1$ holds conditioned on $\E_{t+1}$, which we define to be $\E_t\cap \E'_{t+1}\cap \E''_{t+1}\cap \E'''_{t+1}$, where $\E'''_{t+1}$ is a similar event that ensures that the third condition is satisfied, establishing the inductive step.

\paragraph{Constructing events $\E'_{t+1}$ and $\E''_{t+1}$.}

Define $\E_{t+1}':=\{|\A_t|\geq 2^{n-s-9\log{T}}\}$. Conditioned on $\E_t$, we have by the inductive hypothesis and Lemma \ref{lm:partition}
\[
\mathbb{E}_{x\sim \operatorname{Uniform}(\B_{t-1})}[1/|\A_t|]\leq (1+\delta)\cdot \mathbb{E}_{x\sim \operatorname{Uniform}\left(\{0,1\}^{M_t}\right)}[1/|\A_t|] \leq (1+\delta)\cdot 2^{s-n}.
\]
 This means that by Markov's inequality for the event  we have $\prob[\E_{t+1}' | \E_t]>1-(1+\delta)\cdot 2^{-9\log{T}}>1-\delta$.

\paragraph{Lower bounding $|\B_t|/2^n$.} We start by showing that conditioned on $\E_t$ the set $\B_t$ satisfies 
\begin{equation}\label{eq:bt-size}
\frac{|\B_t|}{2^n}\geq (1-\delta)\frac{|\B_{t-1}|}{2^n}\frac{|\A_t|}{2^n}.
\end{equation}
Indeed, it suffices to take for $q$ the indicator function of $\A_{reduced, t}$ and apply~\eqref{eq:pointwise}. Since
\[
\mathbb{E}_{x\sim \operatorname{Uniform}(\B_{t-1})}[q(Mx)]=\frac{|\B_{t-1}\cap \A_t|}{|\B_{t-1}|}=\frac{|\B_t|}{|\B_{t-1}|}, \qquad
\mathbb{E}_{z\sim \operatorname{Uniform}(\{0,1\}^{M_t})}[q(z)]=\frac{|\A_{reduced,t}|}{2^{\alpha n}}=\frac{|\A_t|}{2^n},
\]  
we have 
\begin{equation*}
\begin{split}
\frac{|\B_t|}{2^n}&=\frac{|\B_{t-1}|}{2^n}\cdot \mathbb{E}_{x\sim \operatorname{Uniform}(\B_{t-1})}[q(Mx)]\\
&\geq (1-\delta)\frac{|\B_{t-1}|}{2^n}\cdot \mathbb{E}_{z\sim \operatorname{Uniform}(\{0,1\}^{M_t})}[q(z)]\text{~~~~~~~~(by~\eqref{eq:pointwise})}\\
&= (1-\delta)\frac{|\B_{t-1}|}{2^n}\cdot \frac{|\A_t|}{2^n}
\end{split}
\end{equation*}
establishing~\eqref{eq:bt-size}. Now by conditioning on $\E'_{t+1}$ we have $|\A_t|\geq 2^{n-s-9\log{T}}$. Putting this together with~\eqref{eq:bt-size} yields
\begin{equation}
\begin{split}
\frac{|\B_t|}{2^n}&\geq (1-\delta)\frac{|\B_{t-1}|}{2^n}\frac{|\A_t|}{2^n}\\
&\geq (1-\delta)\frac{|\B_{t-1}|}{2^n}\cdot 2^{-s-9\log{T}}\text{~~~~~~~~~~~~~~~~~~~~~~~~~~~(by conditioning on $\E'_{t+1}$)}\\
&\geq (1-\delta)2^{-s(t-1)-10(t-1)\log{T}}\cdot 2^{-s-9\log{T}}\text{~~~~~(by the inductive hypothesis)}\\
&=(1-\delta)\cdot 2^{\log{T}}\cdot 2^{-st-10t\log{T}}\\
&\geq 2^{-st-10t\log{T}}\text{~~~~~~~~~~~~~~~~~~~~~~~~~~~~~~~~~~~~~~~~~~~~~~~~~~(since $\delta < 1/2$ and $T \geq 10$)}
\end{split}
\end{equation}
establishing the lower bound on $|\B_t|$ required for the inductive step.

\paragraph{Proving that $\B_t$ is $((10^{12} T)^{t+1}, 10Ts)$-bounded.} Let $\E''_{t+1}$ be the success event from Lemma~\ref{lm:induction} applied to the sets $\B:=\B_{t-1}$ and $\A_{reduced}=\A_{reduced, t}$ with $C=(10^{12}T)^t$, $\delta=1/(10^3 T)$, $s^*=10T s$. By Lemma~\ref{lm:induction} (we verify that preconditions are satisfied below) we have $\prob[\E''_{t+1}|\E_t]\geq 1-5\delta$ and conditioned on $\E''_{t+1}$  the set $\B_t=\B_{t-1}\cap \A_t$ is $((10^9/\delta)C, s^*)$-bounded, as required.

We now verify that the preconditions of Lemma \ref{lm:induction} are satisfied. We have to check five conditions:
\begin{align*}
{\bf (P1)} \, \alpha < 10^{-10}\quad 
{\bf (P2)} \, C>10^6 \quad 	
{\bf (P3)} \, s^*<\frac{n}{10^9 C^3} \quad
{\bf (P4)} \, n>10^9C^4 \quad
{\bf (P5)} \, \delta \in (n^{-1}, 1/2)
\end{align*}
First two conditions are clear since $\alpha < 10^{-10}$ and $C=(10^{12} T)^t>10^6$. Condition {\bf (P3)} is satisfied since 
\[
s^*=10Ts< \frac{n}{(10T)^{10^9T-1}}\leq \frac{n}{(10^{12}T)^{100T}}\leq \frac{n}{10^9\cdot (10^{12}T)^{3T}}\leq \frac{n}{10^9 C^3}.
\] 
Condition {\bf (P4)} is satisfied because $s \geq 1$ implies $n\geq (10T)^{10^9 T}$ and so 
\[
n\geq (10T)^{10^9 T} > 10^9\cdot (10^{12}T)^{4T}\geq 10^9 C^4.
\]
Condition {\bf (P5)} is satisfied since $1/2>\frac{1}{10^3 T} = \delta >n^{-1}$, as $T\leq \ln n$ by assumption.

\paragraph{Proving point-wise closeness of pdfs.} 
Conditioned on $\E_t\cap \E_{t+1}'\cap \E_{t+1}''$ the set $\B_t$ is $((10^{12} T )^{t+1}, s^*)$-bounded and satisfies $|\B_t|/2^n\geq 2^{-st-10t\log T}\geq 2^{-s^*}$, since $T\leq \ln n$ and $s\geq \sqrt{n}$ by assumption of the theorem. Let $\E_{t+1}'''$ be the event (over the choice of the matching $M_{t+1}$) that for any non-negative function $q$ on $\{0,1\}^{M_{t+1}}$
\begin{equation}\label{eq:pointwisenew}
1-\delta\leq \frac{\mathbb{E}_{x\sim \operatorname{Uniform}(\B_{t-1})}\left[q(M_tx)\right]}{\mathbbm{E}_{z\sim \operatorname{Uniform}\left(\{0,1\}^{M_t}\right)}\left[q(z)\right]}\leq 1+\delta,
\end{equation}
that is, the event that the third assumption is satisfied for $t+1$. By Lemma~\ref{lm:pdfs-closeness} invoked with $C=(10^{12} T )^{t+1}$ and $s^*=10Ts$ and $\delta=1/(1000T)$ we have $\prob[\E_{t+1}''' | \E_t\cap \E_{t+1}'\cap \E_{t+1}'']\geq 1-\delta$. Indeed, preconditions of the lemma are satisfied since we have $C=(10^{13} T )^{t+1}>100, \delta=1/(10^3 T)\in (n^{-1}, 1/2)$, $\alpha < 1/100$ and 
$$
s^*=10Ts\leq (10 T)n/(10T)^{10^9 T}= n/(10T)^{10^9 T-1}\leq \left(\frac1{1000T}\right)^4 n/(10^{13} T )^{2t}=\delta^4 n/C^2.
$$

It remains to note that for $\E_{t+1}:=\E_t\cap \E_{t+1}'\cap \E_{t+1}'' \cap \E_{t+1}'''$ we have 
\begin{equation*}
\begin{split}
\prob[\bar \E_{t+1}|\E_t]\leq \prob[\E_{t+1}'|\E_t]+\prob[\E_{t+1}''| \E_t]+\prob[\E_{t+1}''' | \E_t\cap \E_{t+1}'\cap \E_{t+1}'']\leq \delta+5\delta+\delta<1/(100T).
\end{split}
\end{equation*}

\end{proof}

\subsection{Putting it together}

We now combine Lemma~\ref{lm:main-comm} from the previous section with basic properties of the total variation distance to obtain a proof of Theorem~\ref{thm:main-comm}. Specifically, the proof relies on Lemma~\ref{lm:main-tvd} below, which is essentially the hybrid argument. Informally, the lemma says the following. Consider the \YES case distribution $\D^Y$, and for each $t=1,\ldots, T-1$ compare the distribution of $S^Y_{t+1}$ to the distribution of messages of the $(t+1)$-st player obtained by supplying this player with random labels on their edges as opposed to labels consistent with the hidden bipartition $X^*$ (the latter leads to $r_t(M_{1:t}, S^Y_{1:t-1}, U_t)$). If the resulting distributions are close in total variation distance, then the joint distribution of {\em all messages and matchings $M_t$} posted on the board after $T$ rounds in the \YES case is close to the same distribution in the \NO case.

\begin{lemma}\label{lm:main-tvd}
Let $X^*\sim UNIF(\bool^n)$ be a uniformly random binary vector of length $n$. For each $t=1,\ldots, T$ let $U_t\sim UNIF(\bool^{\alpha n})$ be an independent uniformly random vector of length $\alpha n$. Let $M_1,\ldots, M_T$ be independently chosen random matchings on $[n]$. 

Let $S^Y_0=S^N_0:=0$ and for each $t=1,\ldots,  T$ let 
\begin{equation*}
\begin{split}
S^Y_{t}&:=r_t(M_{1:t}, S^Y_{1:t-1}, M_t X^*)\\
\text{and}&\\
S^N_{t}&:=r_t(M_{1:t}, S^N_{1:t-1}, U_t)\\
\end{split}
\end{equation*}
for some functions $r_t, t=1,\ldots, T$. Suppose that there exists a sequence of events $\E_1\supset \E_2 \supset \ldots \supset \E_T$ such that $\E_t$ depends only on $M_{1:t}$ and $S^Y_{1:t-1}$, and $\E_1$ occurs with probability $1$, such that for any fixed $M_{1:t}, S^Y_{1:t-1}$ satisfying $\E_t$ one has for some $\gamma>0$
\begin{equation}\label{eq:ind-assumption}
||S^Y_t-r_t(M_{1:t}, S^Y_{1:t-1}, U_t)||_{tvd}\leq \gamma/T.\tag{*}
\end{equation}
Suppose further that for each $t=1,\ldots, T$ we have $\prob[\bar \E_t | \E_{t-1}]\leq\gamma/T$.
Then 
$$
||(M_{1:T}, S^Y_{1:T})-(M_{1:T}, S^N_{1:T})||_{tvd}\leq 2\gamma.
$$
\end{lemma}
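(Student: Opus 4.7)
The argument I have in mind is a hybrid argument with a ``self-aborting'' structure, designed so that failures of the events $\E_t$ do not compound across the $T$ hybrids. For each $k\in\{0,1,\ldots,T\}$, I would define a process $\hat H^{(k)}_{1:T}$ by the recursion
\[
\hat H^{(k)}_t=\begin{cases} r_t(M_{1:t},\hat H^{(k)}_{1:t-1},M_tX^*), & \text{if }t\le k\text{ and }(M_{1:t},\hat H^{(k)}_{1:t-1})\in\E_t, \\ r_t(M_{1:t},\hat H^{(k)}_{1:t-1},U_t), & \text{otherwise.}\end{cases}
\]
In particular $\hat H^{(0)}\equiv S^N$, while $\hat H^{(T)}$ runs the YES protocol but reverts to fresh noise as soon as some $\E_t$ fails. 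The target bound $\|(M_{1:T},S^Y_{1:T})-(M_{1:T},S^N_{1:T})\|_{tvd}\le 2\gamma$ would then follow from two claims: (i) $\|(M_{1:T},\hat H^{(T)}_{1:T})-(M_{1:T},S^Y_{1:T})\|_{tvd}\le\gamma$, and (ii) $\|(M_{1:T},\hat H^{(k)}_{1:T})-(M_{1:T},\hat H^{(k-1)}_{1:T})\|_{tvd}\le\gamma/T$ for each $k=1,\ldots,T$, combined by the triangle inequality.

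For (i), I would couple $\hat H^{(T)}$ with $S^Y$ by using the same $(M_{1:T},X^*)$. A straightforward induction on $t$ shows that on the event $\E_T=\bigcap_t\E_t$ one has $\hat H^{(T)}_t=S^Y_t$ for every $t$: agreement of the first $t-1$ messages forces the check of $\E_t$ to be performed on the YES history, so $\E_t$ holds and the YES rule is invoked with identical inputs. Hence under this coupling the two tuples agree on $\E_T$, giving $\|\hat H^{(T)}-S^Y\|_{tvd}\le\prob[\bar\E_T]$, and the chain-rule bound $\prob[\bar\E_T]\le\sum_{k=1}^T\prob[\bar\E_k\mid\E_{k-1}]\le T\cdot(\gamma/T)=\gamma$ closes this step.

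For (ii), note that $\hat H^{(k)}$ and $\hat H^{(k-1)}$ use identical rules at every step $t\ne k$, and at step $k$ they coincide whenever the current history lies outside $\E_k$ (both default to the NO rule with the same $U_k$). Consequently, under the coupling that uses the same $(M_{1:T},X^*,U_{1:T})$, the two hybrids agree verbatim throughout on $\bar\E_k$, while on $\E_k$ the step-$k$ outputs are $r_k(\cdot,M_kX^*)$ versus $r_k(\cdot,U_k)$. Conditioned on a fixed history in $\E_k$, assumption \eqref{eq:ind-assumption} bounds the TVD of these two outputs by $\gamma/T$; after step $k$ both hybrids continue with the same NO-style recursion driven by the independent randomness $(M_{k+1:T},U_{k+1:T})$, so Claim~\ref{cl:1} upgrades the one-step bound to the full tuple. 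Averaging over histories yields $\gamma/T$, as required. The main subtlety that I expect to have to spell out with care is precisely the ``verbatim agreement on $\bar\E_k$'': it is this abort mechanism that prevents the failure terms $\prob[\bar\E_k]$ from appearing in the one-step bound, and hence keeps the total error at $\Theta(\gamma)$ rather than the naive $\Theta(\gamma T)$ one would get from a plain hybrid.
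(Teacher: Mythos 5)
Your proposal is correct and takes a genuinely different route from the paper's. The paper proves the inductive claim
\[
\|(M_{1:t}, S^Y_{1:t})-(M_{1:t}, S^N_{1:t})\|_{tvd}\leq \gamma t/T+\sum_{j=1}^t \prob[\bar \E_j\mid \E_{j-1}]
\]
directly by induction on $t$, using a ``substitution lemma'' (a one-step triangle-inequality decomposition via the intermediate point $(X^1, f(X^1, Z^2))$, plus Claim~\ref{cl:1}) to peel off one more round of communication per step; the failure probabilities $\prob[\bar\E_j\mid\E_{j-1}]$ accumulate additively into the bound. You instead build an explicit family of $T+1$ hybrid processes $\hat H^{(0)},\dots,\hat H^{(T)}$ and split the error into two cleanly separated pieces: (i) $\|\hat H^{(T)}-S^Y\|_{tvd}\le\prob[\bar\E_T]\le\gamma$ via coupling, and (ii) $\|\hat H^{(k)}-\hat H^{(k-1)}\|_{tvd}\le\gamma/T$ per adjacent pair, where the ``abort'' mechanism guarantees verbatim agreement outside $\E_k$. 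What makes this work, and what you correctly identify as the crux, is the nesting $\E_1\supset\E_2\supset\cdots$: if $\E_k$ holds for a hybrid history then all earlier events held as well, so the prefix was produced entirely by the YES rule and the conditional law of $X^*$ matches the one for which assumption~\eqref{eq:ind-assumption} is stated. Your version is somewhat more transparent about why the error stays at $\Theta(\gamma)$ rather than $\Theta(\gamma T)$ — the self-aborting construction isolates the contribution of $\prob[\bar\E_T]$ into a single coupling step — whereas the paper's telescoping induction blends the conditioning on $\E_t$ into the TVD bookkeeping in a way that takes a bit more care to unpack. Both arguments hinge on the same two ingredients (Claim~\ref{cl:1} for propagating a one-step TVD bound forward through independent randomness, and the chain-rule bound on $\prob[\bar\E_T]$), so the mathematics is equivalent; the difference is in how the hybrid structure is organized.
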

The (simple) proof of the lemma is given in Appendix~\ref{app:B}.

We now give

\begin{proofof}{Theorem~\ref{thm:main-comm}}
Follows by putting together Lemma~\ref{lm:main-tvd} and Lemma~\ref{lm:main-comm}, as we show below.

Assume by contradiction that for any $C_0$ there exists a protocol $\Pi$ for $\DIHP(n,\alpha,T)$ succeeding with probability at least $2/3$ and satisfying $|\Pi|< n/(C_0/\eps)^{C_0/\eps^2}$. Take $C_0=10^{30}$, since $T\leq 10^{14}/\eps^2$, we have
\[
|\Pi|<n/(C_0/\eps)^{C_0/\eps^2}\leq n/(10^{30}/\eps)^{10^{16}T}\leq  n/(\sqrt{10T})^{10^{16}T} < n/(10 T)^{10^9 T}.
\]
Without loss of generality we may assume that $s=|\Pi|$ satisfies $s\geq \sqrt{n}$, and hence we can apply Lemma \ref{lm:main-comm} since $\alpha=10^{-11}\in (0,10^{-10})$ and $T\in [10, \ln{n}]$ for $n$ large enough. We choose $\E_t$ to be the events whose existence is guaranteed by Lemma~\ref{lm:main-comm}. By Lemma~\ref{lm:main-comm} one has, conditioned on $\E_t$, for every fixed $M_t$, that
\begin{equation}\label{eq:tvd-one-step}
||M_tX^*-UNIF(\bool^{M_t})||_{tvd}\leq \delta=1/(1000 T)
\end{equation}
for every $t=1,\ldots, T$. We claim that these events satisfy the preconditions of Lemma~\ref{lm:main-tvd}. Indeed, recalling that $S^Y_t=r_t(M_{1:t}, S_{1:t-1}^Y, M_t X^*)$ (see Definition~\ref{def:at}), we get for any fixed $S^Y_{t-1}$ and $M_{1:t}$ satisfying $\E_{t-1}$
\begin{equation}
\begin{split}
||S^Y_t-r_t(M_{1:t}, S^Y_{1:t-1}, U_t)||_{tvd}&=||r_t(M_{1:t}, S^Y_{1:t-1}, M_t X^*)-r_t(M_{1:t}, S^Y_{1:t-1}, U_t)||_{tvd}\\
&\leq ||M_t X^*-U_t||_{tvd},
\end{split}
\end{equation}
where we applied Claim~\ref{cl:1} in the last transition with $f=r_t$, $W=(M_{1:t}, S^Y_{1:t-1})$ and $X=M_t X^*$ and $Y=U_t$. We stress the fact that here $W$ is deterministic and so the tvds are over the randomness of $X^*\sim \operatorname{Uniform}(\B_{t-1}), U_t\sim\operatorname{Uniform}(\bool^{M_t})$. 

Since by Lemma \ref{lm:main-comm} we have $\prob[\bar \E_t|\E_{t-1}]\leq 10\delta$ and $||M_t X^*-U_t||_{tvd} \leq \delta$, we can then apply Lemma \ref{lm:main-tvd} with $\gamma=10\delta T = 1/100$ to deduce that
\[
||(M_{1:T}, S^Y_{1:T})-(M_{1:T}, S^N_{1:T})||_{tvd}\leq \gamma,
\]
which means that it is not possible to distinguish between \YES and \NO cases with probability more than $(1+\gamma)/2$ leading to a contradiction.
\end{proofof}

%!TEX root = ./main.tex

\section{Proof of main technical lemma (Lemma~\ref{lm:induction})}\label{sec:inductionstep}
The main result of this section is a proof of Lemma~\ref{lm:induction}, restated here for convenience of the reader:\\

\noindent{{\bf Lemma~\ref{lm:induction}} \em (Restated)
For every $n, C, s^*, \alpha, \delta$ that satisfy conditions 
\begin{align*}
{\bf (P1)} \, \alpha < 10^{-10}\quad 
{\bf (P2)} \, C>10^6 \quad 	
{\bf (P3)} \, s^*<\frac{n}{10^9 C^3} \quad
{\bf (P4)} \, n>10^9C^4 \quad
{\bf (P5)} \, \delta \in (n^{-1}, 1/2),
\end{align*}
every $\B\subseteq \{0, 1\}^n$, if $\B$ is $(C, s^*)$-bounded and $M$ is a uniformly random matching of size $\alpha n$,  the following conditions hold with probability at least $1-5\delta$ over the choice of $M$.

For every $\A_{reduced}\subseteq \{0, 1\}^M$ such that $|\A_{reduced}|/2^{\alpha n}\geq 2^{-s^*}$, if $\A=\{x\in \{0, 1\}^n: Mx\in \A_{reduced}\}$, then 
$\B':=\B\cap \A$ is $((10^9/\delta)C, s^*)$-bounded.
}\\

We also restate the definition of $(C, s^*)$-boundedness and the definition of $\bound(\ell)$ for convenience.

\noindent{\em {\bf Definition~\ref{def:bounded}}($(C, s^*)$-bounded set; restated)
Let $\B\subset \{0,1\}$ with indicator function $h$. We say that $\B$ (or $h$) is {\em $(C, s^*)$-bounded} if 
\begin{itemize}
\item For all $\ell\leq s^*$ we have
\[
\sum_{\substack{v\in\{0,1\}^n \\ |v|=2\ell}}\left|\wt{\mathbbm{1}_\B}(v)\right|\leq \left(\frac{C\sqrt{s^*n}}{\ell}\right)^\ell;
\]
\item For all $s^* < \ell < \frac{n}{C^2}$ we have
\[
\sum_{\substack{v\in\{0,1\}^n \\ |v|=2\ell}}\left|\wt{\mathbbm{1}_\B}(v)\right|\leq \left(\frac{C^2 n}{\ell}\right)^{\ell/2}.
\]
\end{itemize}
}

As per~\eqref{eq:bound-def}, defining the function $\bound(\ell)$ by
\begin{equation*}
\bound(\ell)=
\begin{cases}
1 \qquad \qquad \qquad \ell=0; \\
\left(\frac{C\sqrt{s^* n}}{\ell}\right)^\ell \qquad \ell\in [1: s^*]; \\
\left(\frac{C^2 n}{\ell}\right)^{\ell/2} \qquad \ell>s^*,
\end{cases}
\end{equation*}
we are able to simplify notation, ensuring that an indicator function $h$ is $(C, s^*)$ bounded if and only if for all $\ell<n/C^2$ we have 
\[
\sum_{\substack{v\in \{0,1\}^n \\ |v|=2\ell}} \left|\wt{h}(v)\right|\leq \bound(\ell).
\]

\paragraph{Proof outline.} The lemma starts with the assumption that a subset $\B$ of $\bool^n$ is $(C, s^*)$-bounded, and proves that the intersection $\B'=\B\cap \A$ with a subset $\A$ of $\bool^n$ (which should be thought of as the typical message corresponding to a player who receives labels $MX$ for $X$ selected uniformly at random from $\B$) is $(10^9 C, s^*)$-bounded.  It is instructive to contrast the bounds implied by the set $\B$ being $(C, s^*)$-bounded (see Definition~\ref{def:bounded} and eq.~\eqref{eq:bound-def}, also restated above)  with corresponding bounds for our component growing algorithm presented in Section~\ref{sec:outline-app} (see Section~\ref{sec:analysis-overview}). Intuitively, the rhs in the definition of $(C, s^*)$-boundedness above shows that the amount of Fourier mass that a general protocol can have at some level $\ell$ is upper bounded by the amount of mass that the component growing protocol with slightly increased budget, namely $\sqrt{s^* n}=s^*\cdot \sqrt{n/s^*}\gg s^*$, can have at levels $\ell\in [1, s^*]$. This increase from $s^*$ to $\sqrt{s^* n}$ is due to our conversion on the  $\ell_2^2$ norm of the Fourier transform to bounds on the $\ell_1$ norm.

The proof of Lemma~\ref{lm:induction} is based on the convolution theorem, and follows quite closely the outline presented in Section~\ref{sec:outline-app}. 
The main part of the proof that goes beyond the outline presented in Section~\ref{sec:outline-app} is the analysis of contribution to and from weights higher than $s^*$ (specifically, weight levels in $[s^*, n/(2C^2)]$ and $[n/(2C^2), n]$), as such weight levels carry zero Fourier mass for the simple component growing protocol from Section~\ref{sec:outline-app}.  The corresponding analysis is carried out in Lemma~\ref{lm:mass-transfer} (mass transfer {\bf to} low weights, possibly from low, intermediate or high weight coefficients) and Lemma~\ref{lm:mass-transfer-high} (mass transfer {\bf to} intermediate weight coefficients). At the same time, we note that most of the final contribution to the Fourier transform of the final function comes from `low' weight levels, namely from $\ell\in [1, s^*]$ (see Lemma~\ref{lm:s1}), similarly to the component growing protocol in Section~\ref{sec:outline-app}, and the dominant bound is thus given by Lemma~\ref{lm:mass-transfer}.

\begin{proofof}{Lemma~\ref{lm:induction}}
By Lemma \ref{lm:pdfs-closeness} with probability at least $1-\delta$ over the choice of $M$ we have 
\[
\frac{2^n}{|\B'|}\leq \frac{2^n}{|\B|}\cdot \frac{2^n}{|\A|}\cdot \frac{1}{1-\delta}.
\]
Indeed we just need to take $q$ the indicator function of $A_{reduced}$ (see the proof of \ref{eq:bt-size} in Lemma \ref{lm:main-comm}). So we show that conditioned on this the set $\B'$ is $((10^9/\delta)C, s^*)$-bounded with probability at least $1-4\delta$.

We let $h:=\mathbf{1}_{\B}, f:=\mathbf{1}_\A, h':=h\cdot f=\mathbf{1}_{\B'}$. Proving that $h'$ is $((10^9/\delta)C, s^*)$-bounded involved upper bounding the $\ell_1$ norm of Fourier coefficients at various levels $2\ell, \ell\in [1, n/(2C^2)]$. Convolution theorem~\eqref{eq:convolution} together with triangle inequality give
\begin{equation}\label{eq:conv-hats}
\begin{split}
\sum_{\substack{v\in\{0,1\}^n\\ |v|=2\ell}} \left|\widehat{h}'(v)\right| &\leq \sum_{\substack{v\in\{0,1\}^n\\ |v|=2\ell}} \sum_{w\in\{0,1\}^n} \left|\wh{h}(v\oplus w)\wh{f}(w)\right|\\&=\sum_{v\in\{0,1\}^n} \sum_{\substack{w\in\{0,1\}^n\\ |v\oplus w|=2\ell}} \left|\wh{h}(v)\wh{f}(w)\right| \\&= \sum_{v\in\{0,1\}^n} \left|\wh{h}(v)\right| \sum_{\substack{w\in\{0,1\}^n\\ |v\oplus w|=2\ell}} \left|\wh{f}(w)\right|.
\end{split}
\end{equation}

We need to bound $\sum_{\substack{v\in\{0,1\}^n\\ |v|=2\ell}} \left|\wt{h}'(v)\right|=\frac{2^n}{|\B'|}\sum_{\substack{v\in\{0,1\}^n\\ |v|=2\ell}} \left|\widehat{h}'(v)\right|$, but it turns out to be more useful to upper bound $\frac{2^n}{|\B|}\cdot \frac{2^n}{|\A|}\sum_{\substack{v\in\{0,1\}^n\\ |v|=2\ell}} \left|\widehat{h}'(v)\right|$ which is within a factor of $(1-\delta)$ of what we need. Multiplying both sides of~\eqref{eq:conv-hats} by $\frac{2^n}{|\B|}\cdot \frac{2^n}{|\A|}$, we get
\begin{align*}
\frac{2^n}{|\A|}\cdot\frac{2^n}{|\B|}\cdot\sum_{\substack{v\in\{0,1\}^n\\ |v|=2\ell}} \left|\widehat{h}(v)\right| &\leq \sum_{\substack{v\in\{0,1\}^n\\ |v|=2\ell}} \sum_{w\in\{0,1\}^n} \left|\widetilde{h}(v\oplus w)\widetilde{f}(w)\right|\\&=\sum_{v\in\{0,1\}^n} \sum_{\substack{w\in\{0,1\}^n\\ |v\oplus w|=2\ell}} \left|\widetilde{h}(v)\widetilde{f}(w)\right| \\&= \sum_{v\in\{0,1\}^n} \left|\widetilde{h}(v)\right| \sum_{\substack{w\in\{0,1\}^n\\ |v\oplus w|=2\ell}} \left|\widetilde{f}(w)\right|.
\end{align*}
Taking expectation over the choice of $M$ and recalling that $2^n/|\B'|\leq (2^n/|\B|)\cdot (2^n/|\A|) / (1-\delta)$ we get 
\begin{align}\label{eq:rhs}
\frac{2^n}{|\B'|}\mathbb{E}_M\left[\sum_{\substack{v\in\{0,1\}^n\\ |v|=2\ell}} \left|\widetilde{h}'(v)\right|\right] \leq \frac{1}{1-\delta}\cdot \sum_{v\in\{0,1\}^n} \left|\widetilde{h}(v)\right| \mathbb{E}_M\left[\sum_{\substack{w\in\{0,1\}^n\\ |v\oplus w|=2\ell}} \left|\widetilde{f}(w)\right|\right].
\end{align}
By Lemma \ref{lm:mass-transfer} we have for all $\ell\in [1, s^*]$, using the fact that $|\A|/2^n=|\A_{reduced}|/2^{\alpha n}\geq 2^{-s^*}$ by assumption of the lemma, we get
\[
\sum_{v\in\{0,1\}^n} \left|\widetilde{h}(v)\right| \mathbb{E}_M\left[\sum_{\substack{w\in\{0,1\}^n\\ |v\oplus w|=2\ell}} \left|\widetilde{f}(w)\right|\right]\leq \left(\frac{10^9C\sqrt{s^* n}}{\ell}\right)^\ell,
\]
and by Lemma~\ref{lm:mass-transfer-high} we have for all $\ell\in [s^*, n/(2C^2)]$, using the fact that $|\A|/2^n=|\A_{reduced}|/2^{\alpha n}\geq 2^{-s^*}$ by assumption of the lemma,
\[
\sum_{v\in\{0,1\}^n} \left|\widetilde{h}(v)\right| \mathbb{E}_M\left[\sum_{\substack{w\in\{0,1\}^n\\ |v\oplus w|=2\ell}} \left|\widetilde{f}(w)\right|\right]\leq \left(\frac{(10^9C)^2 n}{\ell}\right)^\ell.
\]

Putting these bounds together and recalling the definition of $\bound[10^9 C](\ell)$ (see~\eqref{eq:bound-def}, as well as the definition restated above) we have for each $\ell\in [1:n/(2C^2)]$ and all sets $\A_{reduced}$
\[
\sum_{v\in\{0,1\}^n} \left|\widetilde{h}(v)\right| \mathbb{E}_M\left[\sum_{\substack{w\in\{0,1\}^n\\ |v\oplus w|=2\ell}} \left|\widetilde{f}(w)\right|\right]\leq \bound[10^9C](\ell).
\]
Thus, by Markov's inequality the probability over the choice of $M$ that  
\[
\sum_{v\in\{0,1\}^n} \left|\widetilde{h}(v)\right| \sum_{\substack{w\in\{0,1\}^n\\ |v\oplus w|=2\ell}} \left|\widetilde{f}(w)\right|\geq \bound[10^9C/\delta](\ell)
\]
for some $\ell>0$ and some $\A_{reduced}$ is at most
\[
\sum_{\ell=1}^{n/(2C^2)} \frac{\bound[10^9C](\ell)}{(1-\delta)\cdot\bound[10^9C/\delta](\ell)}=\frac{1}{1-\delta}\cdot \sum_{\ell=1}^{n/(2C^2)}\delta^\ell \leq \frac{\delta}{(1-\delta)^2}\leq 4\delta.
\] 
Recalling that $2^n/|\B'|\leq (2^n/|\B|)\cdot (2^n/|\A|) / (1-\delta)$ happens with probability at least $1-\delta$ we conclude that $\B'$ is $((10^9/\delta)C, s^*)$-bounded with probability at least $1-5\delta$.
\end{proofof}

\subsection{Bounding mass transfer to low weight Fourier coefficients}
The goal of this subsection is to prove Lemma~\ref{lm:mass-transfer} below. 

\paragraph{Proof outline.} The lemma starts with the assumption that a subset $\B$ of $\bool^n$ is $(C, s^*)$-bounded and upper bounds the $\ell_1$ norm of the convolution of the normalized Fourier transform $\wt{h}$ of $\B$ with the normalized Fourier transform $\wt{f}$ of a subset $\A$ of $\bool^n$ (which should be thought of as the typical message corresponding to a player who receives labels $MX$ for $X$ selected uniformly at random from $\B$) on coefficients with Hamming weight $2\ell$, for $\ell\in [1, s^*]$.

         The proof follows the outline presented in Section~\ref{sec:outline-app} (see Section~\ref{sec:analysis-overview}). Indeed, we first partition the pairs $v$ (a subset of the vertices, or a coefficient of $\wt{h}$), and $w$ (a subset of the edges of the matching that $\wt{f}$ is supported on) classes depending on the number of {\bf boundary}, {\bf internal} and {\bf external} edges of the matching $M$ involved in the mass transfer. See the proof below for the definition of these types of edges  $M$, Fig.~\ref{fig:int-ext-partial} for an illustration and Section~\ref{sec:analysis-overview} for an illustration of these notions on the simple example of the component growing protocol from Section~\ref{sec:outline-app}. We then reduce the problem of bounding the Fourier mass to the problem of verifying certain sums that reflect the tradeoffs between the amount of mass at various Fourier levels and basic combinatorics on matchings. A crucial parameter that governs these calculations is the probability, over the choice of a uniformly random matching, that this matching has a given number of internal and boundary edges with respect to a fixed subset of the vertices (i.e. a fixed Fourier coefficient). Convenient upper bounds on such quantities are provided by Lemma~\ref{lm:qkin} and Lemma~\ref{lm:qkibn}. Using these lemmas, we reduce the problem to verifying several combinatorial bounds (which at this point are disjoint from any Fourier analytic considerations), which is done in Lemma~\ref{lm:s0}, Lemma~\ref{lm:s1}, Lemma~\ref{lm:s2} and Lemma~\ref{lm:s3} from Section~\ref{sec:technical-low}

       Overall, the proof of Lemma~\ref{lm:mass-transfer} follows quite closely the outline presented in Section~\ref{sec:outline-app}, with the main that goes beyond this outline being the analysis of contribution to and from weights higher than $s^*$ (specifically, weight levels in $[s^*, n/(2C^2)]$ and $[n/(2C^2), n]$), as such weight levels carry zero Fourier mass for the simple component growing protocol from Section~\ref{sec:outline-app}. This analysis is provided by  Lemmas~\ref{lm:s0}, Lemma~\ref{lm:s1}, Lemma~\ref{lm:s2} and Lemma~\ref{lm:s3} from Section~\ref{sec:technical-low}. At the same time, we note that most of the final contribution to the Fourier transform of the final function comes from `low' weight levels, namely from $\ell\in [1, s^*]$ (see Lemma~\ref{lm:s1}), similarly to the component growing protocol in Section~\ref{sec:outline-app}.

\begin{lemma}[Mass transfer to low weights]\label{lm:mass-transfer}
For every $n, s^*$, $C>1$, $\alpha\in (0, 1)$ that satisfy conditions
\begin{align*}
{\bf (P1)} \, \alpha < 10^{-10}\quad 
{\bf (P2)} \, C>10^6 \quad 	
{\bf (P3)} \, s^*<\frac{n}{10^9 C^3} \quad
{\bf (P4)} \, n>10^9C^4,
\end{align*}
 if $\B\subseteq \bool^n$ is $(C, s^*)$-bounded and $M$ is a uniformly random matching  on $[n]$ of size $\alpha n$, the following conditions hold. For every $\A_{reduced}\subseteq \bool^M$, if $\A=\{x\in \bool^n: Mx\in \A_{reduced}\}$ and $f$ is the indicator of $\A$, if $|\A|/2^n\geq 2^{-s^*}$, then for every $\ell\in [1, s^*]$
\[
\sum_{v\in\{0,1\}^n} \left|\wt{h}(v)\right| \mathbb{E}_M\left[\sum_{\substack{z\in\{0,1\}^n\\ |v\oplus z|=2\ell}} \left|\widetilde{f}(z)\right|\right]\leq \left(\frac{10^9 C \sqrt{s^* n}}{\ell}\right)^\ell.
\]
\end{lemma}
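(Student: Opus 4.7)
I follow the outline from Section~\ref{sec:analysis-overview}, converting the combinatorial intuition into a rigorous Fourier argument by combining the convolution theorem, Lemma~\ref{lm:matching-fourier}, and Lemma~\ref{lm:L1xorKKL}. Fix $\ell\in[1,s^*]$ throughout. By Lemma~\ref{lm:matching-fourier}, $\wt f(z)$ is nonzero only when $z=M^Tw$ for some $w\in\bool^M$, in which case $\wt f(M^Tw)=\wt{\mathbf{1}_{\A_{reduced}}}(w)$. Hence the inner sum equals $\sum_{w\in\bool^M,\,|v\oplus M^Tw|=2\ell}|\wt{\mathbf{1}_{\A_{reduced}}}(w)|$. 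I group coefficients $v$ by weight $|v|=2k$ and, for a fixed matching $M$, classify its edges relative to $v$ as internal ($\Int$), boundary ($\Bound$) or external ($\Ext$), setting $i=|\Int|$ and $b=|\Bound|$. A direct parity count gives $|v\oplus M^Tw|=2k-2|w\cap\Int|+2|w\cap\Ext|$, so the weight constraint forces $|w\cap\Ext|=\ell-k+|w\cap\Int|$, and in particular $|w\cap\Int|\ge\max(0,k-\ell)$. Since $|w|\le\alpha n$ the constraint also confines $k$ to $k\le\alpha n+\ell$.

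For fixed $v$ and $M$, I partition admissible $w$'s by $S:=w\cap\Bound\subseteq\Bound$. Setting $w_S:=\Int\cup S$, a short calculation shows $|w\oplus w_S|=\ell-k+i$ for every admissible $w$, a quantity depending only on $i$. Since $|\A_{reduced}|/2^{\alpha n}\ge 2^{-s^*}$ and $\ell-k+i\le\ell\le s^*$, Lemma~\ref{lm:L1xorKKL} applied with $m=\alpha n$, $d=s^*$, $q=\ell-k+i$, $y=w_S$ gives the single-$S$ bound $\sqrt{\binom{\alpha n}{\ell-k+i}(4s^*/(\ell-k+i))^{\ell-k+i}}$. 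Summing over the $2^b$ subsets $S\subseteq\Bound$ contributes a prefactor $2^b$ to the bound.

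Taking expectation over $M$, the probability of $\{|\Int|=i,|\Bound|=b\}$ is $q(k,i,b,n)$ (Definition~\ref{def:qkibn}), and Lemma~\ref{lm:qkibninequality} gives $q(k,i,b,n)\le q(k,i,n)\cdot 20^{-b}\cdot 4^{k-i}$. Thus $\sum_b 2^b q(k,i,b,n)\le 2\cdot 4^{k-i}q(k,i,n)$, so the $2^b$ blow-up from boundary edges is absorbed into a convergent geometric sum. Substituting the closed form $q(k,i,n)=\binom{\alpha n}{i}\binom{n-2\alpha n}{2(k-i)}\binom{n}{2k}^{-1}$ from Lemma~\ref{lm:qkin} and summing over $i\in[\max(0,k-\ell),k]$ yields a purely combinatorial expression in $k,\ell,n,\alpha,s^*$.

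Finally I sum over $v$ by weight level using $\sum_{|v|=2k}|\wt h(v)|\le\bound(k)$ whenever it applies, which naturally splits into three regimes: (a)~$k\le\ell$, where $i=0$ dominates and the calculation parallels the external-edges analysis in Section~\ref{sec:analysis-overview}, producing $\binom{\alpha n}{\ell-k}$-style terms; (b)~$\ell<k\le s^*$, where $i\ge k-\ell$ is forced and $q(k,k-\ell,n)\sim n^{-(k-\ell)}$ more than compensates the growth of the first branch of $\bound(k)=(C\sqrt{s^*n}/k)^k$; (c)~$k>s^*$, where one switches to the second branch $\bound(k)=(C^2 n/k)^{k/2}$ (and, for the tail $k>n/C^2$, a Parseval bound combined with the constraint $k\le\alpha n+\ell$ suffices). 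These three sums are precisely the content of Lemmas~\ref{lm:s0}--\ref{lm:s3} of Section~\ref{sec:technical-low}, and together they yield the claimed bound $\left(10^9C\sqrt{s^*n}/\ell\right)^\ell$ after absorbing all universal constants into the factor $10^9$. The main technical obstacle is regime~(c): the second branch of $\bound(k)$ grows like $n^{k/2}$, so the cancellation relies crucially on the factor $\binom{n-2\alpha n}{2(k-i)}\binom{n}{2k}^{-1}$ in $q(k,i,n)$, which for $k$ a constant fraction of $n$ behaves like $(1-2\alpha)^{2(k-i)}$ and supplies the exponential decay needed to match the target. The preconditions ${\bf (P1)}$--${\bf (P4)}$ on $\alpha,C,s^*,n$ are precisely what is required for all such Stirling-type estimates to close with room to spare.
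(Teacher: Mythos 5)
Your proof follows the same route as the paper's: classify edges of $M$ relative to $v$ into internal/boundary/external, decompose the inner sum by $S=w\cap\Bound$, reduce to the set $w_S$, invoke Lemma~\ref{lm:L1xorKKL} at shifted weight $\ell-k+i$, absorb the $2^b$ factor using Lemma~\ref{lm:qkibninequality}, and then sum over weight levels against the $(C,s^*)$-boundedness of $\B$ via Lemmas~\ref{lm:s0}--\ref{lm:s3}. The identity $|v\oplus M^Tw|=2k-2|w\cap\Int|+2|w\cap\Ext|$, the constraint $|w\cap\Ext|=\ell-k+|w\cap\Int|$, and the formula $|w\oplus w_S|=\ell-k+i$ are all correct and match the paper's Lemma~\ref{lm:mass-transfer} proof.

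One small imprecision worth flagging in your description of the high-weight tail $k>n/C^2$: the exponential decay there does not come from $\binom{n-2\alpha n}{2(k-i)}\binom{n}{2k}^{-1}\approx(1-2\alpha)^{2(k-i)}$ (since $k-i\le\ell\le s^*\ll n$, this factor is nowhere near exponentially small in $n$). The actual decay in Lemma~\ref{lm:s3} comes from $\binom{\alpha n}{i}$ being paired against $n^{-i}$, giving a factor of order $\alpha^i$ with $i\ge k-\ell\gtrsim n/C^2$; combined with the constraint {\bf (P1)} $\alpha<10^{-10}$, this gives the $(22\alpha)^{n/(2C^2)}$-type decay that swamps the $\sqrt{\binom{n}{2k}}$ blow-up from Parseval. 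Your delegation to Lemma~\ref{lm:s3} is still correct; only the stated intuition for why it works is off.
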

\begin{proof}
We first note that $\wt{f}(z)\neq 0$ only if $z=Mw$ for $w\in \bool^M$, and thus for every $v\in \bool^n$ one has $\mathbb{E}_M\left[\sum_{\substack{z\in\{0,1\}^n\\ |v\oplus z|=2\ell}} \left|\widetilde{f}(z)\right|\right]=\mathbb{E}_M\left[\sum_{\substack{w\in\bool^M\\ |v\oplus Mw|=2\ell}} \left|\widetilde{f}(Mw)\right|\right]$. We also note that for every $v\in \bool^n$ such that $|v\oplus Mw|=2\ell$ one has  $|v|\leq |v\oplus Mw|+|Mw|\leq \ell+|Mw|\leq s^*+2\alpha n\leq n/200$ since $s^*\leq n/10^9$ and $\alpha<1/400$ by assumptions {\bf (P1)}, {\bf (P2)} and {\bf (P3)}. Thus, the only terms with a nonzero contribution to the sum that we need to bound are $v\in \bool^n$ with $|v|=2k\leq n/200$. Thus, it suffices to bound, for a parameter $k\in [0:n/100]$ and $v\in \bool^n$ with $|v|=2k$, the quantity
\begin{equation}\label{eq:sum-fixed-v}
\mathbb{E}_M\left[\sum_{\substack{w\in\bool^M\\ |v\oplus Mw|=2\ell}} \left|\widetilde{f}(Mw)\right|\right].
\end{equation}
We will later (see~\eqref{eq:9243hg9g} below) combine our bounds over all $k\in [0:n/100]$ to obtain the result of the lemma.

 Let $\Int=\{e_1^{\operatorname{int}}, e_2^{\operatorname{int}}, \dots\}$ be the set of edges $e=(a, b)\in M$ that match points of $v$, i.e. with $a, b\in v$. Let $\Bound=\{e_1^{\operatorname{bound}}, e_2^{\operatorname{bound}}, \dots \}$ be the set of boundary edges, i.e. edges $e=(a, b)\in M$ with $a, b\in v$. Let $\Ext =\{e_1^{\operatorname{ext}}, e_2^{\operatorname{ext}}, \dots\}$ be the set of external edges, i.e. edges $e=(a, b)\in M$ with $a, b\in [n]\setminus v$. 
 
 We decompose the sum~\eqref{eq:sum-fixed-v} according to the number of boundary edges in $Mw$:
\begin{equation}\label{eq:weihewihg}
\mathbb{E}_M\left[\sum_{\substack{w\in\{0,1\}^M\\ |v\oplus Mw|=2\ell}} \left|\widetilde{f}(Mw)\right|\right]=\mathbb{E}_M\left[
\sum_{S \subseteq \Bound} \sum_{w\in\{0,1\}^M}
\mathbbm{1}_{\{w\cap\Bound=S \}} \mathbbm{1}_{\{|v\oplus Mw|=2\ell\}} \left|\widetilde{f}(Mw)\right|\right].
\end{equation}
We now rewrite the latter indicator function. For a subset $S\subseteq \Bound$ define $w_S\in\{0,1\}^M$ as the set of all internal edges $\Int$ and all edges in $S$. We then have $|v\oplus Mw_S|=2k-2|\Int|$, since adding a boundary edge to $v$ does not change the Hamming weight, and adding an internal edge reduces it by $2$. Also note that $|w\oplus w_S|=\ell-(k-|\Int|)$. Indeed, $|v\oplus Mw_S|=2k-2\Int$, $|v\oplus w|=2\ell$,  $w$ can be obtained from $w_S$ be removing internal edges and adding external edges and both of these changes increase $|Mw\oplus Mw_s|$ by $2$.  These observations together with~\eqref{eq:weihewihg} yield the following upper bound on~\eqref{eq:sum-fixed-v}:
\begin{equation}\label{eq:093h4g94hg}
\begin{split}
\mathbb{E}_M\left[\sum_{\substack{w\in\{0,1\}^M\\ |v\oplus Mw|=2\ell}} \left|\widetilde{f}(Mw)\right|\right]&\leq 
\mathbb{E}_M\left[\sum_{S \subseteq \Bound} \sum_{\substack{w\in\{0,1\}^M\\ |w\oplus w_S|=\ell-(k-|\Int|) }}
\mathbbm{1}_{\{w_S=S\}} \cdot \left|\widetilde{f}(Mw)\right|\right]\\
&\leq 
\mathbb{E}_M\left[\sum_{S \subseteq \Bound} \sum_{\substack{w\in\{0,1\}^M\\ |w\oplus w_S|=\ell-(k-|\Int|) }}
 \left|\widetilde{f}(Mw)\right|\right],
\end{split}
\end{equation}
where we dropped the indicator function in going from the first line to the second line above. We now apply Lemma~\ref{lm:L1xorKKL} to the inner summation on the rhs of~\eqref{eq:093h4g94hg}. We now let $f'$ denote the indicator of $\A_{reduced}$ and note that $\wt{f'}(w)=\wt{f}(Mw)$ for all $w\in \bool^M$. Since $\A_{reduced}/2^{\alpha n}= |\A|/2^n \geq 2^{-s^*}$, this, in turn, can be bounded by Lemma \ref{lm:L1xorKKL} applied to the set $\A_{reduced}\subset \{0,1\}^M$ (i.e. $m=\alpha n$) with $q=\ell-k+\Int$ and $y=w_s$. We obtain
\begin{equation}\label{eq:xor-KKL-bound}
\begin{split}
\sum_{\substack{w\in\{0,1\}^{M}\\ |w\oplus w_S|=\ell-(k-|\Int|) }} \left|\widetilde{f}(Mw)\right|&=\sum_{\substack{w\in\{0,1\}^{M}\\ |w\oplus w_S|=\ell-(k-|\Int|) }} \left|\widetilde{f'}(w)\right|
\\&\leq 
\sqrt{\binom{\alpha n}{\ell-k+|\Int|} \left(\frac{64s^*}{\ell-k+|\Int|}\right)^{\ell-k+|\Int|}}
\\&\leq
\left(\frac{15\sqrt{s^*\alpha n}}{\ell-k+|\Int|}\right)^{\ell-k+|\Int|}.
\end{split}
\end{equation}

Summing over all possible subsets $S\subseteq \Bound$, we then infer
\begin{align*}
\mathbb{E}_M\left[\sum_{\substack{w\in\{0,1\}^M\\ |v\oplus Mw|=2\ell}} \left|\widetilde{f}(Mw)\right|\right]
&\leq \mathbb{E}_M\left[\sum_{S \subseteq \Bound} \left(\frac{15\sqrt{s^*\alpha n}}{\ell-k+|\Int|}\right)^{\ell-k+|\Int|} \right]\\
&\leq
\mathbb{E}_M\left[2^{|\Bound|} \left(\frac{15\sqrt{s^*\alpha n}}{\ell-k+|\Int|}\right)^{\ell-k+|\Int|} \right],
\end{align*}
where in going from line~1 to line~2 in the equation above we used the fact that the bound is independent of the set $S$ and upper bounded the summation by multiplying by the number of such sets $S$, i.e. $2^{|\Bound|}$.

   We now bound the sum on the last line above. We have 
\begin{align*}
&\mathbb{E}_M\left[2^{|\Bound|}\left(\frac{15\sqrt{s^*\alpha n}}{\ell-k+|\Int|}\right)^{\ell-k+|\Int|} \right]\\
&=\mathbb{E}_M\left[\sum_{i=0}^k \sum_{b=0}^{2k} 2^{b} \mathbbm{1}_{\{|\Bound|=b\text{~and}~|\Int|=i\}} \left(\frac{15\sqrt{s^*\alpha n}}{\ell-k+i}\right)^{\ell-k+i} \right]\\
&=\sum_{i=|k-\ell|_+}^k \sum_{b=0}^{2k} 2^{b} q(k,i,b,n)  \left(\frac{15\sqrt{s^*\alpha n}}{\ell-k+i}\right)^{\ell-k+i}
\end{align*}   
In going from line~2 to line~3 above we used the fact that by Definition~\ref{def:qkibn} for every $v\in \bool^n$ with $|v|=2k$ one has $\expect_M[\mathbbm{1}_{\{|\Bound|=b\text{~and}~|\Int|=i\}}]=q(k, i, b, n)$, i.e. $q(k,i,b,n)$ is the probability that a uniformly random matching $M$ of size $\alpha n$ is  such that $i$ edges of $M$ match points of $v$ (i.e. $\left|\Int\right|=i$) and $b$ edges of $M$ are boundary edges (i.e. $\left|\Bound\right|=b$).  Note that we sum over $b$ between $0$ and $2k$, as the number of boundary edges of $v$ with respect to $M$ is bounded by its Hamming weight $2k$. Similarly, the number of internal edges $i$ cannot be larger than $|v|/2=k$, and must be at least $k-\ell$ in order for the binomial coefficient ${\alpha n \choose \ell-k+i}$ on the first line above to be nonzero (combinatorially, this means that in order for $v+Mw$ to have weight $2\ell$ for some $w\in \bool^M$ one must have $|\Int|\geq k-\ell$).  Recall that $|k-\ell|_+=\max\{0, k-\ell\}$. 

We thus have that for a fixed $v\in \{0,1\}^n$ with $|v|=2k$
\[
\mathbb{E}_M\left[\sum_{\substack{w\in\{0,1\}^M\\ |v\oplus Mw|=2\ell}} \left|\widetilde{f}(Mw)\right|\right] 
\leq
\sum_{i=|k-\ell|_+}^{k}\sum_{b=0}^{2k} q(k,i,b,n) 2^{b}\left(\frac{15\sqrt{s^*\alpha n}}{\ell-k+i}\right)^{\ell-k+i} 
\]
Recall that by Lemma \ref{lm:qkibn} we have $q(k,i,b,n)\leq q(k,i,n)20^{-b}4^{k-i}$ (note that $k\leq n/100$ and $\alpha<1/100$ by assumption {\bf (P2)}, so the preconditions of the lemma are satisfied). We thus get, substituting into the rhs of the equation above,
\begin{equation}\label{eq:i24gu4bg}
\begin{split}
\mathbb{E}_M\left[\sum_{\substack{w\in\{0,1\}^M\\ |v\oplus Mw|=2\ell}} \left|\widetilde{f}(Mw)\right|\right] 
&\leq 
\sum_{i=|k-\ell|_+}^{k}\sum_{b=0}^{2k} 4^{k-i}q(k,i,n) 2^{-b}\left(\frac{15\sqrt{s^*\alpha n}}{\ell-k+i}\right)^{\ell-k+i}
\\&\leq 
2\sum_{i=|k-\ell|_+}^{k} 4^{k-i}q(k,i,n)\left(\frac{15\sqrt{s^*\alpha n}}{\ell-k+i}\right)^{\ell-k+i} 
\\&\leq 
2\sum_{i=|k-\ell|_+}^{k} 4^\ell q(k,i,n) \left(\frac{15\sqrt{s^* \alpha n}}{\ell-k+i}\right)^{\ell-k+i} 
\\&\leq 
8^\ell\sum_{i=|k-\ell|_+}^{k} q(k,i,n) \left(\frac{15\sqrt{s^* \alpha n}}{\ell-k+i}\right)^{\ell-k+i}.
\end{split}
\end{equation}

Equipped with the upper bound~\eqref{eq:i24gu4bg}, we now sum over possible $v$ of Hamming weight $k\in [0:n/100]$:
\begin{equation}\label{eq:9243hg9g}
\begin{split}
\sum_{v\in\{0,1\}^n} \left|\wt{h}(v)\right| &\mathbb{E}_M\left[\sum_{\substack{w\in\{0,1\}^M\\ |v\oplus Mw|=2\ell}} \left|\widetilde{f}(Mw)\right|\right] 
\\&\leq 
8^\ell\sum_{k=0}^{n/100}\left(\sum_{\substack{v\in\{0,1\}^n \\ |v|=2k}} \left|\wt{h}(v)\right|\right)\cdot \left(\sum_{i=|k-\ell|_+}^{k} q(k,i,n) \left(\frac{15\sqrt{s^* \alpha n} }{\ell-k+i}\right)^{\ell-k+i}\right)
\end{split}
\end{equation}
Recall that we want to bound the expression above for $\ell\leq s^*$. We now split the sum depending on how large $k$ is. We consider three intervals. For $k\in [0:100s^*]$ and $k\in [100s^*+1:n/C^2]$ we use the bound 
$$
\sum_{\substack{v\in\{0,1\}^n \\ |v|=2k}} \left|\wt{h}(v)\right|\leq \bound(k),
$$
but for $k\in [n/C^2+1:n/100]$ we use the bound that follows by Cauchy-Schwarz together with Parseval's inequality: sum of squares of all normalized Fourier coefficients is $2^n/|\B| \leq 2^{s^*}$ due to $(C, s^*)$-boundedness of $\B$, and so 
\[
\sum_{\substack{v\in \{0,1\}^n \\ |v|=2k}}\left|\wt{h}(v)\right|\leq \sqrt{\left(\sum_{\substack{v\in \{0,1\}^n \\ |v|=2k}}\left|\wt{h}(v)\right|^2\right)\cdot\binom{n}{2k}}\leq \sqrt{2^{s^*}\binom{n}{2k}}.
\]

Thus, it is sufficient to derive strong upper bounds on $S_0, S_1, S_2, S_3$ that we define below:
\begin{equation}
\label{def:s0}
S_0:=\sum_{k=0}\bound(k)\cdot \left(\sum_{i=|k-\ell|_+}^{k} q(k,i,n) \left(\frac{15\sqrt{s^* \alpha n} }{\ell-k+i}\right)^{\ell-k+i}\right)
\end{equation}
\begin{equation}
\label{def:s1}
S_1:=\sum_{k=1}^{100s^*}\bound(k)\cdot \left(\sum_{i=|k-\ell|_+}^{k} q(k,i,n) \left(\frac{15\sqrt{s^* \alpha n} }{\ell-k+i}\right)^{\ell-k+i}\right)
\end{equation}
\begin{equation}
\label{def:s2}
S_2:=\sum_{k=100 s^*+1}^{n/C^2}\bound(k)\cdot \left(\sum_{i=k-\ell}^{k} q(k,i,n) \left(\frac{15\sqrt{s^* \alpha n} }{\ell-k+i}\right)^{\ell-k+i}\right)
\end{equation}
\begin{equation}
\label{def:s3}
S_3:=\sum_{k=n/C^2+1}^{n/100}\sqrt{2^{s^*}\binom{n}{2k}}\cdot \left(\sum_{i=k-\ell}^{k} q(k,i,n) \left(\frac{15\sqrt{s^* \alpha n} }{\ell-k+i}\right)^{\ell-k+i}\right)
\end{equation}
Lemma~\ref{lm:s0}, Lemma~\ref{lm:s1}, Lemma~\ref{lm:s2} and Lemma~\ref{lm:s3} show that $S_0\leq \bound[15](\ell)$, $S_1\leq \bound[10^8C](\ell)$, $S_2\leq \bound[10^7C](\ell)$, and $S_3\leq 1$, which concludes the proof since
\begin{align*}
8^\ell\left(\bound[15](\ell)+\bound[10^8 C](\ell)+\bound[10^7C](\ell)+1\right)
&=
8^\ell\left((15/C)^\ell+10^{8\ell}+10^{7\ell}+1\right)\left(\frac{C\sqrt{s^* n}}{\ell}\right)^\ell 
\\&\leq
\left(\frac{10^9 C\sqrt{s^* n}}{\ell}\right)^\ell
\end{align*}
We note that the dominant contribution comes from $S_1$, i.e. from mass transfer from low weights to low weights, and the amount of mass transfer is consistent with what we would expect for the component growing protocol with the slightly increased communication budget of $\sqrt{s^* n}$ (as opposed to $s^*$).
\end{proof}

\subsubsection{Bounding $S_0, S_1, S_2, S_3$ (technical lemmas)}\label{sec:technical-low}
The first lemma bounds transfer of mass from weight zero to low weights:
\begin{lemma}[Mass transfer from weight zero to low weights]\label{lm:s0}
For every $n, s^*$, every $\ell\in [1:s^*]$, if parameters $C,\alpha,s^*, n$ satisfy 
\begin{align*}
{\bf (P1)} \, \alpha < 10^{-10}\quad 
{\bf (P2)} \, C>10^6 \quad 	
{\bf (P3)} \, s^*<\frac{n}{10^9 C^3} \quad
{\bf (P4)} \, n>10^9C^4,
\end{align*}
then
$$
\sum_{k=0}\bound(k)\cdot \left(\sum_{i=|k-\ell|_+}^{k} q(k,i,n) \left(\frac{15\sqrt{s^* \alpha n} }{\ell-k+i}\right)^{\ell-k+i}\right)\leq \bound[15](\ell).
$$
\end{lemma}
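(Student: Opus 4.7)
The sum $S_0$ only has a contribution from $k=0$, so the plan is to observe that this single term collapses to essentially the trivial computation, and the desired bound $\bound[15](\ell)$ then follows from the smallness of $\alpha$.

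First I would unpack the definitions at $k = 0$. By the definition of $\bound$ in equation~\eqref{eq:bound-def} we have $\bound(0) = 1$. Since $\ell \geq 1$, the lower limit $|0 - \ell|_+ = \max\{0, -\ell\} = 0$, and the upper limit is $k = 0$, so the inner sum has a unique index $i = 0$. For that index, $q(0, 0, n)$ is the probability that a uniformly random matching matches $0$ out of $0$ prescribed points with no boundary edges — both conditions are vacuously satisfied, so $q(0, 0, n) = 1$ (formally this also follows from Lemma~\ref{lm:qkin} since all three binomial coefficients equal $1$ at $k = i = 0$). Hence the entire expression on the left-hand side reduces to
\[
S_0 \;=\; 1 \cdot 1 \cdot \left(\frac{15\sqrt{s^* \alpha n}}{\ell}\right)^{\ell}.
\]

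The final step is to absorb the $\sqrt{\alpha}$ factor. Since condition {\bf (P1)} gives $\alpha < 10^{-10} < 1$, we have $\sqrt{\alpha} \leq 1$, and therefore
\[
\left(\frac{15\sqrt{s^* \alpha n}}{\ell}\right)^{\ell} \;\leq\; \left(\frac{15\sqrt{s^* n}}{\ell}\right)^{\ell} \;=\; \bound[15](\ell),
\]
where the last equality is just the definition of $\bound[C](\ell)$ with $C = 15$ for $\ell \in [1, s^*]$. This establishes $S_0 \leq \bound[15](\ell)$ as required.

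There is no real obstacle here: the lemma is essentially a bookkeeping check that the $k = 0$ contribution (mass transfer from the constant Fourier coefficient of $\wt{h}$) matches the leading-order term of the target bound. The nontrivial parts of the overall mass transfer analysis are in Lemmas~\ref{lm:s1}, \ref{lm:s2}, and \ref{lm:s3}; this lemma merely verifies that the trivial term does not already spoil the bound, which it does not thanks to the slack provided by the small parameter $\alpha$.
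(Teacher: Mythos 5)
Your proof is correct and takes essentially the same route as the paper: both reduce the sum to the single $k=i=0$ term, use $\bound(0)=1$ and $q(0,0,n)=1$, and then drop the $\sqrt{\alpha}\le 1$ factor to land on $\bound[15](\ell)$. The paper states this in one line; you simply spell out the bookkeeping.
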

\begin{proof}
The sum only contains the summand corresponding to $i=k=0$; since $q(0,0,n)=1$, we get at most $\left(\frac{15\sqrt{s^* \alpha n} }{\ell}\right)^{\ell}\leq \bound[15](\ell)$.
\end{proof}
The lemma below bounds transfer of mass from low weights to low weights:
\begin{lemma}[Mass transfer from low weights to low weights]\label{lm:s1}
For every $n, s^*$, every $\ell\in [1:s^*]$, if parameters $C,\alpha,s^*, n$ satisfy 
\begin{align*}
{\bf (P1)} \, \alpha < 10^{-10}\quad 
{\bf (P2)} \, C>10^6 \quad 	
{\bf (P3)} \, s^*<\frac{n}{10^9 C^3} \quad
{\bf (P4)} \, n>10^9C^4,
\end{align*}
then
$$
\sum_{k=1}^{100s^*}\bound(k)\cdot \left(\sum_{i=|k-\ell|_+}^{k} q(k,i,n) \left(\frac{15\sqrt{s^* \alpha n} }{\ell-k+i}\right)^{\ell-k+i}\right)\leq \bound[10^8 C](\ell).
$$
\end{lemma}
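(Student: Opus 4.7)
\textbf{Proof proposal for Lemma~\ref{lm:s1}.}

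The plan is to first replace the matching probability $q(k,i,n)$ by a clean analytic upper bound, then bound the inner $i$-sum by its dominant term (up to a factor accounting for geometric decay), and finally sum over $k$, using the rapid decay of the resulting sequence away from the critical index $k=\ell$. Specifically, I would begin by invoking Lemma~\ref{lm:qkin} together with the identity
\[
\frac{\binom{n}{2(k-i)}}{\binom{n}{2k}}=\prod_{j=0}^{2i-1}\frac{2k-j}{n-2(k-i)-j}\leq \left(\frac{4k}{n}\right)^{2i}
\]
which is valid in our range ($k\leq 100s^*\leq n/100$ by {\bf (P3)}), together with $\binom{n-2\alpha n}{2(k-i)}\leq \binom{n}{2(k-i)}$, to deduce
\[
q(k,i,n)\leq \binom{\alpha n}{i}\left(\frac{4k}{n}\right)^{2i}\leq \left(\frac{16e\alpha k^2}{ni}\right)^{i}\quad(i\geq 1),\qquad q(k,0,n)\leq 1.
\]

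Substituting this into $S_1$, I would split the $k$-sum into three pieces: (i) $1\leq k<\ell$, (ii) $k=\ell$, and (iii) $\ell<k\leq 100 s^*$. In piece (i) the inner sum starts at $i=0$ and is dominated by this endpoint; the $i=0$ contribution equals $(A/(\ell-k))^{\ell-k}$ with $A:=15\sqrt{s^*\alpha n}$, and the successive ratio $T_{i+1}/T_i$ is bounded (using Stirling and the estimate above) by $\frac{xA}{e^2(i+1)(\ell-k+i+1)}$ with $x:=16e\alpha k^2/n$, which is uniformly smaller than $1/2$ by {\bf (P1)}, {\bf (P3)}. Thus the inner sum is at most twice the $i=0$ summand. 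In piece (iii) the same analysis with $i=k-\ell$ as the starting index shows that the inner sum is dominated by the $i=k-\ell$ summand, whose exponent on $(A/\cdot)^\cdot$ vanishes, leaving only $q(k,k-\ell,n)\leq (16e\alpha k^2/(n(k-\ell)))^{k-\ell}$.

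Next I would compare each summand $U_k:=\bound(k)\cdot(\text{inner sum})$ to the target $\bound(\ell)$. For $k=\ell-\delta\in[1,\ell]$ and $k\leq s^*$, a direct calculation using $\bound(k)=(C\sqrt{s^*n}/k)^k$ and the crude Stirling bound $\ell^\ell/((\ell-\delta)^{\ell-\delta}\delta^\delta)\leq\binom{\ell}{\delta}$ gives
\[
\frac{U_{\ell-\delta}}{\bound(\ell)}\leq 2\left(\frac{15\sqrt{\alpha}}{C}\right)^{\delta}\binom{\ell}{\delta},
\]
so $\sum_{\delta}U_{\ell-\delta}\leq 2\bound(\ell)\cdot (1+15\sqrt{\alpha}/C)^\ell\leq 3\bound(\ell)$ by {\bf (P1)}, {\bf (P2)}. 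For $k=\ell+\delta$ with $k\leq s^*$, the ratio of $\bound(k)$'s and the dominant summand $q(k,\delta,n)$ combine (after using $\bound(\ell+\delta)/\bound(\ell)\leq (C\sqrt{s^*n}/(e\ell))^{\delta}$) to give
\[
\frac{U_{\ell+\delta}}{\bound(\ell)}\leq 2\left(\frac{16 C\alpha(\ell+\delta)^2\sqrt{s^*/n}}{\ell\delta}\right)^{\delta},
\]
which is a tiny geometric factor using {\bf (P1)}, {\bf (P3)}, so the tail sums to a constant. Finally, for $s^*<k\leq 100s^*$ one has $\bound(k)=(C^2 n/k)^{k/2}$, and the dominant summand becomes
\[
\bound(k)\cdot q(k,k-\ell,n)\leq (C^2 n/k)^{k/2}\cdot (17\alpha k^2/(n(k-\ell)))^{k-\ell},
\]
which one checks, using $k>s^*\geq \ell$ and the tight constraint $s^*<n/(10^9C^3)$, decays doubly-geometrically in $k$ (the explicit computation at $k=2\ell$ yields $(17C^2\alpha)^\ell$ and the bound improves with $k$; here $C^2\alpha$ is absorbed by a small factor of the target slack $10^{8\ell}$).

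Summing the three pieces, $S_1\leq O(1)\cdot\bound(\ell)+ (\text{geometric tail})\ll (10^8 C/C)^\ell\bound(\ell)=\bound[10^8C](\ell)$. The main obstacle I anticipate is the bookkeeping across the transition at $k=s^*$, where $\bound(k)$ switches form; one must verify that both closed-form expressions of $\bound$ agree at $k=s^*$ (which they do, since $(C\sqrt{s^*n}/s^*)^{s^*}=(C^2n/s^*)^{s^*/2}$) and that the doubly-geometric decay established via the high-weight formula connects smoothly with the low-weight analysis. A secondary subtlety is that the successive-ratio arguments require $i\geq 1$, so the $i=0$ (respectively $i=k-\ell$) endpoint must be handled separately, which is why I isolated them as the "dominant term" in each subcase above.
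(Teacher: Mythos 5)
The central step of your argument breaks down. You claim that the inner $i$-sum $\sum_i T_i$ (with $T_i := q(k,i,n)\,(A/(\ell-k+i))^{\ell-k+i}$, $A:=15\sqrt{s^*\alpha n}$) is dominated by its endpoint term because the successive ratio $T_{i+1}/T_i\lesssim \frac{\alpha k^2 A}{(i+1)(\ell-k+i+1)\,n}$ is uniformly below $1/2$. This is false under \textbf{(P1)}--\textbf{(P4)}, because those conditions only give a \emph{lower} bound on $n$ relative to $C$, not an upper bound. Concretely, take $C=10^6$, $\alpha=10^{-11}$, $n=10^{60}$, $s^*=n/(2\cdot 10^9 C^3)=5\cdot 10^{32}$, $\ell=s^*$, $k=\ell-1$ (so $\ell-k=1$): then $A\approx 10^{42}$, $4\alpha k^2/n\approx 10^{-5}$, and $T_1/T_0 \approx (4\alpha k^2/n)\cdot (A/4)\approx 10^{36}$. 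The ratio remains $>1$ until $i\approx 10^{18}$, so the mass of the inner sum peaks at an interior index $i^*\gg 0$, not at the endpoint. Consequently the bound $\sum_i T_i\leq 2\,T_{|k-\ell|_+}$, on which the rest of your calculation of $U_k/\bound(\ell)$ rests, does not hold, and the later geometric-tail estimates inherit the error. (There is also a minor directional slip in "$\ell^\ell/((\ell-\delta)^{\ell-\delta}\delta^\delta)\leq \binom{\ell}{\delta}$" — the true inequality goes the other way — but that is only a polynomial-in-$\ell$ loss and would be recoverable; the ratio-test failure is the fatal gap.)

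The paper's proof avoids a per-$k$ ratio test entirely. It substitutes the exact expression for $q(k,i,n)$ from Lemma~\ref{lm:qkin}, pulls out the target factor $(10C\sqrt{s^*n}/\ell)^{\ell}$, and controls all the remaining combinatorial factors in one stroke using the quotient-of-factorials estimate (Lemma~\ref{quotient of factorials}), which gives a uniform bound $\Gamma\leq 10^{3(\ell+k+i)}$. After collecting powers, the geometric decay that actually controls the double sum is in the variable $i$ (number of internal edges), via the factor $\left(10^7 C\alpha\sqrt{s^*/n}\right)^i$; this factor is indeed $<1/2$ under \textbf{(P1)}--\textbf{(P3)}, but it emerges only after the prefactor $\bound(k)$ is absorbed and the sum is reindexed as $\sum_{i\geq 0}\sum_{k=i}^{i+\ell}$, with the auxiliary factor $(\sqrt{\alpha}/C)^{\ell+i-k}$ controlling the inner $k$-range. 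That reindexing is the ingredient your proposal is missing: for fixed $k$ alone, the internal-edge count $i$ is \emph{not} geometrically penalized, which is exactly why your ratio test can blow up.
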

\begin{remark}
We note that the contribution of `low' to `low' weights, bounded by Lemma~\ref{lm:s1} is the dominant one in the application in Lemma~\ref{lm:mass-transfer}. The calculations in this lemma are quite similar to those provided in the overview of the analysis for the component growing protocol in Section~\ref{sec:outline-app} (see Section~\ref{sec:evolution}).
\end{remark}

\begin{proofof}{Lemma~\ref{lm:s1}}
Note that for $k\leq 100 s^*$ we have by definition of $\bound(k)$ (see~\eqref{eq:bound-def})
\begin{align*}
\bound(k)
&\leq
\max\left\{\left(\frac{C\sqrt{s^* n}}{k}\right)^{k}, \left(\frac{C^2 n}{k}\right)^{k/2}\right\} 
\\ &= 
\max\left\{\left(\frac{C\sqrt{s^* n}}{k}\right)^{k}, \left(\frac{C \sqrt{s^* n}\cdot\sqrt{k/s^*}}{k}\right)^{k}\right\} 
\\ &\leq
\left(\frac{10C\sqrt{s^* n}}{k}\right)^{k}.
\end{align*}
We now define, as in~\eqref{def:s1},
$$
S_1:=\sum_{k=1}^{100s^*}\bound(k)\cdot \left(\sum_{i=|k-\ell|_+}^{k} q(k,i,n) \left(\frac{15\sqrt{s^* \alpha n} }{\ell-k+i}\right)^{\ell-k+i}\right)
$$
We now get, using the fact that $q(k, i, n)=\binom{\alpha n}{i}\binom{n-2\alpha n}{2(k-i)}\binom{n}{2k}^{-1}$ by Lemma~\ref{lm:qkin},
\begin{align*}
S_1 &\leq 
\sum_{k=1}^{100s^*}\left(\frac{10C\sqrt{s^* n}}{k}\right)^{k}\sum_{i=|k-\ell|_+}^{k} q(k,i,n)\left(\frac{15\sqrt{\alpha s^* n}}{\ell+i-k}\right)^{\ell+i-k} 
\\&=
\sum_{k=1}^{100s^*}\sum_{i=|k-\ell|_+}^{k} \left(\frac{10C\sqrt{s^* n}}{k}\right)^{k}\binom{\alpha n}{i}\binom{n-2\alpha n}{2(k-i)}\binom{n}{2k}^{-1}\left(\frac{15\sqrt{\alpha s^* n}}{\ell+i-k}\right)^{\ell+i-k} 
\\&\leq
\left(\frac{10C\sqrt{s^* n}}{\ell}\right)^{\ell} \sum_{k=1}^{100s^*}\sum_{i=|k-\ell|_+}^{k} (10C)^{k-\ell}\left(\sqrt{s^* n}\right)^{i} n^{i+2(k-i)-2k} \alpha^{i+(\ell+i-k)/2} \cdot \Gamma,
\end{align*}
where 
$$
\Gamma:=\frac{\ell^\ell (2k)!}{i!(2k-2i)!k^{k}(\ell+i-k)^{\ell+i-k}}2^{2k}15^{\ell+i-k}\leq 10^{3(\ell+k+i)}
$$
by Lemma~\ref{quotient of factorials}. Indeed, we first apply Lemma~\ref{quotient of factorials} with $m=2$, $a_1=\ell, a_2=2k$ to get $\ell^\ell \cdot 2k!\leq (2k+\ell)^{2k+\ell}$, and then apply Lemma~\ref{quotient of factorials} with $m=4$, $a_1=i, a_2=2k-2i, a_3=k, a_4=\ell+i-k$, obtaining $i!(2k-2i)!k^{k}(\ell+i-k)^{\ell+i-k}\geq ((2k+\ell)/12)^{2k+\ell}$. This yields an upper bound of $\Gamma\leq 12^{2k+\ell} \cdot 2^{2k}\cdot 15^{\ell+i-k}\leq 10^{3(\ell+k+i)}$. 

The exponent of $n$ is $-i$ which together with the $(\sqrt{s^* n})^{i}$ factor gives us $(s^*/n)^{i/2}$. So we have  
\begin{align*}
S_1\cdot \left(\frac{10C\sqrt{s^* n}}{\ell}\right)^{-\ell} 
&\leq
\sum_{k=1}^{100s^*}\sum_{i=|k-\ell|_+}^{k} 10^{i}C^{k-\ell} \left(s^*/n\right)^{i/2} 10^{3(\ell+k+i)}\alpha^{i+(\ell+i-k)/2}
\\&\leq
\sum_{k=1}^{100s^*}\sum_{i=|k-\ell|_+}^{k} 10^{i}C^{k-\ell-i} \left(C\sqrt{s^*/n}\right)^{i} 10^{6\ell+6i}\alpha^{i+(\ell+i-k)/2} 
\\&\leq
10^{6\ell}\sum_{i=0}^\infty\sum_{k=i}^{i+\ell} (\sqrt{\alpha}/C)^{\ell+i-k}\left(10^7 C\alpha \sqrt{s^*/n}\right)^{i}
\\&\leq
10^{6\ell}\sum_{i=0}^\infty\left(10^7 C\alpha \sqrt{s^*/n}\right)^{i} \sum_{k=i}^{i+\ell} (\sqrt{\alpha}/C)^{\ell+i-k}
\\  &\leq 
10^{6\ell} \sum_{i=0}^\infty 2\left(10^7 C\alpha \sqrt{s^*/n}\right)^{i}
\\&\leq 
10^{7\ell}
\end{align*}
In going from line~4 to line~5 we used the fact that $\sqrt{\alpha}/C<1/2$ by assumptions {\bf (P1)} and {\bf (P2)}. Similarly, in going from line~5 to line~6 we used the fact that $10^7 C\alpha \sqrt{s^*/n}<1/2$ by assumptions {\bf (P1), (P2), (P3)}. 
\end{proofof}

The next lemma bounds transfer of mass from intermediate weights to low weights:
\begin{lemma}[Mass transfer from intermediate to low weights]\label{lm:s2}
For every $n, s^*$, every $\ell\in [1:s^*]$, if $C,\alpha,s^*, n$ satisfy 
\begin{align*}
{\bf (P1)} \, \alpha < 10^{-10}\quad 
{\bf (P2)} \, C>10^6 \quad 	
{\bf (P3)} \, s^*<\frac{n}{10^9 C^3} \quad
{\bf (P4)} \, n>10^9C^4,
\end{align*}
then
$$
\sum_{k=100 s^*+1}^{n/C^2}\bound(k)\cdot \left(\sum_{i=k-\ell}^{k} q(k,i,n) \left(\frac{15\sqrt{s^* \alpha n} }{\ell-k+i}\right)^{\ell-k+i}\right)\leq \bound[10^7 C](\ell).
$$

\end{lemma}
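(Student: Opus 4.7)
The plan is to follow the template of Lemma~\ref{lm:s1}: use the formula $\bound(k) = (C^2 n/k)^{k/2}$ valid for $k > s^*$, expand $q(k, i, n)$ by Lemma~\ref{lm:qkin}, bound binomial coefficients with Stirling-type estimates, and sum the resulting expression. The analysis here is in fact simpler than for $S_1$, because $k \geq 100 s^* \geq 100 \ell$ throughout, which makes $q(k, i, n)$ extremely small: it pays a factor $(k/n)^{\Theta(k)}$ that, combined with $\bound(k)$ and the constraint $k/n \leq 1/C^2$, yields a huge geometric decay in $k$ driven by $\alpha \leq 10^{-10}$.

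Concretely, using $k - \ell + j \geq k - \ell \geq k/2$ (which follows from $k \geq 100\ell$) together with $k \leq n/C^2 \leq n/4$, elementary binomial estimates (namely $\binom{\alpha n}{k-\ell+j} \leq (2 e\alpha n/k)^{k-\ell+j}$ and $\binom{(1-2\alpha)n}{2(\ell-j)}/\binom{n}{2k} \leq (4k/n)^{2(k-\ell+j)}$) give
\[
q(k, k - \ell + j, n) \;\leq\; \left(\frac{32 e \alpha k}{n}\right)^{k - \ell + j}.
\]
Multiplying by $\bound(k)$ and using $(k/n)^{k/2 - \ell + j} \leq C^{-(k - 2\ell + 2j)}$ (valid since the exponent is nonnegative), the summand becomes at most $C^{2\ell - 2j}(32 e \alpha)^{k - \ell + j}(15\sqrt{s^* \alpha n}/j)^j$. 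Since $32 e \alpha < 1/2$ by \textbf{(P1)}, summing the geometric series in $k$ over $[100 s^* + 1, n/C^2]$ contributes at most a factor $2(32 e \alpha)^{100 s^* - \ell + j + 1}$.

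The remaining sum over $j \in [0,\ell]$, after factoring out $C^{2\ell}(32e\alpha)^{100s^*-\ell+1}$, reduces to $\sum_j (\beta/j)^j$ with $\beta := 480 e \alpha^{3/2} \sqrt{s^* n}/C^2$, which can be bounded by the crude estimate $O(\max(\beta,1)\cdot e^{\beta/e})$. Comparing the resulting prefactor with the target $\bound[10^7 C](\ell) = (10^7 C\sqrt{s^* n}/\ell)^\ell$ and using the bound $\ell^2 \leq s^* n$ (which holds since $\ell \leq s^*$ and $s^*\leq n$) together with the parameter constraints \textbf{(P2)}--\textbf{(P4)}, one checks that the ratio is dominated by factors of the shape $(C\cdot (32 e \alpha)^{99}/10^7)^{s^*}$, which is vanishingly small since $C \leq (n/10^9)^{1/4}$ by \textbf{(P4)} while $(32 e \alpha)^{99} \leq 10^{-792}$ by \textbf{(P1)}. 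The main obstacle is not any single estimate but the book-keeping: the several competing exponents of $n$, $k$, $\ell$, $j$ and $C$ must balance out, and one must confirm that the numerical constants in the statement (in particular the $10^7$ in $\bound[10^7 C](\ell)$) are large enough to absorb the constants ($32e$, $15$, etc.) introduced by the intermediate Stirling-type bounds.
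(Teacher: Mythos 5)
Your approach is structurally different from the paper's. You collapse $q(k,i,n)$ into the clean bound $q(k,k-\ell+j,n)\leq(32e\alpha k/n)^{k-\ell+j}$, trade all of the $n$-and-$k$ dependence for powers of $C$ via $k/n\leq 1/C^2$, and let the factor $(32e\alpha)^{k-\ell+j}$ drive a geometric series in $k$. The paper instead expands $q(k,i,n)$ with Stirling-type factorial estimates, retains the awkward factor $\bigl(\sqrt{k/s^*}\bigr)^{k}$, and uses $\bigl(\sqrt{k/s^*}\bigr)^{s^*}\leq e^{k/(2e)}$ together with $i\geq k/2$ to absorb this exponential via $\alpha^i$ — at no point producing anything that needs an upper bound on $C$. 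Your intermediate estimates (the bound on $q$, the resulting summand $C^{2\ell-2j}(32e\alpha)^{k-\ell+j}(15\sqrt{s^*\alpha n}/j)^j$, and the geometric sum over $k$) are correct.

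The gap is in the closing step. You bound $\sum_{j=0}^{\ell}(\beta/j)^{j}$ by the crude $O(\max(\beta,1)\cdot e^{\beta/e})$, which is a valid bound on the \emph{infinite} sum over $j\geq 0$ but discards the crucial constraint $j\leq\ell$: when $\beta>e\ell$ (which occurs because $\beta=480e\alpha^{3/2}\sqrt{s^*n}/C^{2}$ can be arbitrarily large for large $n$, e.g.\ already when $s^*=\ell=1$), the true sum is only $1+\beta$, whereas $e^{\beta/e}$ is exponentially larger and is \emph{not} absorbed by the prefactor $(32e\alpha)^{100s^*+1-\ell}$ once $n\gg s^* C^{4}/\alpha^{3}$. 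Your subsequent claim that the ratio is dominated by $(C(32e\alpha)^{99}/10^{7})^{s^*}$ and is ``vanishingly small since $C\leq(n/10^{9})^{1/4}$'' also does not close: (P4) bounds $C$ only in terms of $n$, which is itself unbounded, so nothing forces $C\leq 10^{7}/(32e\alpha)^{99}$, and for larger $C$ the claimed factor is $>1$. The gap is repairable along the lines you set up: bound $\sum_{j=0}^{\ell}(\beta/j)^{j}\leq(\ell+1)\max\bigl((\beta/\ell)^{\ell},\,e^{\ell}\bigr)$ (using $j\leq\ell$), observe that $C^{2\ell}(\beta/\ell)^{\ell}$ has no $C$-dependence at all since $\beta$ carries a $C^{-2}$, and check that the remaining factor $\bigl(C\ell/(10^{7}\sqrt{s^*n})\bigr)^{\ell}\leq 10^{-14\ell}$ via (P2)+(P3); after this, both branches of the max close without any implicit assumption that $C$ is bounded. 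As written, however, the ``crude estimate'' step and the ``$C\leq(n/10^{9})^{1/4}$'' conclusion are the genuine flaws.
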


\begin{remark}
A careful inspection of the proof of Lemma~\ref{lm:s2} reveals that the upper bound is in fact somewhat stronger than what Lemma~\ref{lm:mass-transfer} needs. This is mostly due to the fact that Lemma~\ref{lm:s2} analyzes contribution of weights $k$ starting at $100s^*$ to weights $\ell\leq s^*$. Since the strengthening is not consequential for the analysis, we prefer to keep the bound in the present form for simplicity. However, it is interesting to note that the `low weight' regime (i.e. the regime of Lemma~\ref{lm:s1}), for which the component growing protocol from Section~\ref{sec:outline-app} is a reasonable illustration, provides the dominant contribution to the Fourier spectrum. 
\end{remark}

\begin{proofof}{Lemma~\ref{lm:s1}}
Since in this sum we have $k>100s^*$ we must have $i\geq k-\ell>99s^*$ and also $i\geq k-\ell>k/2$. For $k\in [100s^*, n/C^2]$ we have by~\eqref{eq:bound-def}
\begin{equation}\label{eq:ib38h34n3gg}
\bound(k)\leq \left(\frac{C^2 n}{k}\right)^{k/2}=\left(\sqrt{\frac{k}{s^*}}\right)^{k}\left(\frac{C\sqrt{s^* n}}{k}\right)^{k}.
\end{equation}
We now define, as in~\eqref{def:s2},
$$
S_2:=\sum_{k=100 s^*+1}^{n/C^2}\bound(k)\cdot \left(\sum_{i=k-\ell}^{k} q(k,i,n) \left(\frac{15\sqrt{s^* \alpha n} }{\ell-k+i}\right)^{\ell-k+i}\right).
$$
Putting this together with~\eqref{eq:ib38h34n3gg}, we get
\begin{align*}
S_2 &\leq 
\sum_{k=100s^*+1}^{n/C^2}\left(\sqrt{\frac{k}{s^*}}\right)^{k}\left(\frac{C\sqrt{s^* n}}{k}\right)^{k}\sum_{i=k-\ell}^{k} q(k,i,n)\left(\frac{15\sqrt{\alpha s^* n}}{\ell+i-k}\right)^{\ell+i-k} 
\\&=
\sum_{k=100s^*+1}^{n/C^2}\sum_{i=k-\ell}^{k} \left(\sqrt{\frac{k}{s^*}}\right)^{k}\left(\frac{C\sqrt{s^* n}}{k}\right)^{k}\binom{\alpha n}{i}\binom{n-2\alpha n}{2(k-i)}\binom{n}{2k}^{-1}\left(\frac{15\sqrt{\alpha s^* n}}{\ell+i-k}\right)^{\ell+i-k} 
\\&\leq
\left(\frac{C\sqrt{s^* n}}{\ell}\right)^{\ell} \sum_{k=100s^*+1}^{n/C^2}\sum_{i=k-\ell}^{k} \left(\sqrt{\frac{k}{s^*}}\right)^{k} C^{k-\ell}\left(\sqrt{s^* n}\right)^{i} n^{i+2(k-i)-2k} \alpha^{i+(\ell+i-k)/2} \cdot \Gamma 
\end{align*}
where
$$
\Gamma:= \frac{\ell^\ell (2k)!}{i!(2k-2i)!k^{k}(\ell+i-k)^{\ell+i-k}}2^{2k}15^{\ell+i-k}\leq 10^{3(\ell+k+i)}
$$
by Lemma~\ref{quotient of factorials}. Indeed, we first apply Lemma~\ref{quotient of factorials} with $m=2$, $a_1=\ell, a_2=2k$ to get $\ell^\ell \cdot 2k!\leq (2k+\ell)^{2k+\ell}$, and then apply Lemma~\ref{quotient of factorials} with $m=4$, $a_1=i, a_2=2k-2i, a_3=k, a_4=\ell+i-k$, obtaining $i!(2k-2i)!k^{k}(\ell+i-k)^{\ell+i-k}\geq ((2k+\ell)/12)^{2k+\ell}$. This yields an upper bound of $\Gamma\leq 12^{2k+\ell} \cdot 2^{2k}\cdot 15^{\ell+i-k}\leq 10^{3(\ell+k+i)}$. 

The exponent of $n$ is $-i$ which together with the $(\sqrt{s^* n})^{i}$ factor gives us $(s^*/n)^{i/2}$. So we have
\begin{equation}\label{eq:in43hy}
\begin{split}
S_2\cdot \left(\frac{C\sqrt{s^* n}}{\ell}\right)^{-\ell} 
&\leq
\sum_{k=100s^*+1}^{n/C^2}\sum_{i=k-\ell}^{k}\left(\sqrt{\frac{k}{s^*}}\right)^{k} C^{k-\ell} \left(\sqrt{s^*/n}\right)^{i} 10^{3(\ell+k+i)}\alpha^{i+(\ell+i-k)/2}
\\&\leq
\sum_{k=100s^*+1}^{n/C^2}\sum_{i=k-\ell}^{k} \left(\sqrt{\frac{k}{s^*}}\right)^{k} C^{k-\ell-i} \left(C\sqrt{s^*/n}\right)^{i} 10^{6\ell+6i}\alpha^{i+(\ell+i-k)/2} 
\\&\leq
10^{6\ell}\sum_{k=100s^*+1}^{n/C^2}\sum_{i=k-\ell}^{k} \left(\sqrt{\frac{k}{s^*}}\right)^{k} (C/\sqrt{\alpha})^{k-\ell-i}\left(10^4 C\alpha \sqrt{s^*/n}\right)^{i}
\\&\leq 
10^{6\ell}\sum_{k=100s^*+1}^{n/C^2}\sum_{i=k-\ell}^{k} \left(10^4 C\alpha\sqrt{k/n}\right)^{i}\left(\sqrt{\frac{k}{s^*}}\right)^{k-i}
\end{split}
\end{equation}
We used the fact that $C/\sqrt{\alpha}>1$ and $k-\ell-i\leq 0$ for $i$ in the range of summation in going from line~3 to line~4 above.
We now note that $10^4 C\alpha\sqrt{k/n}\leq 10^4 \alpha$ since $k\leq n/C^2$ in the range of summation, and $\left(\sqrt{\frac{k}{s^*}}\right)^{k-i}\leq \left(\sqrt{\frac{k}{s^*}}\right)^\ell\leq \left(\sqrt{\frac{k}{s^*}}\right)^{s^*}$ for $i$ in the range of summation, since $\ell\leq s^*$ by assumption of the lemma. We thus get that, substituting these bounds into~\eqref{eq:in43hy}, 

\begin{align*}
S_2\cdot \left(\frac{C\sqrt{s^* n}}{\ell}\right)^{-\ell} 
&\leq 
10^{6\ell}\sum_{k=100s^*+1}^{n/C^2}\sum_{i=k-\ell}^{k} \left(10^4\alpha\right)^{i}\left(\sqrt{\frac{k}{s^*}}\right)^{s^*}
\\&\leq 
10^{6\ell}\sum_{k=100s^*+1}^{n/C^2}\sum_{i=k-\ell}^{k} \left(10^4\alpha\right)^{i}e^{k/2e}
\\&\leq 
10^{6\ell}\sum_{i=99s^*+1}^{n/C^2}\sum_{k=i}^{i+\ell} \left(10^4\alpha\right)^{i}e^{k/2e}
\\{\bf (P1)}&\leq 
10^{6\ell} \sum_{i=99s^*+1}^{n/C^2}\sum_{k=i}^{i+\ell} 10^{-i}
\\&\leq  
10^{6\ell}\sum_{i=99s^*+1}^{n/C^2}2\ell \cdot 10^{-i}
\\&\leq
10^{7\ell}
\end{align*}
We used the fact that $(\sqrt{k/s^*})^s=(k/s^*)^{s^*/2}\leq e^{k/(2e)}$ in going from line~1 to line~2 above (this follows since $(k/s^*)^{s^*}$ is maximized when $s^*=k/e$). The transition from line~3 to line~4 follows since $i\geq k-\ell\geq k/2$ (as $\ell\leq s^*$ and $k\geq 100s^*$), and hence 
$$
\left(10^4\alpha\right)^{i}e^{k/2e}\leq \left(10^4\alpha\right)^{i}e^{i/e}=\left(10^4e^{1/e}\alpha\right)^{i}\leq 10^{-i},
$$
since $\alpha<10^{-6}$ by assumption {\bf (P1)}. The transition from line~5 to line~6 follows since $\ell\leq s^*$ by assumption.
\end{proofof}
\begin{lemma}[Mass transfer from high weights to low weights]\label{lm:s3}
For every $n, s^*$, every $\ell\in [1:s^*]$, if parameters $C,\alpha,s^*,n$ satisfy 
\begin{align*}
{\bf (P1)} \, \alpha < 10^{-10}\quad 
{\bf (P2)} \, C>10^6 \quad 	
{\bf (P3)} \, s^*<\frac{n}{10^9 C^3} \quad
{\bf (P4)} \, n>10^9C^4,
\end{align*}
\[
\sum_{k=n/C^2}^{n/100}\sqrt{2^{s^*}\binom{n}{2k}}\cdot \left(\sum_{i=k-\ell}^{k} q(k,i,n) \left(\frac{15\sqrt{s^* \alpha n} }{\ell-k+i}\right)^{\ell-k+i}\right)\leq 1.
\]
\end{lemma}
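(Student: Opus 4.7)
The strategy is to exploit the very strong decay of $q(k,i,n)$ when $k$ is large (relative to $\alpha n$), which more than compensates for the Parseval-type upper bound $\sqrt{2^{s^*}\binom{n}{2k}}$ on the Fourier mass at high levels. The plan proceeds in four steps.

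First, I would restrict the effective range of $k$. Since $q(k,i,n)$ contains the factor $\binom{\alpha n}{i}$ (by Lemma~\ref{lm:qkin}), the inner sum is empty whenever $i > \alpha n$; combined with $i \geq k-\ell$ this forces $k \leq \alpha n + \ell$. Under our parameter regime ($\alpha=10^{-10}$, $C>10^6$), the lower end $n/C^2$ is already below $\alpha n$, so the sum effectively runs over $k \in [n/C^2, \alpha n + \ell]$. I will also substitute $j = \ell - k + i$ (equivalently $m = k - i$), so that $j, m \in [0,\ell]$ and $j + m = \ell$.

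Second, I would derive a clean bound on $q(k,i,n)$. The identity $\binom{n}{2k} = \binom{n}{2i}\binom{n-2i}{2(k-i)}/\binom{2k}{2i}$, combined with $\binom{n-2\alpha n}{2(k-i)} \leq \binom{n-2i}{2(k-i)}$ (valid for $i \leq \alpha n$), yields
\[
q(k,i,n) \;\leq\; \binom{\alpha n}{i}\binom{2k}{2i}/\binom{n}{2i}.
\]
Applying the standard estimates $\binom{a}{b} \leq (ea/b)^b$ and $\binom{a}{b} \geq (a/b)^b$ gives $q(k,i,n) \leq (c\alpha k^2/(in))^i$ for an absolute constant $c$. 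Writing $i = k-m$ and multiplying by $\sqrt{\binom{n}{2k}} \leq (en/(2k))^k$, then factoring out $(en/(2k))^m$, produces after simplification a bound of the form
\[
\sqrt{\binom{n}{2k}}\,q(k,k-m,n) \;\leq\; (c_1 \alpha)^{k}\cdot \left(\tfrac{n}{c_2 k \alpha}\right)^{m}
\]
for absolute constants $c_1, c_2$, valid whenever $m \leq \ell \leq k/2$ (which holds since $\ell \leq s^* \ll n/C^2 \leq k$).

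Third, I would carry out the sum over $m$. Substituting $p = \ell - m$, the sum takes the form
\[
\left(\tfrac{n}{c_2 k\alpha}\right)^\ell \sum_{p=0}^\ell \left(\tfrac{B_k}{p}\right)^{p}, \qquad B_k := 15 c_2 k\alpha\sqrt{s^*\alpha n}/n.
\]
Using the elementary inequality $(A/x)^x \leq e^{A/e}$ valid for all $x>0$, the inner sum is at most $(\ell+1)e^{B_k/e}$. Since $k \leq 2\alpha n$ in our range, $B_k \lesssim \alpha^2 \sqrt{s^* \alpha n}$, which is a tiny fraction of $n$ under our hypotheses, so $e^{B_k/e}$ is negligible. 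Simplifying yields a bound on $\sum_m T_m$ essentially of the shape $(c_1\alpha)^k \cdot (C^2/\alpha)^\ell \cdot (\ell+1)\cdot e^{B_k/e}$.

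Finally, I would sum over $k \in [n/C^2, \alpha n+\ell]$ geometrically: the factor $(c_1\alpha)^k$ decays with ratio $c_1\alpha < 1/2$, so the sum is at most $2(c_1\alpha)^{n/C^2}$. Multiplying by the prefactor $\sqrt{2^{s^*}} = 2^{s^*/2}$, I would verify in $\log_2$ that
\[
\tfrac{s^*}{2} + \tfrac{n}{C^2}\log_2(c_1\alpha) + \ell\log_2\!\tfrac{C^2}{\alpha} + \tfrac{B_k}{e}\log_2 e \;\leq\; 0.
\]
The dominant negative term is $\tfrac{n}{C^2}\log_2(c_1\alpha) \approx -25 n/C^2$ (using $\alpha \leq 10^{-10}$), and the positive contributions scale like $n/(10^9 C^3)$ or smaller under the preconditions $s^* < n/(10^9 C^3)$, which is smaller than $n/C^2$ by a factor of $10^9 C$. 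The main obstacle is the careful bookkeeping of constants: the bound $(A/x)^x \leq e^{A/e}$ is essential because any naive bound like $(15\sqrt{s^*\alpha n}/\ell)^\ell \leq n^\ell$ becomes too weak for astronomically large $n$ (where $\log_2 n$ could be comparable to $C$), whereas the $j$-independent bound $e^{A/e}$ guarantees the estimate holds for all $n > 10^9 C^4$.
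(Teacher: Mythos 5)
Your derivation is correct up through the bound $\sqrt{\binom{n}{2k}}\,q(k,k-m,n)\leq (c_1\alpha)^k(n/(c_2 k\alpha))^m$, and the substitution $p=\ell-m$ producing $\left(\tfrac{n}{c_2 k\alpha}\right)^\ell\sum_{p}(B_k/p)^p$ with $B_k=15c_2 k\alpha\sqrt{s^*\alpha n}/n$ is a clean and different organization from the paper (which instead sums geometrically in $i$ after bounding $(15\sqrt{s^*\alpha n}/j)^j\leq\binom{n}{j}\leq\binom{n}{s^*}$). However, the last two steps contain a genuine gap. You bound $\sum_{p=0}^\ell(B_k/p)^p\leq(\ell+1)e^{B_k/e}$, then extract $e^{B_k/e}$ from the sum over $k$ using the worst case $k\leq 2\alpha n$, giving $B_{k_{\max}}\approx 30c_2\alpha^2\sqrt{s^*\alpha n}$, and you claim this is negligible because it is ``a tiny fraction of $n$.'' But the relevant comparison is against the dominant negative term $\approx 25n/C^2$, not $n$. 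One computes $B_{k_{\max}}/(25n/C^2)\approx c\,\alpha^{2.5}C^{0.5}$ for an absolute constant $c$, which exceeds $1$ once $C\gtrsim 10^{55}$ even with $\alpha<10^{-10}$. Such $C$ are allowed by {\bf (P1)}--{\bf (P4)} and do arise in the paper's application (already for $C=(10^{12}T)^t$ with $t=T=10$ one has $C=10^{130}$), and for them $e^{B_{k_{\max}}/e}$ overwhelms $(c_1\alpha)^{n/C^2}$, so the proposed final verification is false. (The claim that the positive contributions scale like $n/(10^9C^3)$ also fails: $B_{k_{\max}}$ scales like $n\alpha^{2.5}/C^{1.5}$, which beats $n/C^3$ by a factor of $C^{1.5}\alpha^{2.5}$.)

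The gap is fixable along either of two lines. First fix: since $B_k$ is linear in $k$ with slope $\beta:=15c_2\alpha\sqrt{s^*\alpha n}/(en)$, and $\beta\ll\ln(1/(c_1\alpha))$ under {\bf (P1)}--{\bf (P3)}, the product $(c_1\alpha)^k e^{B_k/e}=(c_1\alpha e^\beta)^k$ still decays geometrically, so the sum over $k$ is dominated by the smallest $k=n/C^2$, and the relevant exponent is $B_{n/C^2}/e$ (smaller than $B_{k_{\max}}/e$ by a factor $\approx 2\alpha C^2\gg 1$), which \emph{does} verify. Second fix (closer to the paper): when $B_k\geq e\ell$ the term-by-term maximum of $(B_k/p)^p$ is at $p=\ell$, so $\sum_p(B_k/p)^p\leq(\ell+1)(B_k/\ell)^\ell$; multiplying out, the $k$-dependence cancels and one obtains $(\ell+1)\bigl(15\sqrt{s^*\alpha n}/\ell\bigr)^\ell\leq(\ell+1)\binom{n}{\ell}\leq(\ell+1)\binom{n}{s^*}\leq(\ell+1)e^{H(1/C^3)n}$, a $k$-uniform factor whose exponent decays as $(\ln C)/C^3$, which is always dwarfed by $25n/C^2$. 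This is essentially the estimate the paper uses.
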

\begin{remark}
We note that the bound of Lemma~\ref{lm:s3} is much stronger than what Lemma~\ref{lm:mass-transfer} needs. Indeed, a much weaker bound of $\approx (\sqrt{s^* n}/\ell)^\ell$ would have been sufficient for our purposes. The contribution of high weights  to low weights is significantly lower than the dominant terms ($\ell \in [1:s^*]$ and $\ell\in [s^*, n/(2C^2)]$, see Lemma~\ref{lm:s1} and Lemma~\ref{lm:s2}) provide, and hence we choose (rather arbitrarily) to prove the upper bound of $1$.
\end{remark}

\begin{proofof}{Lemma~\ref{lm:s3}}
We first define, as in~\eqref{def:s3},
$$
S_3:=\sum_{k=n/C^2+1}^{n/100}\bound(k)\cdot \left(\sum_{i=k-\ell}^{k} q(k,i,n) \left(\frac{15\sqrt{s^* \alpha n} }{\ell-k+i}\right)^{\ell-k+i}\right).
$$
This allows us to write 
\begin{align*}
S_3 &\leq 
\sum_{k=n/C^2+1}^{n/10}\sum_{i=k-\ell}^{k} \sqrt{2^{s^*}\binom{n}{2k}}\cdot q(k,i,n)\left(\frac{15\sqrt{\alpha s^* n}}{\ell+i-k}\right)^{\ell+i-k}
\\ &\leq
\sum_{k=n/C^2+1}^{n/10}\sum_{i=k-\ell}^{k} 4^{s^*}\binom{n}{2k}^{1/2}\binom{\alpha n}{i}\binom{n-2\alpha n}{2(k-i)}\binom{n}{2k}^{-1}\binom{n}{\ell+i-k}
\\ &\leq
\sum_{k=n/C^2+1}^{n/10}\sum_{i=k-\ell}^{k} 4^{s^*}\binom{\alpha n}{i}\binom{n-2\alpha n}{2(k-i)}\binom{n}{2k}^{-1/2}\binom{n}{\ell+i-k}
\end{align*}

The transition from line~1 to line~2 follows since $15\sqrt{\alpha s^* n}\leq n$ and $k\leq n/10\leq n/2$ in the range of summation. Since $\ell+i-k\leq \ell\leq s^*$ and $s^*\leq n/4$ by assumptions {\bf (P2)} and {\bf (P3)}, we have $\binom{n}{\ell+i-k}\leq \binom{n}{s^*}$. At the same time, since $k-i\leq \ell\leq s^*$, we also have $\binom{n-2\alpha n}{2(k-i)}\leq \binom{n}{2s^*}$.
Using the bounds above as well as the fact that $(a/b)^b\leq {a \choose b}\leq (ea/b)^b$ for all integer $a, b\geq 0$, we get
\begin{align*}
S_3&\leq 
\sum_{k=n/C^2+1}^{n/10}\sum_{i=k-\ell}^{k} 4^{s^*}\left(\frac{e\alpha n}{i}\right)^{i}\left(\frac{n}{2k}\right)^{-k}\binom{n}{s^*}\binom{n}{2s^*}
\\&\leq 
\sum_{k=n/C^2+1}^{n/10}\sum_{i=k-\ell}^{k} 4^{i}\left(\frac{e\alpha n}{i}\right)^{i}\left(\frac{n}{2i}\right)^{-i}\binom{n}{s^*}^3\text{~~~~~~~~~~~~~~(since $i\geq s^*$)}
\\&\leq
\sum_{k=n/C^2+1}^{n/10}\sum_{i=k-\ell}^{k} (22\alpha)^{i}e^{3 H(s^*/n)\cdot n}\text{~~~~~~(since $\binom{n}{s^*}\leq e^{H(s^*/n)n}$ by Lemma~\ref{lm:binom-entropy})}
\\&\leq
\sum_{k=n/C^2+1}^{n/10} 2(22\alpha)^{k-\ell}e^{3 H(s^*/n)\cdot n}\text{~~~~~~~~(summing geometric series)}
\\&\leq
4(22\alpha)^{n/C^2-s^*}e^{3 H(s^*/n)\cdot n}\text{~~~~~~~~(summing geometric series and using $\ell\leq s^*$)}
\\{\bf (P3)}&\leq
4(22\alpha)^{n/(2C^2)}e^{3 H(1/C^3)\cdot n}\text{~~~~~~~~~~~(since $\ell\leq s^*\leq n/C^3$)}
\\{\bf (P1, P2)}&\leq
n^2 e^{-n/(4C^2)}
\\{\bf (P4)}&\leq
1
\end{align*}
\end{proofof}

\subsection{Bounding mass transfer to intermediate weight Fourier coefficients}
The goal of this subsection is to prove Lemma~\ref{lm:mass-transfer-high} below.

\paragraph{Proof outline.} The proof of the lemma is similar to that of Lemma~\ref{lm:mass-transfer}, with different supporting combinatorial lemmas (Lemma~\ref{lm:t1} and Lemma~\ref{lm:t2} from Section~\ref{sec:technical-high}).  Unlike the proof of Lemma~\ref{lm:mass-transfer}, which mostly deals with low weight coefficients, Lemma~\ref{lm:mass-transfer-high} analyzes high weight coefficients, which are zero for the simple component growing protocol from Section~\ref{sec:outline-app}.

\begin{lemma}\label{lm:mass-transfer-high}
For every $n, s^*, C, \alpha\in (0, 1)$ that satisfy conditions
\begin{align*}
{\bf (P1)} \, \alpha < 10^{-10}\quad 
{\bf (P2)} \, C>10^6 \quad 	
{\bf (P3)} \, s^*<\frac{n}{10^9 C^3} \quad
{\bf (P4)} \, n>10^9C^4,
\end{align*}
 if $\B\subseteq \bool^n$ is $(C, s^*)$-bounded and $M$ is a uniformly random matching  on $[n]$ of size $\alpha n$, the following conditions hold. For every $\A_{reduced}\subseteq \bool^M$, if $\A=\{x\in \bool^n: Mx\in \A_{reduced}\}$ and $f$ is the indicator of $\A$, if $|\A|/2^n\geq 2^{-s^*}$, then for all $\ell\in [s^*, n/(2C^2)]$ we have 
\[
\sum_{v\in\{0,1\}^n} \left|\wt{h}(v)\right| \mathbb{E}_M\left[\sum_{\substack{w\in\{0,1\}^n\\ |v\oplus w|=2\ell}} \left|\widetilde{f}(w)\right|\right]\leq \left(\frac{(10^9 C)^2 n}{\ell}\right)^{\ell/2}.
\]
\end{lemma}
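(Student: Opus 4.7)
The proof follows the same skeleton as Lemma~\ref{lm:mass-transfer}. I first fix $v\in\{0,1\}^n$ with $|v|=2k$ and decompose $M$ into internal edges $\Int$, boundary edges $\Bound$, and external edges. Repeating the steps in the proof of Lemma~\ref{lm:mass-transfer} up to the point where the indicator $\mathbbm{1}_{\{|v\oplus Mw|=2\ell\}}$ is rewritten via the auxiliary vector $w_S$, I reach
\begin{equation*}
\mathbb{E}_M\left[\sum_{\substack{w\in\{0,1\}^M\\ |v\oplus Mw|=2\ell}} \left|\wt{f}(Mw)\right|\right]\leq \mathbb{E}_M\left[\sum_{S\subseteq\Bound}\sum_{\substack{w\in\{0,1\}^M\\ |w\oplus w_S|=\ell-k+|\Int|}}\left|\wt{f'}(w)\right|\right],
\end{equation*}
where $f'$ is the indicator of $\A_{reduced}$, as in Lemma~\ref{lm:mass-transfer}.

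The essential departure from Lemma~\ref{lm:mass-transfer} is that we are now in the regime $\ell>s^*$, so the weight $q:=\ell-k+|\Int|$ of the inner sum can exceed $s^*$ and the KKL-type bound of Lemma~\ref{lm:L1xorKKL} no longer applies uniformly. For $q\leq s^*$ I continue to use the sharper Lemma~\ref{lm:L1xorKKL} bound; for $q>s^*$ I instead apply Cauchy-Schwarz together with Parseval and $|\A_{reduced}|/2^{\alpha n}\geq 2^{-s^*}$ to obtain $\sum_{|w\oplus w_S|=q}|\wt{f'}(w)|\leq \sqrt{\binom{\alpha n}{q}\,2^{s^*}}\leq (C'n/q)^{q/2}$ for an appropriate constant. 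Either way, after invoking Lemma~\ref{lm:qkibn} and Lemma~\ref{lm:qkibninequality} to absorb the $2^{|\Bound|}$ factor into $q(k,i,n)$, I reduce the problem to summing an explicit expression in $k$ and $i$ against $\sum_{|v|=2k}|\wt{h}(v)|$.

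I then split this outer sum over $k$ into three ranges and use the appropriate a priori bound on $\sum_{|v|=2k}|\wt{h}(v)|$ from $(C,s^*)$-boundedness: $\bound(k)=(C\sqrt{s^*n}/k)^k$ for $k\leq s^*$, $\bound(k)=(C^2 n/k)^{k/2}$ for $s^*<k\leq n/C^2$, and the Cauchy-Schwarz/Parseval bound $\sqrt{2^{s^*}\binom{n}{2k}}$ (obtained from $\sum_v \wt{h}(v)^2 = 2^n/|\B|\leq 2^{s^*}$) for $n/C^2<k\leq n/100$. This reduces everything to two combinatorial inequalities, which I package as Lemma~\ref{lm:t1} and Lemma~\ref{lm:t2} in exact analogy with Lemma~\ref{lm:s1}, Lemma~\ref{lm:s2}, and Lemma~\ref{lm:s3}; each is proved term-by-term using Lemma~\ref{quotient of factorials} to control multinomial factors, identifying the dominant term, and then summing geometric tails using conditions $\textbf{(P1)}$--$\textbf{(P4)}$.

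The main obstacle is the regime $k\approx \ell$ with $|\Int|$ close to $k$: here $q=\ell-k+|\Int|$ can be as small as a fraction of $\ell$, so the inner bound is mild, while $q(k,i,n)\sim n^{-i}$ must exactly compensate the $(C^2 n/k)^{k/2}$ factor coming from the intermediate-weight spectrum of $\B$. This balance is what forces the final estimate to take the form $((10^9C)^2 n/\ell)^{\ell/2}$ rather than the stronger $((10^9C)\sqrt{s^*n}/\ell)^\ell$ of Lemma~\ref{lm:mass-transfer}; the transition at $\ell=s^*$ is continuous, reflecting the fact that once $\ell>s^*$ the $\ell_2$ bound coming from hypercontractivity no longer improves with $\ell$ and is replaced by the Parseval bound. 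Verifying this balance in the intermediate-$k$ range, and checking that the high-$k$ tail still contributes $o(\bound[10^9C](\ell))$, constitutes the technical heart of the argument and is exactly where the smallness of $\alpha$ (assumption $\textbf{(P1)}$) is crucial.
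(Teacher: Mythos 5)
Your proposal is correct and follows essentially the same skeleton as the paper's proof: the same Int/Bound/Ext decomposition, the same $w_S$-substitution to rewrite the inner weight constraint, absorbing $2^{|\Bound|}$ via Lemma~\ref{lm:qkibninequality}, and then splitting the outer sum over $k$ according to which a priori bound on $\sum_{|v|=2k}|\wt{h}(v)|$ is available, reducing to two combinatorial estimates (Lemmas~\ref{lm:t1} and~\ref{lm:t2}). Two small execution points differ from the paper, and one of them slightly undercuts a claim you make. First, the paper does not case-split the inner sum on $q=\ell-k+|\Int|$ at all; it applies the Cauchy--Schwarz/Parseval bound $\sqrt{2^{s^*}\binom{\alpha n}{q}}$ uniformly, and this is the cleaner choice because, contrary to your assertion, the KKL-type bound of Lemma~\ref{lm:L1xorKKL} is \emph{not} sharper for $q$ near $s^*$: at $q=s^*$ it gives the factor $(4s^*/q)^{q/2}=2^{s^*}$ versus Cauchy--Schwarz's $2^{s^*/2}$. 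Your case-split would still close (the extra factor is at most $2^{\ell/2}$ since $s^*\leq\ell$, and is absorbed by the final constant), but the characterization of KKL as uniformly sharper on $[1,s^*]$ is wrong. Second, the paper merges your first two $k$-ranges by the simple observation that $\bound(k)\leq 2^{s^*}\left(\frac{C^2n}{k}\right)^{k/2}$ for all $k\leq n/C^2$ (including $k\leq s^*$, using $(s^*/k)^{k/2}\leq e^{s^*/(2e)}\leq 2^{s^*}$), which is why it needs only two technical lemmas rather than three. Neither difference is a gap; the overall structure and the qualitative analysis of the dominant regime $k\approx\ell$ are what the paper does.
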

\begin{proof}
We start similarly to the proof of Lemma \ref{lm:mass-transfer}. 

We first note that $\wt{f}(z)\neq 0$ only if $z=Mw$ for $w\in \bool^M$, and thus for every $v\in \bool^n$ one has $\mathbb{E}_M\left[\sum_{\substack{w\in\{0,1\}^n\\ |v\oplus w|=2\ell}} \left|\widetilde{f}(w)\right|\right]=\mathbb{E}_M\left[\sum_{\substack{z\in\bool^M\\ |v\oplus Mw|=2\ell}} \left|\widetilde{f}(Mw)\right|\right]$. We also note that for any $w\in \bool^M$ one has $|v|\leq |v\oplus Mw|+|Mw|\leq \ell+\alpha n\leq n/(2C^2)+\alpha n\leq n/200$ since $n/(2C^2)\leq n/400$ and $\alpha<1/400$ by assumptions {\bf (P1)}, {\bf (P2)} and {\bf (P3)}. Thus, the only terms with a nonzero contribution to the sum that we need to bound are $v\in \bool^n$ with $|v|=2k\leq n/200$. Thus, it suffices to bound, for a parameter $k\in [0:n/100]$ and $v\in \bool^n$ with $|v|=2k$, the quantity
\begin{equation}\label{eq:sum-fixed-v-v2}
\mathbb{E}_M\left[\sum_{\substack{w\in\bool^n\\ |v\oplus Mw|=2\ell}} \left|\widetilde{f}(Mw)\right|\right].
\end{equation}
We will later (see~\eqref{eq:sum} below) combine our bounds over all $k\in [0:n/100]$ to obtain the result of the lemma.

 Let $\Int=\{e_1^{\operatorname{int}}, e_2^{\operatorname{int}}, \dots\}$ be the set of edges $e=(a, b)\in M$ that match points of $v$, i.e. with $a, b\in v$. Let $\Bound=\{e_1^{\operatorname{bound}}, e_2^{\operatorname{bound}}, \dots \}$ be the set of boundary edges, i.e. edges $e=(a, b)\in M$ with $a, b\in v$. Let $\Ext =\{e_1^{\operatorname{ext}}, e_2^{\operatorname{ext}}, \dots\}$ be the set of external edges, i.e. edges $e=(a, b)\in M$ with $a, b\in [n]\setminus v$. 
 
 We decompose the sum~\eqref{eq:sum-fixed-v-v2} according to the number of boundary edges in $Mw$:
\begin{equation}\label{eq:weihewihg-v2}
\mathbb{E}_M\left[\sum_{\substack{w\in\{0,1\}^M\\ |v\oplus Mw|=2\ell}} \left|\widetilde{f}(Mw)\right|\right]=\mathbb{E}_M\left[
\sum_{S \subseteq \Bound} \sum_{w\in\{0,1\}^M}
\mathbbm{1}_{\{w\cap\Bound=S \}} \mathbbm{1}_{\{|v\oplus Mw|=2\ell\}} \left|\widetilde{f}(Mw)\right|\right].
\end{equation}
We now rewrite the latter indicator function. For a subset $S\subseteq \Bound$ define $w_S\in\{0,1\}^M$ as the set of all internal edges $\Int$ and all edges in $S$. We then have $|v\oplus Mw_S|=2k-2|\Int|$, since adding a boundary edge to $v$ does not change the Hamming weight, and adding an internal edge reduces it by $2$. Also note that $|w\oplus w_S|=\ell-(k-|\Int|)$. Indeed, $|v\oplus Mw_S|=2k-2\Int$, $|v\oplus w|=2\ell$,  $w$ can be obtained from $w_S$ be removing internal edges and adding external edges and both of these changes increase $|Mw\oplus Mw_s|$ by $2$.  These observations together with~\eqref{eq:weihewihg-v2} yield the following upper bound on~\eqref{eq:sum-fixed-v-v2}:
\begin{equation}\label{eq:093h4g94hg-v2}
\begin{split}
\mathbb{E}_M\left[\sum_{\substack{w\in\{0,1\}^M\\ |v\oplus Mw|=2\ell}} \left|\widetilde{f}(Mw)\right|\right]&\leq 
\mathbb{E}_M\left[\sum_{S \subseteq \Bound} \sum_{\substack{w\in\{0,1\}^M\\ |w\oplus w_S|=\ell-(k-|\Int|) }}
\mathbbm{1}_{\{w_S=S\}} \cdot \left|\widetilde{f}(Mw)\right|\right]\\
&\leq 
\mathbb{E}_M\left[\sum_{S \subseteq \Bound} \sum_{\substack{w\in\{0,1\}^M\\ |w\oplus w_S|=\ell-(k-|\Int|) }}
 \left|\widetilde{f}(Mw)\right|\right]
\end{split}
\end{equation}
where we dropped the indicator function in going from the first line to the second line above. This in turn can be bounded by Cauchy-Schwarz (Lemma~\ref{ineq:CS}) given that the sum of squares of all normalized Fourier ciefficient is $2^{\alpha n}/\A_{reduced}= 2^n/|\A| \leq 2^{s^*}$ by assumption:
\[
\sum_{\substack{w\in\{0,1\}^M\\ |w\oplus w_S|=\ell-(k-|\Int|) }} \left|\widetilde{f}(Mw)\right|\leq \sqrt{2^{s^*}\binom{\alpha n}{\ell-k+|\Int|}}.
\]
Summing over all possible subsets $S\subset \Bound$ we then infer
\[
\mathbb{E}_M\left[\sum_{\substack{w\in\{0,1\}^M\\ |v\oplus Mw|=2\ell}} \left|\widetilde{f}(Mw)\right|\right]\leq
\mathbb{E}_M\left[\sum_{S \subseteq \Bound} 
\sqrt{2^{s^*}\binom{\alpha n}{\ell-k+|\Int|}} \right].
\]
Note that the bound is independent of the set $S$ so we can replace the summation with the multiplication by $2^{|\Bound|}$ to obtain
\[
\mathbb{E}_M\left[\sum_{\substack{w\in\{0,1\}^M\\ |v\oplus Mw|=2\ell}} \left|\widetilde{f}(Mw)\right|\right]\leq
\mathbb{E}_M\left[2^{|\Bound|}
\sqrt{2^{s^*}\binom{\alpha n}{\ell-k+|\Int|}} \right].
\]

   We now bound the sum on the last line above. We have 
\begin{equation}\label{eq:iwehgihg}
\begin{split}
&\mathbb{E}_M\left[2^{|\Bound|}\sqrt{\binom{\alpha n}{\ell-k+|\Int|} \left(\frac{64s^*}{\ell-k+|\Int|}\right)^{\ell-k+|\Int|}} \right]\\
&=\mathbb{E}_M\left[\sum_{i=0}^k \sum_{b=0}^{2k} 2^{b} \mathbbm{1}_{\{|\Bound|=b\text{~and}~|\Int|=i\}} \sqrt{2^{s^*}\binom{\alpha n}{\ell-k+i}}  \right]\\
&=\sum_{i=|k-\ell|_+}^k \sum_{b=0}^{2k} 2^{b} q(k,i,b,n)  \sqrt{2^{s^*}\binom{\alpha n}{\ell-k+i}} 
\end{split}
\end{equation}
In going from line~2 to line~3 above we used the fact that by Definition~\ref{def:qkibn} for every $v\in \bool^n$ with $|v|=2k$ one has $\expect_M[\mathbbm{1}_{\{|\Bound|=b\text{~and}~|\Int|=i\}}]=q(k, i, b, n)$, i.e. $q(k,i,b,n)$ is the probability that a uniformly random matching $M$ of size $\alpha n$ is  such that $i$ edges of $M$ match points of $v$ (i.e. $\left|\Int\right|=i$) and $b$ edges of $M$ are boundary edges (i.e. $\left|\Bound\right|=b$).  Note that we sum over $b$ between $0$ and $2k$, as the number of boundary edges of $v$ with respect to $M$ is bounded by its Hamming weight $2k$. Similarly, the number of internal edges $i$ cannot be larger than $|v|/2=k$, and must be at least $k-\ell$ in order for the binomial coefficient ${\alpha n \choose \ell-k+i}$ on the first line above to be nonzero (combinatorially, this means that in order for $v+Mw$ to have weight $2\ell$ for some $w\in \bool^M$ one must have $|\Int|\geq k-\ell$).  Recall that $|k-\ell|_+=\max\{0, k-\ell\}$. 

We thus get by~\eqref{eq:iwehgihg}, using the bound $q(k,i,b,n)\leq  q(k,i,n) 4^{k-i} 20^{-b}$ (Lemma~\ref{lm:qkin})

\begin{equation}\label{eq:2h8g8h4g}
\begin{split}
\mathbb{E}_M\left[\sum_{\substack{w\in\{0,1\}^M\\ |v\oplus Mw|=2\ell}} \left|\widetilde{f}(Mw)\right|\right] 
&\leq
\sum_{i=|k-\ell|_+}^{k}\sum_{b=0}^{2k} q(k,i,b,n)2^{b} \binom{\alpha n}{\ell-k+i}^{1/2}2^{s^*/2} 
\\&\leq 
2^{s^*/2}\sum_{i=|k-\ell|_+}^{k}\sum_{b=0}^{2k} 4^{k-i}q(k,i,n) 2^{-b}\binom{\alpha n}{\ell-k+i}^{1/2} 
\\&\leq 
2^{s^*}\sum_{i=|k-\ell|_+}^{k} 4^{k-i}q(k,i,n) \left(\frac{3\alpha n }{\ell-k+i}\right)^{(\ell-k+i)/2} 
\\&\leq 
16^{\ell}\sum_{i=|k-\ell|_+}^{k} q(k,i,n) \left(\frac{3\alpha n }{\ell-k+i}\right)^{(\ell-k+i)/2}.
\end{split}
\end{equation}
We used the bound ${a \choose b}\leq (e a/b)^b\leq (3a/b)^b$ in going from line~2 to line~3, and the fact that $s^*\leq \ell$ by assumption and $k-i\leq k-|k-\ell|_+\leq \ell$ for $i$ in the range of the summation in going from line~3 to line~4.

Summing~\eqref{eq:2h8g8h4g} over all possible $v$, we get 
\begin{equation}\label{eq:sum}
\begin{split}
\sum_{v\in\{0,1\}^n} \left|\wt{h}(v)\right| &\mathbb{E}_M\left[\sum_{\substack{w\in\{0,1\}^M\\ |v\oplus Mw|=2\ell}} \left|\widetilde{f}(Mw)\right|\right] \\&\leq 16^\ell\sum_{k=0}^{n/100}\left(\sum_{\substack{v\in\{0,1\}^n \\ |v|=2k}} \left|\wt{h}(v)\right|\right)\cdot \left(\sum_{i=|k-\ell|_+}^{k} q(k,i,n) \left(\frac{3\alpha n }{\ell-k+i}\right)^{(\ell-k+i)/2}\right)\\
\\&=16^\ell\sum_{k=0}^{n/100}\left(\sum_{\substack{v\in\{0,1\}^n \\ |v|=2k}} \left|\wt{h}(v)\right|\right)\cdot \left(\sum_{i=|k-\ell|_+}^{k} q(k,i,n) \left(\frac{3\alpha n }{\ell-k+i}\right)^{(\ell-k+i)/2}\right),
\end{split}
\end{equation}
where we used the fact that $k\leq n/100$ by assumptions of the lemma that we established earlier.

We now split the sum into two parts and derive upper bounds on $\sum_{\substack{v\in\{0,1\}^n \\ |v|=2k}} \left|\wt{h}(v)\right|$ in the two regimes. 

\noindent{\bf Upper bounding $\sum_{\substack{v\in\{0,1\}^n \\ |v|=2k}} \left|\wt{h}(v)\right|$ for $k\in [1:n/C^2]$.}  We show that for every $k\in [1:n/C^2]$
\begin{equation}\label{eq:mid-ub}
\sum_{\substack{v\in \{0,1\} \\ |v|=2k}} \left|\wt{h}(v)\right|\leq 2^{s^*}\left(\frac{C^2 n}{k}\right)^{k/2}.
\end{equation}
Indeed, for $k\in [s^*:n/C^2]$ one this follows directly from the assumption that $h$ is $(C, s^*)$-bounded, whereas for $k\in [0:s^*]$ this follows by
\begin{align*}
\sum_{\substack{v\in \{0,1\} \\ |v|=2k}} \left|\wt{h}(v)\right|&\leq \left(\frac{C \sqrt{s^* n}}{k}\right)^{k}\text{~~~~~~~~(since $h$ is $(C, s^*)$-bounded)}\\
&=\left(\frac{C^2 n}{k}\right)^{k/2}\left(\frac{s^*}{k}\right)^{k/2}\\
&\leq \left(\frac{C^2 n}{k}\right)^{k/2} e^{\frac{s^*}{2e}},
\end{align*}
and the claim follows as $e^{1/2e}<2$.

\noindent{\bf Upper bounding $\sum_{\substack{v\in\{0,1\}^n \\ |v|=2k}} \left|\wt{h}(v)\right|$  for $k\in [n/C^2+1:n/100]$.}  We have, using the assumption that $|\B|/2^n\geq 2^{-s^*}$, Parseval's equality and Cauchy-Schwarz, that 
\begin{equation}\label{eq:high-ub}
\sum_{\substack{v\in \{0,1\}^n \\ |v|=2k}}\left|\wt{h}(v)\right|\leq \sqrt{\left(\sum_{\substack{v\in \{0,1\}^n \\ |v|=2k}}\left|\wt{h}(v)\right|^2\right)\cdot\binom{n}{2k}}\leq \sqrt{2^{s^*}\binom{n}{2k}}.
\end{equation}

\paragraph{Putting it together.} We bound the contribution of $k\in [1:n/C^2]$ and $k\in [n/C^2+1:n/100]$ separately.

First, substituting~\eqref{eq:mid-ub} into~\eqref{eq:sum} and restricting the summation to $k\in [1:n/C^2]$, we get
\begin{equation}\label{eq:t1}
\begin{split}
\sum_{k=1}^{n/C^2}\sum_{\substack{v\in \{0,1\}^n \\ |v|=2k}}\left|\wt{h}(v)\right|&\leq \sum_{k=1}^{n/C^2} 2^{s^*}\left(\frac{C^2 n}{k}\right)^{k/2}\cdot \left(\sum_{i=|k-\ell|_+}^{k} q(k,i,n) \left(\frac{3\alpha n }{\ell-k+i}\right)^{(\ell-k+i)/2}\right)\\
&:=T_1
\end{split}
\end{equation}

Similarly, we get, substituting~\eqref{eq:high-ub} into~\eqref{eq:sum} and restricting the summation to $k\in [n/C^2+1: n/100]$, 
\begin{equation}\label{eq:t2}
\begin{split}
\sum_{k=n/C^2+1}^{n/100}\sum_{\substack{v\in \{0,1\}^n \\ |v|=2k}}\left|\wt{h}(v)\right|&\leq \sum_{k=n/C^2+1}^{n/10}\sqrt{2^{s^*}\binom{n}{2k}}\cdot \left(\sum_{i=k-\ell}^{k} q(k,i,n) \left(\frac{3\alpha n }{\ell-k+i}\right)^{(\ell-k+i)/2}\right)\\
&=:T_2.
\end{split}
\end{equation}

Lemma~\ref{lm:t1} and Lemma~\ref{lm:t2} below show that $T_1\leq \bound[10^7C](\ell)$, $T_2\leq \bound[10^7C](\ell)$ which concludes the proof since in the double sum \eqref{eq:sum} the only summand corresponding to $k=0$ is $q(0,0,n)\left(\frac{3\alpha n}{\ell}\right)^{\ell/2}$ which is at most $\bound[\sqrt{3}](\ell)$ since $q(0,0,n)=1$; and we clearly have 
\begin{align*}
16^\ell\left(\bound[\sqrt{3}](\ell)+\bound[10^7C](\ell)+\bound[10^7C](\ell)\right)
&=
16^\ell\left((\sqrt{3}/C)^\ell+10^{8\ell}+10^{7\ell}\right)\left(\frac{C^2 n}{\ell}\right)^{\ell/2} 
\\&\leq
\left(\frac{\left(10^9 C\right)^2 n}{\ell}\right)^{\ell/2}.
\end{align*}
\end{proof}

\subsubsection{Bounding $T_1$ and $T_2$ (technical lemmas)}\label{sec:technical-high}
The following two lemmas deal with  sum $T_1$ and $T_2$ defined in the proof of Lemma~\ref{lm:mass-transfer-high}. 
\begin{lemma}\label{lm:t1}
Suppose parameters $C,\alpha,s^*,n$ satisfy 
\begin{align*}
{\bf (P1)} \, \alpha < 10^{-10}\quad 
{\bf (P2)} \, C>10^6 \quad 	
{\bf (P3)} \, s^*<\frac{n}{10^9 C^3} \quad
{\bf (P4)} \, n>10^9C^4,
\end{align*}
then for every $\ell\in [s^*, n/(2C^2)]$
$$
\sum_{k=1}^{n/C^2} 2^{s^*}\left(\frac{C^2 n}{k}\right)^{k/2}\cdot \left(\sum_{i=|k-\ell|_+}^{k} q(k,i,n) \left(\frac{3\alpha n }{\ell-k+i}\right)^{(\ell-k+i)/2}\right)\leq \bound[10^7C](\ell)
$$
\end{lemma}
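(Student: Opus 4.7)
The plan is to follow the template of the proofs of Lemmas~\ref{lm:s1} and~\ref{lm:s2}: expand $q(k,i,n)$ via Lemma~\ref{lm:qkin}, bound binomial coefficients by $(a/b)^b\leq \binom{a}{b}\leq (ea/b)^b$, and use Lemma~\ref{quotient of factorials} to absorb all products of factorials into a clean $10^{O(\ell+k+i)}$ factor. Since we are in the regime $\ell\geq s^*$, the prefactor $2^{s^*}\leq 2^\ell$ is harmless and gets absorbed into the overall $10^{O(\ell)}$ losses. The target bound $\bound[10^7C](\ell)=(10^7C)^{\ell}(C^2n/\ell)^{\ell/2}\cdot (1/C)^\ell$ leaves us a budget of $10^{O(\ell)}$ constants to spend once we have divided out $(C^2n/\ell)^{\ell/2}$.

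Concretely, I would factor $(C^2n/\ell)^{\ell/2}$ out of the target and track the quotient of each summand with this quantity. Writing $q(k,i,n)=\binom{\alpha n}{i}\binom{n-2\alpha n}{2(k-i)}\binom{n}{2k}^{-1}$ and applying the standard binomial bounds, the powers of $n$ contribute $n^{-i}$ from $q(k,i,n)$ and $n^{(\ell-k+i)/2}$ from $(3\alpha n/(\ell-k+i))^{(\ell-k+i)/2}$, for a combined power $n^{(\ell-k-i)/2}$; dividing by $n^{\ell/2}$ from $(C^2n/\ell)^{\ell/2}$ yields $n^{-(k+i)/2}$. Combining this with the explicit $(C^2n/k)^{k/2}$ prefactor gives $(C^2/k)^{k/2}\cdot n^{-i/2}$, while the small-constant factors $\alpha^i$ (from $\binom{\alpha n}{i}$) and $\alpha^{(\ell-k+i)/2}$ (from the $3\alpha n$ factor) provide the needed decay in $\alpha$. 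After invoking Lemma~\ref{quotient of factorials} twice (first to bound $\ell^\ell(2k)!$ by $(2k+\ell)^{2k+\ell}$, then to bound $i!(2k-2i)!k^k(\ell+i-k)^{\ell+i-k}$ from below), the remaining product over $i$ is geometric and bounded by a polynomial times its largest term.

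I would then split the outer sum into the regime $k\leq\ell$ (where $|k-\ell|_+=0$ and the dominant contribution comes from $i$ near $0$, the calculation being closely parallel to Lemma~\ref{lm:s1}) and the regime $\ell<k\leq n/C^2$ (where $i$ ranges from $k-\ell$ to $k$, and the dominant contribution comes from $i$ near $k-\ell$, closely parallel to the second half of Lemma~\ref{lm:s2}). In each case the sum over $i$ collapses to a geometric series thanks to the $\alpha^{(\ell-k+i)/2}$ factor, leaving an outer sum of the shape $\sum_k (C^2n/k)^{k/2}\cdot n^{-k/2}\cdot \alpha^{\Theta(k)}=\sum_k (C^2\alpha^{\Theta(1)})^{k/2}$, which is summable under conditions {\bf (P1)}, {\bf (P2)}.

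The main obstacle will be that the prefactor $(C^2n/k)^{k/2}$ is \emph{not} monotone in $k$ and is, in particular, quite large for moderate $k$; it must be defeated uniformly by the combined $n^{-(k+i)/2}$ from $q(k,i,n)$ and the $\alpha^{\Theta(\ell-k+i)}$ contributions. Concretely, one has to verify that for $k$ in the full range $[1,n/C^2]$ the quantity $(C^2n/k)^{k/2}\cdot n^{-(k+i)/2}$ is at most $(C^4/n)^{k/4}$ times a small constant, which fails if $C^2n/k$ grows too quickly; the restriction $k\leq n/C^2$ is exactly what makes $(C^2n/k)^{k/2}\leq (C^4)^{k/2}$, and this is the estimate that has to be traded against $\alpha^{\Theta(k)}$. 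Assuming these bookkeeping steps proceed as in Lemma~\ref{lm:s2}, the final bound of $\bound[10^7C](\ell)$ follows with room to spare.
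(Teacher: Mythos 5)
The strategy you outline---expand $q(k,i,n)$ via Lemma~\ref{lm:qkin}, absorb $2^{s^*}\leq 2^\ell$ using $\ell\geq s^*$, collect powers of $n$ and $\alpha$, use Lemma~\ref{quotient of factorials} to tame the factorial ratio $\Gamma$, and split the $k$-sum at $k=\ell$---is essentially the paper's argument. But the ``main obstacle'' you flag, and the estimate you propose to resolve it, do not hold up. You claim the restriction $k\leq n/C^2$ forces $(C^2n/k)^{k/2}\leq (C^4)^{k/2}$. The inequality goes the wrong way: for $k\leq n/C^2$ one has $n/k\geq C^2$, hence $C^2n/k\geq C^4$ and $(C^2n/k)^{k/2}\geq (C^4)^{k/2}$. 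So the prefactor is not defeated by any pointwise bound of this form.

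What actually makes the argument close is that $(C^2n/k)^{k/2}=C^k\cdot n^{k/2}\cdot k^{-k/2}$ is dismembered: the $C^k$ cancels against $C^{-\ell}$ once you divide through by $(C^2n/\ell)^{\ell/2}$, the $n^{k/2}$ combines with the other powers of $n$ to give the net $n^{-i/2}$ you already computed correctly earlier in your outline, and the $k^{-k/2}$ is absorbed into the $\Gamma$-quotient bounded via Lemma~\ref{quotient of factorials}, which in exchange emits an $i^{i/2}$ factor. The restriction $k\leq n/C^2$ is used only at the very last step: once everything collapses to a sum of terms of the form $C^{k-\ell-i}\bigl(\Theta(C^2\alpha^2 i/n)\bigr)^i$, the bound $i\leq k\leq n/C^2$ gives $C^2 i/n\leq 1$, making the base of the geometric factor $O(\alpha)\leq 1/2$ by ${\bf (P1)}$, so the double sum over $i$ and $k$ converges to $10^{O(\ell)}$. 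Without this observation---which is not the inequality you wrote---the final sum does not collapse. So the skeleton of your proof is correct, but if you carried it out relying on the bound you stated you would hit a wall at exactly the point you identified as the obstacle.
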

\begin{remark}
We note that Lemma~\ref{lm:t1} provides the dominant contribution in the application in Lemma~\ref{lm:mass-transfer-high} (which is natural, since it bounds the transfer from a range of weights that overlaps with the target range of coefficient weights). 
\end{remark}

\begin{proofof}{Lemma~\ref{lm:t1}}
Define, similarly to~\eqref{eq:t1},
\begin{align*}
T_1:=&\sum_{k=1}^{n/C^2}2^{s^*}\left(\frac{C^2 n}{k}\right)^{k/2}\cdot \left(\sum_{i=|k-\ell|_+}^{k} q(k,i,n)\left(\frac{3\alpha n}{\ell-k+i}\right)^{(\ell-k+i)/2}\right)\\
&\leq 
2^\ell\sum_{k=1}^{n/C^2}\sum_{i=|k-\ell|_+}^{k} \left(\frac{C^2 n}{k}\right)^{k/2}\binom{\alpha n}{i}\binom{n-2\alpha n}{2(k-i)}\binom{n}{2k}^{-1}\left(\frac{3\alpha n}{\ell+i-k}\right)^{(\ell+i-k)/2},
\end{align*}
where in going from line~1 to line~2 we used the fact that $\ell\geq s^*$ by assumption of the lemma, as well as the fact that by Lemma~\ref{lm:qkin} for every $0\leq i\leq k\leq n/2$ one has
\[
q(k,i,n)=\binom{\alpha n}{i}\binom{n-2\alpha n}{2(k-i)}\binom{n}{2k}^{-1}.
\]

We now get, using the fact that $\binom{\alpha n}{i}\leq (\alpha n)^i/i!$, $\binom{n-2\alpha n}{2(k-i)}\leq n^{2(k-i)}/(2(k-i))!$ and $\binom{n}{2k}^{-1}\geq 2^{2k} n^{2k}/(2k)!$ for $k\leq n/4$ (which is satisfied since $k\leq n/100$ in our setting), 
\begin{equation}\label{eq:9023hgg}
\begin{split}
T_1&\leq 
\left(\frac{4C^2 n}{\ell/2}\right)^{\ell/2} \sum_{k=1}^{n/C^2}\sum_{i=|k-\ell|_+}^{k} C^{k-\ell} n^{k/2+(\ell+i-k)/2-\ell/2}n^{i+2(k-i)-2k} \alpha^{i+(\ell+i-k)/2} \cdot \Gamma,
\end{split}
\end{equation}
where 
$$
\Gamma= \frac{(\ell/2)^{\ell/2} (2k)!}{i!(2k-2i)!(k/2)^{k/2}(\ell+i-k)^{(\ell+i-k)/2}}2^{2k}2^{\ell+i-k}.
$$
We now bound $\Gamma$ using Lemma~\ref{quotient of factorials}. First, applying the lemma to the numerator with $m=2$, $a_1=\ell/2$ and $a_2=2k$, getting that the numerator is upper bounded by $(2k+\ell/2)^{2k+\ell/2}$. Next, applying the lemma to the denominator with $m=4$, $a_1=i$, $a_2=2k-2i$, $a_3=k/2$ (we lower bound $k^{k/2}$ by $(k/2)^{k/2}$) and $a_4=(\ell+i-k)/2$ (we lower bound $(\ell+i-k)^{(\ell+i-k)/2}$ by $(\ell+i-k/2)^{(\ell+i-k)/2}$). By Lemma~\ref{quotient of factorials} we get that the denominator is lower bounded by 
$$
((2k+\ell/2+i/2)/12)^{2k+\ell/2+i/2}\geq 12^{-(2k+\ell/2+i/2)}((2k+\ell/2)^{2k+\ell/2} (i/2)^{i/2},
$$
where we used the fact that $(a+b)^{a+b}\geq a^ab^b$ for all integer $a, b\geq 0$. Putting the above bounds together, we get that $\Gamma\leq i^{i/2} 10^{3(\ell+k+i)}$. 

Gathering the powers of $n$ and powers of $\alpha$ in the inner summation in~\eqref{eq:9023hgg}, we get that the exponent of $n$ is $-i$ and that the exponent of $\alpha$ is at least $i$. This, together with our upper bound on $\Gamma$ gives 
\begin{align*}
T_1\cdot \left(\frac{4C^2 n}{\ell}\right)^{-\ell/2} 
&\leq 
\sum_{k=1}^{n/C^2}\sum_{i=|k-\ell|_+}^{k} C^{k-\ell} \left(i/n\right)^{i/2} 10^{3(\ell+k+i)}\alpha^{i} 
\\&\leq 
10^{3\ell} \sum_{k=1}^{n/C^2}\sum_{i=|k-\ell|_+}^{k} C^{k-\ell-i} \left(\frac{10^8 C^2\alpha^2 i}{n}\right)^{i}
\end{align*}
Since $k\leq n/(2C^2)$ for all $k$ in the range of the summation, and $i\leq k$, one has $\left(\frac{10^8 C^2\alpha^2 i}{n}\right)^{i}\leq (10^8\alpha)^i\leq 2^{-i}$ by {\bf (P1)}. Substituting into the equation above, we get
\begin{align*}
T_1\cdot \left(\frac{4C^2 n}{\ell}\right)^{-\ell/2} &\leq 10^{3\ell} \sum_{k=1}^{\ell} \sum_{i=0}^{k} C^{k-\ell-i} 2^{-i}+\sum_{k=\ell+1}^{n/(2C^2)} \sum_{i=k-\ell}^{k} C^{k-\ell-i} 2^{-i}\\
&\leq 10^{3\ell} \sum_{k=1}^{\ell} C^{k-\ell}\sum_{i=0}^{k} C^{-i} 2^{-i}+\sum_{k=\ell+1}^{n/(2C^2)}2^{-(k-\ell)}  \sum_{i=0}^{k-\ell} C^{-i} 2^{-i}\\
&\leq 4\cdot 10^{3\ell}\leq 4\cdot 10^{6\ell},
\end{align*}
where in going from line~2 to line~3 we used assumption {\bf (P2)}. This implies 
\[
T_1\leq \left(\frac{4\cdot 16\cdot 10^{12} C^2 n}{\ell}\right)^{\ell/2}\leq \left(\frac{(10^7 C)^2 n}{\ell}\right)^{\ell/2}.
\]

\end{proofof}

\begin{lemma}\label{lm:t2}
Suppose parameters $C,\alpha,s^*,n$ satisfy 
\begin{align*}
{\bf (P1)} \, \alpha < 10^{-10}\quad 
{\bf (P2)} \, C>10^6 \quad 	
{\bf (P3)} \, s^*<\frac{n}{10^9 C^3} \quad
{\bf (P4)} \, n>10^9C^4,
\end{align*}
then for every $\ell\in [s^*, n/2C^2]$ 
$$
\sum_{k=n/C^2+1}^{n/100}\sqrt{2^{s^*}\binom{n}{2k}}\cdot \left(\sum_{i=k-\ell}^{k} q(k,i,n) \left(\frac{3\alpha n }{\ell-k+i}\right)^{(\ell-k+i)/2}\right)\leq \left(\frac{(10^8C)^2 n}{\ell}\right)^{\ell/2}.
$$
\end{lemma}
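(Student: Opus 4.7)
The plan is to follow closely the template of the proof of Lemma~\ref{lm:t1}, adapting it to the different upper bound on the Fourier mass of $\wt h$ that applies in the high-weight regime. In Lemma~\ref{lm:t1} we had the $(C, s^*)$-bounded estimate $2^{s^*}(C^2 n/k)^{k/2}$; here we instead use the Cauchy--Schwarz/Parseval estimate $\sqrt{2^{s^*}\binom{n}{2k}}$, which is valid because $|\B|/2^n \geq 2^{-s^*}$. The new ingredient that makes the bound work is the restriction $k\in [n/C^2+1, n/100]$, which forces $2k/n \leq 1/50$ and thus gives us the strong Stirling-type decay $\binom{n}{2k}^{-1/2}\leq (2k/n)^k\leq (1/50)^k$, small enough to absorb the large $\sqrt{\binom{n}{2k}}$ factor once it is cancelled against the $\binom{n}{2k}^{-1}$ coming from $q(k,i,n)$.

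First, substitute the formula $q(k,i,n)=\binom{\alpha n}{i}\binom{n-2\alpha n}{2(k-i)}\binom{n}{2k}^{-1}$ from Lemma~\ref{lm:qkin}, so the summand becomes $2^{s^*/2}\binom{\alpha n}{i}\binom{n-2\alpha n}{2(k-i)}\binom{n}{2k}^{-1/2}(3\alpha n/(\ell-k+i))^{(\ell-k+i)/2}$. Use $s^* \leq \ell$ to bound $2^{s^*/2}\leq 2^{\ell/2}$, and apply the elementary bounds $\binom{\alpha n}{i}\leq (e\alpha n/i)^i$, $\binom{n-2\alpha n}{2(k-i)}\leq (en/(2(k-i)))^{2(k-i)}$, and $\binom{n}{2k}^{-1/2}\leq (2k/n)^k$. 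Setting $j = k-i$ and $r=\ell-j$, the power of $n$ collects to $\ell/2 + j/2$ and the power of $\alpha$ is at least $k-\ell$. Use Lemma~\ref{quotient of factorials} (exactly as in the proof of Lemma~\ref{lm:t1}) to handle the ratios of factorials at a cost of a $10^{O(\ell + k + i)}$ multiplicative factor.

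After extracting the target factor $(C^2 n/\ell)^{\ell/2}$, the residual double sum is dominated by a product of the form $(\mathrm{const})^{\ell}\cdot\alpha^{k-\ell}(2k/n)^k$, summed over $k\in [n/C^2+1, n/100]$ and $j\in [0,\ell]$. Since $k\geq n/C^2$ and $\alpha\leq 10^{-10}$, both $\alpha^{k-\ell}$ and $(2k/n)^k\leq (1/50)^{n/C^2}$ provide violent exponential decay in $k$, so the sum over $k$ is geometric and bounded by its first term, while the sum over $j$ contributes at most $\ell+1$ terms that form another convergent geometric series (just as in Lemma~\ref{lm:t1}). The main obstacle is bookkeeping the constants so that the accumulated $10^{O(\ell)}$ and $e^{O(\ell)}$ factors from Stirling and from Lemma~\ref{quotient of factorials} fit within the $10^{8\ell}$ budget implicit in $((10^8 C)^2 n/\ell)^{\ell/2}$; this is comfortable because the $(2k/n)^k$ decay, kicking in already at $k = n/C^2$, produces room much larger than any absolute constant raised to the power $\ell$, provided $C$ satisfies the assumption (P2).
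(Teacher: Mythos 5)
Your overall strategy is the same as the paper's: substitute the explicit formula for $q(k,i,n)$, bound the binomials by Stirling-type estimates, and invoke Lemma~\ref{quotient of factorials} to control the ratio of factorial-like terms, then rely on the combination of the $\alpha$-decay and the high-$k$ restriction to make the residual small. But your description of what the residual actually looks like after extraction is wrong in a way that hides the most delicate step of the argument, so as written the sketch does not establish the bound.

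You correctly compute that the total power of $n$ collects to $\ell/2 + j/2$ with $j = k-i$, so after dividing by $n^{\ell/2}$ the residual term must retain a factor of $n^{j/2}$. Your claimed residual $(\mathrm{const})^\ell \cdot \alpha^{k-\ell}(2k/n)^k$ has net $n$-power $-k$, not $+j/2$, so it cannot be what remains; the factor $(2k/n)^k$ has already been used up in arriving at $\ell/2 + j/2$ and cannot be counted again as leftover decay. After applying Lemma~\ref{quotient of factorials} the $n^{j/2}$ combines with a $(j/2)^{-j/2}$ from the factorial ratio to give something of the shape $(2n/j)^{j/2}$, and this is \emph{not} ``an absolute constant raised to the power $\ell$'': it can be as large as $(2n/\ell)^{\ell/2}$, which is $n$-dependent and easily exceeds $10^{8\ell}$. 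The reason the bound nonetheless holds is a chain of inequalities that your sketch omits: first $(2n/j)^{j/2} \leq (2n/\ell)^{\ell/2}$ by monotonicity (valid since $j\le\ell\le 2n/e$); then the $C^{-\ell}$ surplus from the target gives $(2n/(C^2\ell))^{\ell/2}$, which is bounded by $e^{n/C^2}$ because $\ell \le n/(2C^2)$; and finally $e^{n/C^2}$ is absorbed by $(10^{O(1)}\alpha)^{i}$ with $i \ge k-\ell \ge n/(2C^2)$, using {\bf (P1)}. Without this step, the $n$-dependent factor $(2n/\ell)^{\ell/2}$ overruns the ``$10^{8\ell}$ budget'' you invoke. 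Relatedly, the $10^{O(\ell+k+i)}$ cost of Lemma~\ref{quotient of factorials} (which you correctly quote earlier) is a $10^{\Theta(k)}$ quantity with $k$ up to $n/100$, so it is not a ``$10^{O(\ell)}$ factor''; it is only beaten by combining it with $\alpha^{i}$ (not merely $\alpha^{k-\ell}$ and not the $(2k/n)^k$ decay in isolation). Stating and verifying this absorption chain is the actual content of the lemma and needs to be done explicitly.
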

\begin{remark}
A close inspection of the proof of Lemma~\ref{lm:t2} reveals that a stronger bound than is needed by the application in Lemma~\ref{lm:mass-transfer-high} holds. However, we prefer to keep the bound in present form to simplify presentation. 
\end{remark}
\begin{proofof}{Lemma~\ref{lm:t2}}
Note that we have $\ell\leq n/2C^2$ and $k>n/C^2$, so we must have $i>n/2C^2$. This allows us to write 
\begin{align*}
T_2 &:=\sum_{k=n/C^2+1}^{n/100}\sqrt{2^{s^*}\binom{n}{2k}}\cdot \left(\sum_{i=k-\ell}^{k} q(k,i,n) \left(\frac{3\alpha n }{\ell-k+i}\right)^{(\ell-k+i)/2}\right)\\
&\leq 
\sum_{k=n/C^2+1}^{n/100}\sum_{i=k-\ell}^{k} \sqrt{2^{s^*}\binom{n}{2k}} q(k,i,n)\left(\frac{3\alpha n}{\ell-k+i}\right)^{(\ell-k+i)/2}
\end{align*}
Substituting the expression for $q(k, i, n)$ from Lemma~\ref{lm:qkin} and using the assumption that $\ell\geq s^*$ to upper bound $2^{s^*}$ by $2^\ell$, we get
\begin{equation}\label{eq:g4g0gag}
\begin{split}
T_2&\leq 
2^{\ell}\sum_{k=n/C^2+1}^{n/100}\sum_{i=k-\ell}^{k} \binom{n}{2k}^{1/2}\binom{\alpha n}{i}\binom{n-2\alpha n}{2(k-i)}\binom{n}{2k}^{-1}\left(\frac{3\alpha n}{\ell+i-k}\right)^{(\ell+i-k)/2} 
\\&=
2^{\ell}\sum_{k=n/C^2+1}^{n/100}\sum_{i=k-\ell}^{k} \binom{\alpha n}{i}\binom{n-2\alpha n}{2(k-i)}\binom{n}{2k}^{-1/2}\left(\frac{3\alpha n}{\ell+i-k}\right)^{(\ell+i-k)/2} 
\\&\leq
\left(\frac{16C^2 n}{\ell}\right)^{\ell/2} \sum_{k=n/C^2+1}^{n/100}\sum_{i=k-\ell}^{k} C^{-\ell} n^{(\ell+i-k)/2-\ell/2}n^{i+2(k-i)-2k/2} \alpha^{i+(\ell+i-k)/2} \cdot \Gamma, 
\end{split}
\end{equation}
where 
$$
\Gamma=\frac{(\ell/2)^{\ell/2} k^k}{i!(2k-2i)!((\ell+i-k)/2)^{(\ell+i-k)/2}}2^{2k}2^{\ell+i-k}.
$$
In~\eqref{eq:g4g0gag} we used the bounds $\binom{\alpha n}{i}\leq (\alpha n)^i/i!$, $\binom{n-2\alpha n}{2(k-i)}\leq n^{2(k-i)}/(2(k-i))!$ and $\binom{n}{2k}\geq 2^{-2k} n^k/(2k)!$ when going from line~2 to line~3. We also upper bounded $\sqrt{(2k!)}$ by $2^k k^k$ to simplify the expression for $\Gamma$.

We now show using Lemma~\ref{quotient of factorials} that $\Gamma\leq \left(\frac{k-i}{2}\right)^{-\frac{k-i}{2}} 10^{3(\ell+k+i)}$.  We first apply  the lemma to the numerator of $\Gamma$ with $m=2$, $a_1=\ell/2$, $a_2=k$, obtaining an upper bound of $(k+\ell/2)^{k+\ell/2}$. Applying the lemma to the denominator of $\Gamma$ with $m=3$, $a_1=i, a_2=2k-2i, a_3=(\ell+i-k)/2$, we obtain a lower bound of $((k+\ell/2)/9)^{k+\ell/2}$. Putting the bounds above together, we get that $\Gamma\leq \left(\frac{k-i}{2}\right)^{-\frac{k-i}{2}} 10^{3(\ell+k+i)}$, as required.

Gathering the powers of $n$ in~\eqref{eq:g4g0gag}, we obtain $n^{(k-i)/2}$ so this together with factor of $\left(\frac{k-i}{2}\right)^{-\frac{k-i}{2}}$ from the upper bound on $\Gamma$ gives us $\left(\frac{2n}{k-i}\right)^{\frac{k-i}{2}}$. Finally the power of $\alpha$ is at least $i$. Putting these bounds together, we get

\begin{align*}
T_2\cdot \left(\frac{16C^2 n}{\ell}\right)^{-\ell/2} 
&\leq
\sum_{k=n/C^2+1}^{n/100}\sum_{i=k-\ell}^{k} C^{-\ell} \left(\frac{2n}{k-i}\right)^{\frac{k-i}{2}}\alpha^{i} 10^{3(\ell+k+i)}
\\&\leq
10^{6\ell} C^{-\ell} \left(\frac{2n}{\ell}\right)^{\ell/2}\sum_{k=n/C^2+1}^{n/100}\sum_{i=k-\ell}^{k} \alpha^i 10^{6i} 10^{k-\ell-i}
\\&\leq
10^{6\ell} \left(\frac{2n/C^2}{\ell}\right)^{\ell/2}\sum_{k=n/C^2+1}^{n/100}\sum_{i=k-\ell}^{k} \left(10^6\alpha\right)^i
\\&\leq
10^{6\ell} e^{n/C^2} \sum_{k=n/C^2+1}^{n/100}\sum_{i=k-\ell}^{k} \left(10^6\alpha\right)^i
\\&\leq
10^{6\ell} e^{n/C^2} \sum_{k=n/C^2+1}^{n/100} 2\cdot \left(10^6\alpha\right)^{k-\ell}
\\&\leq
4\cdot 10^{6\ell}e^{n/C^2} \left(10^6\alpha\right)^{n/(2C^2)}
\\&\leq
10^{6\ell}.
\end{align*}

We used the fact that $k-i\leq \ell\leq n/2$ by {\bf (P3)} and {\bf (P2)} in going from line~1 to line~2, and the fact that $k-\ell-i\leq 0$ for all $i$ in the range of summation in going from line~2 to line~3. In going from line~3 to line~4 we used the fact that $\left(\frac{2n/C^2}{\ell}\right)^{\ell/2}\leq 4^{n/(2C^2)}\leq e^{n/C^2}$ for $\ell\in [s^*, n/(2C^2)]$. In going from line~4 to line~5 we used the fact that $\sum_{i=k-\ell}^{k} \left(10^6\alpha\right)^i\leq 2\left(10^6\alpha\right)^{k-\ell}$ by {\bf (P1)}. In going from line~5 to line~6 we used the fact that $\sum_{k=n/C^2+1}^{n/100}\left(10^6\alpha\right)^{k-\ell}\leq 2\cdot\left(10^6\alpha\right)^{n/C^2-\ell}\leq 2\cdot \left(10^6\alpha\right)^{n/(2C^2)}$ by {\bf (P1)} as well as the assumption that $\ell\leq n/(2C^2)$ and $k\geq n/C^2$. In going from line~6 to line~7 we used assumption {\bf (P1)}.

Putting the bounds above together, we get
\[
T_2\leq 10^{6\ell}\left(\frac{16C^2 n}{\ell}\right)^{\ell/2}\leq \left(\frac{\left(10^8C\right)^2 n}{\ell}\right)^{\ell/2},
\]
as required.
\end{proofof}

\section{Acknowledgements}
The authors would like to thank Madhu Sudan for many useful comments on drafts of this manuscript. Michael Kapralov would like to thank Sanjeev Khanna for many useful discussions on precusors to this work.

\pdfbookmark[1]{\refname}{My\refname} 
%\bibliographystyle{alpha}
%\bibliography{sparsification}
\newcommand{\etalchar}[1]{$^{#1}$}

\appendix
%!TEX root = ./main.tex
\section{Proof of Lemma~\ref{lm:pdfs-closeness}}\label{app:A}
\begin{proofof}{Lemma~\ref{lm:pdfs-closeness}}
The claim of the lemma is equivalent to saying that with probability at least $1-\delta$ over the choice of the matching $M$ we have, for any $z_0\in\{0,1\}^M$, the following inequality
\[
1-\delta\leq 2^{|M|}\mathbb{P}_{x\sim \operatorname{Uniform}(\B)}\left[Mx=z_0\right] \leq 1+\delta,
\]
which would in turn follow by Markov inequality from the following fact:
\[
\mathbb{E}_M\left[\max_{z_0\in\{0,1\}^M}\left|2^{|M|}\mathbb{P}_{x\sim \operatorname{Uniform}(\B)}\left[Mx=z_0\right]-1\right|\right]\leq \delta^2. 
\]
In order to prove this we express the LHS in terms of the Fourier transform of $h=\mathbbm{1}_\B$. Define $g_{z_0}(x):=\mathbbm{1}_{\{x:\, Mx=z_0\}}$. We have 
\begin{align*}
2^{|M|}\mathbb{P}_{x\sim \operatorname{Uniform}(\B)}[Mx=z_0]-1 &= \frac{2^{|M|}\left|\B\cap \{x:\, Mx=z_0\}\right|-|\B|}{|\B|} \\ &=
\frac{2^{|M|} 2^n \widehat{hg_{z_0}}(0)-2^n\widehat{h}(0)}{|\B|} \\&=
\frac{2^{n+|M|}}{|\B|}\sum_{\substack{v\in \{0,1\}^n\\ v\not = 0}}\widehat{h}(v)\widehat{g_{z_0}}(v).
\end{align*}
Note that in order for $\widehat{g}_{z_0}(v)$ to be non-zero, $|v|$ must be even. We then use a triangle inequality to obtain 
\[
\mathbb{E}_M\left[\max_{z_0\in\{0,1\}^M}\left|2^{|M|}\mathbb{P}_{x\sim \operatorname{Uniform}(\B)}\left[Mx=z_0\right]-1\right|\right]\leq \sum_{\ell=1}^{n/2}\mathbb{E}_M\left[\max_{z_0\in\{0,1\}^M}\left|\frac{2^{n+|M|}}{|\B|}\sum_{\substack{v\in \{0,1\}^n\\ |v|= 2\ell}}\widehat{h}(v)\widehat{g_{z_0}}(v)\right|\right].
\]
Recall that $p_\alpha(\ell,n)$ stands for the probability that fixed $2\ell$ points are matched by a uniformly random matching of size $\alpha n$. For $v\in\{0,1\}^n$ which is perfectly matched by $M$ (i.e. $M$ restricted to $v$ is a perfect matching) let $e(v)\in \{0,1\}^M$ denote the set of edges induced by $v$. Using explicit structure of $\widehat{g_{z_0}}$ we find that 
\begin{equation}\label{eq:239h9f23fl}
\begin{split}
\frac{2^{n+|M|}}{|\B|}\sum_{\substack{v\in \{0,1\}^n\\ |v|= 2\ell}}\widehat{h}(v)\widehat{g_{z_0}}(v) &=\frac{2^{n+|M|}}{|\B|}\sum_{\substack{v\in \{0,1\}^n\\ |v|= 2\ell \\ v \text{ is matched by } M}}\widehat{h}(v)\cdot 2^{-|M|}(-1)^{z_0\cdot e(v)} \\&\leq \frac{2^n}{|\B|} \sum_{\substack{v\in \{0,1\}^n\\ |v|= 2\ell \\ v \text{ is matched by } M}}\left|\widehat{h}(v)\right|,
\end{split}
\end{equation}
where we used the fact that $\wh{g_{z_0}}(v)$ is zero for $v$'s that are not perfectly matched by $M$. The bound is independent of $z_0$ which allows us to write, after taking expectation over $M$,
\begin{equation}\label{eq:expectation-bound}
\begin{split}
&\mathbb{E}_M\left[\max_{z_0\in\{0,1\}^M}\left|2^{|M|}\mathbb{P}_{x\sim \operatorname{Uniform}(\B)}\left[Mx=z_0\right]-1\right|\right]\\
 &\leq \sum_{\ell=1}^{n/2} p(\ell,n)\frac{2^n}{|\B|}\sum_{\substack{v\in \{0,1\}^n \\ |v|=2\ell}}\left|\widehat{h}(v)\right| \\ 
  &\leq \sum_{\ell=1}^{\alpha n} p(\ell,n)\frac{2^n}{|\B|}\sum_{\substack{v\in \{0,1\}^n \\ |v|=2\ell}}\left|\widehat{h}(\ell)\right| \\ 
 &\leq \sum_{\ell=1}^{s^*} p(\ell,n)\bound(\ell)+\sum_{\ell=s^*+1}^{n/2}p(\ell,n)\sqrt{2^s \binom{n}{2\ell}}.
\end{split}
\end{equation}
In going from line~2 to line~3 we used the fact that $\widehat{h}(v)=0$ for any $v$ that is not perfectly matched by $M$, as well as the assumption that $M$ is a matching of size $\alpha n$.
In going from line~3 to line~4 we used the fact that $\frac{2^n}{|\B|}\sum_{\substack{v\in \{0,1\}^n \\ |v|=2\ell}}\left|\widehat{h}(\ell)\right|\leq \bound(\ell)$ for $\ell$ between $1$ and $s^*$ (since $\B$ is $(C, s^*)$-bounded by assumption, as well as the fact that for every $\ell$ between $0$ and $n/2$ 
$$
\frac{2^n}{|\B|}\sum_{\substack{v\in \{0,1\}^n \\ |v|=2\ell}}\left|\widehat{h}(\ell)\right|\leq \sqrt{{n \choose 2\ell}} \frac{2^n}{|\B|}\sqrt{\sum_{\substack{v\in \{0,1\}^n \\ |v|=2\ell}}\widehat{h}(\ell)^2}\leq \sqrt{2^{s^*} \binom{n}{2\ell}}.
$$

The latter bound holds for any subset $\B$ of the cube with $|\B|/2^n\geq 2^{-s^*}$ by Parseval's equality (\eqref{eq:parseval} and Remark~\ref{rm:indicator-l2}) together with Cauchy-Schwarz. 

We now upper bound individual terms in the summation over $\ell$ in~\eqref{eq:expectation-bound} above. We use the expression 
\begin{equation}\label{eq:923hgg}
p(\ell,n)=\binom{\alpha n}{\ell}\binom{n}{2\ell}^{-1}
\end{equation} 
provided by Lemma~\ref{lm:pln}.

\paragraph{Upper bounding the contribution from low weights ($1\leq \ell\leq s^*$).} For low weights ($\ell\leq s^*$) we have by the assumption that $h$ is $(C, s^*)$-bounded
\begin{equation}\label{eq:ub-low-weights}
\begin{split}
p(\ell,n)\bound(\ell)&=\binom{\alpha n}{\ell}\binom{n}{2\ell}^{-1} \left(\frac{C\sqrt{s^* n}}{\ell}\right)^\ell \\
&\leq \left(\frac{e\alpha n}{\ell}\right)^\ell\cdot \frac{(2\ell)!}{(n/2)^{2\ell}}\cdot \left(\frac{C\sqrt{s^* n}}{\ell}\right)^\ell \\ 
& \leq \left(44\alpha C\sqrt{s^*/n}\right)^\ell\\
&\leq \left(44\alpha\right)^\ell \delta^{2\ell}.
\end{split}
\end{equation}
In going from line~1 to line~2 above we used the lower bound ${n \choose 2\ell}\geq (n/2)^{2\ell}/(2\ell)!$, since $\ell\leq s^*\leq n/C^2\leq n/4$ (as $C\geq 100$ by assumption), as well as the bound ${\alpha n \choose \ell}\leq (e \alpha n/\ell)^\ell$. We used the assumption $s^*\leq \delta^4 n/C^2$ to from line~3 to line~4.

\paragraph{Upper bounding the contribution from high weights ($s^*<\ell \leq \alpha n$).} We have, using~\eqref{eq:923hgg},
\begin{equation}\label{eq:ub-high-weights}
\begin{split}
 p(\ell, n)\sqrt{2^{s^*}\binom{n}{2\ell}}&=\binom{\alpha n}{\ell} \binom{n}{2\ell}^{-1/2}2^{s^*/2} \\
 &\leq
\frac{(\alpha n)^\ell}{\ell!}\left(\frac{(n/2)^{2\ell}}{(2\ell)!}\right)^{-1/2}2^{\ell/2}
\\&=
(2\sqrt{2}\alpha)^{\ell}\sqrt{\binom{2\ell}{\ell}}
\\&\leq 
(4\sqrt{2}\alpha)^{\ell}
\\&\leq e^{-2s^*}.
\end{split}
\end{equation}
In going from line~1 to line~2 we used the fact that $\binom{n}{2\ell}\geq (n/2)^{2\ell}/(2\ell)!$ for any $\ell\leq \alpha n\leq n/4$. In going from line~4 to line~5 we used the assumption that $\alpha\leq 1/100$.

Putting~\eqref{eq:ub-low-weights} together with~\eqref{eq:ub-high-weights} and summing over all $\ell\in [1:\alpha n]$, we get using~\eqref{eq:expectation-bound} 
\begin{equation*}
\begin{split}
&\mathbb{E}_M\left[\max_{z_0\in\{0,1\}^M}\left|2^{|M|}\mathbb{P}_{x\sim \operatorname{Uniform}(\B)}\left[Mx=z_0\right]-1\right|\right]\\
&\leq \sum_{\ell=1}^{s^*} \left(44\alpha\right)^\ell \delta^{2\ell}+\sum_{\ell=s^*+1}^{n/2} e^{-2s^*}\\
&\leq  88\alpha\cdot \delta^2+n^{-10}\\
&\leq  \delta^2.\\
\end{split}
\end{equation*}
In going from line~2 to line~3 we used the assumption that $s^*\geq 10\ln (n+1)$, and in going from line~3 to line~4 we used the assumption that $\alpha<1/100$ and $\delta\in (1/n, 1/2)$. An application of Markov's inequality now gives the result. 
\end{proofof}

\section{Proof of Lemma~\ref{lm:main-tvd}}\label{app:B}
 We will use some basic properties of total variation distance, which we now state.
\begin{lemma}\label{lm:tvd-expectation}
Let $\mu, \nu$ be two probability distributions on the same finite sample space $\Omega$, and consider independent random variables $X,\widetilde{X}\sim \mu, Y\sim\nu$ taking values in $\Omega$. Then one has 
\[
2\cdot\|\mu-\nu\|_{tvd}=\mathbb{E}_{X}[|1-\prob_{Y}[Y=X]/\prob_{\widetilde{X}}[\widetilde{X}=X]|].
\] 
\end{lemma}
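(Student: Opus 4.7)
The proof is a direct computation, unwinding both sides to the same expression.

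The plan is to start from the definition $2\cdot\|\mu-\nu\|_{tvd} = \sum_{\omega \in \Omega} |\mu(\omega) - \nu(\omega)|$. First I would observe that since $X,\widetilde{X},Y$ are independent with $X,\widetilde{X}\sim\mu$ and $Y\sim\nu$, conditioning on the value of $X$ gives the pointwise identities $\prob_{Y}[Y=X\mid X=\omega]=\nu(\omega)$ and $\prob_{\widetilde{X}}[\widetilde{X}=X\mid X=\omega]=\mu(\omega)$. Substituting these into the right-hand side, the outer expectation over $X\sim\mu$ becomes
\[
\mathbb{E}_{X}\!\left[\left|1-\frac{\prob_{Y}[Y=X]}{\prob_{\widetilde{X}}[\widetilde{X}=X]}\right|\right] = \sum_{\omega:\mu(\omega)>0} \mu(\omega)\cdot\left|1-\frac{\nu(\omega)}{\mu(\omega)}\right| = \sum_{\omega:\mu(\omega)>0} \left|\mu(\omega)-\nu(\omega)\right|,
\]
where the $\omega$ with $\mu(\omega)=0$ contribute nothing to the expectation since $X=\omega$ has probability zero under $\mu$.

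Second, I would compare this with the TVD sum $\sum_{\omega\in\Omega}|\mu(\omega)-\nu(\omega)|$. On the support of $\mu$ the two sums agree term by term. For any $\omega$ with $\mu(\omega)=0$, we use that $\nu(\omega)=0$ as well (this is the only subtle point, and it will either follow from the convention that both distributions are taken to have the same support, or from absolute continuity of $\nu$ with respect to $\mu$, which is the implicit setting in which the ratio on the right-hand side is meaningful). Hence $\sum_{\omega:\mu(\omega)=0}|\mu(\omega)-\nu(\omega)|=0$, and the two expressions are equal.

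The only step that requires care is the handling of points where $\mu(\omega)=0$, where the ratio $\nu(\omega)/\mu(\omega)$ appearing in the expectation is undefined; this is routine and resolved as above. Otherwise the lemma is a one-line rewriting of the total variation distance via its $\ell_1$ representation, and no combinatorial or analytic input is needed.
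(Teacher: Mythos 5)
Your proof is correct and is essentially the same direct computation the paper carries out (identify $\Omega$ with $\{1,\dots,n\}$, write $p_k,q_k$ for the two pmfs, and observe $\mathbb{E}_X[|1-q_X/p_X|]=\sum_k p_k|1-q_k/p_k|=\sum_k|p_k-q_k|$). Where you differ is only in your attention to the points $\omega$ with $\mu(\omega)=0$: the paper's one-line chain $\sum_k p_k|1-q_k/p_k|=\sum_k|p_k-q_k|$ silently discards such $k$ from the middle sum while keeping them in the right-hand sum, so as written the identity really does require $\mathrm{supp}(\nu)\subseteq\mathrm{supp}(\mu)$ (equivalently $\nu\ll\mu$), exactly the caveat you flag. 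You are right that without that assumption the lemma as stated would fail (e.g.\ $\mu$ a point mass at $1$, $\nu$ a point mass at $2$: the left side is $2$, the right side is $1$). Your write-up is thus slightly more careful than the paper's, but it is the same argument, and the extra hypothesis is harmless in context.
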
 
\begin{proof}
Identify $\Omega$ with $\{1,2,\dots, n\}$ and let $p_k=\prob[X=k], q_k=\prob[Y=k]$. Then we have 
\[
\mathbb{E}_{X}[|1-\prob_{Y}[Y=X]/\prob_{\widetilde{X}}[\widetilde{X}=X]|]=\sum_{k=1}^np_k\cdot\left|1-q_k/p_k\right|=\sum_{k=1}^n |p_k-q_k|=2\cdot\|\mu-\nu\|_{tvd}.
\]
\end{proof}
\begin{lemma}[Substitution lemma]\label{lm:tvd-subst}
Let $X^1, X^2$ be random variables taking values on finite sample space $\Omega_1$. Let $Z^1, Z^2$ be random variables taking values on samples space $\Omega_2$, and suppose that $Z^2$ is independent of $X^1, X^2$. Let $f:\Omega_1\times \Omega_2 \to \Omega_3$ be a function. Then 
$$
||(X^1, f(X^1, Z^1))-(X^2, f(X^2, Z^2))||_{tvd}\leq ||(X^1, f(X^1, Z^1))-(X^1, f(X^1, Z^2))||_{tvd}+||X^1-X^2||_{tvd}.
$$
\end{lemma}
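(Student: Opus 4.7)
The plan is to introduce a hybrid distribution $(X^1, f(X^1, Z^2))$ that swaps out $Z^1$ for $Z^2$ while keeping $X^1$ intact, and then apply the triangle inequality for total variation distance:
\begin{align*}
& \|(X^1, f(X^1, Z^1))-(X^2, f(X^2, Z^2))\|_{tvd} \\
&\qquad\leq \|(X^1, f(X^1, Z^1))-(X^1, f(X^1, Z^2))\|_{tvd}+\|(X^1, f(X^1, Z^2))-(X^2, f(X^2, Z^2))\|_{tvd}.
\end{align*}
The first term matches the first term on the right-hand side of the claimed inequality exactly, so it remains to bound the second term by $\|X^1-X^2\|_{tvd}$.

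For the second term, I would invoke Claim~\ref{cl:1}. Define the auxiliary function $g:\Omega_1\times\Omega_2\to\Omega_1\times\Omega_3$ by $g(x,z)=(x,f(x,z))$. Setting $W:=Z^2$, the assumption that $Z^2$ is independent of $(X^1,X^2)$ lets us apply Claim~\ref{cl:1} with $X:=X^1$, $Y:=X^2$, and the function $g$, yielding
\[
\|g(X^1,Z^2)-g(X^2,Z^2)\|_{tvd}\leq \|X^1-X^2\|_{tvd},
\]
which is exactly the required bound on the second term. Combining these two estimates gives the lemma.

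I do not expect any genuine obstacle here; the only mild subtlety is to make sure we pick the right hybrid (namely one that fixes the $X$-component while replacing only the auxiliary randomness), so that Claim~\ref{cl:1} applies cleanly with $W=Z^2$ independent of $(X^1,X^2)$. The hypothesis on $Z^2$ (and not on $Z^1$) is what makes this choice of hybrid work; had we instead tried to swap $X$ first while keeping $Z^1$, we could not directly invoke Claim~\ref{cl:1} because $Z^1$ need not be independent of $(X^1,X^2)$.
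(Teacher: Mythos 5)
Your proof is correct and takes essentially the same route as the paper: insert the hybrid $(X^1, f(X^1, Z^2))$, apply the triangle inequality, and bound the second term via Claim~\ref{cl:1} using the independence of $Z^2$. Your elaboration of the function $g(x,z)=(x,f(x,z))$ and the remark on why the hybrid must fix $X^1$ rather than $Z^1$ are useful but do not change the argument.
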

\begin{proof}
By triangle inequality the left hand side is at most 
\[
||(X^1, f(X^1, Z^1))-(X^1, f(X^1, Z^2))||_{tvd}+||(X^1, f(X^1, Z^2))-(X^2, f(X^2, Z^2))||_{tvd}.
\]
It remains to note that the second summand is at most $||X^1-X^2||_{tvd}$ by Claim \ref{cl:1}.
\end{proof}
The next lemma also follows easily from the definition of total variation distance (see, e.g. Claim~6.5 in~\cite{KKS15} for a proof):
\begin{lemma}\label{lm:tv-function}
For any random variables $X, Y$ taking values on finite sample space $\Omega_1$, independent random variable $Z$ taking values on finite sample space $\Omega_2$ and any function $f:\Omega_1\times \Omega_2 \to \Omega_3$ one has $||f(X, Z)-f(Y, Z)||_{tvd}\leq ||X-Y||_{tvd}$.
\end{lemma}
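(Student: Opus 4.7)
The plan is to prove the lemma directly from the definition of total variation distance as a maximum over events, exploiting the independence of $Z$ from $(X,Y)$ to condition on a realization $z$ of $Z$ and reduce to the deterministic case. Note that this lemma is essentially a restatement of Claim~\ref{cl:1}, which was cited from~\cite{KKS15}; the point is to record a clean proof in the appendix.

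First I would fix an arbitrary event $E \subseteq \Omega_3$ and estimate $\Pr[f(X,Z) \in E] - \Pr[f(Y,Z) \in E]$. Conditioning on the value of $Z$ (which is legitimate because $Z$ is independent of $X$ and $Y$), one writes
\begin{equation*}
\Pr[f(X,Z) \in E] - \Pr[f(Y,Z) \in E] = \sum_{z \in \Omega_2} \Pr[Z = z]\bigl(\Pr[f(X,z) \in E] - \Pr[f(Y,z) \in E]\bigr).
\end{equation*}
For each fixed $z \in \Omega_2$, the map $x \mapsto f(x,z)$ is deterministic, so if we define the pullback set $B_z := \{x \in \Omega_1 : f(x,z) \in E\} \subseteq \Omega_1$, then
\begin{equation*}
\Pr[f(X,z) \in E] - \Pr[f(Y,z) \in E] = \Pr[X \in B_z] - \Pr[Y \in B_z] \leq \|X - Y\|_{tvd},
\end{equation*}
where the final inequality uses the definition of total variation distance as the maximum over subsets of $\Omega_1$.

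Substituting back and using $\sum_{z} \Pr[Z=z] = 1$ yields
\begin{equation*}
\Pr[f(X,Z) \in E] - \Pr[f(Y,Z) \in E] \leq \|X-Y\|_{tvd},
\end{equation*}
and since this holds for every event $E \subseteq \Omega_3$, taking the maximum over $E$ gives $\|f(X,Z) - f(Y,Z)\|_{tvd} \leq \|X - Y\|_{tvd}$ as required. There is no real obstacle here; the only point to keep track of is that the independence of $Z$ from both $X$ and $Y$ is what justifies the conditioning step, and that for a fixed $z$ the map is purely deterministic so no additional randomness enters the pullback argument.
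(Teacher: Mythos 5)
Your proof is correct: conditioning on $Z=z$ (valid by the independence of $Z$ from $X$ and from $Y$), pulling back the event through the deterministic map $x\mapsto f(x,z)$, and averaging is exactly the standard argument, and it matches the paper, which offers no details of its own and simply defers to Claim~6.5 of~\cite{KKS15}. Nothing is missing.
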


\begin{proofof}{Lemma~\ref{lm:main-tvd}}
The  proof is by induction on $t=1,\ldots, T$. We prove that for all $t$ one has 
$$
||(M_{1:t}, S^Y_{1:t})-(M_{1:t}, S^N_{1:t})||_{tvd}\leq \gamma t/T+\sum_{j=1}^t \prob[\bar \E_j| \E_{j-1}].
$$

\begin{description}
\item[Base:$t=1$] We have, conditional on the event $\E_1$, 
\begin{equation*}
\begin{split}
||(M_1, S^Y_{1})-(M_1, S^N_1)||_{tvd}&=||(M_1, S^Y_1)-(M_1, r_1(M_1, U_1, S^N_0))||_{tvd, \E_1}\\
&=||(M_1, S^Y_1)-(M_1, r_1(M_1, U_1, S^Y_0))||_{tvd, \E_1}\\ 
\end{split}
\end{equation*}
where we used the fact that $S^Y_0=S^N_0$. By assumption~Eq.~\eqref{eq:ind-assumption} of the lemma and the fact that total variation distance is bounded by $1$ we have
\begin{equation*}
\begin{split}
||(M_1, S^Y_1)-(M_1, r_1(M_1, U_1, S^N_0))||_{tvd, \E_1}\leq \gamma/T+\prob[\bar \E_1]=\gamma/T
\end{split}
\end{equation*}
as required. This proves the base of the induction.

\item[Inductive step: $t-1\to t$]

We condition on $\E_t$ in what follows, and write $||\cdot ||_{tvd, \E_t}$ to denote the total variation distance between conditional distributions. We have
\begin{equation*}
\begin{split}
&||(M_{1:t}, S^Y_{1:t})-(M_{1:t}, S^N_{1:t})||_{tvd, \E_t}\\
=&||(M_{1:t-1}, S^Y_{1:t-1}, M_t, r_t(M_{1:t-1}, S^Y_{1:t-1}, M_t, M_t X^*))-(M_{1:t-1}, S^N_{1:t-1}, r_t(M_{1:t-1}, S^N_{1:t-1}, M_t, U_t))||_{tvd, \E_t}\\
\end{split}
\end{equation*}
We would like to apply Lemma~\ref{lm:tvd-subst} to the expression above. To that effect we define 
$$
Q^Y_{t-1}=(M_{1:t-1}, S^Y_{1:t-1})\text{~~and~~}Q^N_{t-1}=(M_{1:t-1}, S^N_{1:t-1}).
$$
With this notation in place we have 
\begin{equation}\label{eq:FHFIKEsdf}
\begin{split}
&||(M_{1:t-1}, S^Y_{1:t-1}, M_t, r_t(M_{1:t-1}, S^Y_{1:t-1}, M_t, M_t X^*))-(M_{1:t-1}, S^N_{1:t-1}, r_t(M_{1:t-1}, S^N_{1:t-1}, M_t, U_t))||_{tvd, \E_t}\\
=&||(Q^Y_{t-1}, M_t, r_t(Q^Y_{t-1}, M_t, M_t X^*))-(Q^N_{t-1}, M_t, r_t(Q^N_{t-1}, M_t, U_t))||_{tvd, \E_t}.
\end{split}
\end{equation}

We now apply the substitution lemma (Lemma~\ref{lm:tvd-subst}) to Eq.~\eqref{eq:FHFIKEsdf}. The parameters are as follows. We let $X^1=Q^Y_{t-1}$ and $X^2=Q^N_{t-1}$. The variables $Z^1$ and $Z^2$ are set as $Z^1=(M_t, M_t X^*)$ (recall that $X^*$ is the hidden bipartition) and $Z^2=(M_t, U_t)$.  Note that this setting of $Z^2$ satisfies the preconditions of Lemma~\ref{lm:tvd-subst}: $Z^2=(M_t, U_t)$ is independent of $X^1, X^2$, as required.

In order to apply Lemma~\ref{lm:tvd-subst}, it remains to define the function $f$ that maps tuples $(X, Z)$ to some universe so that 
$$
(X^1, f(X^1, Z^1))=(Q^Y_{t-1}, f(Q^Y_{t-1}, (M_t, M_t X^*)))
$$ 
equals $(Q^Y_{t-1}, M_t, r_t(Q^Y_{t-1}, M_t, M_t X^*))$
and 
$$
(X^2, f(X^2, Z^2))=(Q^N_{t-1}, f(Q^N_{t-1}, (M_t, U_t)))
$$
equals $(Q^N_{t-1}, M_t, r_t(Q^N_{t-1}, M_t, U_t))$. For that, it is sufficient to let $f$ be the function that maps input tuple $(X, (B, C))$ to $(B, r_t(X, B, C))$. 

Applying Lemma~\ref{lm:tvd-subst} with these settings, we get
\begin{equation*}
\begin{split}
&||(Q^Y_{t-1}, M_t, r_t(Q^Y_{t-1}, M_t, M_t X^*))-(Q^N_{t-1}, r_t(Q^N_{t-1}, M_t, U_t))||_{tvd, \E_t}\\
\leq &||Q^Y_{t-1}-Q^N_{t-1}||_{tvd}+||(M_t, r_t(Q^Y_{t-1}, M_t, M_tx))-(M_t, Q^Y_{t-1}, r_t(Q^Y_{t-1}, M_t, U_t))||_{tvd, \E_t}\\
\end{split}
\end{equation*}

The first term is bounded by $\gamma(t-1)/T+\sum_{j=1}^{t-1} \prob[\bar \E_j|\E_{j-1}]$ by the inductive hypothesis. Using the assumption~Eq.~\eqref{eq:ind-assumption} of the lemma, we get 
\begin{equation*}
\begin{split}
&||(M_t, r_t(Q^Y_{t-1}, M_t, M_t X^*))-(M_t, Q^Y_{t-1}, r_t(Q^Y_{t-1}, M_t, U_t))||_{tvd, \E_t}\\
&=\expect_{(M_t, Q^Y_{t-1})\in \E_t}\left[||(M_t, r_t(Q^Y_{t-1}, M_t, M_t X^*))-(M_t, Q^Y_{t-1}, r_t(Q^Y_{t-1}, M_t, U_t))||_{tvd}\right]\\
&\leq \gamma/T,
\end{split}
\end{equation*}
where the total variation distance in the second line is over $X^*\sim UNIF(S^Y_{t-1})$. Putting the two bounds together yields 
$$
||(M_{1:t}, S^Y_{1:t})-(M_{1:t}, S^N_{1:t})||_{tvd}\leq \gamma t/T+\sum_{j=1}^{t} \prob[\bar \E_j| \E_{j-1}]
$$
as required. Substituting $t=T$ we get 
\[
||(M_{1:T}, S^Y_{1:T})-(M_{1:T}, S^N_{1:T})||_{tvd}\leq \gamma +\sum_{j=1}^{T} \prob[\bar \E_j| \E_{j-1}]\leq \gamma+T\cdot \gamma/T=2\gamma,
\]
thus proving the lemma.
\end{description}
\end{proofof}

%!TEX root = ./main.tex
\section{Details omitted from Section~\ref{sec:outline-app}}\label{sec:app-C}

Here we formally prove that if $s\ll n/B^T$ for large enough $B$ then with high probability we will have $\|F_T\|\ll n$. We start by the following Lemma.
\begin{lemma}\label{lm:martingale}
Let $m,T$ be positive integers and  let $\{X_k\}_{k\in[mT]}$ be a sequence of positive random variables satisfying $X_0<m/100^T$, and for $k\in [mT]$ we have 

\[
\mathbb{E}[X_k| X_{k-1}]\leq X_{k-1}\cdot \left(1+\frac{1}{m}+\frac{X_{k-1}}{m^2}\right).
\]
Then $\prob[\max_{i\in[mT]} X_{i}>n/2^T]<2^{-T}$.
\end{lemma}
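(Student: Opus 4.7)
\begin{proofsketch}
The plan is a standard stopping-time--based supermartingale argument. The key observation is that the troublesome quadratic term $X_{k-1}/m^2$ in the conditional drift is only a nuisance when $X_{k-1}$ is large, but in that regime the event we wish to rule out has already happened. So we control $X_k$ up to a stopping time where it is still small, and after that stopping time we are already in the bad event.

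Concretely, I would define the stopping time
$$\tau := \min\{k \geq 0 : X_k > m/2^T\}\wedge mT,$$
and note that $\{\max_{i\in[mT]} X_i > m/2^T\} = \{\tau < mT\}$. For $k\leq \tau$ we have $X_{k-1}\leq m/2^T \leq m$ (taking $T\geq 1$), so the hypothesis simplifies to
$$\mathbb{E}[X_k \mid X_{k-1}] \leq X_{k-1}\cdot\left(1 + \tfrac{1}{m} + \tfrac{1}{m}\right) = X_{k-1}\cdot\left(1+\tfrac{2}{m}\right).$$
Hence the stopped process
$$Y_k := X_{k\wedge \tau}\cdot\left(1+\tfrac{2}{m}\right)^{-(k\wedge \tau)}$$
is a nonnegative supermartingale with $\mathbb{E}[Y_0]=X_0<m/100^T$.

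Next I would apply Doob's maximal inequality to $Y_k$ on the time interval $k\in[0:mT]$, giving
$$\mathbb{P}\!\left[\max_{k\leq mT} Y_k > \lambda\right] \leq \frac{\mathbb{E}[Y_0]}{\lambda} < \frac{m/100^T}{\lambda}$$
for any $\lambda>0$. On the event $\{\tau<mT\}$, by definition $X_\tau > m/2^T$, and since $\tau\leq mT$ we have $(1+2/m)^{\tau}\leq e^{2T}$, whence
$$Y_\tau = X_\tau\left(1+\tfrac{2}{m}\right)^{-\tau} > \frac{m}{2^T\cdot e^{2T}}.$$
Choosing $\lambda := m/(2^T e^{2T})$ and combining the two displays yields
$$\mathbb{P}[\tau<mT]\leq \mathbb{P}\!\left[\max_{k\leq mT} Y_k > \tfrac{m}{2^T e^{2T}}\right] < \frac{m/100^T}{m/(2^T e^{2T})} = \left(\frac{2e^2}{100}\right)^{T}.$$

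The last step is just a numerical check: $2e^2 < 15 < 50$, so $2e^2/100 < 1/2$ and the bound is strictly less than $2^{-T}$, which is the desired conclusion (reading ``$n$'' as ``$m$'' in the statement, consistently with the other occurrences of $m$ throughout the lemma). The main (minor) obstacle is purely notational: verifying that the stopped exponential weighting gives a supermartingale requires checking that the conditional-drift inequality simplifies as claimed precisely for $k\leq \tau$, which is why the stopping time is defined with the threshold $m/2^T$ rather than $m$ itself; there is no genuine analytic difficulty.
\end{proofsketch}
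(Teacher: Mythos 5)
Your proposal is correct and takes essentially the same route as the paper: both define the stopping time at threshold $m/2^T$, observe that up to the stopping time the drift factor is at most $1+2/m$, scale by $(1+2/m)^{-k}$ to obtain a nonnegative supermartingale, and finish with an optional-stopping/Markov (equivalently, Doob maximal) bound and the numerics $(2e^2/100)^T < 2^{-T}$, after correcting the evident typo $n\mapsto m$ in the statement.
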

\begin{proof}
Consider stopping time $\tau:=\inf\{t: X_t>m/2^{T}\}$, if $X_{mT}\leq m/2^T$ we define $\tau:=mT$. Note that for $k\leq \tau$ we have 
\[
1+\frac{1}{m}+\frac{X_{k-1}}{m^2}<1+\frac{2}{m},
\] 
which means that the process $Y_k:=X_k/(1+2/m)^k$ stopped at time $\tau$ is a supermartingale (i.e. $\mathbb{E}[Y_k|Y_{k-1}]\leq Y_{k-1}$). By Markov's inequality applied to $Y_\tau$ we have
\[
\prob[\max_{i\in[mT]} X_{i}>n/2^T]= \prob[X_\tau>n/2^T]\leq \prob\left[Y_\tau>\frac{m}{2^T\cdot (1+2/m)^{mT}}\right]\leq\frac{\mathbb{E}[Y_\tau]}{\left(\frac{m}{2^T\cdot (1+2/m)^{mT}}\right)}.
\]
Optional stopping theorem implies that $\mathbb{E}[Y_\tau]\leq \mathbb{E}[Y_0]=\mathbb{E}[X_0]<m/100^T$. So we have 
\[
\prob[\max_{i\in[mT]} X_{i}>n/2^T]\leq \frac{m/100^T}{\left(\frac{m}{2^T\cdot (1+2/m)^{mT}}\right)}\leq \frac{m/100^T}{m/(2e^2)^T}<6^{-T}.
\]
\end{proof}
We then deduce the following lemma
\begin{lemma}
Suppose the communication budget is $s<n/800^T$, then with probability at least $1-2^{-T}$ the forest $F_T$ formed by all edges revealed by players satisfies $\|F_T\|\leq n/2^T$. \xxx[DK]{How to make the statement more formal?..}
\end{lemma}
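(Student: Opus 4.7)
The plan is to apply Lemma~\ref{lm:martingale} with $m = \alpha n$ to a single dominating process of length $mT = \alpha n T$. The key difficulty is that the real evolution experiences deterministic jumps of $+4s$ in the weight $\|F_t\|$ at every round boundary (from the $s$ external edges each player posts), whereas Lemma~\ref{lm:martingale} requires a purely multiplicative recurrence. To handle this, I would introduce a time-dependent offset that absorbs all $T$ such jumps into the initial condition.

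First, I would index the matching-edge additions linearly as $k = 1, \ldots, \alpha n T$, setting $t(k) := \lceil k/(\alpha n)\rceil$ for the round containing step $k$, and let $X_k := \|F_{t(k)}^{i(k)}\|$ denote the actual forest weight after the $k$-th matching-edge step, with the $+4s$ jump from the $t$-th round's external edges counted at the start of round $t$. I would then define the dominating process $\tilde X_k := X_k + 4s(T - t(k))$, extended by $\tilde X_0 := 4sT$. By construction, $\tilde X_k \geq X_k$ at every step, and the deterministic jump $+4s$ at each round transition is exactly cancelled by the simultaneous drop $-4s$ in the offset $4s(T-t)$, so $\tilde X$ has no jumps and is driven only by matching-edge steps.

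Second, I would verify the evolution inequality $\mathbb{E}[\tilde X_k \mid \tilde X_{k-1}] \leq \tilde X_{k-1}(1 + 1/m + \tilde X_{k-1}/m^2)$ required by Lemma~\ref{lm:martingale}. The combinatorial bound of equation~\eqref{eq:combi-one-step-evolution} yields $\mathbb{E}[X_k \mid X_{k-1}] \leq X_{k-1}(1 + 12/n + 8X_{k-1}/n^2)$, and a direct expansion gives $\tilde X_{k-1}(1 + 12/n + 8\tilde X_{k-1}/n^2) \geq X_{k-1}(1 + 12/n + 8X_{k-1}/n^2) + 4s(T - t(k))$, which upper bounds $\mathbb{E}[\tilde X_k \mid \tilde X_{k-1}]$. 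Since $\alpha < 10^{-10}$ we have $12/n \leq 1/m$ and $8/n^2 \leq 1/m^2$ for $m = \alpha n$, so the precondition of Lemma~\ref{lm:martingale} is met.

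Finally, the initial condition is $\tilde X_0 = 4sT$, and I would verify $4sT < m/100^T = \alpha n/100^T$ using the hypothesis $s < n/800^T$. Writing $800^T = 8^T \cdot 100^T$, this reduces to $4T/8^T \leq \alpha$, which holds for $T$ sufficiently large given the constant $\alpha$. Lemma~\ref{lm:martingale} then gives $\prob[\max_k \tilde X_k > n/2^T] < 2^{-T}$, and since $\|F_T\| = X_{\alpha n T} \leq \tilde X_{\alpha n T}$, we conclude $\|F_T\| \leq n/2^T$ with probability at least $1 - 2^{-T}$. The main obstacle I anticipate is verifying the evolution inequality cleanly at the round boundaries, where the deterministic jump and the offset decrement must combine exactly; the definition of $\tilde X$ is engineered for precisely this cancellation, but requires careful bookkeeping of which step is a matching-edge evolution versus a round transition.
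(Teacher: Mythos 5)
Your approach is sound and takes a genuinely different route from the paper's. The paper absorbs the $+4s$ round-boundary jumps \emph{multiplicatively}, defining $X_k := \|F_{t+1}^i\| \cdot 2^{T-t}$ so that the rescaling factor halves at each round transition; since $\|F_{t-1}^{\alpha n}\| \geq \|F_0\| = 4s$, this halving overshoots the jump and the process actually drops at those steps, making the supermartingale condition of Lemma~\ref{lm:martingale} trivially satisfied there. Your additive offset $\widetilde X_k := X_k + 4s(T - t(k))$ cancels each jump exactly and sidesteps the side condition $\|F_{t-1}^{\alpha n}\| \geq 4s$ altogether; your initial condition $\widetilde X_0 = 4sT$ is also milder than the paper's $X_0 = 4s \cdot 2^T$, though both land within $m/100^T$ under the hypothesis $s < n/800^T$ once $T$ is a moderately large constant, which is all either proof needs.

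One imprecision worth repairing, and which you already flag as requiring careful bookkeeping: at the first matching-edge step of round $t$, the bound from \eqref{eq:combi-one-step-evolution} applies with the forest weight $X_{k-1}+4s$ (after absorbing $E^*_t$), giving $\mathbb{E}[X_k\mid X_{k-1}]\leq(X_{k-1}+4s)\bigl(1+12/n+8(X_{k-1}+4s)/n^2\bigr)$, not $X_{k-1}(1+12/n+8X_{k-1}/n^2)$ as you write, so the first link of your displayed chain is false at round boundaries. The fix is to set $A:=X_{k-1}+4s$ and $B:=4s(T-t(k))$, so $\widetilde X_{k-1}=A+B$, and check directly that $A(1+12/n+8A/n^2)+B\leq(A+B)(1+1/m+(A+B)/m^2)$; this reduces to $12A/n+8A^2/n^2\leq(A+B)/m+(A+B)^2/m^2$, which holds since $m=\alpha n$ with $\alpha<10^{-10}$ gives $12\alpha<1$ and $8\alpha^2<1$. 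Equivalently and more conceptually: the true forest weight immediately before the $k$-th random edge is always at most $\widetilde X_{k-1}$ (because the offset exceeds $4s$ whenever a jump is pending), and $\widetilde X_k-\widetilde X_{k-1}$ consists purely of the matching-edge increment once the jump cancels against the offset decrement, so \eqref{eq:combi-one-step-evolution} applied with $\widetilde X_{k-1}$ as the weight bound yields the supermartingale condition directly.
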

\begin{proof}
Recall that we consider forests $F_{k}, k\in[0\ldots T]$ formed by edges revealed by first $k$ players. We also define $F_t^{i}$ for $i\in[0\ldots \alpha n]$ inductively as 
$$
F_t^j:=\left\lbrace
\begin{array}{ll}
F_t^{j-1}\cup \{e_j\}&\text{if~}e_j\text{~intersects with a nontrivial component in~}F_t^{j-1}\\
F_t^{j-1}&\text{o.w.}
\end{array}
\right.
$$
We then deduced that (see \eqref{eq:combi-one-step-evolution})
\begin{equation}\label{eq:one-step}
\mathbb{E}_{e_i}\left[\|F_t^{i}\| \middle|e_{[1:i-1]}\right]\leq \|F_t^{i-1}\| \cdot \left(1+\frac{12}{n}+\frac{8\|F_t^{i-1}\|}{n^2}\right).
\end{equation}
We refer to Section \ref{sec:combinatorial-analysis} for more details. Let $m=\alpha n + 1$ and define a sequence of random variables $X_k, k\in[0\ldots mT]$ by
\[
X_k:=\|F_{t+1}^i\|\cdot 2^{T-t},
\]
where $i=0\ldots m-1$ and $k=m\cdot t+i$. We now check that $X_k$ satisfies the assumption of Lemma \ref{lm:martingale}. For $k$ not divisible by $m$ this trivially follows from \eqref{eq:one-step} since $m<n/12$ and $X_k\geq \|F_{t+1}^i\|$. For $k=m\cdot t$ we have $X_k=\|F_t^0\|\cdot 2^{T-t}=\|F_{t-1}\|\cdot 2^{T-t}=(\|F_{t-1}^{\alpha n}\|+4s)\cdot 2^{T-t}$ and $X_{k-1}=F_{t-1}^{\alpha n}\cdot 2^{T-t+1}$.	Since we definitely have $\|F_{t-1}^{\alpha n}\|\geq \|F_0\|=4s$, this implies that $X_k\leq X_{k-1}$ deterministically, and so the condition of Lemma \ref{lm:martingale} is satisfied in this case as well.
Note also that $X_0=4s\cdot 2^{T}<n/100^T$. Lemma \ref{lm:martingale} then implies that $\prob[X_{mT}>n/2^T]<2^{-T}$. So we infer
\[
\prob[\|F_T\|>n/2^T]=\prob[X_{mT}>n/2^T]<2^{-T}.
\]

\end{proof}

%!TEX root = ./main.tex
\section{Useful facts} \label{tools}

\subsection{Concentration inequalities}
We prove Lemma~\ref{lm:chernoff}, restated here for convenience of the reader:

\noindent{{\bf Lemma ~\ref{lm:chernoff}} \em 
\lemmaChernoff
}
\begin{proof}
Note that for any $u>0$ we have 
\[
\mathbb{E}\left[e^{uX}\right]=\mathbb{E}\left[\prod_{k=1}^n e^{uX_k}\right]\leq (1-p+p\cdot e^u)\cdot \mathbb{E}\left[\prod_{k=1}^{n-1} e^{uX_k}\right]\leq\ldots\leq (1-p+p\cdot e^u)^n. 
\]
By Markov inequality we then have 
\[
\prob[X\geq \mu+\Delta]\leq\min_{u>0}\frac{(1-p+p\cdot e^u)^n}{e^{u(\mu+\Delta)}}=\min_{v>0}\frac{(1+pv)^{\mu/p}}{(1+v)^{\mu+\Delta}},
\]
where in the last equality we made a substitution $v=e^u-1$. Using the fact that $e^{x-x^2/2}\leq 1+x\leq e^x$ for all positive $x$ we then infer
\[
\prob[X\geq \mu+\Delta]\leq\min_{v>0}\frac{e^{pv\cdot \mu/p}}{e^{(v-v^2/2)\cdot(\mu+\Delta)}}=\min_{v>0}\exp{\left(-v\cdot \Delta + v^2/2\cdot (\mu+\Delta)\right)}.
\]
The minimum is achieved at $v=\frac{\Delta}{\mu+\Delta}$ and equals $-\frac{\Delta^2}{2(\mu+\Delta)}$, as desired.
\end{proof}

We also prove Lemma~\ref{lm:cut}, restated here for convenience of the reader:

\noindent{\em {\bf Lemma~\ref{lm:cut}} 
Let $G$ be a miltigraph with $n$ vertices and $m$ edges (counted with multiplicities) in which each edge has multiplicity at most $k$. Let $S\subset [n]$ be a uniformly random subset of vertices and $X$ be the number of edges crossing $(S, \bar{S})$. Then for any $\delta>0$ we have 
\[
\prob[X<m/2\cdot (1-\delta)]\leq \frac{k}{\delta^2 m}.
\] 
}
\begin{proof}
Let $m_1,m_2,\ldots, m_s$ be multiplicities of edges of $G$ and let $\{p_i\}_{i\in[s]}$ be $0/1$ random variables indicating if the corresponding edge crosses the cut or not. Note that $\mathbb{E}[p_i]=1/2$ and $\mathbb{E}[p_ip_j]=1/4$ for $i\not= j$. We then infer that 
\[
\mathbb{E}[X]=\sum_{i=1}^s m_i\cdot \mathbb{E}[p_i]=m/2,\qquad \mathbb{E}[X^2]=\sum_{i=1}^s \sum_{j=1}^s m_im_j\cdot \mathbb{E}[p_ip_j]=m^2/4+\frac{1}{4}\sum_{i=1}^s m_i^2.
\]   
By Chebyshev inequality we then have 
\[
\prob[X<m/2\cdot (1-\delta)]\leq \frac{\left(\sum m_i^2\right)/4}{(\delta m/2)^2}\leq \frac{k\cdot m}{\delta^2 m^2}=\frac{k}{\delta^2 m}.
\]
\end{proof}

We now prove Lemma \ref{lm:gap} restated here for convenience of the reader:

\noindent{{\bf Lemma ~\ref{lm:gap}} \em 
\lemmagap
}

\begin{proof}
We let $m_0=\frac{\alpha n T}{2}\cdot (1-\delta)$ with $\delta=\e/100$. In the \YES case the graph is bipartite so the value of MAX-CUT is equal to the number of edges in the graph. Since in the \YES case we only keep those edges of the matchings which cross a fixed random bipartition, and in the union of matchings every edge has multiplicity at most $T$, Lemma~\ref{lm:cut} ensures that the probability that the number of edges if smaller than $m_0$ is at most 
\[
\frac{T}{\delta^2 \alpha n T}=\frac{1}{\delta^2\alpha n}\leq 1/\sqrt{n},
\]
since $\e, \alpha>n^{-1/10}$ by assumption of the lemma.

We now consider the \NO case. Since every edge of the matchings is kept with probability $1/2$ independently of the others, by Lemma~\ref{lm:chernoff} with probability at least $1-\exp{\left(-\frac{\delta^2\alpha n T }{4(1+\delta)}\right)}$ (we invoke Lemma~\ref{lm:chernoff} with $\mu=\alpha n T/2$ and $\Delta=\delta \mu$) the number $m$ of edges in the graph will be at most $\frac{\alpha n T}{2}\cdot(1+\delta)$. In the following we assume that $m\leq\frac{\alpha n T}{2}\cdot(1+\delta)$. 
 
Fix a cut $(S, \bar{S})$ where $S \subseteq V$. Let $k:=|S|$. Consider the edges in $E_t$, that is the matching generated at step $t$. Consider an arbitrary order of the edges in $E_t$. Conditioned on all previous edges, an edge $e$ in $E_t$ is a uniformly random edge with endpoints in a set $V$ of vertices not covered by the previous edges. Clearly, $|V|\geq n-2\alpha n>n/2$. The probability that the edge $e$ crosses the cut $(S, \bar{S})$ is
\[
\frac{|S\cap V|\cdot |\bar{S}\cap V|}{\binom{|V|}{2} }\leq \frac{|V|^2/4}{|V|\cdot (|V|-1)/2 } =\frac{1}{2}\cdot \frac{|V|}{|V|-1}\leq \frac{1+3/n}{2}.
\]
We then use Lemma~\ref{lm:chernoff} with $p=\frac{1+3/n}{2}$ and $\mu=p\cdot m$ applied to the random variables indexed by edges of the graph and equal $1$ if the edge crosses the cut $(S,\bar{S})$ and $0$ otherwise. Then for the random variable $Y_S$ which is the number of edges crossing the cut $(S,\bar{S})$, Lemma~\ref{lm:chernoff} gives
\begin{equation*}
\begin{split}
\prob\left[ Y_S > \frac{m_0}{2-\e}\right] \leq \exp{\left(-\frac{(\frac{m_0}{2-\e}-p\cdot m)^2}{2m_0/(2-\e)}\right)}.
\end{split}
\end{equation*}
We have 
\[
\frac{m_0}{2-\e}-p\cdot m\geq \frac{\alpha n T\cdot (1-\delta)}{2\cdot (2-\e)}-\frac{1+3/n}{2}\cdot \frac{\alpha n T}{2}\cdot (1+\delta)\geq \frac{\alpha n T}{4}\cdot \left(\frac{1-\delta}{1-\e/2}-(1+\delta)^2\right)\geq \alpha n T \e/16,
\]
and 
\[
2m_0/(2-\e)=\frac{\alpha n T\cdot (1-\delta)}{2-\e} \leq \alpha n T. 
\]
Putting this together we get 
\[
\prob\left[ Y_S > \frac{m_0}{2-\e}\right] \leq \exp{\left(-\frac{(\alpha n T \e/16)^2}{\alpha n T }\right)}=\exp\left(-\alpha n T \e^2 / 256\right)\leq \exp{(-2n)}.
\]
where we used the assumption that $T=\ctwo/(\alpha \e^2)$. Taking a union bound over all $2^n$ possible cuts completes the proof.
\end{proof}

\subsection{Combinatorics}\label{app:combi}
\begin{lemma}\label{lm:partition}
Let $\mathbf{P}_1\cup \mathbf{P}_2\dots\cup \mathbf{P}_m$ be a partition of the cube $\{0,1\}^n$. For $x\in\{0,1\}^n$ let $\mathbf{P}_x$ be the unique part $P_i$ which contains $x$. Then 
\[
\mathbb{E}_{x\sim\operatorname{Uniform}\left(\{0,1\}^n\right)}[1/|\mathbf{P}_x|]=m/2^n.
\]
\end{lemma}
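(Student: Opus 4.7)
The plan is to prove this via a direct calculation, grouping the sum over $x \in \{0,1\}^n$ according to which part of the partition contains $x$. The key observation is that the summand $1/|\mathbf{P}_x|$ is constant across each part $\mathbf{P}_i$, and when summed over all $x \in \mathbf{P}_i$ contributes exactly $|\mathbf{P}_i| \cdot (1/|\mathbf{P}_i|) = 1$. So the total telescopes to the number of parts.

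Concretely, I would write
\[
\mathbb{E}_{x\sim\operatorname{Uniform}(\{0,1\}^n)}[1/|\mathbf{P}_x|] = \frac{1}{2^n}\sum_{x \in \{0,1\}^n} \frac{1}{|\mathbf{P}_x|} = \frac{1}{2^n}\sum_{i=1}^m \sum_{x \in \mathbf{P}_i} \frac{1}{|\mathbf{P}_i|} = \frac{1}{2^n}\sum_{i=1}^m 1 = \frac{m}{2^n},
\]
where the second equality uses that $\{\mathbf{P}_i\}_{i=1}^m$ is a partition of $\{0,1\}^n$ and the third uses that the inner sum has exactly $|\mathbf{P}_i|$ identical terms each equal to $1/|\mathbf{P}_i|$.

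There is no obstacle here; the lemma is a one-line computation. The only subtlety worth noting is that the identity requires every part to be nonempty (so that $1/|\mathbf{P}_i|$ is well-defined), which is implicit in the statement since $\mathbf{P}_1,\ldots,\mathbf{P}_m$ is listed as a partition. In the application in the proof of Lemma~\ref{lm:main-comm}, the partition is the one induced on $\{0,1\}^n$ by the preimages $g_t^{-1}(S)$ of the possible messages $S$ of the $t$th player, and the lemma is used to conclude that the expectation of $1/|\mathbf{A}_t|$ under the uniform distribution on $\{0,1\}^{M_t}$ is $2^{s-n}$ (with $m \leq 2^s$ since messages have length at most $s$).
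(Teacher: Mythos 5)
Your proof is correct and is essentially the same as the paper's: both compute the expectation by grouping the sum over $x$ according to the part $\mathbf{P}_i$ containing it, and use that each part contributes $|\mathbf{P}_i|\cdot(1/|\mathbf{P}_i|)=1$. The paper phrases it as a sum of $\mathbb{P}[x\in\mathbf{P}_i]/|\mathbf{P}_i|$ while you expand the expectation directly, but this is only a cosmetic difference.
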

\begin{proof}
Indeed, by the definition of the expectation we have 
\[
\mathbb{E}_{x\sim\operatorname{Uniform}\left(\{0,1\}^n\right)}[1/|P_x|]=\sum_{i=1}^m \mathbb{P}_{x\sim\operatorname{Uniform}\left(\{0,1\}^n\right)}[x\in \mathbf{P}_i]/ |\mathbf{P}_i|=\sum_{i=1}^m |\mathbf{P}_i|\cdot 2^{-n}/ |\mathbf{P}_i|=m/2^n.
\]
\end{proof}

We now give proof of Lemmas from Section~\ref{sec:prelims-random-matchings} which we restate for the convinience of the reader.

\noindent{{\bf Lemma~\ref{lm:pln}} \em 
\lmpln
}
\begin{proof}
Let $M$ be a random matching of size $\alpha n$. The expected number of sets of size $2\ell$ which are matched by $M$ is $\binom{\alpha n}{\ell}$ (in fact, it is always equal to $\binom{\alpha n}{\ell}$). The statement of the lemma easily follows. 
\end{proof}

\noindent{{\bf Lemma~\ref{lm:qkin}} \em 
\lmqkin
}
\begin{proof}
For any fixed matching $M$ of size $\alpha n$ the number of sets $A$ of cardinality $2k$ for which exactly $2i$ points are matched and there are no boundary edges is $\binom{\alpha n}{i}\binom{n-2\alpha n}{2(k-i)}$. Indeed, to construct such a set $A$ we need to choose $i$ edges of $M$ which match points of $A$ and also choose $2(k-i)$ points of $A$ which are not matched by $M$. The statement of the lemma easily follows. 
\end{proof}

\noindent{{\bf Lemma~\ref{lm:qkibn}} \em 
\lmqkibn
}
\begin{proof}
For any fixed matching $M$ of size $\alpha n$ the number of sets $A$ of size with $b$ boundary edges and $i$ inner edges is 
\[
\binom{\alpha n}{i}\binom{\alpha n - i}{b}2^{b}\binom{n-2\alpha n}{2(k-i)-b}.
\]
Indeed, we need to choose $i$ inner edges, then chose $b$ boundary edges, then for each boundary edge decide which of its end-points belongs to $A$, and choose remaining $2k-2i-b$ points of $A$ which are not matched by $M$. The statement of the lemma easily follows.
\end{proof}

\noindent{{\bf Lemma~\ref{lm:qkibninequality}} \em 
\lmqkibninequality
}
\begin{proof}
By Lemma \ref{lm:qkibn} and Lemma \ref{lm:qkin} we have 
\[
q(k,i,b, n)=\binom{\alpha n}{i}\binom{\alpha n - i}{b}2^{b}\binom{n-2\alpha n}{2(k-i)-b}\binom{n}{2k}^{-1}, \qquad 
q(k,i,n)=\binom{\alpha n}{i}\binom{n-2\alpha n}{2(k-i)}\binom{n}{2k}^{-1}.
\]
Consequently, we have  
\begin{equation}\label{eq:9h23g4g9j}
\frac{q(k,i,b,n)}{q(k,i,n)}=2^b \binom{\alpha n-i}{b} \frac{\binom{n-2\alpha n}{2(k-i)-b}}{\binom{n-2\alpha n}{2(k-i)}}.
\end{equation}
Since we have 
\[
\frac{\binom{n-2\alpha n}{2(k-i)-b}}{\binom{n-2\alpha n}{2(k-i)}}=\frac{(2k-2i)!(n-2\alpha n - 2(k-i))!}{(2(k-i)-b)!(n-2\alpha n - 2(k-i)+b)!}=\frac{\binom{2k-2i}{b}}{\binom{n-2\alpha n - 2(k-i)+b}{b}},
\]
we can rewrite~\eqref{eq:9h23g4g9j} as 
\[
\frac{q(k,i,b,n)}{q(k,i,n)}=2^b \frac{\binom{\alpha n-i}{b} \binom{2k-2i}{b}}{\binom{n-2\alpha n - 2(k-i)+b}{b}}.
\]
We then use the following bounds to handle factors in the numerator:
\[
\binom{\alpha n-i}{b}\leq \frac{(\alpha n - i)^b}{b!}\leq \frac{(\alpha n)^b}{b!}, \qquad \binom{2k-2i}{b}\leq 2^{2k-2i}.
\]
To lower bound the denominator we note that $n-2\alpha n - 2(k-i) \geq n - n/50 - n/5 > n/2$ and so 
\[
\binom{n-2\alpha n - 2(k-i)+b}{b}\geq \frac{(n-2\alpha n - 2(k-i))^b}{b!}\geq \frac{(n/2)^b}{b!}.
\]
Putting this together we obtain
\[
\frac{q(k,i,b,n)}{q(k,i,n)}\leq \frac{(\alpha n)^{b}}{b!}2^{b}\frac{2^{2(k-i)}}{(n/2)^{b}/b!} = (4\alpha)^{b}4^{k-i}\leq 20^{-b}4^{k-i},
\]
which completes the proof.
\end{proof}
\begin{comment}
\begin{proof}
The total number of matchings is 
\[
\binom{n}{2\alpha n}\frac{(2\alpha n)!}{2^{\alpha n}(\alpha n)!}.
\]
The number of matchings that match exactly $2i$ points out of $2k$ is 
\[
\binom{2k}{2i}\frac{(2i)!}{2^{i}i!}\binom{n-2k}{2\alpha n-2i}\frac{(2\alpha n-2i)!}{2^{\alpha n-i}(\alpha n-i)!}.
\]
So we have 
\begin{align*}
q(k,i,n)&=\binom{2k}{2i}\frac{(2i)!}{2^{i}i!}\binom{n-2k}{2\alpha n-2i}\frac{(2\alpha n-2i)!}{2^{\alpha n-i}(\alpha n-i)!}\cdot \left(\binom{n}{2\alpha n}\frac{(2\alpha n)!}{2^{\alpha n}(\alpha n)!}\right)^{-1} \\&= \frac{(2k)!}{i!(2k-2i)!}\cdot \frac{(n-2k)!}{(n-2\alpha n+2i-2k)!(\alpha n-i)!}\cdot \frac{(n-2\alpha n)!(\alpha n)!}{n!}\\&= \binom{\alpha n}{i}\binom{n-2\alpha n}{2(k-i)}\binom{n}{2k}^{-1}
\end{align*}
\end{proof}
\end{comment}
\subsection{Inequalities}
Since we deal with sums when we know much more about sum of squares, we repeatedly use the following fact.
\begin{lemma}\label{ineq:CS}
Let $a_1,a_2,\dots,a_m$ be real numbers. Then 
\[
\sum_{i=1}^m |a_i|\leq \sqrt{m\sum_{i=1}^m a_i^2}.
\]
\end{lemma}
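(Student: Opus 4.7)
The plan is to recognize this as the standard Cauchy--Schwarz inequality applied to two specific vectors in $\mathbb{R}^m$, namely $u = (|a_1|, |a_2|, \dots, |a_m|)$ and $v = (1, 1, \dots, 1)$. Cauchy--Schwarz asserts that $\langle u, v\rangle \leq \|u\|_2 \cdot \|v\|_2$, and substituting these particular vectors immediately yields the desired bound.

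First, I would compute the inner product: $\langle u, v\rangle = \sum_{i=1}^m |a_i| \cdot 1 = \sum_{i=1}^m |a_i|$, which is exactly the left-hand side. Next, I would compute the two norms: $\|u\|_2 = \sqrt{\sum_{i=1}^m |a_i|^2} = \sqrt{\sum_{i=1}^m a_i^2}$ (since $|a_i|^2 = a_i^2$) and $\|v\|_2 = \sqrt{m}$. Multiplying these gives $\sqrt{m \sum_{i=1}^m a_i^2}$, matching the right-hand side.

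There is no real obstacle here since the result is a direct one-line consequence of Cauchy--Schwarz. If one prefers a self-contained proof without invoking Cauchy--Schwarz as a black box, one can alternatively use the AM--QM inequality: the arithmetic mean $\frac{1}{m}\sum_{i=1}^m |a_i|$ is bounded above by the quadratic mean $\sqrt{\frac{1}{m}\sum_{i=1}^m a_i^2}$, and multiplying both sides by $m$ yields the claim. Either route produces a one or two line proof.
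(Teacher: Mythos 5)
Your proof is correct: the lemma is exactly Cauchy--Schwarz applied to $(|a_1|,\dots,|a_m|)$ and the all-ones vector, and your computation of the inner product and the two norms is right; the AM--QM reformulation you mention is the same fact in disguise. The paper takes a slightly different, self-contained route: instead of citing Cauchy--Schwarz as a black box, it verifies the special case directly via the identity
\[
\Bigl(\sum_{i=1}^m |a_i|\Bigr)^2 \;=\; m\sum_{i=1}^m a_i^2 \;-\; \sum_{1\le i<j\le m}\bigl(|a_i|-|a_j|\bigr)^2 \;\le\; m\sum_{i=1}^m a_i^2,
\]
i.e.\ it expands the square and expresses the deficit as a sum of squares of pairwise differences (the paper writes $(a_i-a_j)^2$, which is a harmless slip for $(|a_i|-|a_j|)^2$). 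The difference is purely presentational: your version is shorter and leans on a standard theorem, while the paper's identity-based argument is elementary and exhibits the exact slack in the inequality (it is tight precisely when all the $|a_i|$ are equal). Either proof is acceptable.
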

\begin{proof}
Indeed, we have 
\[
\left(\sum_{i=1}^m |a_i|\right)^2=m\cdot \sum_{i=1}^m a_i^2 - \sum_{1\leq i< j \leq m} (a_i-a_j)^2\leq m\cdot\sum_{i=1}^m a_i^2.
\]
\end{proof}
We also need the following inequality for binomial coefficients in terms of the entropy function. 
\begin{lemma}\label{lm:binom-entropy}
Let $H(x)=-x\log{x}-(1-x)\log{(1-x)}$ be the entropy function. Then for all $0\leq k\leq n$ we have 
\[
\frac{e^{nH(k/n)}}{n+1}\leq \binom{n}{k}\leq e^{nH(k/n)}.
\]
\end{lemma}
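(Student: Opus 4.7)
The plan is to prove both inequalities by relating $\binom{n}{k}$ to the term $\binom{n}{k} p^k (1-p)^{n-k}$ of the binomial expansion of $(p + (1-p))^n = 1$, taken at the distinguished value $p = k/n$. Note that $p^k(1-p)^{n-k} = (k/n)^k((n-k)/n)^{n-k} = e^{-nH(k/n)}$ by the definition of the entropy function (with the convention $0 \log 0 = 0$ handling the edge cases $k=0$ and $k=n$, which we verify separately).

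For the upper bound, I would observe that $\binom{n}{k} p^k(1-p)^{n-k}$ is a single probability in a binomial distribution and hence bounded above by $1$. Rearranging gives $\binom{n}{k} \leq p^{-k}(1-p)^{-(n-k)} = e^{nH(k/n)}$, which is exactly what we need. For the lower bound, I would use the fact that among the $n+1$ terms $\binom{n}{j}p^j(1-p)^{n-j}$ for $j = 0, 1, \ldots, n$ summing to $1$, the term at $j = k$ (with our choice $p = k/n$) is the largest. This is the standard fact that the mode of $\operatorname{Binomial}(n,p)$ occurs at $\lfloor (n+1)p \rfloor$, and for $p = k/n$ this mode is exactly $k$. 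Hence $\binom{n}{k} p^k(1-p)^{n-k} \geq 1/(n+1)$, which rearranges to $\binom{n}{k} \geq e^{nH(k/n)}/(n+1)$.

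The main (very minor) obstacle is simply verifying that the $j = k$ term is indeed maximal when $p = k/n$. I would do this by considering the ratio of consecutive terms: $\frac{\binom{n}{j+1}p^{j+1}(1-p)^{n-j-1}}{\binom{n}{j}p^j(1-p)^{n-j}} = \frac{(n-j)p}{(j+1)(1-p)}$, which with $p = k/n$ equals $\frac{(n-j)k}{(j+1)(n-k)}$; this ratio is $\geq 1$ iff $j \leq k - 1 + k/n - j/n$, so the terms increase through $j = k-1$ and decrease from $j = k$ onward, confirming $j = k$ is the maximizer. The boundary cases $k = 0$ and $k = n$ give $\binom{n}{k} = 1 = e^{nH(0)} = e^{nH(1)}$ trivially, so both inequalities hold with room to spare.
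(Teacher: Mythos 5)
Your proof is correct, but it takes a genuinely different route from the paper. You work with the binomial distribution at parameter $p=k/n$: the upper bound comes from the single probability $\binom{n}{k}p^k(1-p)^{n-k}\leq 1$, and the lower bound from the observation that this term is the mode, hence the largest of the $n+1$ terms summing to $1$, so it is at least $\frac{1}{n+1}$; you justify the mode claim by the ratio test on consecutive terms. The paper instead proves the lower bound via the exact Beta-integral identity $\int_0^1 x^k(1-x)^{n-k}\,dx=\frac{1}{(n+1)\binom{n}{k}}$ (obtained by expanding $\int_0^1(xt+1-x)^n\,dx$ and comparing coefficients of $t^k$) combined with AM--GM to bound $x^k(1-x)^{n-k}\leq e^{-nH(k/n)}$, and proves the upper bound by subadditivity of Shannon entropy, $\log\binom{n}{k}=H(X)\leq\sum_i H(x_i)=nH(k/n)$ for a uniformly random $k$-subset. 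Your argument is the more elementary and self-contained one (no integral identity, no entropy machinery), while the paper's yields the exact identity $\binom{n}{k}\int_0^1 x^k(1-x)^{n-k}\,dx=\frac{1}{n+1}$ as a byproduct; both deliver exactly the claimed two-sided bound. One small remark: in your ratio computation the displayed condition ``$j\leq k-1+k/n-j/n$'' does not quite match the exact inequality $(n-j)k\geq(j+1)(n-k)$, which simplifies to $j\leq k-1+k/n$; however, for integer $j$ both conditions reduce to $j\leq k-1$ (when $0<k<n$), so your conclusion that the terms increase up to $j=k$ and decrease afterwards, and hence that $j=k$ is the maximizer, is unaffected.
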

\begin{proof}
For the lower bound note that for every $t>1$ we have 
\[
I(t):=\int_0^1 (xt+(1-x))^n \, dx = \int_0^1 (1+x(t-1))^n \, dx = \frac{t^{n+1}-1}{(n+1)(t-1)}=\frac{1+t+\dots+t^n}{n+1}.
\]
On the other hand, 
\[
I(t)=\sum_{i=0}^n t^i \binom{n}{i}\int_0^1 x^i(1-x)^{n-i} \, dx.
\]
Comparing coefficients at $t^k$ we get
\[
\int_0^1 x^k(1-x)^{n-k}\, dx=\frac{1}{(n+1)\binom{n}{k}}.
\]
Apply AM-GM inequality to $x_1=x_2=\dots=x_k=x/k, x_{k+1}=\dots=x_n=(1-x)/(n-k)$ to get
\[
\frac{x^k(1-x)^{n-k}}{k^k\cdot (n-k)^{n-k}}=\prod_{i=1}^n x_i \leq \left(\frac{x_1+x_2+\dots+x_n}{n}\right)^n=1/n^n.
\]
This implies 
\[
\frac{1}{(n+1)\binom{n}{k}}=\int_0^1 x^k(1-x)^{n-k}\, dx \leq \max_{x\in [0,1]} x^k(1-x)^{n-k} \leq \frac{k^k\cdot (n-k)^{n-k}}{n^n}= e^{-nH(k/n)},
\]
thus showing the lower bound. 

For the upper bound let $X$ be a uniformly random subset of $\{1,2,\dots,n\}$ of size $k$. We view $X$ as a bit string $(x_1,x_2,\dots,x_n)$ where $x_i=\mathbbm{1}_{i\in X}$. Note that $\prob[x_i=1]=k/n$. We then have 
\[
\log{\binom{n}{k}}=H(X)\leq H(x_1)+\dots+H(x_n)=n\cdot H(x_1)=n\cdot H(k/n),
\]
which implies the upper bound.
\end{proof}
We frequently deal with the expressions of the form $(m/x)^x$ or $(m/x)^{x/2}$. The following lemma gives an upper bound.
\begin{lemma}
For any $m>0$ and any $x>0$ the following inequalities hold true
\[
\left(\frac{m}{x}\right)^x\leq e^{m/e},\qquad \left(\frac{m}{x}\right)^{x/2}\leq e^{\frac{m}{2e}}.
\]
\end{lemma}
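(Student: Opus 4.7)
The plan is to prove both inequalities at once by reducing the second to the first. Indeed, raising both sides of $\left(\frac{m}{x}\right)^x \leq e^{m/e}$ to the power $1/2$ immediately gives $\left(\frac{m}{x}\right)^{x/2} \leq e^{m/(2e)}$, so it suffices to prove the first inequality.

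Taking the logarithm of both sides, the first inequality is equivalent to the single-variable statement
\[
x \log\!\left(\frac{m}{x}\right) \leq \frac{m}{e} \qquad \text{for all } x>0.
\]
The natural approach is to fix $m>0$ and optimize the left-hand side over $x>0$. Setting $f(x) := x \log m - x \log x$, I would compute $f'(x) = \log m - \log x - 1 = \log(m/x) - 1$, which vanishes uniquely at $x = m/e$. Since $f''(x) = -1/x < 0$ on $(0,\infty)$, this critical point is a global maximum.

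It then only remains to evaluate $f$ at $x=m/e$: a direct substitution gives $f(m/e) = (m/e)\log e = m/e$, which matches the desired upper bound exactly and therefore establishes the inequality with equality at $x=m/e$. Since this is a standard single-variable optimization, there is no real technical obstacle; the only care needed is to note that $x \log(m/x)$ extends continuously to $0$ at $x=0$ and tends to $-\infty$ as $x\to\infty$, so the interior critical point is indeed the global maximum on $(0,\infty)$.
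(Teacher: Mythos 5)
Your proof is correct and is essentially the same argument as the paper's: the paper substitutes $t = m/x$ and maximizes $\log t / t$ to get the bound $1/e$, which is exactly the critical-point calculation you carry out directly in the variable $x$. Both reduce to the same single-variable calculus fact, so the two proofs differ only by a change of variables.
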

\begin{proof}
The second inequality follows from the first one by taking a square root. To prove the first inequality note that for $f(t)=\frac{\log{t}}{t}$ we have
\[
f'(t)=\frac{1-\log{t}}{t^2},
\]
which means that $f(t)$ is increasing on $(0,e)$ and decreasing on $(e,\infty)$. Thus, $f(t)\leq f(e)$ for every $t>0$ and so
\[
\left(\frac{m}{x}\right)^x=\left((m/x)^{x/m}\right)^m=\left(e^{f(m/x)}\right)^m\leq \left(e^{f(e)}\right)^m=e^{m/e}. 
\]
\end{proof}
We also deal with factorials and functions of the form $\ell^\ell$. The following lemma says that up to exponentially large factors they are the same and enjoy the property $f(x+y)\approx f(x)f(y)$.
\begin{lemma}
\label{quotient of factorials}
Let $a_1,\dots a_m$ be positive integers and $S=\sum a_i$. Let $f_1,\dots,f_m$ be functions, each of them is either $x\mapsto x^x$ or $x\mapsto x!$. Then 
\[
\left(\frac{S}{3m}\right)^S \leq \prod_{i=1}^m f_i(a_i) \leq S^S.
\]
\end{lemma}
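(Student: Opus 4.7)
The plan is to treat the two inequalities separately, since they require almost completely disjoint ingredients. For the upper bound I would first observe that for every positive integer $k$ one has $k! = 1\cdot 2\cdots k \leq k\cdot k \cdots k = k^k$, so regardless of which of the two admissible functions each $f_i$ is, we have $f_i(a_i) \leq a_i^{a_i}$. Since each $a_i$ is a positive integer bounded by $S = \sum_j a_j$, we get $a_i^{a_i} \leq S^{a_i}$, and taking the product yields $\prod_{i=1}^m f_i(a_i) \leq \prod_{i=1}^m S^{a_i} = S^{S}$, as desired.

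For the lower bound, I would first produce a uniform lower bound of the form $f_i(a_i) \geq (a_i/e)^{a_i}$, valid for both choices of $f_i$. For $f_i(x) = x^x$ this is trivial, and for $f_i(x) = x!$ it follows from the standard estimate $\ln(k!) = \sum_{j=1}^{k}\ln j \geq \int_{1}^{k}\ln t\, dt = k\ln k - k + 1 \geq k\ln k - k$, i.e. $k! \geq (k/e)^k$. Multiplying over $i$ gives
\[
\prod_{i=1}^m f_i(a_i) \;\geq\; e^{-S}\prod_{i=1}^m a_i^{a_i}.
\]

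The core remaining step is the bound $\prod_{i=1}^m a_i^{a_i} \geq (S/m)^{S}$. This is an immediate consequence of the convexity of $g(x) = x\ln x$ on $(0,\infty)$ (indeed $g''(x) = 1/x > 0$), combined with Jensen's inequality applied to the $m$ points $a_1,\dots,a_m$:
\[
\frac{1}{m}\sum_{i=1}^m a_i\ln a_i \;\geq\; g\!\left(\frac{1}{m}\sum_{i=1}^m a_i\right) \;=\; \frac{S}{m}\,\ln\frac{S}{m}.
\]
Exponentiating both sides after multiplying through by $m$ gives $\prod a_i^{a_i} \geq (S/m)^S$. Plugging back in yields $\prod f_i(a_i) \geq (S/(em))^S$, and since $e < 3$ we have $S/(em) \geq S/(3m)$, so $(S/(em))^S \geq (S/(3m))^S$, completing the lower bound.

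There is no real obstacle here; the only step that requires a moment's thought is the lower bound on $\prod a_i^{a_i}$ via convexity, and even that is a textbook application of Jensen. The factor of $3$ (rather than $e$) in the denominator is simply slack introduced so the statement can be quoted with clean constants in the applications (e.g. in the proofs of Lemmas~\ref{lm:s1}, \ref{lm:s2}, \ref{lm:t1}, \ref{lm:t2}, where $m$ is a small constant and this lemma is invoked with different groupings of the exponents).
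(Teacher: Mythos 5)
Your proof is correct and follows essentially the same path as the paper: upper bound via $k!\leq k^k\leq S^k$, lower bound via $k!\geq (k/e)^k$, and then the key estimate $\prod a_i^{a_i}\geq (S/m)^S$ followed by $e<3$. The only cosmetic difference is in that key sub-step — you invoke Jensen's inequality for the convex function $x\ln x$, whereas the paper applies the geometric-mean/harmonic-mean inequality to the multiset in which each $a_k$ appears $a_k$ times; these are two standard, essentially interchangeable ways to get the same bound.
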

\begin{proof}
For the upper bound note that 
\[
\prod_{i=1}^m f_i(a_i)\leq \prod_{i=1}^m a_i^{a_i}\leq \prod_{i=1}^m S^{a_i}=S^S.
\]
For the lower bound we have 
\[
\prod_{i=1}^m f_i(a_i)\geq \prod_{i=1}^m a_i!\geq \prod_{i=1}^m \left(\frac{a_i}{e}\right)^{a_i}=e^{-S}\prod_{i=1}^m a_i^{a_i}.
\]
Consider a multiset of $S$ integers where each $a_k$ appears $a_k$ times. Sum of reciprocals of numbers in this set is then $m$, which means that the harmonic mean is $S/m$. The geometric mean is at least as large as the harmonic mean so
\[
\sqrt[S]{\prod_{i=1}^m a_i^{a_i}}\geq \frac{S}{m}. 
\]
This implies that 
\[
\prod_{i=1}^m f_i(a_i)\geq e^{-S}\left(\frac{S}{m}\right)^S> \left(\frac{S}{3m}\right)^S.
\]
\end{proof}
\begin{rem}
In all the applications we have $m\leq 4$. We also frequently apply the lemma for a quotient of two products, in which case what remains (up to exponential factor) is $f(\sum a_i-\sum b_j)$ where $a_i$'s are in the numerator and $b_j$'s are on the denominator. 
\end{rem}
\begin{rem}
If applying the lemma we have expression of the form $\ell^{\ell/2}$ it should be replaced with $(\ell/2)^{\ell/2}\cdot 2^{\ell/2}=f(\ell/2)\cdot 2^{\ell/2}$.
\end{rem}

\section{Fourier spectrum of the component growing protocol from section~\ref{sec:simple-protocol}} \label{sec:htft-app}
Suppose that the players use the component growing protocol described in section~\ref{sec:simple-protocol}. For each $t=1,\ldots, T$ let $E^*_t$ denote the set of edges of the forest $F_t$. In that case the set of possible values of $X^*$ consistent with the players' knowledge at time $t$ can be defined quite easily:
\begin{equation*}
\B_t=\{x\in \bool^n: \forall e=(a, b)\in E^*_t, x_a+x_b=w_e\}.
\end{equation*}

Thus, $\B_t$ is simply a linear subspace of $\bool^n$, where the constraints are given by the edges in $E^*_t$. Recall that we denote the indicator of $\B_t$ by $h_t$. We now derive a characterization of $\wh{h}_t$.  We call a coefficient $v\in \bool^n$ {\em admissible} if it has an even intersection with every connected component in $E^*_t$ (see Fig.~\ref{fig:clusters-ft}(a), where the elements of an admissible $v$ are marked red). For each admissible $v$ let $Q(v)$ denote the unique pairing of vertices of $v$ via edge-disjoint paths in $E^*_t$ (we associate $Q_v$ with the set of edges on these paths).  This is illustrated in Fig.~\ref{fig:clusters-ft}, where the vertices of $v\in \bool^n$ are marked red, and the edges of $Q(v)$ are the green dashed edges. Note that since edges of $E_t^*$ form a forest, this pairing is indeed unique for every admissible $v$.

We show that $\wh{h}_t(v)$ has the following simple form:
\begin{equation}\label{eq:whht}
\wh{h}_t(v)=\left\lbrace
\begin{array}{cc}
\frac{|\B_t|}{2^n}\cdot (-1)^{\sum_{e\in Q(v)} w_e}&\text{if~}v\text{~is admissible}\\
0&\text{o.w.}.
\end{array}
\right.
\end{equation}
Recall that $w_e$ are the labels on the edges of the graph $G_t$ formed by first $t$ matchings.

We now prove~Eq.~\eqref{eq:whht}. We first prove that $\wh{h}_t(v)=0$ for any inadmissible $v$. By definition of the Fourier transform 
\begin{equation*}
\begin{split}
\wh{h}_t(v)&=\frac1{2^n} \sum_{w\in \bool^n}  h_t(w) (-1)^{v\cdot w}=\frac1{2^n} \sum_{w\in \B_t}  (-1)^{v\cdot w}\\
&=\frac{|\{w\in \B_t: v\cdot w~\text{is even}\}|-|\{w\in \B_t: v\cdot w~\text{is odd}\}|}{2^n}.
\end{split}
\end{equation*}

Now suppose that $v\in \bool^n$ has an odd intersection with at least one of the connected components in $E^*_t$.  Denote the set of vertices in this component by $\mathcal{C}_*\subseteq [n]$, and let $\mathbf{1}_{\mathcal{C}_*}\in \bool^n$ denote the indicator vector of vertices in $\mathcal{C}_*$. We now note that for any $w\in \B_t$ one necessarily has that $w+\mathbf{1}_{\mathcal{C}_*}\in \B_t$. Indeed, adding $1$ to every vertex in $\mathcal{C}_*$ could only violate those constraints (edges) that have exactly one endpoint in $\mathcal{C}_*$. But there are no such edges since $\mathcal{C}_*$ is a connected component by definition, so $w+\mathbf{1}_{\mathcal{C}_*}\in \B_t$ as required. On the  other hand, $v$ has an odd intersection with $\mathcal{C}_*$, we have $v\cdot \mathbf{1}_{\mathcal{C}_*}=1$, so for any $w\in \bool^n$
$$
v\cdot (w+\mathbf{1}_{\mathcal{C}_*})=v\cdot w+1.
$$
This means that $|\{w\in \B_t: v\cdot w~\text{is even}\}|=|\{w\in \B_t: v\cdot w~\text{is odd}\}$, since the map $w\to w+\mathbf{1}_{\mathcal{C}_*}$ is an involution on $\B_t$, and hence $\wh{h}_t(v)=0$ as required.

Now suppose that $v$ is admissible. To derive the equation for $\wh{h}_t$ given in~Eq.~\eqref{eq:whht}, we note that $\B_t$ can be alternatively characterized as follows. Pick any element $w^*\in \B_t$, and let $\mathcal{C}_1,\ldots, \mathcal{C}_k$ denote the connected components in $E^*_t$ (so that each singleton node is a connected component of its own). Then 
\begin{equation*}
\B_t=\{ w^*+\sum_{i=1}^k \lambda_i \mathbf{1}_{\mathcal{C}_i}: \lambda\in \bool^k\}.
\end{equation*}
Noting that $v\cdot \mathbf{1}_{\mathcal{C}_i}=0$ for all $i=1,\ldots, k$, we note that for any $w\in \bool^n$
$$
v\cdot w=v\cdot (w^*+\sum_{i=1}^k \lambda_i \mathbf{1}_{\mathcal{C}_i})=v\cdot w^*.
$$ 
We thus have
\begin{equation*}
\begin{split}
\wh{h}_t(v)&=\frac1{2^n} \sum_{w\in \B_t}  (-1)^{v\cdot w}=\frac{|\B_t|}{2^n}(-1)^{v\cdot w^*}.
\end{split}
\end{equation*}
Let $\Lambda$ be the the unique pairing of vertices of $v$ by edge disjoint paths in $F_t$. For any pair $\{i, j\}\in \Lambda$ let $\mathcal{P}_{i, j}$ denote the path from $i$ to $j$ in $F_t$. 
Note that 
$$
\sum_{e\in \mathcal{P}_{i, j}} w_e=\sum_{e=(a, b)\in \mathcal{P}_{i, j}} (X^*_a+X^*_b)=X^*_i+X^*_j.
$$
To complete the proof, it suffices to note that 
$$
v\cdot w^*=\sum_{\{i, j\}\in \Lambda} X^*_i+X^*_j=\sum_{\{i, j\}\in \Lambda} \sum_{e\in \mathcal{P}_{i, j}} w_e=\sum_{e\in Q(v)} w_e
$$
as required.

We now turn to the Fourier transform of $f_t$ for the component growing protocol. Let $M^*_t:=M_t\cap E_t^*$ denote the set of edges of $M_t$ whose bits are revealed by the $t$-th player. We have $E^*_t=E^*_{t-1}\cup M^*_t$ for all $t=1,\ldots, T$. Note that the simple rule that defines $M^*_t$ immediately specifies the set $\A_t$ for our component growing protocol (recall that $\A_t$ was defined in Definition~\ref{def:at}). For completeness, we instantiate both the definitions of $\A_{t, reduced}$ and $\A_t$ for our component growing protocol. The set $\A_{t, reduced}$ is simply
$$
\A_{reduced, t}=\{z\in \bool^{M_t}: z_e=(w_t)_e\text{~for all~}e\in M^*_t\}.
$$
Thus, $\A_{reduced, t}$ is a subcube of the boolean hypercube $\bool^{M_t}$ obtained by fixing coordinates in $M^*_t$ to their specified values. The set $\A_t$ is defined as $\A_t=\{x\in \bool^n: M_t x\in \A_{t, reduced}\}$ (see Definition~\ref{def:at}), and in our case is a linear subspace of $\bool^n$:
$$
\A_t=\{x\in \bool^n: \text{~for all~}e=(a, b)\in M^*_t\text{~}x_a+x_b=(w_t)_e\}.
$$
Recall that the function $f_t$ is defined as the indicator of $\A_t$ (see Definition~\ref{def:at}). Since $\A_t$ is a linear subspace just like $\B_t$, the Fourier transform of $f_t$ is also quite easy to understand. 
We note that the same derivation as for $h_t$ shows that the Fourier transform of $\wh{f}_t$ is supported on edges of $M_t$. We now say that $v\in \bool^n$ is admissible if it has even intersection with every connected component in $M^*_t$. But this can only happen when $v$ is a union of edges of $M^*_t$, as required.

%!TEX root = ./main.tex

\section{Proof of Lemmas \ref{lm:L1xorKKL} and \ref{lm:matching-fourier}}\label{app-F}

\subsection{General bound for the $L_1$ mass of the Fourier transform}  
Here we prove Lemma \ref{lm:L1xorKKL}. The starting point is the following Lemma which can be found in \cite{KKL}.
\begin{lemma}
Let $f$ be a function $f:\{0,1\}^m\rightarrow \{-1,0,1\}$ and let $\A=f^{-1}(\{-1,1\})$. Let $|s|$ denote the Hamming weight of $s\in \{0,1\}^m$. Then for every $\delta\in[0,1]$
\[
\sum_{s\in \{0,1\}^m} \delta^{|s|}\widehat{f}^2(s)\leq \left(\frac{|\A|}{2^m}\right)^{\frac{2}{1+\delta}}.
\]
\end{lemma}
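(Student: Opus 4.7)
The plan is to derive the bound from the Bonami–Beckner hypercontractive inequality for the noise operator $T_\rho$ on the Boolean cube $\{0,1\}^m$. Recall that $T_\rho$ is diagonal in the Fourier basis, acting by $\widehat{T_\rho f}(s) = \rho^{|s|}\widehat{f}(s)$, so by Parseval
\[
\|T_\rho f\|_2^2 = \sum_{s\in\{0,1\}^m} \rho^{2|s|}\widehat{f}(s)^2.
\]
Setting $\rho = \sqrt{\delta}$ turns the left-hand side of the lemma into exactly $\|T_{\sqrt{\delta}}f\|_2^2$, so it remains to produce an upper bound on this quantity of the form $(|\A|/2^m)^{2/(1+\delta)}$.

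The Bonami–Beckner inequality provides precisely such a bound: for any $p\geq 1$ and $\rho\leq \sqrt{p-1}$ one has $\|T_\rho f\|_2 \leq \|f\|_p$. Choosing $p = 1+\delta$ saturates the condition $\rho=\sqrt{p-1}=\sqrt{\delta}$, yielding
\[
\|T_{\sqrt{\delta}}f\|_2 \leq \|f\|_{1+\delta}.
\]
Now I would use the hypothesis that $f$ takes values in $\{-1,0,1\}$: for any $p\geq 1$, $|f(x)|^p = \mathbbm{1}_\A(x)$, so
\[
\|f\|_{1+\delta} = \left(\mathbb{E}_{x\sim\operatorname{Uniform}(\{0,1\}^m)}\bigl[|f(x)|^{1+\delta}\bigr]\right)^{1/(1+\delta)} = \left(\frac{|\A|}{2^m}\right)^{1/(1+\delta)}.
\]
Squaring and combining with the hypercontractive inequality gives the claim.

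The main ``content'' is the Bonami–Beckner inequality itself, which we would invoke as a black box (it is a standard tool, proved by tensorizing the two-point inequality on $\{0,1\}$). Once that is in hand, the remaining manipulations are purely formal: identifying the weighted Fourier sum with $\|T_{\sqrt{\delta}}f\|_2^2$, and using the three-valued nature of $f$ to evaluate the $(1+\delta)$-norm exactly. There is no real obstacle; the only thing to verify carefully is that the exponent $p=1+\delta$ is the correct one (so that hypercontractivity gives exactly $2/(1+\delta)$ in the final exponent), which it is.
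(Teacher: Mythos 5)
Your proof is correct. The paper does not actually give a proof of this lemma; it simply cites Kahn--Kalai--Linial, and the derivation you give---expressing the weighted Fourier sum as $\|T_{\sqrt{\delta}}f\|_2^2$, applying the Bonami--Beckner inequality $\|T_\rho f\|_2\leq\|f\|_p$ at the threshold $\rho=\sqrt{p-1}$ with $p=1+\delta$, and evaluating $\|f\|_{1+\delta}$ exactly using that $|f|^{1+\delta}=\mathbbm{1}_\A$---is precisely the standard argument underlying that citation.
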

\xxx[MK]{Are we ever referencing these lemmas?}
We then deduce 
\begin{lemma}\label{lm:hypercontractivity-fourier}
Let $f$ be a function on $\{0,1\}^m$ taking values in $\{-1,0,1\}$. Define a set $\A=f^{-1}(\{-1,1\})$. Then if $|\A|\geq 2^{m-d}$ and $q\leq d$ then
\[
\left(\frac{2^m}{|\A|}\right)^2\sum_{\substack{x\in \{0,1\}^m \\ |x|=q }} \widehat{f}^2(x)  \leq  \left(\frac{4d}{q}\right)^q.
\]
\end{lemma}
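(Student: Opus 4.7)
The plan is to derive Lemma \ref{lm:hypercontractivity-fourier} as a direct corollary of the preceding KKL-type inequality by restricting to a single level of the Fourier spectrum and optimizing the choice of $\delta$. First, I would apply the previous lemma and discard all terms on the left-hand side except those with $|x|=q$, obtaining
\[
\delta^{q}\sum_{\substack{x\in\{0,1\}^m\\ |x|=q}}\widehat{f}^2(x)\ \leq\ \sum_{s\in\{0,1\}^m}\delta^{|s|}\widehat{f}^2(s)\ \leq\ \left(\frac{|\A|}{2^m}\right)^{\!\frac{2}{1+\delta}}.
\]
Rearranging and multiplying both sides by $(2^m/|\A|)^2$ turns the right-hand side into $\delta^{-q}\,(2^m/|\A|)^{2\delta/(1+\delta)}$. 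Since $|\A|\geq 2^{m-d}$, the ratio $2^m/|\A|$ is at most $2^d$, so the right-hand side is bounded above by $\delta^{-q}\,2^{2d\delta/(1+\delta)}$.

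The one nontrivial step is to choose $\delta\in(0,1]$ to minimize (or at least suitably bound) this expression. I would use the crude bound $2\delta/(1+\delta)\leq 2\delta$ and then set $\delta := q/(2d)$, which is legal because $q\leq d$ guarantees $\delta\in(0,1/2]$. With this choice the exponent of $2$ contributes at most $q$, while $\delta^{-q}=(2d/q)^q$; multiplying the two yields exactly $(4d/q)^q$. This completes the derivation.

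The main obstacle, if any, is purely bookkeeping: one must be careful about the Fourier normalization used in the paper (the inverse $2^m$ factor defining $\widehat{f}$), and one must verify that the preceding lemma is stated in the same normalization so that the inequality $\sum_s \delta^{|s|}\widehat{f}^2(s)\leq (|\A|/2^m)^{2/(1+\delta)}$ is used correctly (note that at $\delta=1$ this reproduces Parseval's identity $\sum_s\widehat{f}^2(s)=|\A|/2^m$, which confirms the normalization). Beyond this check, no further ideas are needed: the lemma is a one-line optimization of the hypercontractive tail bound.
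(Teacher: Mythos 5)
Your proof is correct and follows essentially the same route as the paper's: apply the hypercontractive bound, keep only the level-$q$ term, bound the exponent $\tfrac{2\delta}{1+\delta}\le 2\delta$, and optimize over $\delta$. The only cosmetic difference is that you pick $\delta=q/(2d)$ (i.e.\ $\lambda=1/2$), which gives exactly $(4d/q)^q$, whereas the paper sets $\delta=\lambda q/d$ with $\lambda=1/(2\log 2)$ to obtain the slightly sharper $(2e\log 2\cdot d/q)^q$ before rounding up to the same $(4d/q)^q$.
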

\begin{proof}
By the Lemma above, for every $\delta\in [0,1]$ we have
\[
\left(\frac{2^m}{|\A|}\right)^2\sum_{\substack{x\in \{0,1\}^m \\ |x|=q }} \widehat{f}^2(x)\leq \frac{2^{2m}}{|\A|^2}\delta^{-q}\left(\frac{|\A|}{2^m}\right)^{\frac{2}{1+\delta}}=\delta^{-q}\left(\frac{2^m}{|\A|}\right)^{\frac{2\delta}{1+\delta}} \leq \delta^{-q}\left(\frac{2^m}{|\A|}\right)^{2\delta}\leq \frac{2^{2\delta d}}{\delta^q}.
\]
Plugging in $\delta=\lambda q/d$ with $\lambda\in[0,1]$ (which ensures that $\delta\in [0,1]$) we obtain
\[
\left(\frac{2^m}{|\A|}\right)^2\sum_{\substack{x\in \{0,1\}^m \\ |x|=q }} \widehat{f}^2(x)\leq \frac{2^{2\delta d}}{\delta^q} = \left(\frac{2^{2\lambda} d}{\lambda q}\right)^q.
\] 
It remains to note that for $\lambda=\frac{1}{2\log{2}}$ we have $2^{2\lambda}/\lambda=2e\log{2}<4$.
\end{proof}

\begin{lemma}\label{xorKKL}
Let $\A\subset \{0,1\}^m$ be a set of cardinality at least $2^{m-d}$ with indicator function $f$. Then for every $y\in \{0,1\}^m$ and every $q\leq d$ one has 
\[
\sum_{\substack{x\in \{0,1\}^m \\ |x\oplus y|=q}} \wt{f}^2(x) \leq \left(\frac{4d}{q}\right)^{q}.
\]
\end{lemma}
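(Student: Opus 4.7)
The plan is to reduce Lemma~\ref{xorKKL} to the already-stated Lemma~\ref{lm:hypercontractivity-fourier} by a character-shift trick. Lemma~\ref{lm:hypercontractivity-fourier} is exactly the special case $y=0$, so I only need to exhibit a signed variant of $f$ that shifts the Fourier spectrum by $y$ while preserving its support (and therefore the hypothesis $|\A|\geq 2^{m-d}$).

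Concretely, define the auxiliary function $f_y:\{0,1\}^m\to\{-1,0,1\}$ by $f_y(z):=(-1)^{z\cdot y} f(z)$. Since $f$ is the $0/1$ indicator of $\A$, the set $f_y^{-1}(\{-1,1\})$ equals $\A$, so the preconditions of Lemma~\ref{lm:hypercontractivity-fourier} hold for $f_y$ with the same $d$. A direct computation gives
\[
\wh{f_y}(x) \;=\; \frac{1}{2^m}\sum_{z\in\{0,1\}^m} f(z)(-1)^{z\cdot y}(-1)^{z\cdot x} \;=\; \frac{1}{2^m}\sum_{z\in\{0,1\}^m} f(z)(-1)^{z\cdot(x\oplus y)} \;=\; \wh{f}(x\oplus y),
\]
and multiplying both sides by $2^m/|\A|$ yields the analogous identity $\wt{f_y}(x)=\wt{f}(x\oplus y)$ for the normalized coefficients (extending Definition~\ref{def:tilde} to $f_y$ via the same scaling factor $2^m/|\A|$, which is the natural one since $|f_y^{-1}(\{-1,1\})|=|\A|$).

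Applying Lemma~\ref{lm:hypercontractivity-fourier} to $f_y$ with the given $q\leq d$ gives
\[
\sum_{\substack{x\in\{0,1\}^m \\ |x|=q}} \wt{f}^2(x\oplus y) \;=\; \sum_{\substack{x\in\{0,1\}^m \\ |x|=q}} \wt{f_y}^2(x) \;\leq\; \left(\frac{4d}{q}\right)^{q}.
\]
Finally, make the change of variable $x'=x\oplus y$; the map $x\mapsto x\oplus y$ is a bijection of $\{0,1\}^m$, and $|x|=q$ becomes $|x'\oplus y|=q$, so the left side equals $\sum_{|x'\oplus y|=q}\wt{f}^2(x')$, which is the claimed bound.

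I do not foresee any real obstacle: the proof is essentially a one-line symmetry argument, and every ingredient (the KKL-type inequality, the normalization $\wt{\cdot}$, and the identity $\wh{f_y}(x)=\wh{f}(x\oplus y)$) is already in place. The only mild subtlety is that Definition~\ref{def:tilde} is literally stated for $0/1$ indicators, so the write-up should explicitly note that we use the same scaling factor $2^m/|\A|$ when applying it to the signed function $f_y$ (this is consistent because $|f_y|$ is still the indicator of $\A$).
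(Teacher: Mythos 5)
Your proof is correct and follows essentially the same route as the paper: the paper's proof of Lemma~\ref{xorKKL} also defines the signed function $g(z)=f(z)(-1)^{z\cdot y}$ (your $f_y$), computes $\wh{g}(x)=\wh{f}(x\oplus y)$, and then invokes Lemma~\ref{lm:hypercontractivity-fourier}. Your additional remark about using the scaling factor $2^m/|\A|$ for the signed function is a reasonable clarification but does not change the substance of the argument.
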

\begin{proof}
Recall that $\wt{f}(x):=\frac{2^m}{|\A|}\cdot \widehat{f}(x)$, see definition \ref{def:tilde}.  So the statement is equivalent to the following inequality:
\[
\left(\frac{2^m}{|\A|}\right)^2 \sum_{\substack{x\in \{0,1\}^m \\ |x\oplus y|=q}} \widehat{f}^2(x) \leq \left(\frac{4d}{q}\right)^{q}.
\]
Consider a function $g(z):=f(z)(-1)^{z \cdot y}$. For this function we have 
\[
\widehat{g}(x)=2^{-m}\sum_{z\in \{0,1\}^m } g(z)(-1)^{z\cdot x}=2^{-m}\sum_{z\in \{0,1\}^m} f(z)(-1)^{z\cdot (x\oplus y)}=\widehat{f}(x\oplus y).
\]
It remains to apply the previous lemma to the function $g(z)$.
\end{proof}
Since there are exactly $\binom{m}{q}$ different $x\in\{0,1\}^m$ for which $|x\oplus y|=q$, the above Lemma together with Lemma \ref{ineq:CS} imply Lemma \ref{lm:L1xorKKL}.

\subsection{Structure of the Fourier transform of a single player's message}
Here we prove Lemma \ref{lm:matching-fourier}.

\begin{proof}
We compute the Fourier transform of $f(x)$. For $z\in \{0, 1\}^{\alpha n}$ let $x(z)\in\bool^n$ be defined by setting, for each edge $(u, v)\in M$,
$$
x(z)_{u}=z_e \text{~and~}x(z)_{v}=0
$$
and $x(z)_w=0$ if $w$ is not matched by $M$.  Note that $x(z)$ is a particular solution of $Mx=z$ and the set of solutions is given by 
\begin{equation}\label{eq:sol-space}
\{x(z)+Ns: s\in \{0, 1\}^{n-\alpha n}\},
\end{equation}
where $N$ is a basis for the kernel of $M$. Without loss of generality suppose that $M$ contains the edges $(2i-1, 2i), i=1,\ldots, \alpha n$. Then 
the matrix $N\in \{0, 1\}^{n\times (n-\alpha n)}$ may be taken as 
\begin{equation*}
\left(
\begin{array}{llllllll}
1&0&0&\ldots&0&0&\ldots&0\\
1&0&0&\ldots&0&0&\ldots&0\\
0&1&0&\ldots&0&0&\ldots&0\\
0&1&0&\ldots&0&0&\ldots&0\\
\vdots&\vdots&\vdots&\vdots&\vdots&\vdots&\vdots&\vdots\\
0&0&1&\ldots&0&0&0&0\\
0&0&1&\ldots&0&0&0&0\\
0&0&0&\ldots&1&0&0&0\\
0&0&0&\ldots&0&1&0&0\\
0&0&0&\ldots&0&0&\ddots&0\\
0&0&0&\ldots&0&0&0&1\\
\end{array}
\right),
\end{equation*}
where first $\alpha n$ columns form $M^T$, the bottom right submatrix is the $(n-2\alpha n)\times (n-2\alpha n)$ identity, all the other entries are zero.

The Fourier transform of $f$ at $v\in \{0, 1\}^{n}$ is given by
\begin{equation*}
\begin{split}
\hat f(v)&=\frac1{2^n} \sum_{x\in \{0, 1\}^n} f(x) \cdot (-1)^{x\cdot v}\\
&=\frac1{2^n} \sum_{z\in \A_{reduced}}\sum_{s\in \{0, 1\}^{n-\alpha n}} (-1)^{(x(z)+Ns)\cdot v}\\
&=\frac1{2^n} \sum_{z\in \A_{reduced}}(-1)^{x(z)\cdot v}\sum_{s\in \{0, 1\}^{n-\alpha n}} (-1)^{(v^T N)\cdot s}\\
\end{split}
\end{equation*}

First note that
$$
\sum_{s\in \{0, 1\}^{n-\alpha n}} (-1)^{(v^TN)\cdot s}=\mathbf{1}_{v^TN=0}\cdot 2^{n-\alpha n},
$$
so $\hat f(v)=0$ unless $v^T N =0$.  Note that all such $v$ are of the form $v=M^T r$ for some $r\in \{0, 1\}^{\alpha n}$.

Thus,
\begin{equation*}
\begin{split}
\hat f(M^Tr)&=\frac{2^{n-\alpha n}}{2^n} \sum_{z\in \A_{reduced}}(-1)^{x(z)\cdot M^T r}=\frac{2^{n-\alpha n}}{2^n} \sum_{z\in \A_{reduced}}(-1)^{z\cdot r}=\hat q(r)\\
\end{split}
\end{equation*}
and $\hat f(v)=0$ for all $v$ not of the form $M^Tr$. Here we use the fact that $x(z)\cdot M^Tr=z\cdot r$ for all $z$ and $r$.
 
Note that Fourier coefficients of $f$ only have even weight, and weight $k$ Fourier coefficients of $q$ are in direct correspondence with weight $2k$ coefficients of $f$ (since $|M^Tr|=2|r|$ for all $r\in \bool^{\alpha n}$). 
\end{proof}

\end{document}